\documentclass[twocolumn,prx,longbibliography,nofootinbib, superscriptaddress]{revtex4-2}
\usepackage{mathrsfs}
\usepackage[dvips]{graphicx} 
\usepackage{amsfonts}
\usepackage{amssymb}
\usepackage{amscd}
\usepackage{amsmath}    
\usepackage{amsthm}
\usepackage{appendix}
\usepackage{bbm}
\usepackage{dsfont}
\usepackage{mathptmx}
\usepackage{enumerate}
\usepackage{enumitem}
\usepackage{epsfig}
\usepackage{float}
\usepackage[colorlinks=true,linkcolor=teal,citecolor=teal,urlcolor=teal]{hyperref}

\usepackage{makecell}
\usepackage{times}
\usepackage{pifont}
\usepackage{subfigure}
\usepackage{subfloat}
\usepackage{xcolor}		
\usepackage{physics}
\usepackage[most]{tcolorbox}
\usepackage{tabularx}
\usepackage{MnSymbol}
\usepackage{soul}
\setcounter{MaxMatrixCols}{11}
\usepackage{tikz}
\usepackage{tikzpeople}
\usepackage{pgfplots}
\usepackage[normalem]{ulem}
\usetikzlibrary{quantikz}
\usetikzlibrary{arrows}
\usetikzlibrary{shapes,fadings,snakes}
\usetikzlibrary{decorations.pathmorphing,patterns}
\usetikzlibrary{calc}
\usetikzlibrary{positioning}
\usepackage[linesnumbered,ruled,vlined]{algorithm2e}

\newtheorem{theorem}{Theorem}
\newtheorem{fact}{Fact}
\newtheorem{lemma}{Lemma}
\newtheorem{corollary}{Corollary}

\newtheorem{remark}{Remark}

\newcommand{\bZ}{\mathbb{Z}}
\newcommand{\bE}{\mathbb{E}}

\newcommand{\ba}{\mathbf{a}}
\newcommand{\bs}{\mathbf{s}}
\newcommand{\cO}{\mathcal{O}}
\newcommand{\Var}{\mathrm{Var}}
\newcommand{\sym}{\mathrm{sym}}

\newcommand{\poly}{\mathrm{poly}}
\newcommand{\diag}{\mathrm{diag}}

\usepackage[most]{tcolorbox}
\newtcolorbox[auto counter]{mybox}[2][]{
	enhanced,
	breakable,
	colback=blue!5!white,
	colframe=blue!75!black,
	fonttitle=\bfseries,
	title=Box \thetcbcounter: #2,#1
}

\begin{document}
\title{Optimal randomized measurements for a family of non-linear quantum properties}

\author{Zhenyu Du}
\thanks{ZD and YT contributed equally to this work.}
\affiliation{Center for Quantum Information, Institute for Interdisciplinary Information Sciences, Tsinghua University, Beijing 100084, China}

\author{Yifan Tang}
\thanks{ZD and YT contributed equally to this work.}
\affiliation{Dahlem Center for Complex Quantum Systems, Freie Universität Berlin, 14195 Berlin, Germany}

\author{Andreas Elben}
  \affiliation{Laboratory for Theoretical and Computational Physics and ETHZ-PSI Quantum Computing Hub, Paul Scherrer Institute, CH-5232 Villigen-PSI, Switzerland}

\author{Ingo Roth}
\affiliation{Quantum Research Center, Technology Innovation Institute, Masdar City, Abu Dhabi, United Arab Emirates}

\author{Jens Eisert}
\affiliation{Dahlem Center for Complex Quantum Systems, Freie Universität Berlin, 14195 Berlin, Germany}

\author{Zhenhuan Liu}
\thanks{\href{mailto:qubithuan@gmail.com}{qubithuan@gmail.com}}
\affiliation{Center for Quantum Information, Institute for Interdisciplinary Information Sciences, Tsinghua University, Beijing 100084, China}

\begin{abstract}
Quantum learning encounters fundamental challenges when estimating non-linear properties, owing to the inherent linearity of quantum mechanics. 
Although recent advances in single-copy randomized measurement protocols have achieved optimal sample complexity for specific tasks like state purity estimation, generalizing these protocols to estimate broader classes of non-linear properties without sacrificing optimality remains an open problem. 
In this work, we introduce the observable-driven randomized measurement (ORM) protocol enabling the estimation of ${\rm Tr}(O\rho^2)$ for an arbitrary observable $O$---an essential quantity in quantum computing and many-body physics. 
We establish an upper bound for ORM's sample complexity and show its optimality for observables with a large trace-norm, including Pauli and local observables, closing a gap in the literature. 
For these observables, ORM admits an efficient implementation with Clifford circuits.
Numerical experiments validate that ORM requires substantially fewer state samples to achieve the same precision compared to classical shadows.
Additionally, we introduce a braiding randomized measurement protocol for multiple low-rank non-linear observables, reducing circuit complexities in practical applications.
\end{abstract}

\maketitle

\section{Introduction}
Quantum learning is pivotal for both verifying and extracting information from quantum systems, addressing the growing demands of quantum technologies while at the same time enabling transformative applications~\cite{Eisert2020certification, elben2019toolbox,Kliesch2021Certification,Cerezo2022qml,Eisert2025Mind}. To date, by far the majority of quantum learning theory has focused on the estimation of linear properties of quantum states.
However, non-linear properties, like the state purity $\Tr(\rho^2)$ and the non-linear expectation value $\Tr({O\rho^2})$, play vital roles 
in quantum simulation for 
virtual cooling~\cite{adam2016thermalization, cotler2019cooling}, quantum error mitigation~\cite{cai2023qem,hugginsVirtualDistillationQuantum2021,koczor2021exponential}, and instances of quantum algorithms~\cite{Huang2022QuantumAdvantage,Lloyd2014qpca}.
Specifically, estimating $\Tr(O\rho^2)/\Tr(\rho^2)$ yields estimates of a quantum system at a lower temperature than the physically prepared thermal state, thereby suppressing potential incoherent noise. 
Further applications include inferring 
entropic quantities and
the quantum Fisher information of an unknown quantum state~\cite{rath2021Fisher,yu2021Fisher} and detecting quantum phases of matter in mixed states~\cite{lee2023decoherence,Lee2025symmetryprotected,zhang2025probingmixedstatephases}.

Yet, the intrinsic linearity of quantum mechanics renders the estimation of non-linear properties considerably more challenging than that of a linear expectation value $\Tr(O\rho)$ of an observable $O$. 
For linear estimation, directly measuring single state copies in the observable's eigenbasis has optimal sample complexity in the Hilbert space dimension $d$.  
While quantum memory enables entangled measurements across multiple state copies, such resources only offer exponential advantage in more complex tasks~\cite{aharonov2022quantum,Huang2022QuantumAdvantage,chen2022memory,gong2024sample}.

In contrast, when estimating a non-linear property like $\Tr(O\rho^2)$, directly applying observable-driven techniques from linear estimation unavoidably requires coherently-controlled quantum memory~\cite{Barenco1997Stabilization, Buhrman2001Quantum,ekert2002direct,Islam2015entropy,Zhou2024hybrid,liu2024auxiliary,miller2024experimental,Yao2025nonlinear} and renders implementations on quantum devices challenging~\cite{Zhou2024hybrid,liu2024auxiliary,miller2024experimental}.  
Existing protocols using only single-copy quantum operations instead resort to general informationally complete measurements independent 
of the observable~\cite{elben2020mix,Vitale2024Robust} and post-process tensor products of 
so-called \emph{classical shadows} (CS) of the state~\cite{huang2020predicting,bertoni2022shallow}. 
Current 
lower bounds $\Omega(\sqrt{d})$, where $d$ is the dimension of Hilbert space, suggest that these protocols consuming $\cO(d)$ state copies are indeed sub-optimal~\cite{Huang2022QuantumAdvantage}. 
Purity estimation can be optimally implemented using \emph{randomized measurements}  (RM)~\cite{van2012Measuring,Brydges2019Probing,elben2019toolbox,elben2023randomized,cieslinski2023analysing, aharonov2022quantum,chen2022memory,Ohliger2013efficient}. 
How to optimally estimate $\Tr(O\rho^2)$ given a general observable $O$ from single state copies---a non-linear analog of directly measuring $O$---is in fact an open question. 

In this work, we answer this question by proposing an observable-driven protocol and establishing its optimality for a wide range of observables in the single-copy regime.
Our protocol generalizes the optimal RM protocol for purity estimation to the non-linear expectation value $\Tr(O\rho^2)$ of arbitrary observables $O$. 
The RM ensemble incorporates information about the target observable $O$.
As shown in Fig.~\ref{fig:framework}, our \emph{observable-driven randomized measurement (ORM)} protocol begins by decomposing the target observable into a few dichotomic observables, each of which has only two distinct eigenspaces. 
For each dichotomic component, we design an RM protocol that employs block-diagonal random unitary evolutions to estimate its non-linear expectation value.
Our analysis shows that for general observables with bounded operator norm on a quantum system of dimension $d$, the ORM protocol exhibits a sample complexity scaling of $\cO(\sqrt{d})$. 
Importantly, this scaling matches the fundamental lower bound for a broad class of observables, including all Pauli and local observables, across all single-copy schemes~\cite{Huang2022QuantumAdvantage, Ye2025ExponentialAdvantageReplica}, thereby is provably optimal.
The ORM protocol also offers significant practical advantages. It is sufficient to use a constant number of measurement bases for constant additive error and failure rate, independent of the system size or the target observable. Furthermore, the protocol features exceptionally simple post-processing, alleviating the need for precise, gate-by-gate calibration of the unitary operations.

\begin{figure}[!htbp]
\centering
\includegraphics[width=.98\linewidth]{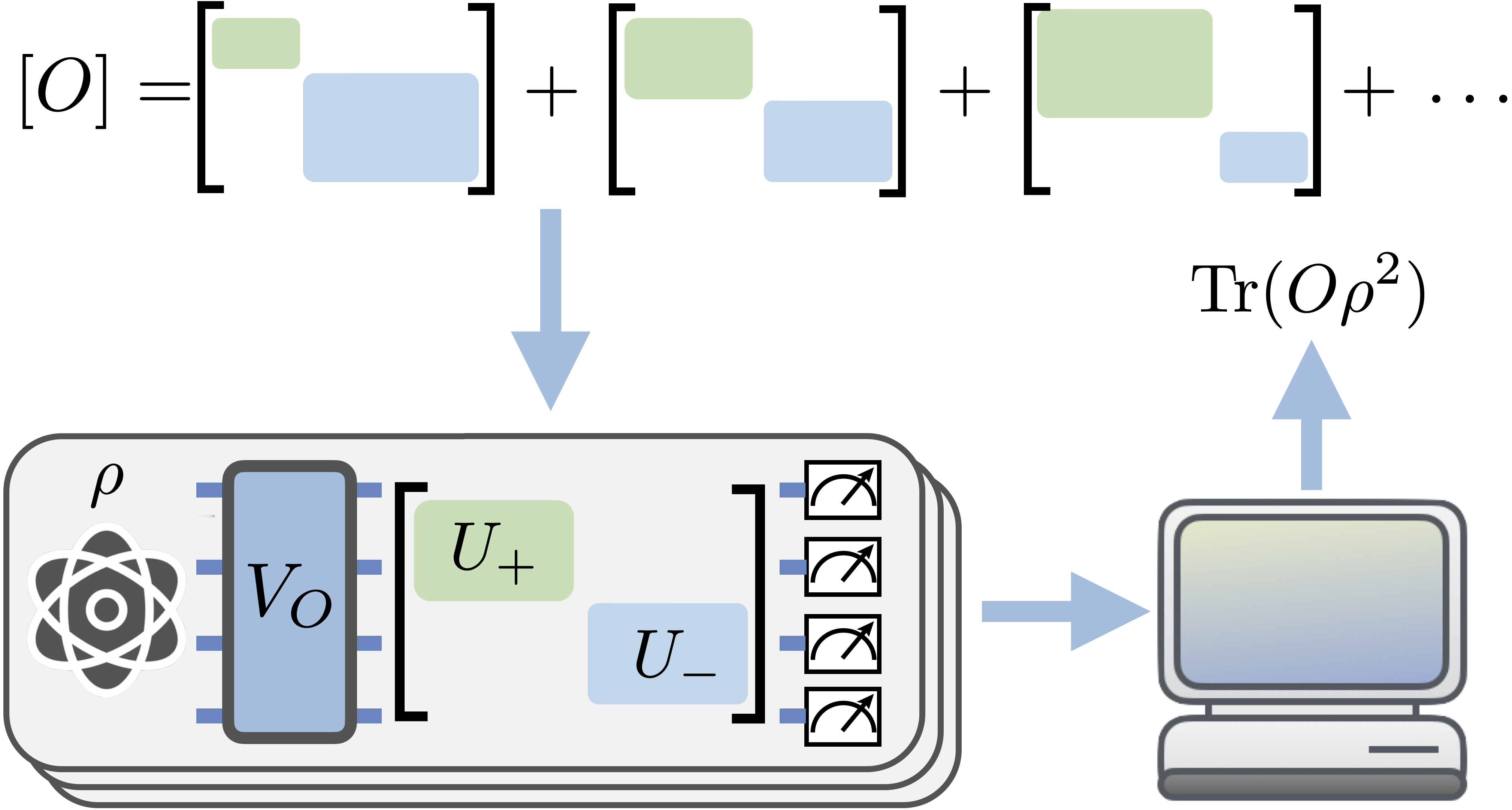}
\caption{The framework of observable-driven randomized measurement protocol. To estimate $\Tr(O\rho^2)$ for arbitrary observables $O$, the ORM protocol first decomposes $O$ into a few dichotomic observables.
For each dichotomic observable, one rotates the target state into the eigenbasis of the observable, evolves it with a block-diagonal random unitary, measures in the computational basis, and post-processes the measurement results to estimate its non-linear expectation value.}
\label{fig:framework}
\end{figure}

To enhance the practicality of our ORM protocol, we construct more efficient circuit implementations for practical observables. 
We first simplify the protocol for Pauli observables by introducing mid-circuit measurements and subsystem random unitaries, and propose a variant requiring only single-qubit gates. 
Built on this, we develop an alternative Clifford-circuit implementation that also achieves the optimal $\mathcal{O}(\sqrt{d})$ sample complexity. 
We further extend our analysis to other unitary ensembles—such as approximate designs, shallow circuits, and local unitaries—to establish the protocol's robustness. 
Finally, via a Pauli sampling protocol, we generalize these efficient primitives to a broad class of physical observables. 
For local Hamiltonians and stabilizer fidelities, this approach preserves the $\cO(\sqrt{d})$ sample complexity while significantly reducing circuit complexities.

Additionally, for constant-rank observables, we propose the \emph{braiding randomized measurement} (BRM) protocol, which simultaneously combines the strengths of randomized measurement and classical shadows. 
Specifically, BRM achieves the same sample complexity scaling as both classical shadows and ORM for constant-rank observables. Moreover, it significantly reduces the number of required measurement bases compared to classical shadows and offers more favorable scaling when simultaneously estimating multiple observables, compared to ORM.

This work is organized as follows. Section~\ref{sec:pre} revisits the RM protocol for purity estimation. Section~\ref{sec:observable_driven} introduces the ORM protocol and analyzes its sample complexity. Section~\ref{sec:efficient_circuit} further simplifies the circuit implementation of our protocols for a broad class of observables. We first present a version tailored to Pauli observables that admits a Clifford-circuit implementation, and then discuss extensions using approximate unitary designs, low-depth circuits, and local measurements. We also propose a Pauli sampling scheme that enables simple implementations for estimating local observables and stabilizer fidelities. 
Section~\ref{sec:applications} demonstrates applications of ORM to error mitigation, mixed-phase detection, and the estimation of entropic quantities; in particular, we numerically apply ORM to a virtual cooling task and show its advantage over classical shadows. Section~\ref{sec:low_rank} introduces the braiding randomized measurement (BRM) protocol for estimating low-rank observables. Finally, Section~\ref{sec:outlook} summarizes our main results and outlines directions for future research.
Our results are summarized in Tables~\ref{tab:CS_ORM} and \ref{tab:low_rank}.

\begin{table*}[htbp]
\centering
\begin{tabular}{c|c|c|c|c|c|c}
\hline 
\hline
 & Observables & Sample complexity & Measurement bases & Random unitary & \makecell{Information of unitary \\ in postprocessing?} & \makecell{Estimating $M$\\ independent observables}\\
\hline
CS  & $\norm{O}_{\infty} \le 1$ &  $\cO(\sqrt{d\Tr(O^2)})$ & $\cO(\sqrt{d\Tr(O^2)})$ & Unitary 3-design & Yes & $\cO(\log(M))$\\
\hline
ORM  & $\norm{O}_{\infty} \le 1$ &  $\cO(\sqrt{d})$ & $\cO(1)$ & \makecell{Block-diagonal \\unitary 4-design} & No & $\cO(M)$\\ 
\hline
ORM  & \makecell{Pauli observables, \\ local observables, \\ and stabilizer fidelities}  &  $\cO(\sqrt{d})$ & $\cO(1)$ & Clifford & No & $\cO(M)$\\
\hline \hline
\end{tabular}
\caption{The comparison of classical shadows and ORM in estimating $\Tr(O\rho^2)$ with constant additive error and failure probability.}\label{tab:CS_ORM}
\end{table*}

\begin{table*}[htbp]
\centering
\begin{tabular}{c|c|c|c|c|c|c}
\hline 
\hline
& Observables & Sample complexity & Measurement bases & Random unitary & \makecell{Information 
of unitary \\ in postprocessing?} & \makecell{Estimating $M$\\ independent observables}\\
\hline
CS  & $\Tr(O^2) = \cO(1)$ & $\cO(\sqrt{d})$ & $\cO(\sqrt{d})$ & Unitary 3-design & Yes & $\cO(\log(M))$\\
\hline
ORM  & $\Tr(O^2) = \cO(1)$ & $\cO(\sqrt{d})$ & $\cO(1)$ & \makecell{Block-diagonal\\ unitary 4-design} & No & $\cO(M)$\\
\hline
ORM  & Stabilizer fidelities & $\cO(\sqrt{d})$ & $\cO(1)$ & Clifford & No & $\cO(M)$\\
\hline
BRM  & $\Tr(O^2) = \cO(1)$ & $\cO(\sqrt{d})$ & $\cO(1)$ & Unitary 6-design & Yes & $\cO(\log(M))$\\
\hline \hline
\end{tabular}
\caption{The comparison of classical shadows, ORM, and BRM protocols in estimating low-rank non-linear observables with constant additive error and failure probability.}\label{tab:low_rank}
\end{table*}

\section{Revisit: Randomized measurement protocol for purity estimation}\label{sec:pre}
Randomness plays a crucial role in quantum information processing, such as in quantum sampling advantage~\citep{Arute2019supermacy,hangleiter2023sampling}, device benchmarking~\cite{knill2008benchmarking,helsen2022general}, noise suppression~\citep{wallmanNoiseTailoringScalable2016,caiConstructingSmallerPauli2019,tsubouchi2024symmetric}, efficient quantum state tomography~\cite{Ohliger2013efficient}, and quantum cryptography~\citep{ji2018pseudo}. RM has emerged as a key tool in quantum learning~\citep{van2012Measuring,elben2019toolbox,elben2023randomized,cieslinski2023analysing}, usually offering highly attractive performance compared to conventional tomography-based methods, while avoiding resource-intensive operations such as quantum memory or deep quantum circuits.

To motivate our method, we first revisit how to use an RM protocol to estimate the purity of a quantum state $\rho$~\cite{Brydges2019Probing,elben2019toolbox}. 
Throughout this work, we assume $\rho$ is an $n$-qubit $d$-dimensional quantum state.
The main idea is to perform repeated measurements on multiple random bases and then post-process the measurement outcomes. 
Specifically, we draw $N_U$ independent random unitaries from a suitable random unitary ensemble (often taken to be a unitary design~\cite{Designs}, see Appendix~\ref{app:random_unitaries}) to rotate the measurement basis. 
For each choice of $U$, we apply it to $\rho$, generating $U \rho U^{\dagger}$, and then perform $N_M$ computational basis measurements on the evolved state. 
Each measurement produces an $n$-bit outcome string $s_i$, and we denote the outcomes of all $N_M$ measurements as $\{s_i\}_{i=1}^{N_M}$.
Subsequently, we perform classical post-processing on these measurement outcomes to estimate the purity. 
When the random unitary is sampled from a unitary two-design, the post-processing function is given by
\begin{equation}\label{eq:RRM_coef_purity}
    X_2(s_1,s_2) = 
    \begin{cases} 
        d, & s_1 = s_2,\\
        -1, & s_1 \neq s_2,
    \end{cases}
\end{equation}
and the unbiased estimator for $\Tr(\rho^2)$ is chosen as
\begin{equation}\label{eq:purity_estimator}
\hat{M}_2^U=\binom{N_M}{2}^{-1}\sum\limits_{1 \le i < j \le N_M}X_2(s_i,s_j).
\end{equation}
We can further average the estimators with all $N_U$ choices of unitaries $\{U_u\}_{u=1}^{N_U}$ to obtain 
\begin{equation}
\hat{M}_2=\frac1{N_U}\sum\limits_{u=1}^{N_U}\hat{M}_2^{U_u}.
\end{equation}
Each $\hat{M}_2^U$ is an unbiased estimator of $\Tr(\rho^2)$~\cite{elben2019toolbox}.
Moreover, if the random unitary ensemble forms a unitary 4-design, then the variance of the estimator $\hat{M}_2$ scales roughly as ${d}/{(N_UN_M^2)}$~\cite{elben2018renyi,anshu2022distributed}. 
Thus, by setting $N_M = \cO(\sqrt{d})$ and $N_U=\cO(1)$, the purity can be estimated with constant additive error, saturating the lower 
bound on sample complexity for purity estimation using single-copy operations ~\cite{aharonov2022quantum,chen2022memory}. 
Another feature of the RM protocol is that, while the number of measurements in a single basis $N_M$ scales exponentially with the system 
size, selecting a constant number of measurement bases $N_U$ suffices for accurate state purity estimation~\cite{elben2019toolbox,Liu2025CertifyingQuantumTemporal}. 
Moreover, its post-processing does not require knowledge of the applied unitaries, providing advantages over classical-shadow-based quantum learning protocols in terms of computational cost for post-processing and robustness against miscalibration.

The core idea of making $\hat{M}_2^U$ an unbiased estimator of $\Tr(\rho^2)$ is that the post-processing function $X_2(s_1,s_2)$ in Eq.~\eqref{eq:RRM_coef_purity} virtually measures the SWAP observable between two copies of $\rho$~\cite{elben2019toolbox}. Specifically, we define a diagonal observable 
\begin{equation}
X_2=\sum\limits_{s_i,s_j}X_2(s_i,s_j)\ketbra{s_i,s_j}{s_i,s_j}
\end{equation}
and draw $U$ from a unitary 2-design ensemble, we have 
\begin{equation}
\underset{U}{\mathbb{E}}\left[U^{\dagger\otimes 2}X_2U^{\otimes 2}\right]=\mathbb{S}, 
\end{equation}
where $\mathbb{S}$ is the SWAP operator. 
Following this line of thought, we get
\begin{equation}\label{eq:purity_expectation}
\begin{aligned}
&\underset{U,\{s_i\}_i}{\mathbb{E}}\left[\hat{M}_2^U\right]\\
=&\underset{U}{\mathbb{E}}\left[\sum\limits_{s_i,s_j}X_2(s_i,s_j)\bra{s_i}U\rho U^\dagger\ket{s_i}\bra{s_j}U\rho U^\dagger\ket{s_j}\right]\\
=&\underset{U}{\mathbb{E}}\left[\sum\limits_{s_i,s_j}X_2(s_i,s_j)\Tr\left(\ketbra{s_i,s_j}{s_i,s_j}U^{\otimes 2}\rho^{\otimes 2} U^{\dagger \otimes 2}\right)\right]\\
=&\underset{U}{\mathbb{E}}\left[\Tr\left(U^{\dagger\otimes 2}X_2U^{\otimes 2}\rho^{\otimes 2}\right)\right]=\Tr(\mathbb{S}\rho^{\otimes 2})=\Tr(\rho^2).
\end{aligned}
\end{equation}
The construction of the RM protocol utilizes the property of purity that it is invariant under the action of an arbitrary unitary operation, i.e., $[\mathbb{S},U^{\otimes 2}]=0$.
Thus, the measurement statistics on a random basis reflect the information of state purity.
Following this logic, by incorporating the properties of the target non-linear observable, we can design an RM protocol to estimate more general state functions.

\section{Observable-driven randomized measurement protocol} \label{sec:observable_driven}
In this section, we present our main result: an observable-driven randomized measurement protocol to estimate the non-linear quantity $\Tr(O\rho^2)$ for arbitrary observables $O$. As the name suggests, the ORM protocol uses random unitaries drawn from an ensemble specific to the target observable $O$. 
We prove that ORM accurately estimates $\Tr(O\rho^2)$ with high probability using only $\cO(\sqrt{d})$ copies of $\rho$ for all observables with $\norm{O}_\infty\le1$. The result is the first matching the lower bounds derived in Refs.~\cite{Huang2022QuantumAdvantage,Ye2025ExponentialAdvantageReplica} and concludes an open theoretical question.

\subsection{Estimating dichotomic observables} \label{sec:dichotomic}
We first consider a dichotomic observable (i.e., a Hermitian matrix whose eigenvalues are $+1$ and $-1$), which is the foundation for developing an estimation protocol for general observables. 
A representative example is $O = Z_1 = Z \otimes I_{n-1}$, which acts as the Pauli-$Z$ operator on the first qubit and trivially on the remaining qubits. 
For any single-copy protocol, estimating $\Tr(Z_1\rho^2)$ to constant precision with constant failure probability has a sample complexity lower bound of $\Omega(\sqrt{d})$~\cite{Huang2022QuantumAdvantage}. 
In Appendix~\ref{app:lower_bound}, we show that this lower bound can be generalized to the estimation of all Pauli observables.
In the meantime, classical shadows require 
\begin{equation}
\cO(\sqrt{d \,\Tr(O^2)}) = \cO(d) 
\end{equation}
many samples~\cite{huang2020predicting, Seif2023ShadowDistillation,hu2022logical}, resulting in a quadratic gap relative to the lower bound. 
This gap primarily arises because the quantum experiments in classical shadows do not utilize information about the target observable. Here, we leverage the knowledge of $O$ to design a protocol that closes this gap.

Given the dichotomic observable $O$ with two eigenspaces $\Pi_{\pm}$ (eigenvalues $\pm1$ and projectors $P_{\pm}$) and corresponding dimensions $d_{\pm}$, the target state $\rho$ can be written in a block form with respect to these subspaces
\begin{equation}\label{eq:block_rho}
\rho = \begin{bmatrix}
        \tilde{\rho}_+ & B \\
        B^\dagger & \tilde{\rho}_-
        \end{bmatrix},
\end{equation}
where $\tilde{\rho}_{\pm}=P_\pm\rho P_\pm$. 
A key observation is that $\Tr(O\rho^2)$ can be expressed as a difference of purities within each eigen-subspace
\begin{equation}\label{eq:observation_block}
\begin{aligned}
\Tr(O\rho^2)
=\Tr\left(
\begin{bmatrix}
1 & 0\\
0 & -1
\end{bmatrix}
\begin{bmatrix}
\tilde{\rho}_+ & B \\
B^\dagger & \tilde{\rho}_-
\end{bmatrix}^2
\right)=\Tr\bigl(\tilde{\rho}_+^2\bigr)\;-\;\Tr\bigl(\tilde{\rho}_-^2\bigr).
\end{aligned}
\end{equation}
\noindent Therefore, our idea for estimating $\Tr(O\rho^2)$ is to separately estimate the purities of $\tilde{\rho}_\pm$ and then compute their difference. This can be done by modifying the random unitaries and the post-processing function.
As shown in Fig.~\ref{fig:framework}, given the dichotomic observable $O$, one first rotates the target state $\rho$ into the eigenbasis of $O$ using $V_O$ (such that $V_O O V_O^\dagger$ is a diagonal matrix), which makes the density matrix the form of Eq.~\eqref{eq:block_rho} in the computational basis.
To estimate purities of $\tilde{\rho}_\pm$, we rotate the state $V_O\rho V_O^\dagger$ with the unitary $U=\tilde{U}_+ \oplus \tilde{U}_-$, where $\tilde{U}_+$ and $\tilde{U}_-$ are independent random unitaries acting on the $d_+$- and $d_-$-dimensional subspaces, respectively.
The diagonal blocks of the rotated density matrix are $\tilde{U}_+\tilde{\rho}_+\tilde{U}_+^\dagger$ and $\tilde{U}_-\tilde{\rho}_-\tilde{U}_-^\dagger$, which corresponds to original blocks evolved under random unitaries $\tilde{U}_+$ and $\tilde{U}_-$.
Similar to the RM protocol, in each random unitary, we perform $N_M$ times of computational basis measurement to collect the outcomes $\{s_i\}_{i=1}^{N_M}$.
We then classify them into two groups depending on their eigenspaces.
Then, one can employ the estimator of Eq.~\eqref{eq:purity_estimator} for each group of data to estimate the purity of the corresponding block and evaluate the difference between them to estimate $\Tr(O\rho^2)$.

To summarize, the unbiased estimator of $\Tr(O\rho^2)$ can be chosen as
\begin{equation}\label{eq:unbiased_estimator_single}
    \omega_U = \binom{N_M}{2}^{-1} \sum\limits_{1 \le i < j \le N_M} X_2^{\mathrm{b}}(s_i,s_j).
\end{equation}
Here, the post-processing function is defined as
\begin{equation}\label{eq:ORM_data_processing}
    X_2^{\mathrm{b}}(s_i,s_j) = 
    \begin{cases}
        \lambda_\ell \,X_2^\ell(s_i,s_j), & \text{if } s_i, s_j \in \Pi_\ell \text{ for some } \ell\in \{+, -\},\\
        0, & \text{otherwise},
    \end{cases}
\end{equation}
where $\lambda_{\pm}=\pm 1$ are the eigenvalues corresponding to $\Pi_\pm$, and $X_2^\pm(s_i,s_j)$ is given by Eq.~\eqref{eq:RRM_coef_purity} with $d$ replaced by $d_\pm$. 
Note that when $O$ is the identity operator, this estimator reduces to the estimator of the original RM protocol (using a global unitary) for purity estimation~\cite{van2012Measuring, Brydges2019Probing, elben2019toolbox}.
Combining the original proof of RM shown in Eq.~\eqref{eq:purity_expectation} and the observation of Eq.~\eqref{eq:observation_block}, one can prove that $\omega_U$ is indeed an unbiased estimator of $\Tr(O\rho^2)$ (see Appendix~\ref{app:observable-driven_analysis} for details). 
To further reduce the variance of the estimator, we can take the average of multiple estimators in Eq.~\eqref{eq:unbiased_estimator_single} estimated with $U$ 
being chosen as $N_U$ independent random unitaries $\{U_1,\cdots,U_{N_U}\}$, leading to 
\begin{equation}\label{eq:unbiased_estimator}
    \omega = \frac1{N_U}\sum\limits_{u=1}^{N_U}\omega_{\,U_u}.
\end{equation}
Furthermore, by repeating the protocol $T = \cO\bigl(\log (\delta^{-1})\bigr)$ times and applying the median-of-means method, we can suppress the failure probability to $\delta$. 
We summarize the full procedure in Protocol~\ref{protocol:dichotomic}.

\begin{algorithm}[htbp] \label{protocol:dichotomic}
\caption{Estimating $\Tr(O\rho^2)$ for a dichotomic observable $O$}
\KwIn{
    $T, N_U, N_M$: number of repetitions, unitaries per repetition, and measurements per unitary.
    $O$: classical description of a dichotomic observable and its two eigenspaces with dimensions $d_+$ and $d_-$. 
    $T \times N_U \times N_M$ copies of $\rho$.
}
\KwOut{$\omega$: an estimator of $\Tr(O\rho^2)$.}

Find the unitary $V_O$ that diagonalizes $O$, i.e., $V_O O V_O^\dagger$ is a diagonal matrix in the computational basis.

\For{$t = 1$ \KwTo $T$} {
\For{$u = 1$ \KwTo $N_U$}{
    Sample random unitaries $\tilde{U}_+$ and $\tilde{U}_-$ with dimensions $d_+$ and $d_-$, respectively, and let $U=\tilde{U}_+\oplus\tilde{U}_-$.\\
    \For{$i = 1$ \KwTo $N_M$} {
        Rotate $\rho$ with $V_O$ and $U$ to get $UV_O\rho V_O^\dagger U^{\dagger}$.\\
        Perform a computational basis measurement and obtain an outcome $s_{u,i}^{(t)}$.
    }
    Compute $\omega_{u}^{(t)} = \binom{N_M}{2}^{-1} \sum\limits_{1\le i < j \le N_M} X_2^{\mathrm{b}}(s_{u,i}^{(t)}, s_{u,j}^{(t)})$.
    }
    Compute $\omega^{(t)} = \frac{1}{N_U} \sum\limits_{u=1}^{N_U} \omega_{u}^{(t)}$.
}
Return $\omega = \mathrm{median}\{\omega^{(1)},\omega^{(2)},\dots,\omega^{(T)}\}$.
\end{algorithm}

Recall that the RM protocol described earlier can estimate the purity $\Tr(\rho^2)$ of a density matrix to constant additive error using $\cO(\sqrt{d})$ samples. The procedure of the ORM protocol for estimating a dichotomic observable is equivalent to performing the RM protocol in two blocks of the target state and then post-processing accordingly.
Consequently, the ORM protocol has a similar performance guarantee to the RM protocol~\cite{anshu2022distributed}, as stated in the following theorem (the proof is provided in Appendix~\ref{app:proof_first_theorem}):

\begin{lemma}[Estimating dichotomic observables with block-diagonal unitary 4-designs]
\label{lem:estimating_dichotomic}
Let $O$ be a dichotomic observable, the dimensions of its two eigenspaces be $d_+, d_-\neq0$, and two random unitaries $\tilde{U}_+$ and $\tilde{U}_-$ be sampled from a unitary 4-design. 
For any input state $\rho$ in a $d$-dimensional qubit system,  Protocol~\ref{protocol:dichotomic} estimates 
the non-linear quantity $\Tr(O\rho^2)$ to additive error $\varepsilon$ with probability at least $1-\delta$,  with a sample complexity of \begin{equation}
\cO\left(\max\left\{\frac{\sqrt{d}}{\varepsilon},\frac{\sqrt{d/d_+}}{\varepsilon^2},\frac{\sqrt{d/d_-}}{\varepsilon^2}\right\} \log (\delta^{-1})\right).
\end{equation}
Specifically, the number of repetitions is $T = \cO\bigl(\log(\delta^{-1})\bigr)$, the number 
of random unitaries per repetition is $N_U = \cO\left(\max\left\{1,\frac1{d_+\varepsilon^2},\frac1{d_-\varepsilon^2}\right\}\right)$, and the number of measurements per unitary is 
$N_M = \cO\bigl(\frac{\sqrt{d}}{\varepsilon\sqrt{N_U}}\bigr)$.
\end{lemma}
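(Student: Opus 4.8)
The plan is to reduce the analysis to two essentially independent runs of the ordinary RM purity estimator, one on each diagonal block, and then propagate the known RM variance bound through the identity $\Tr(O\rho^2)=\Tr(\tilde{\rho}_+^2)-\Tr(\tilde{\rho}_-^2)$ of Eq.~\eqref{eq:observation_block}. The first point I would make precise is a structural observation: conditioned on $U=\tilde{U}_+\oplus\tilde{U}_-$, the $N_M$ outcomes are i.i.d.\ with distribution $p_U(s)=\bra{s}UV_O\rho V_O^\dagger U^\dagger\ket{s}$, and because $U$ is block diagonal this equals $\bra{s}\tilde{U}_\ell\tilde{\rho}_\ell\tilde{U}_\ell^\dagger\ket{s}$ for $s\in\Pi_\ell$. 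In particular the off-diagonal block $B$ never enters the computational-basis statistics, so the whole argument depends only on the unnormalized blocks $\tilde{\rho}_\pm$ with $\Tr(\tilde{\rho}_\pm)=p_\pm$. Combined with the $2$-design identity already used in Eq.~\eqref{eq:purity_expectation}, applied separately on each subspace, this yields $\bE[\omega_U]=\Tr(\tilde{\rho}_+^2)-\Tr(\tilde{\rho}_-^2)=\Tr(O\rho^2)$, i.e.\ unbiasedness (carried out in Appendix~\ref{app:observable-driven_analysis}).

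Next I would bound the variance of the averaged estimator via $\Var[\omega]=\Var[\omega_U]/N_U$ and the law of total variance $\Var[\omega_U]=\bE_U\bigl[\Var_s[\omega_U\mid U]\bigr]+\Var_U\bigl[\bE_s[\omega_U\mid U]\bigr]$. Since $\omega_U$ is a degree-two U-statistic with kernel $X_2^{\mathrm{b}}$, its conditional (measurement) variance follows from the Hoeffding/Hájek decomposition $\Var_s[\omega_U\mid U]=\cO(\zeta_1/N_M+\zeta_2/N_M^2)$, where $\zeta_2=\Var_s[X_2^{\mathrm{b}}(s_1,s_2)\mid U]$ and $\zeta_1$ is the variance of its first projection. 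The kernel has magnitude at most $\max(d_+,d_-)\le d$ and is large only on the diagonal (collision) events, whose expected probability is $\cO(1/d_\pm)$ per block; hence $\zeta_2=\cO(d)$ and the $\zeta_2/N_M^2$ contribution dominates, of order $\cO(d/N_M^2)$, exactly mirroring ordinary RM purity estimation, while the $\zeta_1/N_M$ term is subleading.

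The remaining term $\Var_U[\bE_s[\omega_U\mid U]]$ is the $N_M\to\infty$ variance of the single-basis estimate, and this is where the $4$-design hypothesis is needed: averaging four copies of $\tilde{U}_\pm$ on each block (independently, as the two blocks are sampled independently) against the purity kernel reproduces the known single-basis RM variance, which scales as $\cO(1/d_++1/d_-)$ up to constants set by the block traces and purities. Collecting both contributions gives $\Var[\omega]\lesssim \frac{d}{N_U N_M^2}+\frac{1}{N_U}\bigl(\frac1{d_+}+\frac1{d_-}\bigr)$. Choosing $N_M=\cO\bigl(\sqrt{d}/(\varepsilon\sqrt{N_U})\bigr)$ forces the first term to $\cO(\varepsilon^2)$, and $N_U=\cO\bigl(\max\{1,1/(d_+\varepsilon^2),1/(d_-\varepsilon^2)\}\bigr)$ forces the second to $\cO(\varepsilon^2)$, so $\Var[\omega^{(t)}]=\cO(\varepsilon^2)$ per repetition. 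A Chebyshev bound then places each $\omega^{(t)}$ within $\varepsilon$ of $\Tr(O\rho^2)$ with constant probability, and the median over $T=\cO(\log\delta^{-1})$ independent repetitions boosts the success probability to $1-\delta$ by the standard median-of-means argument. Multiplying $T\cdot N_U\cdot N_M$ reproduces the advertised $\cO\bigl(\max\{\sqrt{d}/\varepsilon,\sqrt{d/d_+}/\varepsilon^2,\sqrt{d/d_-}/\varepsilon^2\}\log\delta^{-1}\bigr)$.

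The main obstacle is the fourth-moment computation behind the unitary-variance term: one must evaluate $\Var_U[\bE_s[\omega_U\mid U]]$ for block-diagonal $4$-designs and confirm the $\cO(1/d_\pm)$ scaling, while simultaneously controlling the coupling between the two blocks induced by the random binomial split of the shots, $N_+\sim\mathrm{Binomial}(N_M,p_+)$, and by the fact that $\tilde{\rho}_\pm$ are unnormalized with $\Tr(\tilde{\rho}_\pm)=p_\pm$. The delicate bookkeeping is to show that the cross-block covariance and the stray factors of $p_\pm$ do not degrade the scaling, so that the block-wise estimates combine additively as above; this is precisely the step I would carry out in full in Appendix~\ref{app:proof_first_theorem}, reusing the RM variance bounds of Ref.~\cite{anshu2022distributed} on each subspace.
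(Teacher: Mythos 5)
Your proposal is correct and follows essentially the same route as the paper: reduce to per-block purity estimation via $\Tr(O\rho^2)=\Tr(\tilde{\rho}_+^2)-\Tr(\tilde{\rho}_-^2)$, establish unbiasedness with a per-block $2$-design twirl, bound the variance by $\cO\bigl(\tfrac{1}{N_U}(\tfrac{d}{N_M^2}+\tfrac{1}{d_+}+\tfrac{1}{d_-})\bigr)$ using up to fourth-order moments of the block unitaries, and finish with Chebyshev plus median-of-means with the same choices of $N_U$, $N_M$, and $T$. The only difference is bookkeeping --- you organize the variance via the law of total variance and the U-statistic $(\zeta_1,\zeta_2)$ decomposition, whereas the paper expands $\bE[\omega_u^2]$ directly by coincidence pattern $\Delta\in\{2,3,4\}$ with explicit Weingarten coefficients --- and your flagged worry about cross-block covariance in the unitary-fluctuation term is moot, since $\tilde{U}_+$ and $\tilde{U}_-$ are independent and that term therefore splits additively over the two blocks.
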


Note that the sample complexity given in Lemma~\ref{lem:estimating_dichotomic} cannot cover the sample complexity for estimating purity, which corresponds to the observable of identity $\mathbb{I}$ with $d_-=0$.
Actually, by slightly modifying our derivation in Appendix~\ref{app:proof_first_theorem}, we can derive the sample complexity for purity estimation~\cite{anshu2022distributed}.

\begin{remark}[Purity estimation with unitary 4-designs]\label{rmk:purity}
When $O$ is the identity matrix, $O = \mathbb{I}$, i.e., has a unique eigenvalue $1$, Protocol~\ref{protocol:dichotomic} still applies by setting $d_+=d$, $d_-=0$. In this case, ORM is equivalent to the RM protocol measuring the purity~\cite{elben2019toolbox}. The sample complexity is $\cO\left(\max\left\{\frac{\sqrt{d}}{\varepsilon},\frac{1}{\varepsilon^2}\right\}\log\left(\delta^{-1}\right)\right)$ by setting $T = \cO\bigl(\log(\delta^{-1})\bigr)$, $N_U = \cO\left(\max\left\{1,\frac{1}{d\varepsilon^2}\right\}\right)$, and $N_M=\cO\left(\frac{\sqrt{d}}{\varepsilon\sqrt{N_U}}\right)$.
\end{remark}

Note that when $d_{\pm} = \Omega(d)$, Protocol~\ref{protocol:dichotomic} achieves a sample complexity upper bound as $\cO\left(\max\left\{\frac{\sqrt{d}}{\varepsilon},\frac{1}{\varepsilon^2}\right\}\right)$, which is similar with the sample complexity for estimating purity.
However, when $d_-=\cO(1)$ or $d_+=\cO(1)$, the sample complexity upper bound can reach $\cO(\frac{\sqrt{d}}{\varepsilon^2})$, which is worse than the purity estimation.
We can use an observable decomposition technique to avoid this upper bound, the proof of which is left to  Appendix~\ref{app:decomp_dichotomic_obs}.
\begin{lemma}[Dichotomic decomposition]\label{lem:Dichotomic_decomposition}
For any dichotomic observable $O$ with $\min\{d_+,d_-\} \le d/4$, one can decompose $O$ as
$O = \frac{1}{2}(\pm \mathbb{I}+O_1+O_2+O_3)$,
where each $O_i$ is a dichotomic observable with subspace dimensions being $\Omega(d)$.
\end{lemma}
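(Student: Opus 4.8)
The plan is to diagonalize $O$ and build the three components explicitly by a combinatorial distribution of the eigenvalues $\pm1$ over the basis. First I would assume without loss of generality that $\Pi_-$ is the smaller eigenspace, so that $d_- = \min\{d_+,d_-\} \le d/4$ and hence $d_+ = d - d_- \ge 3d/4$; the case in which $\Pi_+$ is the minority follows by applying the identical construction to $-O$, and this is precisely what produces the two choices of sign $\pm\mathbb{I}$ in the statement. Working in the eigenbasis of $O$, it then suffices to construct diagonal matrices $O_i$ with entries in $\{+1,-1\}$.

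The key identity I want to realize is $O_1+O_2+O_3 = 2O - \mathbb{I}$, so the three components must sum to $+1$ on $\Pi_+$ and to $-3$ on $\Pi_-$. Since each diagonal entry of an $O_i$ is $\pm1$, a sum of three such entries equals $+1$ exactly when precisely one of them is $-1$, and equals $-3$ exactly when all three are $-1$. This dictates the construction: on the small subspace $\Pi_-$ every $O_i$ takes the value $-1$, while on the large subspace $\Pi_+$ I partition the $d_+$ basis vectors into three blocks $A_1,A_2,A_3$ of nearly equal size and set $O_i=-1$ on $A_i$ and $O_i=+1$ on $A_j$ for $j\neq i$. Verifying $O = \frac12(\mathbb{I}+O_1+O_2+O_3)$ is then a direct check on each of the four pieces $A_1,A_2,A_3,\Pi_-$.

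It remains to confirm the dimension bounds. Each block obeys $|A_i| \ge \lfloor d_+/3\rfloor = \Omega(d)$ because $d_+ \ge 3d/4$. The $-1$ eigenspace of $O_i$ is $A_i \cup \Pi_-$, of dimension $|A_i| + d_- \ge d/4 = \Omega(d)$, while its $+1$ eigenspace is $A_j \cup A_k$ with $\{i,j,k\}=\{1,2,3\}$, of dimension at least $2\lfloor d_+/3\rfloor \ge d/2 = \Omega(d)$. In particular both eigenspaces are nonzero for large $d$, so each $O_i$ is a genuine dichotomic observable with balanced eigenspaces.

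There is no substantive analytic obstacle here; the entire content is the combinatorial observation that the ``all $-1$'' pattern (summing to $-3$) must be placed on the small subspace, whereas the ``single $-1$'' pattern (summing to $+1$) is spread evenly over the large subspace. The only points requiring care are the bookkeeping of which sign $\pm\mathbb{I}$ arises---fixed by whether $\Pi_-$ or $\Pi_+$ is the minority subspace---and ensuring the near-equal three-way partition of $\Pi_+$ keeps every block of size $\Omega(d)$ rather than letting one of them collapse.
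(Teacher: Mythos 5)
Your construction is the same as the paper's: partition the majority eigenspace into three nearly equal blocks, set $O_i=-1$ on the $i$-th block and on the minority eigenspace and $+1$ elsewhere, verify the identity blockwise, and obtain the $-\mathbb{I}$ sign by applying the construction to $-O$. The dimension bookkeeping matches as well, so this is correct and follows essentially the same route.
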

\noindent Therefore, by estimating $\Tr(\rho^2)$ and each $\Tr(O_i\rho^2)$ to $\cO(\varepsilon)$ precision and combining the results, we have:
\begin{theorem}[Improved scaling in estimating dichotomic observables with block-diagonal unitary 4-designs]\label{thm:good_dicho_estimation}
For any dichotomic observable $O$, $\Tr(O\rho^2)$ can be estimated to within additive error $\varepsilon$ with probability at least $1-\delta$ using $\cO\left(\max\left\{\frac{\sqrt{d}}{\varepsilon},\frac{1}{\varepsilon^2}\right\}\log(\delta^{-1})\right)$
sample complexity with single-copy operations. 
\end{theorem}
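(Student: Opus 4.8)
The plan is to reduce to two regimes according to how balanced the eigenspaces of $O$ are, using Lemma~\ref{lem:estimating_dichotomic}, Remark~\ref{rmk:purity}, and Lemma~\ref{lem:Dichotomic_decomposition} as black boxes. First I would observe that the only obstruction to the desired bound in Lemma~\ref{lem:estimating_dichotomic} is the pair of terms $\sqrt{d/d_\pm}/\varepsilon^2$, which blow up precisely when one eigenspace is small. Hence I split into the balanced case $\min\{d_+,d_-\} > d/4$ and the unbalanced case $\min\{d_+,d_-\} \le d/4$; the degenerate cases $d_-=0$ or $d_+=0$ correspond to $O=\pm\mathbb{I}$ and are handled directly by Remark~\ref{rmk:purity}.

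In the balanced case, both $d_\pm = \Omega(d)$, so $\sqrt{d/d_\pm} = \cO(1)$, and Lemma~\ref{lem:estimating_dichotomic} immediately yields the claimed sample complexity $\cO\bigl(\max\{\sqrt{d}/\varepsilon,\, 1/\varepsilon^2\}\log(\delta^{-1})\bigr)$. In the unbalanced case I would invoke Lemma~\ref{lem:Dichotomic_decomposition} to write $O = \tfrac{1}{2}(\pm\mathbb{I}+O_1+O_2+O_3)$, so that by linearity $\Tr(O\rho^2) = \tfrac{1}{2}\bigl(\pm\Tr(\rho^2)+\sum_{i=1}^3\Tr(O_i\rho^2)\bigr)$. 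Each of the three $O_i$ has balanced eigenspaces of dimension $\Omega(d)$ by the guarantee of the decomposition lemma, so each $\Tr(O_i\rho^2)$ falls into the balanced case and is estimable within the target bound; the purity term $\Tr(\rho^2)$ is handled by Remark~\ref{rmk:purity}, which achieves the identical bound.

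Finally I would assemble the estimator for $\Tr(O\rho^2)$ as the corresponding $\tfrac{1}{2}$-weighted combination of the four individual estimators. To control the total additive error by the triangle inequality, I would run each of the four sub-protocols at precision $\Theta(\varepsilon)$ (a constant rescaling, since there are only four terms and the prefactor $\tfrac{1}{2}$ is constant), and to control the total failure probability at $\delta$ I would run each at failure probability $\Theta(\delta)$ and take a union bound. Since $\log(\Theta(\delta)^{-1}) = \cO(\log(\delta^{-1}))$ and the number of terms is constant, the per-term sample complexities add up to the same $\cO\bigl(\max\{\sqrt{d}/\varepsilon,\,1/\varepsilon^2\}\log(\delta^{-1})\bigr)$ scaling. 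There is no deep obstacle given the three preceding results; the only points requiring care are verifying that the decomposition lemma genuinely removes every small-subspace contribution, and the bookkeeping of the constant-factor degradation in precision and failure probability so that it is absorbed into the $\cO(\cdot)$.
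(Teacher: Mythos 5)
Your proposal is correct and follows essentially the same route as the paper: the balanced case $\min\{d_+,d_-\}>d/4$ is handled directly by Lemma~\ref{lem:estimating_dichotomic} since $\sqrt{d/d_\pm}=\cO(1)$, and the unbalanced case is reduced via Lemma~\ref{lem:Dichotomic_decomposition} to three balanced dichotomic observables plus a purity term (Remark~\ref{rmk:purity}), combined with constant-factor rescaling of $\varepsilon$ and a union bound over the four sub-estimates. The bookkeeping you flag is exactly what the paper's argument relies on, so there is nothing to add.
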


The ORM protocol offers several practical advantages. Firstly, it only requires a minimal number of measurement settings. Notably, for any estimation accuracy $\varepsilon$ that is not exponentially small, even a constant number of measurement bases, $N_U$, is sufficient to estimate $\Tr(O\rho^2)$ accurately. This feature significantly reduces the experimental overhead, as reconfiguring the measurements is on many platforms more demanding than repeated sampling in a fixed basis~\cite{Myerson2008High,Lenzini2018Integrated,Dassonneville2020Fast,Opremcak2021High}.

Another crucial advantage of our protocol lies in its simplified gate calibration and post-processing. 
The data processing is independent of the specific random unitary applied, requiring only comparing the sampled bitstrings (see Eq.~\eqref{eq:ORM_data_processing}). This implies that the unitaries do not need meticulous, gate-by-gate calibration; it is sufficient that they are drawn from the required unitary design. 
In contrast to methods like classical shadows, ORM avoids computationally intensive steps like classical state reconstruction, and the post-processing function can be evaluated efficiently in practice. 
The runtime of computing the estimator over $T$ samples scales as $\cO(nT^2)$ under a brute-force enumeration of pairwise contributions, and can be improved to $\cO(nT)$ using standard data structures such as a trie. 
Thus, we do not expect the protocol to be limited in its scalability by the post-processing complexity but rather by the data acquisition.

Furthermore, the ORM protocol achieves the optimal sample complexity for Pauli observables.
According to Lemma~\ref{lem:estimating_dichotomic}, when $\varepsilon$ and $\delta$ are constants, the sample complexity for estimating Pauli observables is upper bounded by $\cO(\sqrt{d})$, which is more efficient than other prevailing single-copy randomized measurement protocols such as classical shadows (see Appendix~\ref{app:classical_shadow} for an analysis of the classical shadows).
In Appendix~\ref{app:lower_bound}, we generalize the proof in Ref.~\cite{Huang2022QuantumAdvantage} and show that $\cO(\sqrt{d})$ is optimal for all Pauli observables, thus answering an open question in quantum learning. 
In the next section, we will show that our protocol is actually optimal for a much broader class of observables (see Fact~\ref{fact:large_one_norm}).

\subsection{Estimating general observables}
We now discuss how to generalize the protocol from dichotomic observables to estimating general observables $O$ subject to $\|O\|_{\infty} \le 1$ (otherwise rescale $O$ according to ${O}/{\|O\|_{\infty}}$).
The main observation is that an arbitrary observable can be approximated, to a desired precision $\varepsilon$, by a sum of $\cO(\log\varepsilon^{-1})$ dichotomic observables.
With the decomposition, we can then apply Protocol~\ref{protocol:dichotomic} to estimate the expectation value of each component and sum the resulting estimators to obtain the final estimator. 
We first show how to perform such a decomposition.

\begin{lemma}[Dichotomic decomposition of general observables]\label{lem:decompose}
    For any observable $O$ satisfying $\|O\|_{\infty} \le 1$, one can decompose $O$ as
    \begin{equation}\label{eq:O_decomposition}
        O = \sum\limits_{l=1}^{k} 2^{-l} O_{l} + O_{\Delta},
    \end{equation}
    where $k \in \mathbb{N}^{+}$, each $O_{l}$ is a dichotomic observable with eigenvalues $\pm 1$, and $\|O_{\Delta}\|_{\infty} \le 2^{-k}$.
\end{lemma}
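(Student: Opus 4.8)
The plan is to reduce the operator statement to a purely scalar one by diagonalizing $O$, and then to produce the dichotomic pieces $O_l$ via a greedy signed-binary expansion of each eigenvalue.

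First I would diagonalize, writing $O=\sum_j \lambda_j \ketbra{v_j}{v_j}$ with $\lambda_j\in[-1,1]$, which is possible since $\norm{O}_\infty\le 1$. Every operator appearing in Eq.~\eqref{eq:O_decomposition} will be taken diagonal in this common eigenbasis, so the claimed identity holds operator-wise if and only if it holds eigenvalue-by-eigenvalue, and the operator norm of the remainder is just the largest remainder over eigenvalues. This turns the task into the following scalar problem: for each $\lambda\in[-1,1]$, find signs $\sigma_1,\dots,\sigma_k\in\{+1,-1\}$ and a remainder $\delta$ with $|\delta|\le 2^{-k}$ such that $\lambda=\sum_{l=1}^k 2^{-l}\sigma_l+\delta$.

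Next, for a fixed eigenvalue I would run the greedy recursion $r_0=\lambda$, $\sigma_l=\sgn(r_{l-1})$ (with the convention $\sgn(0)=+1$), and $r_l=r_{l-1}-2^{-l}\sigma_l$. The core estimate is the invariant $|r_l|\le 2^{-l}$, proved by induction: the base case is $|r_0|=|\lambda|\le 1=2^0$, and if $|r_{l-1}|\le 2^{-(l-1)}=2\cdot 2^{-l}$, then subtracting $2^{-l}$ in the direction $\sgn(r_{l-1})$ lands $r_l$ in $[-2^{-l},2^{-l}]$. Taking $\delta=r_k$ then gives $|\delta|\le 2^{-k}$, and unrolling the recursion yields $\lambda=\sum_{l=1}^k 2^{-l}\sigma_l+r_k$.

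Finally I would reassemble the operators by applying this scalar procedure spectrally: define $O_l=\sum_j\sigma_l(\lambda_j)\ketbra{v_j}{v_j}$ and $O_\Delta=\sum_j r_k(\lambda_j)\ketbra{v_j}{v_j}$, where $\sigma_l(\lambda_j)$ and $r_k(\lambda_j)$ are the quantities produced by the recursion applied to $\lambda_j$. Each $O_l$ is Hermitian with spectrum in $\{+1,-1\}$, hence dichotomic, while $\norm{O_\Delta}_\infty=\max_j|r_k(\lambda_j)|\le 2^{-k}$; summing the per-eigenvalue identities recovers Eq.~\eqref{eq:O_decomposition}. The only delicate point is the invariant $|r_l|\le 2^{-l}$ — everything else is bookkeeping — and the sole edge case is $r_{l-1}=0$, handled by fixing $\sgn(0)=+1$, which keeps $\sigma_l\in\{+1,-1\}$ and still yields $|r_l|=2^{-l}$. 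A degenerate $O_l=\pm\mathbb{I}$ may occur when all signs agree; this remains admissible as a (degenerate) dichotomic observable, consistent with the $d_-=0$ case treated in Remark~\ref{rmk:purity}.
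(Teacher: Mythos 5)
Your proposal is correct and follows essentially the same route as the paper's proof: diagonalize $O$, apply a greedy signed-binary expansion to each eigenvalue with the convention $\sgn(0)=+1$, and verify the invariant $\lvert r_l\rvert\le 2^{-l}$ by induction. The only difference is cosmetic—you make the induction explicit where the paper says ``by construction''—so there is nothing to add.
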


\begin{proof}
    First, suppose $O$ is a diagonal matrix, $O = \mathrm{diag}\bigl(a_0,a_1,\dots,a_{d-1}\bigr)$,
    where $d = 2^n$ is the dimension of the Hilbert space. Given $k \in \mathbb{N}_{+}$, each diagonal element $a_i$ can be written as
    \begin{equation}
        a_i = \sum\limits_{l=1}^k 2^{-l} a_{i,l} \;+\; a_{i,\Delta},
    \end{equation}
    where $a_{i,l} \in \{\pm 1\}$ and $0 \le \lvert a_{i,\Delta}\rvert \le 2^{-k}$. To see this, define 
    $a_{i,1} = \mathrm{sgn}(a_i)$, recursively 
    \begin{equation}
    a_{i,l} = \mathrm{sgn}\Bigl(a_i - \sum\limits_{j=1}^{l-1} 2^{-j} a_{i,j}\Bigr)\end{equation} for $2\leq l\leq k$, and $a_{i,\Delta} = a_i - \sum\limits_{l=1}^{k} 2^{-l} a_{i,l}$, where the (slightly modified) sign function is given by
    \begin{equation} 
    \mathrm{sgn}(x) 
    = 
    \begin{cases}
        1, & x \ge 0, \\
        -1, & x < 0.
    \end{cases}
    \end{equation} 
    By construction, 
    \(\bigl\lvert a_i - \sum\limits_{j=1}^{l} 2^{-j} a_{i,j}\bigr\rvert \le 2^{-l}\) for all \(1 \le l \le k\). Let
    \begin{equation}
        \begin{split}
            O_l &= \diag(a_{0,l},a_{1,l},\dots,a_{d-1,l}), \quad 1 \le l \le k, \\
            O_{\Delta} &= \diag(a_{0,\Delta},a_{1,\Delta},\dots,a_{d-1,\Delta}).
        \end{split}
    \end{equation}
    We have that
    \begin{equation}\label{eq:obs_decomposition}
        O = \sum\limits_{l=1}^{k} 2^{-l} O_l \;+\; O_{\Delta},
    \end{equation}
    where each $O_l$ is diagonal with entries in $\{\pm 1\}$, and $\|O_{\Delta}\|_{\infty} = \max_i \{\abs{a_{i,\Delta}}\}\le 2^{-k}$.

    If $O$ is not diagonal, let $O = V_O^\dagger D V_O$ be its diagonalization, where $D$ is a real diagonal matrix and $V_O$ is unitary. Decompose $D$ into a sum of diagonal dichotomic matrices $D_{l}$ as above, and define $O_l = V_O^\dagger D_{l} V_O$, $O_{\Delta} = V_O^\dagger D_{\Delta} V_O$. 
    This yields the desired decomposition.
\end{proof}

Lemma~\ref{lem:decompose} shows that a general observable $O$ can be decomposed into a sum of dichotomic observables $\{O_{l}\}_{l=0}^k$. The contribution of residual term $O_{\Delta}$ to $\Tr(O\rho^2)$ decreases exponentially as $k$ increases. Notice that the constructive proof presented here does not necessarily yield the optimal number of decomposed terms, but provides a general decomposing procedure of an arbitrary observable to dichotomic ones.
By applying Protocol~\ref{protocol:dichotomic} to estimate each $O_l$ and ignoring $O_{\Delta}$ for  $k = \cO(\log \varepsilon^{-1})$, we obtain the protocol for estimating $\Tr(O\rho^2)$ for an arbitrary observable $O$ to a desired precision $\varepsilon$.
The full procedure is summarized in Protocol~\ref{protocol:general}, and the resulting sample complexity is stated below, with the proof being presented in Appendix~\ref{app:proof_of_general}.

\begin{theorem}[Estimating general observables with block-diagonal unitary 4-designs] \label{thm:estimating_general}
    Let $O$ be an observable with $\|O\|_{\infty} \le 1$. 
    For any input state $\rho$ in a $d$-dimensional qubit system, Protocol~\ref{protocol:general} estimates $\Tr(O\rho^2)$ to within additive error $\varepsilon$ 
    with probability at least $1-\delta$, using a sample complexity of 
    \begin{equation}
\cO\Bigl(\max\left\{\frac{\sqrt{d}}{\varepsilon}, \frac{1}{\varepsilon^{2}}\right\}\log\bigl[\delta^{-1}\log(\varepsilon^{-1})\bigr]\Bigr).
    \end{equation}
\end{theorem}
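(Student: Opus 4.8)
The plan is to reduce the general case to the dichotomic case already settled in Theorem~\ref{thm:good_dicho_estimation}, via the decomposition of Lemma~\ref{lem:decompose}. Fixing a truncation order $k$, I write $O = \sum_{l=1}^k 2^{-l} O_l + O_\Delta$ with each $O_l$ dichotomic and $\|O_\Delta\|_\infty \le 2^{-k}$, and take as the global estimator $\widehat{\omega} = \sum_{l=1}^k 2^{-l}\widehat{\omega}_l$, where $\widehat{\omega}_l$ is the Protocol~\ref{protocol:dichotomic} output for $\Tr(O_l\rho^2)$ and $O_\Delta$ is simply discarded. By the triangle inequality the error splits as
\begin{equation}
\left|\widehat{\omega} - \Tr(O\rho^2)\right| \le \sum_{l=1}^k 2^{-l}\left|\widehat{\omega}_l - \Tr(O_l\rho^2)\right| + \left|\Tr(O_\Delta\rho^2)\right|.
\end{equation}
The bias term is controlled by $|\Tr(O_\Delta\rho^2)| \le \|O_\Delta\|_\infty \Tr(\rho^2) \le 2^{-k}$, using $\Tr(\rho^2)\le 1$, so choosing $k = \lceil\log_2(2/\varepsilon)\rceil = \cO(\log\varepsilon^{-1})$ forces it below $\varepsilon/2$. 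It then remains to bound the stochastic term by $\varepsilon/2$ with high probability while controlling the total sample count.

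First I would estimate each $\Tr(O_l\rho^2)$ to some tolerance $\varepsilon_l$ with failure probability $\delta/k$ using Theorem~\ref{thm:good_dicho_estimation}, which costs $\cO\!\left(\max\{\sqrt d/\varepsilon_l,\,1/\varepsilon_l^2\}\log(k\delta^{-1})\right)$ samples per term; a union bound over the $k$ terms guarantees all estimates are accurate with probability at least $1-\delta$, and on that event the stochastic term is at most $\sum_l 2^{-l}\varepsilon_l$. The naive choice $\varepsilon_l = \varepsilon/2$ for every $l$ does bound this sum by $\varepsilon/2$, but it charges each of the $k$ terms the full per-term cost and produces a spurious factor $k = \cO(\log\varepsilon^{-1})$ in the sample complexity, precisely the factor absent from the claimed bound.

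The crux is therefore a nonuniform allocation of the error budget that exploits the geometric weights $2^{-l}$: set $\varepsilon_l = c\,2^{l/2}$ with $c = \Theta(\varepsilon)$, estimating the smaller-weight terms more coarsely. Then $\sum_{l=1}^k 2^{-l}\varepsilon_l = c\sum_l 2^{-l/2} = \cO(c) \le \varepsilon/2$, while the total per-term cost is dominated by convergent geometric series,
\begin{equation}
\sum_{l=1}^k \max\left\{\frac{\sqrt d}{\varepsilon_l},\, \frac{1}{\varepsilon_l^2}\right\} \le \frac{\sqrt d}{c}\sum_{l\ge1} 2^{-l/2} + \frac{1}{c^2}\sum_{l\ge1} 2^{-l} = \cO\!\left(\max\left\{\frac{\sqrt d}{\varepsilon},\, \frac{1}{\varepsilon^2}\right\}\right),
\end{equation}
using $\max\{a,b\}\le a+b$ and the fact that both series sum to constants independent of $k$. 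I would also verify that $\varepsilon_l$ never exceeds the trivial scale (indeed $\varepsilon_k = \Theta(\sqrt\varepsilon) < 1$), so each invocation of Theorem~\ref{thm:good_dicho_estimation} is legitimate. Multiplying by the shared $\log(k\delta^{-1}) = \cO\!\left(\log[\delta^{-1}\log\varepsilon^{-1}]\right)$ factor inherited from the median-of-means step, and recalling $k = \cO(\log\varepsilon^{-1})$, reproduces exactly the advertised sample complexity.

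The main obstacle is this final optimization: a direct term-by-term application of the dichotomic estimator inflates the cost by $\log\varepsilon^{-1}$, and removing it hinges on recognizing that the $2^{-l}$ weights permit geometrically growing per-term tolerances whose contributions to both the error budget and the sample count form convergent geometric series, leaving the dominant term unchanged. Everything else, namely the decomposition, the residual bound via $\Tr(\rho^2)\le1$, and the union bound over the $k$ failure events, is routine.
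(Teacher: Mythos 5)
Your proposal is correct and follows essentially the same route as the paper's proof: the same dichotomic decomposition with $k=\cO(\log\varepsilon^{-1})$, the same residual bound $|\Tr(O_\Delta\rho^2)|\le\|O_\Delta\|_\infty$, a union bound with per-term failure probability $\delta/k$, and the same key device of geometrically growing per-term tolerances so that both the error budget and the sample count are dominated by convergent geometric series. The only (immaterial) difference is your choice $\varepsilon_l=\Theta(\varepsilon)\,2^{l/2}$ versus the paper's $\varepsilon_l'=(3/2)^l\varepsilon/8$; any growth rate strictly between $1$ and $2$ works.
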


\begin{algorithm}[htbp] \label{protocol:general}
\caption{Estimating $\Tr(O\rho^2)$ for general observables (ORM)}
\KwIn{
    $\varepsilon$: desired precision of additive error.
    $\delta$: desired upper bound of failure probability.
    $O$: classical description of an observable with $\norm{O}_{\infty} \le 1$.
    $\cO\Bigl(\max\left\{\frac{\sqrt{d}}{\varepsilon}, \frac{1}{\varepsilon^{2}}\right\}\log\bigl[\delta^{-1}\log(\varepsilon^{-1})\bigr]\Bigr)$ copies of $\rho$.
}
\KwOut{$\omega$: an estimator of $\Tr(O\rho^2)$, satisfying $\Pr\left[\abs{\omega-\Tr(O\rho^2)}<\varepsilon\right]\geq1-\delta$.}

Set $k = \lceil \log_2(\varepsilon^{-1})\rceil + 1$, get the dichotomic deposition $O = \sum\limits_{l=1}^k2^{-l} O_l + O_{\Delta}$ using Lemma~\ref{lem:decompose}.

\For{$l = 1$ \KwTo $k$} {
    Set $\varepsilon'_l = (\frac{3}{2})^l\frac{\varepsilon}{8}, \delta' =\delta /k$.
    Obtain an estimator $\omega_{l}$ for $\Tr(O_{l}\rho^2)$ with precision $\varepsilon'$ and failure probability $\delta'$ using Remark~\ref{rmk:purity} and Theorem~\ref{thm:good_dicho_estimation}. 
}
Return $\omega = \sum\limits_{l=1}^k 2^{-l} \omega_l$.
\end{algorithm}

Furthermore, a matching lower bound exists for a broad class of observables, as established by the following result.
\begin{fact}[Lower bound of sample complexity for estimating observables with large trace-norm  \cite{Ye2025ExponentialAdvantageReplica}]\label{fact:large_one_norm}
Given an observable $O$ with $\norm{O}_1 = \Omega(d)$, any measurement protocol that uses single-copy operations on a $d$-dimensional input state $\rho$ to estimate $\Tr(O\rho^2)$ to within additive error $\varepsilon = \cO(1)$ requires at least $\Omega(\sqrt{d})$ sample complexity.
\end{fact}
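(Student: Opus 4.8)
The plan is to reduce the estimation task to a two-hypothesis distinguishing problem confined to a large subspace, and then invoke the single-copy distinguishing lower bound for purity-type quantities established in Refs.~\cite{chen2022memory,Huang2022QuantumAdvantage} (the statement itself is due to \cite{Ye2025ExponentialAdvantageReplica}). First I would isolate a subspace on which $O$ is sign-definite and bounded away from zero. Using the standing normalization $\norm{O}_\infty\le 1$ together with $\norm{O}_1=\Omega(d)$, a counting argument over the eigenvalues shows that at least $\Omega(d)$ of them have magnitude at least some constant $c_0>0$; taking the more populous sign yields a subspace $S$ with $\dim S = m = \Omega(d)$ on which, say, $O\succeq c_0\,\Pi_S$, where $\Pi_S$ is the projector onto $S$. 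The purpose of this step is that on $S$ the nonlinear functional $\Tr(O\rho^2)$ inherits the hardness of purity estimation.

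Next I would set up the hard instance entirely within $S$: hypothesis $H_0$ is the fixed maximally mixed state $\rho_0=\Pi_S/m$, while hypothesis $H_1$ is the ensemble $\rho=\ketbra{\phi}{\phi}$ with $\ket{\phi}$ drawn Haar-randomly from $S$. The crucial feature is $\mathbb{E}_{H_1}[\rho]=\Pi_S/m=\rho_0$, so the two hypotheses induce identical single-copy outcome distributions under any measurement, and a learner can distinguish them only through correlations across copies. I would then check the value gap: under $H_0$ one has $\Tr(O\rho_0^2)=\Tr(O\Pi_S)/m^2=\cO(1/m)=o(1)$, whereas under $H_1$, $\Tr(O\rho^2)=\bra{\phi}O\ket{\phi}$ concentrates (with $\cO(1/m)$ variance) around its Haar mean $\Tr(O\Pi_S)/m\ge c_0$, so $\Tr(O\rho^2)=\Omega(1)$ with high probability. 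Hence the two hypotheses differ in the target quantity by $\Omega(1)$, and any estimator achieving additive error $\varepsilon<c_0/4$ with constant success probability would distinguish $H_0$ from $H_1$ with constant advantage.

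Finally I would close the argument by importing the single-copy distinguishing lower bound. Since all states in both hypotheses are supported on $S$, measurement components acting outside $S$ carry no signal and the problem is effectively $m$-dimensional; distinguishing a Haar-random pure state from the maximally mixed state with single-copy (possibly adaptive) measurements is precisely the purity-testing obstruction analyzed via the learning-tree framework of \cite{chen2022memory,Huang2022QuantumAdvantage}, which requires $\Omega(\sqrt{m})=\Omega(\sqrt d)$ copies. Combined with the reduction above, this delivers the claimed $\Omega(\sqrt d)$ sample-complexity lower bound.

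The hard part will be the last step: the information-theoretic lower bound on distinguishing the random pure state from the maximally mixed state, where the first-order single-copy signal vanishes by construction and one must propagate the $\cO(1/m)$ per-pair two-copy deviation through an adaptive learning tree. This is exactly the content already secured in the cited references, so in our proof it is imported as a black box; the genuinely new work is the reduction, whose only delicate point is that $O|_S$ is not a scalar multiple of the identity. I resolve this by arguing through the averaged quantity $\mathbb{E}_{H_1}[\Tr(O\rho^2)]$ and Haar concentration rather than demanding a strict operator identity on $S$.
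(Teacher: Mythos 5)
Your argument is sound and essentially complete modulo the imported purity-testing lower bound, but it is not the route the paper takes. The paper does not prove Fact~\ref{fact:large_one_norm} in this generality at all: it cites Ref.~\cite{Ye2025ExponentialAdvantageReplica} for the general statement, and its own Appendix~\ref{app:lower_bound} treats only Pauli observables. There the hard instance is the two-block ensemble pair of Ref.~\cite{Huang2022QuantumAdvantage}, $\rho=\frac12\ketbra{0}\otimes\ketbra{\psi}+\frac12\ketbra{1}\otimes\mathbb{I}/2^{n-1}$ versus its block-swapped counterpart, for which $\Tr(Z_1\rho^2)\approx\pm\frac14$; arbitrary Paulis are then reached by conjugating both ensembles with the unitary mapping $P$ to $Z_1$, and the $\Omega(\sqrt d)$ distinguishing bound is imported from Theorem~9 of that reference. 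Your construction is different: you extract an $\Omega(d)$-dimensional sign-definite eigenspace $S$ of $O$ (which is where the hypotheses $\|O\|_\infty\le1$ and $\|O\|_1=\Omega(d)$ enter via the eigenvalue-counting step), and pit the maximally mixed state on $S$ against a Haar-random pure state in $S$, so the $\Omega(1)$ value gap comes from pure-versus-mixed rather than from which block carries the pure component; the black box you import is the purity-testing obstruction of Refs.~\cite{aharonov2022quantum,chen2022memory} rather than the block-discrimination theorem. What your route buys is generality and self-containedness: it covers every $O$ with $\|O\|_1=\Omega(d)$ directly (projectors onto large subspaces, local observables, and the identity, where it degenerates to plain purity testing), whereas the paper's own appendix argument covers only the balanced dichotomic/Pauli case and defers the rest to Ref.~\cite{Ye2025ExponentialAdvantageReplica}. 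Two small points worth tightening: since $S$ is spanned by eigenvectors with eigenvalue at least $c_0$, the bound $\bra{\phi}O\ket{\phi}\ge c_0$ holds deterministically for every $\ket{\phi}\in S$, so the Haar-concentration step is unnecessary; and the claim implicitly relies on the paper's standing normalization $\|O\|_\infty\le1$, without which the eigenvalue count (and indeed the additive-error statement itself) would fail --- you flag this correctly.
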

The condition $\norm{O}_1 = \Omega(d)$ is met by most observables of physical interest, including Pauli observables, local observables, and typical physical Hamiltonians, establishing the broad optimality of our protocol for these key applications. See Sec.~\ref{sec:efficient_circuit} for a detailed discussion and efficient implementations for estimating these observables.


While the ORM protocol offers significant advantages in terms of sample and computational complexity in post-processing, it also presents challenges to a practical implementation. The protocol requires classically diagonalizing the target observable. Furthermore, it necessitates the implementation of random unitaries drawn from block-diagonal unitary designs, with their structure tailored to the specific observable being measured. Implementing these specialized ensembles generally has a similar circuit complexity as realizing global random unitaries. Additionally, the post-measurement data from an experiment cannot be reused for estimating multiple observables simultaneously.
A comparison between classical shadows and ORM is summarized in Table~\ref{tab:CS_ORM}.

\section{Efficient circuit implementations}\label{sec:efficient_circuit}

While the ORM protocol is theoretically appealing due to its optimal sample complexity, its practical utility additionally requires scalable implementation of the measurements. In this section, we develop efficient circuit implementations of ORM for several physically relevant observables. 
First, we present the ORM protocol tailored for Pauli observables. Then, we provide performance guarantees for ORM when replacing the unitary 4-design with Clifford unitaries for Pauli observables,  and discuss its performance when using approximate unitary designs and local gates. 
Finally, we develop a Pauli sampling protocol that decomposes the target observables into several Pauli strings and performs importance sampling on those strings to estimate their non-linear expectation values. We show that our ORM achieves $\cO(\sqrt{d})$ sample complexity for a wide range of physical observables and is sample-optimal for many of them.

\subsection{Estimating Pauli observables}\label{sec:hamiltonian}

The unitary transformation that rotates an arbitrary observable into its eigenbasis can be prohibitively complex, often requiring a circuit depth that scales exponentially with the system size. This complexity presents a barrier even for the simpler task of estimating a linear expectation value.
Fortunately, many observables of practical interest possess sufficient structure to be diagonalized efficiently. In this section, we focus on one of the most important classes of such observables—Pauli operators—and present an implementation of the ORM protocol that is not only efficient in terms of circuit resources but also achieves the optimal $\cO(\sqrt{d})$ sample complexity.

Starting from the simplest case where the target observable is $O=Z_1$, i.e., the Pauli-$Z$ operator on the first qubit and identity on the others.
In this case, the eigenspace with $+1$ eigenvalue is spanned by state vectors of the form of $\ket{0}\otimes\ket{\psi_{n-1}}$, where $\ket{\psi_{n-1}}$ is an arbitrary pure state defined on the second to the $n$-th qubits.
Similarly, the eigenspace with $-1$ eigenvalue is spanned by state vectors of the form $\ket{1}\otimes\ket{\psi_{n-1}}$.
Now, the gate implementation procedure in Protocol~\ref{protocol:dichotomic} can be greatly simplified into
the following.
\begin{enumerate}
\item Draw random unitaries $U_\pm$ from a unitary 4-design that act on the last $n-1$ qubits.
\item Measure the first qubit in the computational basis to obtain $i\in\{0,1\}$.
\item When the measurement result is $i=0$, apply $U_+$  on the other $(n-1)$ qubits, or apply $U_-$ otherwise.
\end{enumerate}
By judging the first bits of two measurement outcomes, it is easy to decide whether the two outcomes are from the same eigenspace or not, and thus decide the value of the post-processing function.
To summarize, when the target observable is $Z_1$, one does not need to perform the diagonalization unitary but instead two independent $(n-1)$-qubit random unitaries, which can be implemented with only log-depth circuits~\cite{schuster2024random}.
At the same time, the sample complexity remains $\mathcal{O}(\sqrt{d})$ as the procedure described above is actually equivalent to that of Protocol~\ref{protocol:dichotomic}.
From an experimental perspective, chaotic evolution in certain analog systems is known to approximate global random unitaries~\cite{RandomHamiltonians,Nakata2017RandomHamiltonian, Ho2022Emergent, Cotler2023EmergentIndividual}, which can be beneficial for quantum learning tasks~\cite{Tran2023Measuring, liu2024predicting,mark2024efficiently}. 
Recent experimental efforts have already used this chaotic behavior to benchmark quantum devices~\cite{Choi2023PreparingRandom}. 
Implementing our protocol on such experimental platforms is therefore both interesting and promising. 

Furthermore, the $(n-1)$-qubit unitary could also be replaced by the tensor product of $n-1$ local unitaries drawn from a single-qubit unitary 2-design to further reduce the implementation difficulty. 
The post-processing function needs to be modified correspondingly to
\begin{equation}\label{eq:RRM_coef_purity_local}
    X_2^{\mathrm{b}}(s_i,s_j) = 
    \begin{cases}
        2^{n-1}(-2)^{-\Delta(s_i,s_j)}, & \text{if } s_i[1]=s_j[1]=0,\\
        -2^{n-1}(-2)^{-\Delta(s_i,s_j)}, & \text{if } s_i[1]=s_j[1]=1,\\
        0, & \text{otherwise},
    \end{cases}
\end{equation}
where $\Delta(s_i,s_j)$ is the Hamming distance between $s_i$ and $s_j$, and $s_i[1]$ is the measurement outcome of the first qubit~\cite{Brydges2019Probing}.
While, in this case, we cannot guarantee that the protocol can reach the $\cO(\sqrt{d})$ sample complexity for estimating $\Tr(Z_1\rho^2)$.
We will use numerical experiments later to assess the performance of the local-version protocol. 

Notice that when $O$ is a general Pauli observable, we could always find a Clifford circuit $V_C$ rotating it to a single-qubit Pauli $Z$ acting on $O$'s support (say qubit 1), i.e., $Z_1=V_COV_C^\dagger$. 
Experimentally, this means one can first apply the circuit $V_C$ to the state $\rho$, then run the ORM protocol for the simple, fixed observable $Z_1$.
Crucially, when $O$ is a local Pauli operator, the required Clifford circuit $V_C$ is also local and can be implemented with much lower depth.
As illustrated in Fig.~\ref{fig:hamiltonian}, this approach provides an efficient circuit implementation for the entire class of Pauli observables.

\begin{figure}[!htbp]
\centering
\includegraphics[width=.95\linewidth]{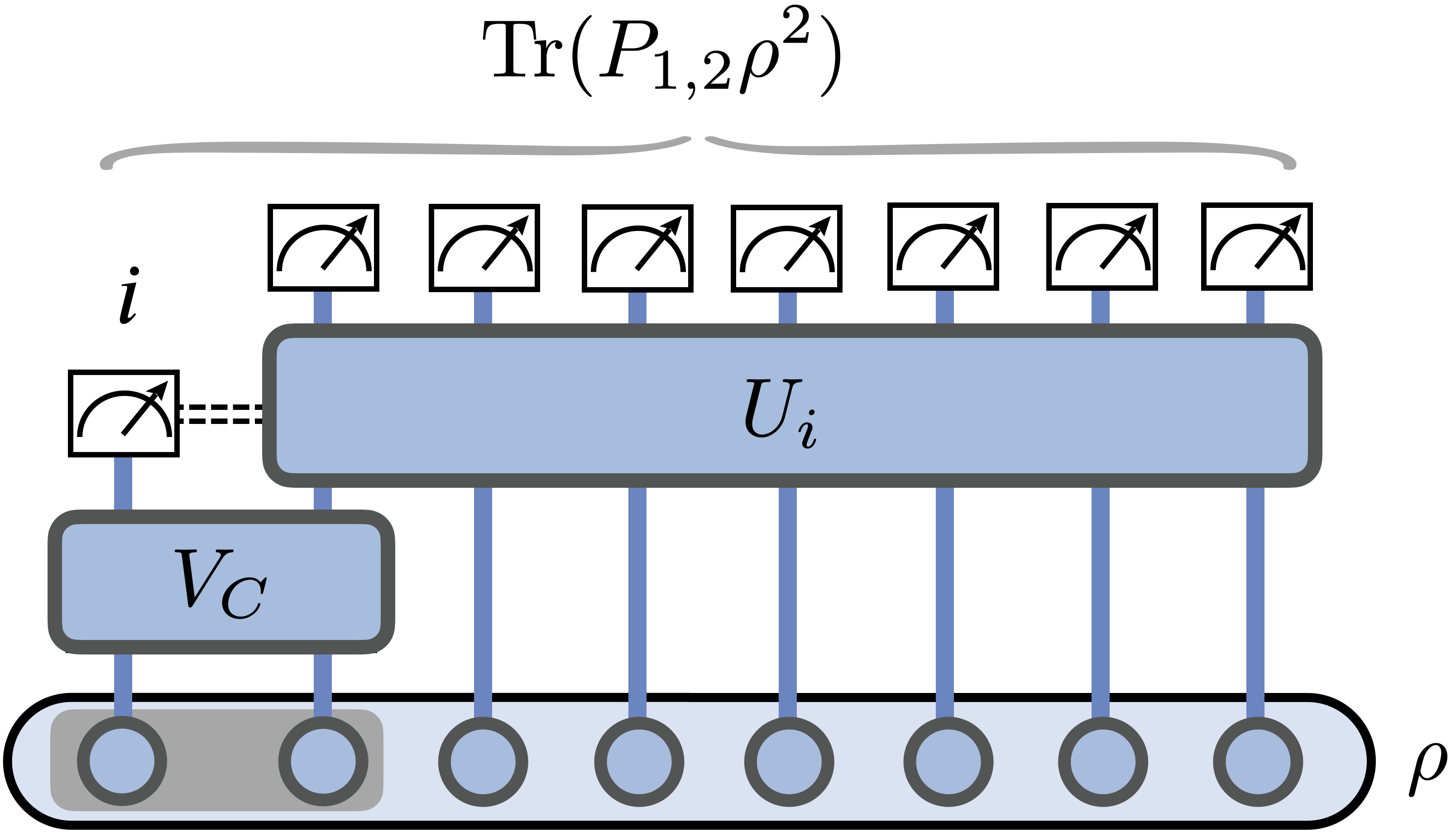}
\caption{Protocol for measuring any Pauli observable. Here, $V_C$ is a Clifford unitary 
that rotates $O$ to the Pauli-$Z$ operator on the first qubit. Depending on the measurement outcome of the first qubit, we then apply $U_i$ to the remaining part of the system, which can be either a global random unitary or a tensor product of single-qubit random unitaries.}
\label{fig:hamiltonian}
\end{figure}

\subsection{Replacing the exact $4$-design with Clifford unitaries, approximate design, and local gates}

While our protocol for Pauli observables achieves the optimal performance with an exact $4$-design, implementing such an ensemble is  demanding~\cite{Nakata2021ExactDesign}. In this section, we demonstrate that this requirement can be significantly relaxed by using a more practical ensemble of random Clifford unitaries~\cite{zhu2017multiqubit,zhu2016clifford}, which can be implemented efficiently with $\mathcal{O}(n)$ circuit depth on 1D architectures~\cite{Maslov2018Stabilizer, Bravyi2021Clifford}. This requirement parallels the global Clifford protocol used in classical shadows~\cite{huang2020predicting}.
Specifically, we prove that replacing the exact $4$-design with random Clifford unitaries preserves the optimal sample complexity scaling of $\mathcal{O}\left(\sqrt{d}\right)$, albeit at the cost of a less favorable dependence on the desired precision $\varepsilon$.
\begin{lemma}[Estimating traceless dichotomic observables with block-diagonal Clifford gates]
\label{lem:estimating_dichotomic_Clifford}
Let $O$ be a traceless dichotomic observable, i.e., the dimensions of its two eigenspaces are $d_+, d_-=d/2=2^{n-1}$, and two random unitaries $\tilde{U}_+$ and $\tilde{U}_-$ be sampled from the $(n-1)$-qubit Clifford group. 
For any input $n$-qubit state $\rho$ with $n\geq3$,  Protocol~\ref{protocol:dichotomic} estimates 
the non-linear quantity $\Tr(O\rho^2)$ to additive error $\varepsilon$ with probability at least $1-\delta$,  with a sample complexity of \begin{equation}
\cO\left(\frac{\sqrt{d}}{\varepsilon^2}\log (\delta^{-1})\right).
\end{equation}
Specifically, the number of repetitions is $T = \cO\bigl(\log(\delta^{-1})\bigr)$, the number 
of random unitaries per repetition is $N_U = \cO\left(\frac1{\varepsilon^2}\right)$, and the number of measurements per unitary is 
$N_M = \cO\left(\sqrt{d}\right)$.
\end{lemma}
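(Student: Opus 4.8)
The plan is to follow the same three-step template as the exact-design analysis behind Lemma~\ref{lem:estimating_dichotomic} --- unbiasedness, a variance bound, then median-of-means boosting --- and to isolate precisely where the Clifford ensemble, being only a $3$-design, departs from a genuine $4$-design. First I would establish unbiasedness for free: since the $(n-1)$-qubit Clifford group is a unitary $2$-design, the identity $\mathbb{E}_U[U^{\dagger\otimes2}X_2U^{\otimes2}]=\mathbb{S}$ used in Eq.~\eqref{eq:purity_expectation} holds verbatim on each $d/2$-dimensional block. Applying it to $\tilde{\rho}_+$ and $\tilde{\rho}_-$ separately and using the block identity $\Tr(O\rho^2)=\Tr(\tilde{\rho}_+^2)-\Tr(\tilde{\rho}_-^2)$ from Eq.~\eqref{eq:observation_block} shows that $\omega_U$ with the post-processing of Eq.~\eqref{eq:ORM_data_processing} stays unbiased, independently of the failure of the $4$-design property. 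The traceless hypothesis enters here: it guarantees $d_+=d_-=2^{n-1}$, so each eigenspace is genuinely an $(n-1)$-qubit system on which the Clifford group acts.

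Next I would invoke the law of total variance to write
\begin{equation}
\Var[\omega]=\frac1{N_U}\Big(\mathbb{E}_U\big[\Var_{\{s\}}[\omega_U\mid U]\big]+\Var_U[g(U)]\Big),\qquad g(U)=\Tr\!\big(U^{\dagger\otimes2}X_2^{\mathrm{b}}U^{\otimes2}\sigma^{\otimes2}\big),
\end{equation}
with $\sigma=V_O\rho V_O^\dagger$. The measurement-noise term is a standard $U$-statistic variance whose leading pieces scale as $d/N_M^2$ and $1/N_M$; crucially these involve only products $p_U(s)p_U(s')$ and $p_U(s)p_U(s')p_U(s'')$ of the diagonal probabilities, i.e. moments of $U$ up to order \emph{three}. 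Because the Clifford group is an exact $3$-design, these contributions equal their Haar values, so this term reproduces the exact-design estimate and is driven to $\cO(1)$ per unitary by $N_M=\cO(\sqrt d)$ (each block carrying $\approx N_M/2$ outcomes and effective dimension $d/2$). Moreover, since $\tilde U_+$ and $\tilde U_-$ are independent and cross-block pairs give $X_2^{\mathrm{b}}=0$, the design-variance cleanly decouples, $\Var_U[g(U)]=\Var_{\tilde U_+}[g_+]+\Var_{\tilde U_-}[g_-]$, reducing the problem to bounding the design-variance of the ordinary global-Clifford purity estimator on a single $(n-1)$-qubit block.

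The genuinely new object is this per-block design-variance: $g_\pm$ is quadratic in the diagonal probabilities, so $g_\pm^2$ involves the fourth tensor power $U^{\otimes4}$, and here the Clifford twirl of $X_2\otimes X_2$ differs from the Haar twirl. I would use the Gross--Nezami--Walter characterization of the fourth-order Clifford commutant, spanned by the $24$ permutation operators together with six additional defect operators built from self-dual stabilizer codes; the defects are linearly independent of the permutations precisely when each block has at least two qubits, $n-1\ge2$, which is the origin of the hypothesis $n\ge3$. Splitting $\mathbb{E}_{\mathrm{Cl}}=\mathbb{E}_{\mathrm{Haar}}+(\mathbb{E}_{\mathrm{Cl}}-\mathbb{E}_{\mathrm{Haar}})$, the Haar part reproduces the benign $\cO(1/d)$ design-variance of the exact-$4$-design analysis (which is why that case allows $N_U=\cO(1)$), while the deviation part is a sum of six explicit defect terms. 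I expect the main obstacle to be contracting each defect operator against $X_2\otimes X_2$ --- whose entries are $\pm d/2$ on the diagonal and $\mp1$ off it --- and against $\sigma^{\otimes4}$, and showing the naive factors of $d/2$ cancel against the collision structure so that the total is a constant \emph{independent of $d$}; one must exploit the interplay of the $d/2$ and $-1$ weights together with $\Tr(\tilde{\rho}_\pm)\le1$. The target outcome is $\Var_U[g(U)]=\cO(1)$.

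Granting $\Var_U[g(U)]=\cO(1)$, the total single-repetition variance is $\cO(1/N_U)$, so $N_U=\cO(\varepsilon^{-2})$ suffices to reach variance $\varepsilon^2$; combined with $N_M=\cO(\sqrt d)$ and the median-of-means amplification over $T=\cO(\log(\delta^{-1}))$ independent repetitions (the same Chebyshev-plus-majority argument as in Lemma~\ref{lem:estimating_dichotomic}), the total copy count is $T\,N_U\,N_M=\cO\!\big(\tfrac{\sqrt d}{\varepsilon^2}\log(\delta^{-1})\big)$, matching the claim. The less favourable $\varepsilon^{-2}$ scaling, in place of the $\max\{\sqrt d/\varepsilon,\,1/\varepsilon^2\}$ of the exact $4$-design, is exactly the price of the non-vanishing Clifford defect contribution to the design-variance.
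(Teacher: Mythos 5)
Your architecture matches the paper's proof almost exactly: unbiasedness is free because the Clifford group is a $2$-design (the paper reuses Lemma~\ref{lem:expectation_value_w}); the $\Delta=2$ and $\Delta=3$ collision terms are unchanged because it is a $3$-design; only the fourth-moment piece must be redone, and the weaker conclusion ($N_U=\cO(\varepsilon^{-2})$, i.e.\ a design-variance of $\cO(1)$ instead of $\cO(1/d)$) is exactly the price of the Clifford commutant exceeding the permutation algebra. The paper packages that commutant via the fourth-order Clifford twirl of Ref.~\cite{Roth2018Recovering} (the projector $Q=\tfrac{1}{d^2}\sum_P P^{\otimes4}$ and the Weyl/Specht module dimensions) rather than the Gross--Nezami--Walter defect operators you cite, but these are equivalent presentations, and your reading of the $n\ge3$ hypothesis (block dimension at least $4$ so the full fourth-order structure is present) agrees with the paper's.

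The gap is that the one estimate the whole lemma rests on --- $\Var_U[g(U)]=\cO(1)$ --- is stated as a ``target outcome'' rather than proven, and the mechanism you propose for it is not the one that works. After the explicit twirl, the non-Haar contributions enter the variance with coefficients of order $\Theta(d)$ (e.g.\ the $\tfrac{d-2}{4}(Q_4^++Q_4^-)$ term in Eq.~\eqref{eq:variance_final_Clifford}) multiplying the Pauli-projected state moments $Q_\sigma=\Tr\bigl(\mathbb{P}^4_\sigma Q\rho^{\otimes4}\bigr)$, such as $Q_4=\tfrac{1}{d^2}\sum_{P}\Tr\bigl((P\rho)^4\bigr)$. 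No amount of combinatorial cancellation among the $\pm d/2$ and $\mp1$ entries of $X_2$, nor the trivial bounds $\Tr(\tilde{\rho}_\pm^k)\le1$, removes these factors of $d$: the trivial bound $Q_\sigma\le1$ would give $\Var_U[g(U)]=\cO(d)$, hence $N_U=\cO(d/\varepsilon^2)$ and a total cost of $\cO(d^{3/2}/\varepsilon^2)$, destroying the claim. The indispensable ingredient is a \emph{state-dependent} suppression, Lemma~\ref{lem:upper_bound_Q_sigma}: $Q_\sigma\le 1/d$ for every $\sigma\in S_4$, proved by Cauchy--Schwarz in the Hilbert--Schmidt inner product and saturated by pure stabilizer states. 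With that bound the $\Theta(d)$ coefficients are tamed to $\cO(1)$ and your plan goes through; without identifying and proving it (or an equivalent bound on the defect contractions against $\rho^{\otimes4}$), the proposal does not reach the stated sample complexity.
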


The proof of Lemma~\ref{lem:estimating_dichotomic_Clifford} is detailed in Appendix~\ref{app:proof_dichotomic_Clifford}. 
With this result, we can now replace the unitary $U_i$ in Fig.~\ref{fig:hamiltonian} by a random Clifford unitary and achieve the sample-optimal single-copy estimation of Pauli observables with solely Clifford circuits.
Similar to the unitary 4-design case, our analysis also extends to the estimation of purity and inner product between two states using multi-qubit Clifford gates with only minor modifications, which is consistent with the results reported in Ref.~\cite{Zheng2025Distributed}.

Moreover, the requirement for an exact unitary 4-design can be relaxed to an approximate one. 
Although a rigorous proof is beyond the scope of this work, the reasoning is straightforward. 
If one uses an $\epsilon$-approximate unitary design, the resulting deviations in quantities like the expectation value (e.g., Eq.~\eqref{eq:purity_expectation}) and its variance will be bounded by $d^{c'}\epsilon$, where $c'$ is a constant. By selecting an approximation error that is polynomially small in the system dimension, specifically $\epsilon = d^{-c}$ for a constant $c > c'$, the error in the final estimate becomes negligible. The performance thus remains nearly identical to that achieved with an exact design. Crucially, such approximate designs can be implemented using a random circuit with depth of 
\cite{schuster2024random}
\begin{equation}
\log(\epsilon^{-1}) = c \log d = \cO(n),
\end{equation}
which is comparable to that of the random Clifford ensemble. 

The unitary ensemble can be simplified even further to shallow random circuits or even local random unitaries by adjusting the estimator form (see Eq.~\eqref{eq:RRM_coef_purity_local}). 
While these more practical ensembles do introduce some additional sample complexity overhead—for example, average sample complexities of $\cO(\sqrt{2.5^n})$ and $\cO(\sqrt{2.18^n})$ for local and shallow unitary ensembles~\cite{Zheng2025Distributed}—the cost remains significantly lower than the $\cO(2^n)$ complexity guarantee of classical shadow protocols. In Sec.~\ref{sec:applications}, we present numerical experiments that validate this advantage, showing that our protocol achieves better sample complexity than classical shadows even when using local unitaries.

\subsection{Estimating physical observables via Pauli sampling}\label{sec:pauli_sampling}

Building upon our schemes for estimating Pauli observables with Clifford unitaries, we now show that the non-linear expectation values of a wide range of observables—such as local observables and fidelity to stabilizer states—can be estimated using efficient circuit implementation schemes while maintaining the $\cO(\sqrt{d})$ sample complexity. This is optimal for most of these observables.

Our central method is built on a powerful technique: importance sampling of Pauli operators~\cite{Flammia2011DirectFidelityEstimation,huang2021efficient}. Given any observable $O$, we can decompose it in the Pauli basis as follows:
\begin{equation}\label{eq:Pauli_basis}
    O = \frac{1}{\sqrt{d}} \sum_{i=1}^K  \chi_O(P_i) P_i ,
\end{equation}
where $P_i \in \{I,X,Y,Z\}^{\otimes n}$ is a Pauli operator, and $\chi_O(P) \coloneqq \Tr(O P )/\sqrt{d}$ represents the corresponding Pauli coefficient. Here, $K$ denotes the number of Pauli operators needed to decompose $O$. 
The non-linear expectation values can thus be expressed as
\begin{equation}
    \Tr(O\rho^2) = \sum_{i=1}^K \chi_O(P_i) \chi_{\rho^2}(P_i).
\end{equation}
This expression can be further simplified to
\begin{equation}
    \Tr(O\rho^2) = \sum_{i=1}^K p_i X_i, 
\end{equation}
where the terms $p_i$ and $X_i$ are defined as:
\begin{equation}
\begin{split}
    p_i &= \frac{\chi^2_O(P_i)}{\norm{O}_2^2}, \\
    X_i &= \norm{O}_2^2 \frac{\chi_{\rho^2}(P_i)}{\chi_O(P_i)}.
\end{split}
\end{equation}
Note that $\{p_i\}_{i=1}^K$ forms a probability distribution over the Pauli operators that decompose $M$.

Our protocol proceeds as follows.  
\begin{enumerate}
    \item Sample $l = \cO\left(\frac{K \norm{O}_2^2}{d \varepsilon^2}\right)$ Pauli strings $P_i$ according to the probability distribution $\{p_i\}$.
    \item For each $1 \le i \le l$, use 
    \begin{equation}
        m_i = \cO\left( \frac{\norm{O}_2^4}{ \sqrt{d}l \varepsilon^2 \chi_O^2(P_i) } \right)
    \end{equation}
    samples of $\rho$ to obtain an estimator $\omega_i$ of $\Tr(P_i \rho^2)$ in Eq.~\eqref{eq:unbiased_estimator} using Protocol~\ref{protocol:dichotomic} and Lemma~\ref{lem:estimating_dichotomic_Clifford} with random Clifford gates.

    \item Compute the averaged estimator
    \begin{equation}\label{eq:estimator_Pauli_sampling}
        \omega = \frac{1}{l}\sum_{i=1}^l \frac{\norm{O}_2^2}{\Tr(OP_i)} \omega_i.
    \end{equation}

    \item Repeat steps 1 to 3 for $T = \cO(\log \delta^{-1})$ times and apply the median-of-means method to the obtained $T$ estimators to output the final estimators.
\end{enumerate}

We prove that the above Pauli sampling protocol achieves the following performance in Appendix~\ref{app:Pauli_sampling}:
\begin{theorem}[Estimating observables via Pauli sampling and block-diagonal Clifford gates]\label{thm:Pauli_sampling}
     Let $O$ be an observable that admits a Pauli decomposition, with the number of Pauli basis elements given by $K$ (see Eq.~\eqref{eq:Pauli_basis}). Then, the Pauli sampling protocol estimates the non-linear quantity $\Tr(O\rho^2)$ to additive error $\varepsilon$ with probability at least $1-\delta$, with an expected sample complexity of 
    \begin{equation}
        \cO\left(\frac{K \norm{O}_2^2}{\sqrt{d} \varepsilon^2} \log\left(\delta^{-1}\right) \right).
    \end{equation}
\end{theorem}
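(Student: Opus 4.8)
The plan is to analyze the estimator $\omega$ of Eq.~\eqref{eq:estimator_Pauli_sampling} as a two-stage randomized estimator: an outer importance-sampling stage that draws Pauli strings $P_i\sim\{p_i\}$, and an inner ORM stage that produces an unbiased estimate $\omega_i$ of each $\Tr(P_i\rho^2)$ via Eq.~\eqref{eq:unbiased_estimator}. Writing $Y_i=(\norm{O}_2^2/\Tr(OP_i))\,\omega_i$, one repetition of steps 1--3 returns $\omega=\tfrac1l\sum_{i=1}^l Y_i$ with the $Y_i$ independent and identically distributed. I would control the bias and variance of this single-repetition mean, then amplify the confidence to $1-\delta$ with the outer median-of-means over $T=\cO(\log\delta^{-1})$ repetitions (step 4).

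\textbf{Unbiasedness.} Conditioned on a sampled $P_i$, the Clifford ORM subroutine (Lemma~\ref{lem:estimating_dichotomic_Clifford} for the traceless $P_i\neq I$, and Remark~\ref{rmk:purity} for $P_i=I$) gives $\mathbb{E}[\omega_i\mid P_i]=\Tr(P_i\rho^2)$, hence $\mathbb{E}[Y_i\mid P_i]=X_i$. Since $\sum_i p_i X_i=\Tr(O\rho^2)$ by construction, the tower property yields $\mathbb{E}[\omega]=\Tr(O\rho^2)$.

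\textbf{Variance.} I would split $\Var[\omega]=\tfrac1l\big(\Var_P[X_P]+\mathbb{E}_P[\Var[Y\mid P]]\big)$. For the sampling term, the key estimate is $|\chi_{\rho^2}(P_i)|=|\Tr(P_i\rho^2)|/\sqrt d\le 1/\sqrt d$, so $\sum_i\chi_{\rho^2}^2(P_i)\le K/d$ and $\Var_P[X_P]\le\mathbb{E}_P[X_P^2]=\norm{O}_2^2\sum_i\chi_{\rho^2}^2(P_i)\le\norm{O}_2^2 K/d$; dividing by $l=\cO(K\norm{O}_2^2/(d\varepsilon^2))$ gives $\cO(\varepsilon^2)$. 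For the estimation term, I would use that the Clifford ORM estimator of a Pauli has variance $\Var[\omega_i\mid P_i]=\cO(\sqrt d/m_i)$ with $m_i$ samples (Lemma~\ref{lem:estimating_dichotomic_Clifford}). The weight squared $\norm{O}_2^4/(d\chi_O^2(P_i))$ times this variance times the allocation $m_i=\cO(\norm{O}_2^4/(\sqrt d\, l\,\varepsilon^2\chi_O^2(P_i)))$ is engineered so that $\Var[Y_i\mid P_i]=\cO(l\varepsilon^2)$ independently of $i$; averaging over $P$ and dividing by $l$ again gives $\cO(\varepsilon^2)$. Thus $\Var[\omega]=\cO(\varepsilon^2)$, Chebyshev gives constant success probability per repetition, and median-of-means boosts it to $1-\delta$. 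The expected cost is $\mathbb{E}[\sum_{i=1}^l m_i]=l\sum_j p_j m_j=\cO(K\norm{O}_2^2/(\sqrt d\,\varepsilon^2))$, which times $T$ is the stated complexity.

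\textbf{Main obstacle.} I expect the delicate point to be the estimation-variance term, specifically the requirement that $\Var[\omega_i\mid P_i]=\cO(\sqrt d/m_i)$ hold \emph{uniformly}. This clean scaling only applies once $m_i\gtrsim\sqrt d$, so that at least one full measurement basis is used ($N_M=\cO(\sqrt d)$, $N_U=\cO(m_i/\sqrt d)\ge1$); for Paulis with large coefficient $\chi_O^2(P_i)$ the unconstrained allocation drops below $\sqrt d$, where the single-basis variance instead scales as $\cO(d/m_i^2)$ and the naive substitution overshoots. The fix is to impose a floor $m_i=\max\{\sqrt d,\cdot\}$, under which $\Var[\omega_i\mid P_i]=\cO(\sqrt d/m_i)$ is restored and flooring can only shrink each variance summand. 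The accompanying check is that the floor does not inflate the budget: since $\sum_j p_j m_j=\cO(\sqrt d)$ on average, the floored allocation still sums to $\cO(l\sqrt d)=\cO(K\norm{O}_2^2/(\sqrt d\,\varepsilon^2))$. Reconciling the two ORM variance regimes while preserving this order is the careful bookkeeping the full proof must carry out.
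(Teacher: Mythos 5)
Your proposal is correct and follows essentially the same route as the paper's proof: importance sampling over Paulis, conditional unbiasedness and variance of the Clifford ORM subroutine, a variance split into the sampling term (bounded by $K\norm{O}_2^2/d$ via $\abs{\Tr(P_i\rho^2)}\le 1$) and the estimation term (made uniform in $i$ by the allocation $m_i$), followed by Chebyshev and median-of-means amplification, with the same expected-cost computation $l\sum_i p_i m_i=\cO(K\norm{O}_2^2/(\sqrt{d}\,\varepsilon^2))$. Your point about flooring $m_i$ at $\sqrt{d}$ so that $N_U\ge 1$ is a legitimate refinement of a detail the paper leaves implicit; as you observe, in that regime $\chi_O^2(P_i)$ is large enough that the weight $\norm{O}_2^4/(d\,\chi_O^2(P_i))$ compensates, so both the uniform variance bound $\cO(l\varepsilon^2)$ and the expected sample budget survive the flooring.
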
 

The actual sample complexity is unlikely to exceed its expected value by much, as guaranteed by Markov's inequality. 
Furthermore, we note that the sample complexity can be further improved using unitary 4-designs, which can result in a better dependence on $\varepsilon^{-1}$ rather than $\varepsilon^{-2}$ for the precision $\varepsilon$.

\subsection{Local observables and Hamiltonians}

We now apply the Pauli sampling protocol to a wide range of observables and demonstrate that it achieves $\cO(\sqrt{d})$ complexity.

Local observables, which act non-trivially on only a constant number of qubits, are ubiquitous in quantum many-body physics (e.g., correlation functions and local energy terms).   
For any $k$-local observable $O = O_A \otimes \mathbb{I}_{[n] \backslash A}$ with $\abs{A} = k$, we can always decompose it using $K = 4^k$ Pauli operators, each acting non-trivially only on $A$. Assuming the proper normalization condition $\norm{O}_{\infty} \le 1$, we have $\norm{O}_2^2 \le d$, leading to the following performance guarantee. 
\begin{corollary}[Performance guarantee for local observables and Hamiltonians]
    Let $O$ be a $k$-local observable satisfying $\norm{O}_{\infty} \le 1$. Then, the Pauli sampling protocol estimates $\Tr(O\rho^2)$ to additive precision $\varepsilon$ with probability $1-\delta$, with a sample complexity of $\cO\left(K \frac{\sqrt{d}}{\varepsilon^2} \log\left(\delta^{-1}\right)\right)$, where $K \le 4^k$ is the number of nonzero Pauli terms in the decomposition of $O$.
\end{corollary}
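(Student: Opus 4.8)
The plan is to obtain this corollary as a direct specialization of Theorem~\ref{thm:Pauli_sampling}, whose general expected sample complexity $\cO\bigl(K\norm{O}_2^2/(\sqrt{d}\,\varepsilon^2)\,\log(\delta^{-1})\bigr)$ already encodes the full analysis of the Pauli sampling protocol. All that remains is to control the two observable-dependent quantities, namely the number of Pauli terms $K$ and the Hilbert--Schmidt norm $\norm{O}_2^2$, under the $k$-locality assumption, and then substitute.

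First I would bound the number of nonzero Pauli coefficients. Writing $O = O_A \otimes \mathbb{I}_{[n]\backslash A}$ with $\abs{A} = k$, any Pauli string $P = P_A \otimes P_{[n]\backslash A}$ that acts nontrivially outside $A$ factorizes as $\Tr(OP) = \Tr(O_A P_A)\,\Tr(P_{[n]\backslash A})$, and the second factor vanishes for any nonidentity Pauli string; hence such $P$ contributes a zero coefficient $\chi_O(P)$ in Eq.~\eqref{eq:Pauli_basis}. Only operators of the form $P_A \otimes \mathbb{I}_{[n]\backslash A}$ with $P_A \in \{I,X,Y,Z\}^{\otimes k}$ can appear, so $K \le 4^k$. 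Next I would bound the norm: since $\norm{O}_2^2 = \Tr(O^2) = \sum_j \lambda_j^2$ ranges over the $d$ eigenvalues $\lambda_j$ of $O$, and the normalization $\norm{O}_\infty \le 1$ forces $\abs{\lambda_j} \le 1$ for every $j$, we obtain $\norm{O}_2^2 \le d$. Substituting $\norm{O}_2^2 \le d$ into the bound of Theorem~\ref{thm:Pauli_sampling} collapses the factor $\norm{O}_2^2/\sqrt{d}$ to $\sqrt{d}$, yielding the advertised $\cO\bigl(K\sqrt{d}/\varepsilon^2\,\log(\delta^{-1})\bigr)$ with $K \le 4^k$.

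Because the argument reduces to two elementary inequalities plugged into an already-proven theorem, I do not expect a genuine obstacle. The only point warranting care is to confirm that the per-term estimators invoked inside the protocol are exactly the routines established earlier: each nonidentity Pauli component $P_i$ is a traceless dichotomic observable with equal eigenspaces $d_+ = d_- = d/2$, so its nonlinear expectation value is estimated by the Clifford-based Lemma~\ref{lem:estimating_dichotomic_Clifford}, while the identity term (if $\chi_O(\mathbb{I}) \neq 0$) is handled by the purity estimator of Remark~\ref{rmk:purity}. This guarantees that the efficient Clifford circuit implementation is available for every sampled term, and hence the stated sample complexity is achieved using only Clifford resources.
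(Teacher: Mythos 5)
Your proposal is correct and follows essentially the same route as the paper: bound $K \le 4^k$ by noting that only Pauli strings supported on $A$ have nonzero coefficient, bound $\norm{O}_2^2 \le d$ from $\norm{O}_\infty \le 1$, and substitute both into Theorem~\ref{thm:Pauli_sampling}. Your added remark that each sampled nonidentity Pauli term is a traceless dichotomic observable handled by Lemma~\ref{lem:estimating_dichotomic_Clifford} (with the identity term covered by Remark~\ref{rmk:purity}) is a correct and slightly more explicit account of what the paper leaves implicit.
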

This method extends directly to local Hamiltonians, which model most practical quantum systems and give the energy of the target states. Such Hamiltonians can be written as
\begin{equation}\label{eq:local_Hamiltonian}
    H = \sum_{i=1}^K a_i P_i, 
\end{equation}
where $K = \poly(n)$ and each $P_i$ is a local Pauli operator. The 2-norm of $H$ is given by $\norm{H}_2^2 = \sum_i a_i^2 d$, which yields the sample complexity
\begin{equation}
    \cO\left(\Bigl(\sum_i a_i^2 \Bigr) K \frac{\sqrt{d}}{\varepsilon^2} \log\left(\delta^{-1}\right)\right).
\end{equation}

For $k$-local observables, we have that $\norm{O}_1 = 2^{n-k} \norm{O_A}_1 = \Omega(d)$ for constant $k$ and $\norm{O_A}_1 = \Omega(1)$. By Fact~\ref{fact:large_one_norm}, estimating the non-linear expectation value of such observables requires at least $\Omega(\sqrt{d})$ sample complexity, so our protocol is sample-optimal across all single-copy schemes. 
Similarly, since the energy is non-vanishing for most eigenstates of practical Hamiltonians $H$, we have $\norm{H}_1 = \Omega(d)$, and thus our protocol achieves the optimal sample complexity $\tilde{\cO}(\sqrt{d})$ up to a logarithmic factor $\Bigl(\sum_i a_i^2 \Bigr) K$.

\subsection{Stabilizer fidelities and subspaces}
Another important class of observables is the fidelities to stabilizer states $\ket{\psi}$, which play a fundamental role in quantum information theory~\cite{gottesman1997stabilizer}. A stabilizer state $\ket{\psi}$ can be represented as a uniform sum over Pauli strings in its stabilizer group:
\begin{equation}
    \ketbra{\psi} = \frac{1}{2^n}\sum_{i=1}^{2^n} P_i, \quad P_i \in \mathcal{G}_\psi,
\end{equation}
where $\abs{\mathcal{G}_{\psi}} = 2^n$. This can be generalized to a stabilizer subspace, represented by the projector
\begin{equation}\label{eq:projector_stabilizer_subspace}
    \Pi = \frac{1}{2^{k}} \sum_{i=1}^{2^{k}} P_i, \quad P_i \in \mathcal{G}_{\Pi}.
\end{equation}
where $\mathcal{G}_{\Pi}$ contains all Pauli strings that stabilize this subspace, and $2^{n-k}$ is the rank of the subspace. For example, when $\Pi$ represents the set of all code states of a stabilizer code, $\Pi$ becomes the projector onto the code space. This is important in tasks like subspace verification for quantum error-correcting codes~\cite{chen2024quantumsubspaceverificationerror}. When $k=n$, this reduces to a single stabilizer state.

For any $1 \le k \le n$, we have $\norm{\Pi}_2^2 = 2^{n-k}$ and $K = 2^{k}$, which leads to the following sample complexity guarantees.
\begin{corollary}
    Let $1 \le k \le n$ and $\Pi$ be the projector onto a stabilizer subspace as given in Eq.~\eqref{eq:projector_stabilizer_subspace}. Then, the Pauli sampling protocol estimates $\Tr(\Pi \rho^2)$ to additive precision $\varepsilon$ with probability $1-\delta$, with a sample complexity of $\cO\left(\frac{\sqrt{d}}{\varepsilon^2} \log\left(\delta^{-1}\right)\right)$.
\end{corollary}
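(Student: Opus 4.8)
The plan is to obtain the bound as a direct instantiation of Theorem~\ref{thm:Pauli_sampling}, whose sample complexity $\cO\!\left(\frac{K\norm{O}_2^2}{\sqrt{d}\,\varepsilon^2}\log(\delta^{-1})\right)$ depends on the target observable only through the two structural parameters $K$ (the number of nonzero Pauli terms) and $\norm{O}_2^2$. Hence the entire argument reduces to pinning these two quantities down for $O=\Pi$, the projector in Eq.~\eqref{eq:projector_stabilizer_subspace}, and then checking that their product collapses to a dimension-friendly value.

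First I would determine $K$. The representation in Eq.~\eqref{eq:projector_stabilizer_subspace} writes $\Pi$ as a uniform sum over the stabilizer group $\mathcal{G}_\Pi$, which is an abelian subgroup of the $n$-qubit Pauli group of order $\abs{\mathcal{G}_\Pi}=2^k$. Using the orthogonality of Pauli operators under the trace inner product, $\Tr(P P')=d\,\delta_{P,P'}$, one checks that the Pauli coefficient $\chi_\Pi(P_j)=\Tr(\Pi P_j)/\sqrt{d}$ equals $\sqrt{d}/2^k$ precisely when $P_j\in\mathcal{G}_\Pi$ and vanishes otherwise. Consequently the decomposition of Eq.~\eqref{eq:Pauli_basis} has exactly $K=2^k$ nonzero terms, all of equal weight, so the importance-sampling distribution $\{p_i\}$ is simply uniform over $\mathcal{G}_\Pi$.

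Next I would compute $\norm{\Pi}_2^2$. Since $\Pi$ is a Hermitian projector onto a subspace of rank $2^{n-k}$, idempotency gives $\norm{\Pi}_2^2=\Tr(\Pi^2)=\Tr(\Pi)=2^{n-k}$; equivalently, summing the squared Pauli coefficients found above yields $\sum_j\chi_\Pi(P_j)^2 = 2^k\cdot(\sqrt{d}/2^k)^2 = d/2^k = 2^{n-k}$, in agreement. Substituting $K=2^k$ and $\norm{\Pi}_2^2=2^{n-k}$ into the bound of Theorem~\ref{thm:Pauli_sampling} yields the product $K\norm{\Pi}_2^2 = 2^k\cdot 2^{n-k}=2^n=d$, so the sample complexity becomes $\cO\!\left(\frac{d}{\sqrt{d}\,\varepsilon^2}\log(\delta^{-1})\right)=\cO\!\left(\frac{\sqrt{d}}{\varepsilon^2}\log(\delta^{-1})\right)$, as claimed.

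There is no genuine obstacle here; the result is essentially a bookkeeping corollary of Theorem~\ref{thm:Pauli_sampling}. The only point that deserves care is the observation driving the whole statement: the number of Pauli terms $K=2^k$ grows with $k$ while the squared Frobenius norm $\norm{\Pi}_2^2=2^{n-k}$ shrinks with $k$ in exactly the compensating way, so their product is pinned to $d$ independently of $k$. This cancellation is what makes the $\cO(\sqrt{d}/\varepsilon^2)$ scaling hold uniformly across all stabilizer subspaces, from a single stabilizer state ($k=n$) to a rank-$2^{n-1}$ subspace ($k=1$), and I would flag it as the conceptual heart of the argument rather than any technical difficulty.
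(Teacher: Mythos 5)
Your proposal is correct and follows exactly the paper's route: the paper likewise just records $K=2^{k}$ and $\norm{\Pi}_2^2=2^{n-k}$ in the sentence preceding the corollary and substitutes into Theorem~\ref{thm:Pauli_sampling}, so the cancellation $K\norm{\Pi}_2^2=d$ you highlight is precisely the intended argument. Your verification of the Pauli coefficients via trace orthogonality and of $\norm{\Pi}_2^2=\Tr(\Pi)$ via idempotency is a slightly more explicit rendering of the same bookkeeping.
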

\noindent When $k$ is constant, our protocol achieves optimal sample complexity by Fact~\ref{fact:large_one_norm}. To the best of our knowledge, it remains open whether the sample complexity $\Omega(\sqrt{d})$ is optimal for low-rank projectors. See Sec.~\ref{sec:low_rank} for further discussion on estimating non-linear expectation values of low-rank observables. Moreover, for states with low non-stabilizerness, our protocol can potentially preserve the $\cO(\sqrt{d})$ sample complexity~\cite{Hinsche2025DistributedPauliSampling}.

\section{Applications}\label{sec:applications}
Estimating higher-order expectation values of quantum states is both a fundamental task and a key primitive in various fields. 
In this section, we highlight three applications: quantum virtual cooling, mixed-state phase detection, and entropy estimation.
We perform numerical experiments to confirm the advantages of ORM compared with classical shadows.

\subsection{Quantum virtual cooling} \label{sec:app_virtual_cooling}

A particularly important case that has attracted much recent attention is error mitigation on noisy intermediate-scale quantum devices and slightly beyond~\cite{Cai2023mitigation}. In quantum experiments, we often aim to obtain the expectation values of observables on a quantum state prepared by a quantum device. However, due to imperfections such as gate errors or insufficient cooling, the actual state is typically a noisy mixed state $\rho$ rather than a pure state. 
Although some severe obstructions against the scalability of classes of quantum error mitigation methods have been identified \cite{ErrorMitigationObstructionsOld,ErrorMitigationObstructions,PhysRevLett.131.210602,kento2023bound_qem}, various such notions (also involving quantum processing aspects) remain a key ingredient in realizing functioning near-term quantum architectures.

The most important information is usually encoded in the principal component $\ket{\psi}$ of $\rho$, which can be expressed as
\begin{equation}
\rho = (1-p)\ketbra{\psi} + p \rho_{\perp},
\end{equation}
where $\rho_{\perp}$ has orthogonal support to $\ket{\psi}$. 
When the noise strength $p$ is small, the error contributed by the noisy part $\rho_{\perp}$ can be substantially suppressed by considering the second-order expectation value ${\Tr(O\rho^2)}/{\Tr(\rho^2)}$~\cite{hugginsVirtualDistillationQuantum2021,koczor2021exponential}.
When $\rho_{\perp}$ has exponentially large rank, such as in the case of white noise, ${\Tr(O\rho^2)}/{\Tr(\rho^2)}$ can even exponentially suppress the noise with respect to the number of qubits.
Therefore, estimating $\Tr(O\rho^2)$ and the purity $\Tr(\rho^2)$ provides a significantly improved estimation of $\bra{\psi}O\ket{\psi}$ compared to directly measuring $\Tr(O\rho)$.

When the noise is primarily introduced by 
thermal excitations, the target state can often be described by a Gibbs state of the form $\rho \propto \exp(-\beta H)$, where $H$ is the Hamiltonian and $\beta>0$ is the inverse temperature.
In this setting, estimating ${\Tr(O\rho^2)}/{\Tr(\rho^2)}$ effectively predicts the expectation value on a lower-temperature state, as $\rho^2 \propto \exp(-2\beta H)$, providing access to physical properties at temperatures beyond the experimental cooling capability~\cite{cotler2019cooling}.

Here, we numerically simulate the quantum virtual cooling task using the global and local versions of ORM. 
We consider the Gibbs state of a system of $L$ spins governed by the 
Heisenberg XX Hamiltonian 
\begin{equation}\label{eq:H_XY}
H_{XX} = \sum\limits_{1\leq i<j\leq L} 2hJ_{i,j}(X_iX_j + Y_iY_j) + hB_z\sum\limits_{i=1}^L Z_i
\end{equation}
with a transverse field, which is well studied and implemented in trapped-ion platforms~\cite{jurcevic2014quasiparticle,Brydges2019Probing}.
The antiferromagnetic coupling strength decays according to a power law $J_{i,j} = J_0/{|i-j|^\alpha}$ with $J_0 = 420\,\mathrm{s}^{-1}$ and $\alpha = 1.24$. 
The transverse field is set to $B_z = 50J_0$ so that $H_{XX}$ well approximates the system of hopping hard-core bosons~\cite{jurcevic2014quasiparticle}.
The observable of interest is the two-local Pauli correlator $O = Z_1 Z_2$. 
Starting from a reference temperature $T_{0}=200J_{0}$ (i.e.\ $\beta_{0}=1/T_{0}$), we examine the thermal states
\begin{equation}
    \rho(\beta)=\frac{e^{-\beta H_{XX}}}{\Tr\bigl(e^{-\beta H_{XX}}\bigr)}
\end{equation}
for $\beta \in\{\beta_{0},\,2\beta_{0},\,3\beta_{0},\,4\beta_{0},\,5\beta_{0}\}$.  
For each $\beta$, we estimate
\begin{equation}
    \frac{\Tr\left[Z_1Z_2\rho(\beta)^2\right]}{\Tr\left[\rho(\beta)^2\right]} = \Tr\left[Z_1Z_2\rho(2\beta)\right]
\end{equation}
using both \emph{global-unitary ORM} (GORM) and 
\emph{local-unitary ORM} (LORM).  We numerically simulate a 6-qubit system and present the results in Fig.~\ref{fig:cooling}, where each data point is the mean of ten experiments with $N_U=10$ random unitaries and $N_M=1000$ measurements per unitary, with error bars indicating the standard error.  
The random unitaries are drawn from the global Haar measure or the tensor product of single-qubit Haar measure.
Fig.~\ref{fig:cooling} also shows 
the exact theoretical values of $\Tr\left[Z_{1}Z_{2}\rho(\beta)\right]$ and $\Tr\left[Z_{1}Z_{2}\rho(2\beta)\right]$ 
(dashed lines) for comparison.
The numerical results confirm that both GORM and LORM accurately reproduce the expectation value of $Z_{1}Z_{2}$ at 
half the temperature of the original thermal state with reasonable experiment repetitions.  
Because local unitaries are easier to implement experimentally, LORM is often preferred despite its moderately larger estimator variance.

\begin{figure}[!htbp]
\centering
\includegraphics[width=0.98\linewidth]{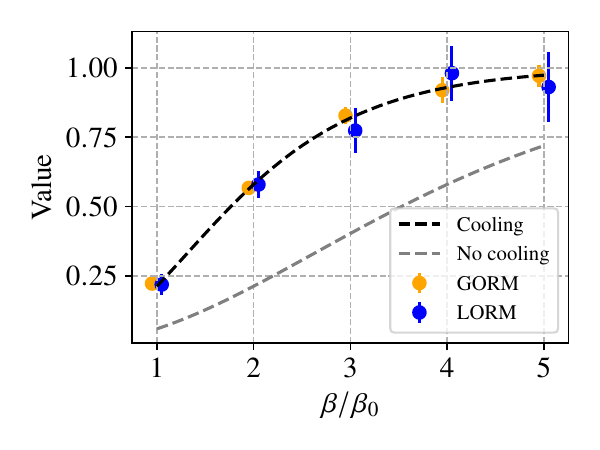}
\caption{
Virtual cooling of 6-qubit thermal states using GORM and LORM protocols.
The black dashed curve shows the theoretical value $\Tr[Z_1Z_2\,\rho(2\beta)]$ on cooled states, while the grey dashed curve shows the value $\Tr[Z_1Z_2\,\rho(\beta)]$ on original thermal states. 
Markers give ORM estimates of $\Tr\left[Z_1Z_2\,\rho(\beta)^{2}\right] / \Tr\left[\rho(\beta)^{2}\right]$ for five values of $\beta$ and are slightly displaced along the horizontal axis to enhance visibility. 
Each point is the mean of ten independent runs with 
$N_{U}=10$ random unitaries (drawn from the global Haar measure or the tensor product of single-qubit Haar measure) and $N_{M}=1000$ measurements 
per unitary, with error bars denoting the standard error.
Both protocols accurately reproduce the 
cooled-state expectation values.
}
\label{fig:cooling}
\end{figure}

Furthermore, we compare the number of state copies needed to estimate the non-linear expectation value 
$\Tr\bigl(Z_1Z_2\rho^{2}\bigr)$ to a target precision using four protocols: global and local versions of classical shadows and ORM.
The input state throughout is the Gibbs state at temperature $\beta = 8\beta_{0}$.
For each system size and each protocol we perform $T=100$ experiments, obtaining estimators $\omega_{1},\dots,\omega_{T}$.  
We quantify accuracy via the \emph{mean-squared error} (MSE),
\begin{equation}
  \Delta^{2}
  =
  \frac{1}{T}\sum_{i=1}^{T}
  \bigl[
      \omega_{i}-\Tr(Z_{1}Z_{2}\rho^{2})
  \bigr]^{2},
\end{equation}
and evaluate the minimum number of state copies required to achieve  
$\Delta^{2}\le 0.01$.

For ORM protocols, we sweep over the number~$N_{U}$ of random unitaries drawn from the Haar measure and, for every $N_{U}$, increase the number of measurements per unitary~$N_{M}$ until the MSE falls below~$0.01$.  
The sample complexity is given as  $\min_{\Delta^2(N_U,N_M) \le 0.01} \bigl(N_{U}N_{M}\bigr)$.

For classical shadows, we vary the total number of shadows~$N_{s}$.
In each experiment $i = 1,2,\dots,T$, we generate $2 N_s$ classical shadows $\{\hat{\rho}^{(i)}_j\}_{j=1}^{2N_s}$ using random unitaries drawn from the Haar measure, split them into two equally sized groups,
\begin{equation}
  \hat{\sigma}^{(i,1)}
  =
  \frac{1}{N_{s}}\sum_{j=1}^{N_{s}}
  \hat{\rho}^{(i)}_{j},
\quad
\hat{\sigma}^{(i,2)}
  =
  \frac{1}{N_{s}}\sum_{j=N_{s}+1}^{2N_{s}}
  \hat{\rho}^{(i)}_{j},  
\end{equation}
and form the estimator  
\begin{equation}
  \omega_{i}
  =
  \Tr\bigl[
    Z_{1}Z_{2}\,
    \tfrac{1}{2}\bigl(
      \hat{\sigma}^{(i,1)}\hat{\sigma}^{(i,2)}
      +
      \hat{\sigma}^{(i,2)}\hat{\sigma}^{(i,1)}
    \bigr)
  \bigr].
\end{equation}
Because each $\omega_{i}$ uses only $N_{s}^{2}$ distinct shadow pairs, rather than the full $(2N_{s})^{2}/2 = 2N_{s}^{2}$ possible pairs, we count the effective sample complexity as $\sqrt{2}N_{s}$ instead of $2N_{s}$. The reported sample complexity is the smallest $\sqrt{2}N_{s}$ that achieves an MSE below $0.01$.

The numerical results are summarized in Fig.~\ref{fig:shots}.  
Both global and local versions of ORM require far fewer state copies than their classical-shadow counterparts (GCS and LCS, respectively), especially when the number of qubits is large.  
These findings confirm that classical shadows do not attain optimal sample-complexity scaling.  
Whereas the classical-shadow curves grow exponentially with system size, the global ORM curve increases only slowly, remaining almost flat up to $L \le 10$ qubits, underscoring its excellent sample efficiency in practice.

\begin{figure}[!htbp]
\centering
\includegraphics[width=0.98\linewidth]{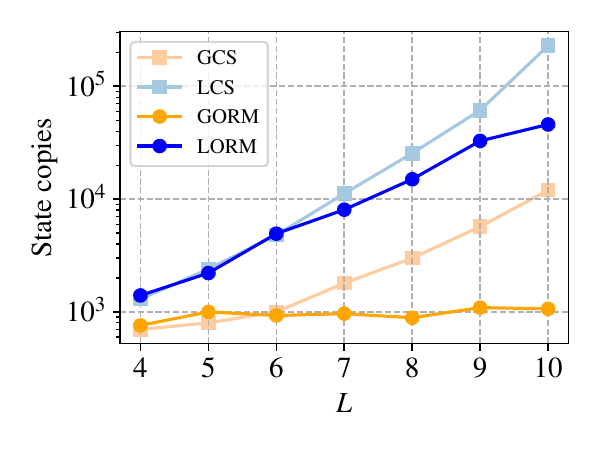}
\caption{
Number of state copies needed to estimate $\Tr(Z_1Z_2\,\rho^{2})$ with $\text{MSE}\le$ 0.01 averaged over 100 experiments, as a function of system size~$L$.  
Within both the global-unitary and local-unitary settings, ORM requires far fewer copies than classical shadows, offering a significant advantage in sample complexity with increasing system size.%
}
\label{fig:shots}
\end{figure}

\subsection{Detection of mixed-state
quantum phases}
The study of phases of matter and phase transitions is one of the most fundamental branches of physics, providing the conceptual foundation for understanding diverse phenomena across condensed matter physics, statistical mechanics, and even emerging connections to artificial intelligence~\cite{Sachdev}.
Traditionally, investigations of quantum phase transitions have largely focused on pure quantum states, particularly at zero temperature, where the ground state or low-lying excitations dominate the system’s behavior. 
In reality, perfectly pure states at zero temperature do not exist. 
Any physical system inevitably interacts with its environment, leading to decoherence and resulting in mixed states. 
As such, pure states should be viewed as a highly useful, yet ultimately idealized abstraction.

Motivated by this realization, recent efforts have sought to extend the study of quantum phases and phase transitions beyond pure states to encompass mixed states. 
These investigations have revealed novel physical phenomena with no direct counterparts in the pure-state regime~\cite{zhu2023nishimori,lee2023decoherence,lessa2025swssb}.
One prominent example is the discovery of symmetry-protected topological phases for mixed states, where researchers have found that certain topological properties, traditionally thought to be fragile under decoherence, can in fact remain robust even in noisy, open-system conditions~\cite{Lee2025symmetryprotected,deGroot2022symmetryprotected,zhang2025probingmixedstatephases} in a precise sense.

In the course of studying mixed-state phases, researchers have realized that traditional linear quantities, such as correlation functions, are often insufficient to fully capture the phase behavior and critical phenomena of mixed states. 
This has motivated the use of various nonlinear quantities, such as entanglement entropy, Rényi correlators, and fidelity correlators, that are more sensitive to quantum correlations and coherence.
Among these, our target quantity in this work, $\Tr(O\rho^2)$ with $O$ being a Pauli correlator, plays a particularly important role in characterizing mixed-state phases and their transitions~\cite{lee2023decoherence,Lee2025symmetryprotected}.
However, previous studies have primarily relied on the classical shadows to estimate such quantities~\cite{zhang2025probingmixedstatephases}. 
As demonstrated in Fig.~\ref{fig:shots}, the ORM protocol proposed in this work offers a significant advantage in sample complexity compared to classical shadows, where the observable is also set as the Pauli correlator. 
This further highlights the value 
and the practical usefulness of the ORM approach as a powerful tool for probing mixed-state phases and uncovering new physical insights beyond what is accessible through traditional techniques.

\subsection{Estimation of entropic quantities}\label{sec:entropy}

We here briefly mention a further application of $\Tr(O\rho^2)$ not considered before, that of the estimation of certain entropic quantities.
Specifically, the 
\emph{Petz–R{\'e}nyi relative entropy}~\cite{Petz1986QuasiEntropies} 
is defined for
$\alpha\in (0,1)\cup(1,\infty)$ as
\begin{equation}
D_
\alpha(\rho\| \sigma) := 
\frac{1}{\alpha-1}
\log_2 \Tr(\rho^\alpha \sigma^{1-\alpha}),
\end{equation}
a quantity with applications 
and a crisp interpretation in quantum information theory~\cite{Wilde2017QuantumInformation}.
For $\alpha=2$ and $\sigma$ being a full-rank density matrix, this gives
\begin{equation}
D_2 (\rho\| \sigma) := 
\log_2 \Tr(\rho^2 
\sigma^{-1}).
\end{equation}
As $\sigma^{-1}$ is Hermitian and can be seen as a quantum observable $O$, the ORM protocol can be used to estimate the value of $D_2 (\rho\| \sigma)$. 

When $\sigma$ is the maximally mixed state, $\sigma = \mathbb{I}/d$, $D_2(\rho\|\sigma)$ provides an alternative metric of how close $\rho$ is to the maximally mixed state, a task fundamental to the resource theory of purity~\cite{Streltsov2018ResourcePurity} and closedly related to quantum mixedness testing~\cite{chen2022tight}. 
Let 
\begin{equation}
    c \coloneqq D_2\Bigl(\rho\Bigm\| \frac{\mathbb{I}}{d}\Bigr) = n + \log \Tr(\rho^2),
\end{equation}
so that $\Tr(\rho^2) = 2^{-(n-c)}$. 
To estimate $c$ within additive error~$\varepsilon$, we need to produce an estimator~$\omega$ of $\Tr(\rho^{2})$ satisfying
$\omega\in[2^{-(n-c+\varepsilon)},2^{-(n-c-\varepsilon)}]$, 
which translates to an additive precision requirement $\varepsilon_{\omega} = \Theta\left(\varepsilon 2^{-(n-c)}\right)$.
By Remark~\ref{rmk:purity}, achieving this accuracy requires 
\begin{equation}
    \cO\left(\frac{1}{\varepsilon_{\omega}^2}\right) = \cO\left(\frac{d^2}{\varepsilon^2}\right)
\end{equation}
copies of $\rho$, reflecting the fundamentally exponential sample complexity of mixedness testing~\cite{chen2022tight}. 

The ORM protocol applies to estimating $D_2(\rho\|\sigma)$ for a general state $\sigma$. Let
\begin{equation}
    \kappa = \lambda_{\max}(\sigma)/\lambda_{\min}(\sigma)
\end{equation}
be the condition number of~$\sigma$, where 
$\lambda_{\max}$ and $\lambda_{\min}$ are its largest and smallest eigenvalues. 
We can rewrite
\begin{equation}
   c\coloneqq D_2(\rho \| \sigma)  = n + \log \kappa + \log \Tr(\rho^2 (\kappa d \sigma)^{-1}).
\end{equation}
To estimate $c$ within additive error~$\varepsilon$, it suffices to obtain an estimator $\omega$ of $\Tr(\rho^2 (\kappa d \sigma)^{-1})$ satisfying $\omega\in[\kappa^{-1}2^{-(n-c+\varepsilon)}, \kappa^{-1}2^{-(n-c-\varepsilon)}]$, which requires an additive precision 
\begin{equation}
\varepsilon_{\omega} = \Theta(\kappa^{-1}\varepsilon 2^{-(n-c)}).
\end{equation}
 As the matrix $(\kappa d \sigma)^{-1}$ is Hermitian and obeys 
 \begin{equation}
 \norm{(\kappa d \sigma)^{-1}}_{\infty} \le 1.
 \end{equation}
 Theorem \ref{thm:estimating_general} guarantees us to achieve this precision with $\tilde{O}({\kappa^2 d^2}/{\varepsilon^2})$ copies of $\rho$. 
When $\kappa=\mathcal{O}(1)$, the resulting sample complexity (up to logarithmic factors) matches that of the maximally mixed reference state.

\section{An alternate approach for low-rank observable estimation}\label{sec:low_rank}
In the previous section, we have utilized the local structure of the target observable to simplify the implementation of the ORM protocol, replacing the block-diagonal random unitary with two independent random unitaries. 
In this section, we will develop an alternative RM protocol to show that, when estimating $\Tr(O\rho^2)$ for low-rank observables, we can also replace the block-diagonal random unitary with a normal random unitary.
This protocol is partly inspired by the thrifty shadow protocol~\cite{helsen2023thrifty,Zhou2023performanceanalysis}.
Note that estimating such quantities is particularly important for error-mitigated fidelity estimation, where $O = \ketbra{\psi}$ and $\ket{\psi}$ is the target state vector.

For low-rank observable, Protocol~\ref{protocol:general} achieves a sample complexity of $\cO(\sqrt{d})$, as shown in Theorem~\ref{thm:estimating_general}. 
Interestingly, this favorable scaling does not necessarily require an observable-dependent protocol. 
In fact, the classical shadows can also estimate $\Tr(O\rho^2)$ with sample complexity $\cO(\sqrt{d})$ (see Appendix~\ref{app:classical_shadow}).
However, a major limitation of the classical shadows lies in its large number of required measurement settings. For example, when $\varepsilon, \delta = \cO(1)$, our protocol only requires $N_U = \cO(1)$ distinct measurement settings, whereas classical shadows demands $\cO(\sqrt{d})$ random unitaries. Reducing the number of measurement settings is crucial for practical quantum platforms such as superconducting and optical qubits~\cite{blais2021cqed,pan2012multiphoton}, where measurements under a single setting can be repeated efficiently, but changing the measurement basis is time-consuming.
Furthermore, the measurement cost can be exponentially reduced on platforms that support ensemble-average measurements across many copies, such as nuclear magnetic resonance systems, where a thermodynamically large number of quantum states can be simultaneously accessed~\cite{Liu2025CertifyingQuantumTemporal}.

The new protocol combines the central ideas of the RM protocol introduced in Section~\ref{sec:pre} and the classical shadows, and thus is named \emph{braiding randomized measurement} (BRM). 
Specifically, it shares the same quantum data-obtaining process with RM and borrows the idea from classical shadows in data post-processing.
To illustrate this, we first show how the purity-estimation protocol can be generalized to estimate $\Tr(\rho^3)$. 
Recall that the key step in estimating $\Tr(\rho^2)$ is to virtually implement a SWAP operator $\mathbb{S}$ using the data post-processing function (see Eq.~\eqref{eq:RRM_coef_purity} and Eq.~\eqref{eq:purity_expectation}). By a similar argument, an estimator for $\Tr(\rho^3)$ should virtually implement the third-order permutation operator. Concretely, the estimator takes the form
\begin{equation}\label{eq:estimator_third_moment}
\hat{M}_3^U=\binom{N_M}{3}^{-1}\sum\limits_{1 \le i < j < k \le N_M}X_3(s_i,s_j,s_k),
\end{equation}
where $X_3(s_i,s_j,s_k)$ is chosen so that
\begin{equation}\label{eq:RRM_third_twirling}
    \underset{U}{\mathbb{E}}\bigl[U^{\otimes 3} \,X_3\, U^{\dagger\otimes 3}\bigr]
    = \tfrac{1}{2}\bigl(\mathbb{P}^3_{(132)} + \mathbb{P}^3_{(123)}\bigr)
\end{equation}
for $U$ being sampled from a
unitary 3-design~\cite{Designs}.
Here, 
\begin{equation}
X_3 = \sum\limits_{s_1, s_2, s_3} X_3(s_1, s_2, s_3)\, \ketbra{s_1, s_2, s_3},
\end{equation}
and $\mathbb{P}^3_{(132)}, \mathbb{P}^3_{(123)}$ are third-order cyclic permutation operators satisfying $\mathbb{P}^3_{(132)}\,\ket{\phi_1, \phi_2, \phi_3}
= \ket{\phi_2, \phi_3, \phi_1}$ and $
\mathbb{P}^3_{(123)}\,\ket{\phi_1, \phi_2, \phi_3}
= \ket{\phi_3, \phi_1, \phi_2}$ for arbitrary $\ket{\phi_1}$, $\ket{\phi_2}$ and $\ket{\phi_3}$.
To fulfill Eq.~\eqref{eq:RRM_third_twirling} and make $\hat{M}_3^U$ unbiased, one can select $X_3(s_1,s_2,s_3) = X_3({\abs{\{s_1, s_2, s_3\}}})$, with 
\begin{equation}\label{eq:X3_coef}
X_3(1) = \frac{1 + d^2}{2},\ 
X_3(2) = \frac{1 - d}{2},\ X_3(3)=1
\end{equation}
for $d=2^n$~\cite{zhou2020Single}.
We now make a key observation: 
since 
\begin{equation}
\Tr(\rho^3) = \Tr(\rho^2\rho),
\end{equation}
replacing one instance of $\rho$ in $\Tr(\rho^3)$ with an observable $O$ yields $\Tr(O\rho^2)$. Leveraging the unbiasedness of the estimator $\hat{M}_3^U$, we replace one ``\emph{physical} probability'' $\bra{s}U\rho U^\dagger\ket{s}$ with a ``\emph{virtual} probability'' $\bra{s}UO U^\dagger\ket{s}$, which can be computed classically from $O$ and $U$. This replacement still preserves unbiasedness, producing an estimator for $\Tr(O\rho^2)$.
Specifically, the resulting single-round estimator is 
\begin{equation}\label{eq:RRM_estimator_single_round}
    \omega_u 
    = \binom{N_M}{2}^{-1}
      \sum\limits_{1 \le i < j \le N_M}
      \sum\limits_{\sigma \in \{0,1\}^n}
      X_3(s_i, s_j, \sigma)\,
      \bra{\sigma}\,U\,O\,U^\dagger\ket{\sigma}.
\end{equation}
In fact, by a minor modification to the estimator $\omega_u$, we could remove its dependency on $O$ and construct the unbiased estimator
\begin{equation}
\hat{\rho^2}=\binom{N_M}{2}^{-1}
      \sum\limits_{1 \le i < j \le N_M}
      \sum\limits_{\sigma \in \{0,1\}^n}
      X_3(s_i, s_j, \sigma)\,
      \,U^\dagger\ketbra{\sigma}{\sigma}\,U\,
\end{equation}
 for $\rho^2$. This could be regarded as the ``shadow'' of $\rho^2$ while estimating properties in the form of $\Tr(O\rho^2)$.
Similarly to the protocols proposed in Sec.~\ref{sec:observable_driven}, we can average over $N_U$ random unitaries, and then apply the median-of-means procedure to further suppress the failure probability.
We summarize the full procedure in Protocol~\ref{protocol:low-rank observables}.

\begin{algorithm}[H] \label{protocol:low-rank observables}
\caption{Estimating $\Tr(O\rho^2)$ for many low-rank observables (BRM)}
\KwIn{
    $T, N_U, N_M$: Number of repetitions, unitaries per repetition, and measurements per unitary. 
    $\{O_m\}_{m=1}^M$: classical description of $M$ observables satisfying $\Tr(O_i^2)=\cO(1)$. 
    $T \cdot N_U \cdot N_M$ copies of $\rho$.
}
\KwOut{$\{\omega_m\}_{m=1}^M$: estimators of $\left\{\Tr(O_m\rho^2)\right\}_{m=1}^M$, respectively.}
\For{$t = 1$ \KwTo $T$} {
\For{$u = 1$ \KwTo $N_U$}{
    Sample a random unitary $U_u$.
    
    \For{$i = 1$ \KwTo $N_M$} {
        Apply $U_u$ to $\rho$.
        Perform computational-basis measurement and obtain an outcome $s_{u,i}^{(t)}$.
    }
}
}
\For{$m = 1$ \KwTo $M$} {
    \For{$t = 1$ \KwTo $T$} {
    Compute $\omega_{m}^{(t)} 
    = \frac{1}{N_U} \sum\limits_{u=1}^{N_U}\binom{N_M}{2}^{-1}
      \sum\limits_{1 \le i < j \le N_M}
      \sum\limits_{\sigma \in \{0,1\}^n}$ $ X_3(s_{u,i}^{(t)}, s_{u,j}^{(t)}, \sigma)\,
      \bra{\sigma}UO_mU^\dagger\ket{\sigma}$
    .
      }
      Compute $\omega_m = \mathrm{median}\{\omega^{(1)}_m,\omega^{(2)}_m,\dots,\omega^{(T)}_m\}$.
    }

Return $\{\omega_m\}_{m=1}^M$.
\end{algorithm}

An advantage of BRM over ORM is that BRM does not require a block-diagonal unitary. Instead, it uses a global random unitary, which simplifies implementation in real experiments. Moreover, this observable-independent BRM protocol offers a notable advantage in estimating multiple low-rank observables compared to ORM. 
Specifically, suppose we wish to estimate $M$ observables $O_1, O_2, \dots, O_M$, each satisfying $\Tr(O_m^2)=\cO(1)$, with a failure probability $\delta$. By setting $\delta'=\delta/M$ and selecting the number of experiments to guarantee failure probability $\delta'$ for estimating a single observable, the union bound ensures that all $M$ observables can be estimated simultaneously with failure probability $\delta$. This approach introduces only an $\cO\left[\log (\delta'^{-1})\right]=\cO(\log M)$ overhead, rather than a multiplicative factor of $M$ that would result from naively repeating the entire protocol $M$ times.

We further remark that third-order twirling is required to virtually implement the permutation operators, so unitaries should be sampled from a unitary 3-design to ensure the estimator is unbiased. Furthermore, sampling from a 6-design is sufficient to guarantee the performance stated in Theorem~\ref{thm:low_rank} (see proof in Appendix~\ref{app:low_rank} for the single-observable case and the discussion above for the multi-observable case). 
\begin{theorem}[Estimating many low-rank observables]
\label{thm:low_rank}
    Let $\{O_m\}_{m=1}^M$ be a collection of observables satisfying $\Tr(O_m^2) =\cO(1)$ for all $i$. 
    For an input state $\rho$ in a $d$-dimension system, Protocol~\ref{protocol:low-rank observables} simultaneously estimates all quantities  $\Tr(O_m\rho^2)$ to additive error at most $\varepsilon>0$ with probability at least $1 - \delta$, with a sample complexity of $\cO(\sqrt{d} \varepsilon^{-2} \log (\frac{M}{\delta}))$. Specifically, the number of repetitions is $T = \cO\bigl(\log(\frac{M}{\delta})\bigr)$, the number of random unitaries per repetition is $N_U = \cO\bigl(\varepsilon^{-2} \bigr)$, and the number of measurements per unitary is $N_M = \cO\bigl(\sqrt{d}\bigr)$. 
\end{theorem}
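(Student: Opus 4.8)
The plan is to prove the theorem in three stages: (i) unbiasedness of the single-round estimator, (ii) a variance bound obtained from the $6$-design property together with the low-rank hypothesis, and (iii) a median-of-means plus union-bound amplification over the $M$ observables. For the first stage, I would show that each $\omega_u$ in Eq.~\eqref{eq:RRM_estimator_single_round} is unbiased for $\Tr(O_m\rho^2)$. Averaging over the measurement outcomes, which are i.i.d.\ draws from $p_s=\bra{s}U\rho U^\dagger\ket{s}$, and over $\sigma$ converts the estimator into $\Tr[X_3\,U^{\otimes 3}(\rho\otimes\rho\otimes O_m)U^{\dagger\otimes 3}]$, with the two physical slots carrying $\rho$ and the virtual slot carrying $O_m$. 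Taking the $3$-design average and invoking the twirl identity Eq.~\eqref{eq:RRM_third_twirling} replaces $X_3$ by $\tfrac12(\mathbb{P}^3_{(132)}+\mathbb{P}^3_{(123)})$; since both cyclic permutations evaluate to the same cyclic trace $\Tr(\rho\cdot\rho\cdot O_m)=\Tr(O_m\rho^2)$, the estimator is unbiased. This simply formalizes the virtual-probability replacement argument already given before Eq.~\eqref{eq:RRM_estimator_single_round}.

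The heart of the proof is the variance bound. Using independence of the $N_U$ unitaries I would first write $\Var[\omega_m^{(t)}]=\tfrac1{N_U}\Var[\omega_u]$, and then apply the law of total variance, $\Var[\omega_u]=\mathbb{E}_U[\Var_{\{s\}}(\omega_u\mid U)]+\Var_U(\mu(U))$ with $\mu(U)=\mathbb{E}_{\{s\}}[\omega_u\mid U]$. Because $\omega_u$ is a degree-$2$ U-statistic in the outcomes with kernel $h(s_i,s_j)=\sum_\sigma X_3(s_i,s_j,\sigma)\bra{\sigma}U O_m U^\dagger\ket{\sigma}$, the standard Hoeffding decomposition splits its conditional variance into a $\cO(1/N_M)$ piece (pairs sharing one index) and a $\cO(1/N_M^2)$ piece (coinciding pairs). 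I would then evaluate the $U$-averages of these two pieces and of $\Var_U(\mu(U))$ via the $6$-design property: a single $\omega_u$ carries three copies of $U$ (two physical, one virtual), so $\omega_u^2$ carries at most six, and the Haar moments over $S_6$ reproduce the needed averages exactly through the Weingarten calculus.

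The low-rank hypothesis $\Tr(O_m^2)=\cO(1)$ enters precisely here, to tame the $\cO(d^2)$-sized coefficients of $X_3$ from Eq.~\eqref{eq:X3_coef}. Writing $q_\sigma=\bra{\sigma}U O_m U^\dagger\ket{\sigma}$, one finds the kernel reduces to a linear combination of $q_{s_i},q_{s_j}$ with prefactors of order $d$, and the relevant $U$-moments—such as $\mathbb{E}_U[q_\sigma^2]$, which scales like $[(\Tr O_m)^2+\Tr(O_m^2)]/d^2$—suppress these large prefactors back down. Tracking the dominant contractions, the $N_M$-independent fluctuation $\Var_U(\mu(U))$ is $\cO(1)$, the $1/N_M$ coefficient is at most $\cO(\sqrt d)$, and the binding $1/N_M^2$ coefficient (from the coinciding-pair term, where $h(s,s)\sim d^2 q_s$) is $\cO(d)$. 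Consequently $N_M=\cO(\sqrt d)$ yields $\Var[\omega_u]=\cO(1)$, and $N_U=\cO(\varepsilon^{-2})$ gives $\Var[\omega_m^{(t)}]=\cO(\varepsilon^2)$. I expect this variance bookkeeping to be the main obstacle: faithfully organizing the many $S_6$ contractions and verifying that the low-rank condition cancels the $d^2$-scale kernel factors down to the claimed $\cO(d)$ coefficient—rather than a worse power of $d$ that would force $N_M$ larger than $\sqrt d$.

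Finally, I would convert the variance bound into the stated guarantee. By Chebyshev's inequality, with the constant in $N_U$ chosen appropriately, each $\omega_m^{(t)}$ lands within $\varepsilon$ of $\Tr(O_m\rho^2)$ with probability bounded away from $1/2$; taking the median over $T=\cO(\log(M/\delta))$ independent repetitions pushes the per-observable failure probability below $\delta/M$, and a union bound over the $M$ observables yields simultaneous success with probability at least $1-\delta$. Crucially, the quantum data $\{U_u,s_{u,i}^{(t)}\}$ is observable-independent and reused for every $O_m$—only the classical virtual probabilities $\bra{\sigma}U O_m U^\dagger\ket{\sigma}$ are recomputed—so the total sample complexity is $T\cdot N_U\cdot N_M=\cO(\sqrt d\,\varepsilon^{-2}\log(M/\delta))$, with the $M$-dependence confined to the logarithmic factor.
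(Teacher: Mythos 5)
Your proposal is correct and follows essentially the same route as the paper's proof: unbiasedness via the third-order twirl, a law-of-total-variance/U-statistic decomposition whose three pieces (your $\Var_U(\mu(U))$, $1/N_M$, and $1/N_M^2$ terms) correspond exactly to the paper's $\Gamma_6$, $\Gamma_5/N_M$, and $\Gamma_4/N_M^2$ evaluated by Weingarten calculus with the $\Tr(O_m^2)=\cO(1)$ condition absorbing the large $X_3$ coefficients, and finally median-of-means with a union bound over the $M$ observables. The orders of magnitude you predict for each contribution match the paper's bounds, so the plan is sound as stated.
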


We use Table~\ref{tab:low_rank} to compare the three protocols (classical shadows, ORM, and BRM) in estimating low-rank non-linear expectation value. Subject to constant additive error and failure probability, all three protocols have a sample complexity scaling with the square root of the system dimension. 
Our proposed protocols, both ORM and BRM, remarkably reduce the number of required measurement bases to a constant number, compared to the exponentially many with system size in classical shadows. BRM also shares the favorable scaling in simultaneously estimating multiple observables, since it is also an observable-independent protocol. 

\section{Summary and outlook}\label{sec:outlook}

In this work, we have presented observable-driven randomized measurement protocols for estimating non-linear properties of the form $\Tr(O\rho^2)$.
Our approach requires only single-copy quantum experiments and achieves provably optimal sample complexity among all observables with $\norm{O}_1=\Omega(d)$.
While previous single-copy protocols were known to be optimal only for purity and inner product estimation, we bridge this gap by incorporating observable information into the RM framework, achieving a sample complexity of $\mathcal{O}(\sqrt{d})$ for estimating any observable with bounded norm.
Note that the observable-driven idea was also adopted for classical shadows to reduce the complexity in estimating linear properties~\cite{huang2021efficient,van2024derandomized}.

For practical applications, we present an efficient Clifford-circuit implementation for estimating Pauli observables that preserves the optimal sample complexity. 
We then extend this implementation to a broad class of physical observables, such as local Hamiltonians and stabilizer fidelities, via a Pauli sampling protocol. We further simplify circuit requirements by employing approximate unitary designs, shallow circuits, and local unitaries. 
A systematic study of randomized measurement under approximate state designs—particularly the performance of shallow-depth approximate unitary designs~\cite{schuster2024random}—is an interesting direction for future work.

The theoretical intuition behind our protocol is rooted in Schur-Weyl duality, which refers to the fact that the action of a group of unitaries can filter out properties invariant under their action. This is the fundamental reason why the RM protocol can estimate the purity $\Tr(\rho^2)$, as the action of any unitary on the state $\rho$ leaves the purity unchanged.
Following this intuition, when estimating $\Tr(O\rho^2)$, it is natural to choose a unitary ensemble consisting of all unitaries that commute with the observable $O$.
In the case of a dichotomic observable, this insight leads naturally to Protocol~\ref{protocol:dichotomic}, which employs block-diagonal random unitaries.
Therefore, an intriguing direction for future work is to develop a unified framework capable of estimating functions like $\Tr(O_1\rho O_2\rho)$, which plays a vital role in mixed state quantum phase transition~\cite{lessa2025swssb} or more general functions $\Tr(O\rho^{\otimes t})$ by appropriately adjusting the random unitary ensemble. 
From a practical standpoint, one of the key motivations for estimating non-linear expectation values is to amplify the information of the principal component in a noisy target state.
While our work addresses the estimation of the second-order non-linear expectation value, a natural question is how to extend these methods to estimate higher-order values, $\Tr(O\rho^t)$ for $t>2$, or even to directly obtain $\bra{\psi}O\ket{\psi}$ using single-copy operations, where $\ket{\psi}$ is the principal component of the noisy mixed state $\rho$.
A related approach that combines randomized measurements with quantum error mitigation has also been developed in Ref.~\cite{Onorati2024Noise}. 
The distinction lies in that Ref.~\cite{Onorati2024Noise} mitigates the measurement process itself through the estimation of an appropriate frame.

In addition to this, we show that to estimate low-rank observables, the observable-independent BRM protocol proposed in Section~\ref{sec:low_rank} can achieve sample complexity and measurement bases equivalent to those of ORM. 
This shows once again that low-rank structure can reduce
the sample complexity of quantum learning protocols~\cite{Compressed}.
The BRM protocol integrates techniques from both RM and classical shadows, offering distinct advantages over these conventional approaches: compared with RM, its capability extends beyond the estimation of state moments, and compared with classical shadows, it significantly reduces the number of required measurement bases. 
This result represents an intriguing step toward a unified framework combining RM and classical shadows, two of the most prevalent approaches based on measurements in random bases~\cite{elben2023randomized, helsen2023thrifty,Zhou2023performanceanalysis}.

We further note that, to the best of our knowledge, it remains an open question whether $\Omega(\sqrt{d})$ is a fundamental lower bound for estimating $\Tr(O\rho^2)$ for low-rank observables under single-copy measurement schemes. 
Therefore, there is still potential to further optimize the sample complexity by exploring novel ways of incorporating observable information into the randomized measurement framework. 
While our work has focused on single-copy measurement schemes, a promising avenue for future research is the estimation of non-linear properties in other settings. These include multi-copy measurement scenarios~\cite{Chen2025SimultaneousEstimation}, models with access to oracles preparing purifications of the input state~\cite{Zhang2025MeasuringLearnMoreQuadratic, Tang2025ConjugateQueries}, and protocols designed for a limited number of copies~\cite{Chen2021HierarchyReplicaQuantumAdvantage, Ye2025ExponentialAdvantageReplica}.
We hope that the present work stimulates further endeavors in these directions.  

\begin{acknowledgments}
We appreciate the valuable discussion with Hong-Ye Hu, Weiyuan Gong, Shu Chen, You Zhou, Pei Zeng, and Zihao Li. We thank Lorenzo Leone for his help in proving Lemma~\ref{lem:upper_bound_Q_sigma}. Z.~D.~and Z.~L.~acknowledge the support from the National Natural Science Foundation of China Grant No.~12174216 and the Innovation Program for Quantum Science and Technology Grant No.~2021ZD0300804 and No.~2021ZD0300702. Y.~T.~is supported by the Quantum Flagship MILLENION. 
J.~E.~is supported by the 
BMFTR (DAQC, MuniQC-Atoms, Hybrid++, QuSol), the Munich Quantum Valley, Berlin Quantum, the Quantum Flagship programmes MILLENION  and PASQUANS2, the DFG (CRC 183),
and the European Research Council (DebuQC).
Part of the numerical results in this work is obtained using the Python package Qiskit~\cite{qiskit2024Quantum}.
\end{acknowledgments}

\bibliography{bib}

\begin{thebibliography}{110}%
\makeatletter
\providecommand \@ifxundefined [1]{%
 \@ifx{#1\undefined}
}%
\providecommand \@ifnum [1]{%
 \ifnum #1\expandafter \@firstoftwo
 \else \expandafter \@secondoftwo
 \fi
}%
\providecommand \@ifx [1]{%
 \ifx #1\expandafter \@firstoftwo
 \else \expandafter \@secondoftwo
 \fi
}%
\providecommand \natexlab [1]{#1}%
\providecommand \enquote  [1]{``#1''}%
\providecommand \bibnamefont  [1]{#1}%
\providecommand \bibfnamefont [1]{#1}%
\providecommand \citenamefont [1]{#1}%
\providecommand \href@noop [0]{\@secondoftwo}%
\providecommand \href [0]{\begingroup \@sanitize@url \@href}%
\providecommand \@href[1]{\@@startlink{#1}\@@href}%
\providecommand \@@href[1]{\endgroup#1\@@endlink}%
\providecommand \@sanitize@url [0]{\catcode `\\12\catcode `\$12\catcode
  `\&12\catcode `\#12\catcode `\^12\catcode `\_12\catcode `\%12\relax}%
\providecommand \@@startlink[1]{}%
\providecommand \@@endlink[0]{}%
\providecommand \url  [0]{\begingroup\@sanitize@url \@url }%
\providecommand \@url [1]{\endgroup\@href {#1}{\urlprefix }}%
\providecommand \urlprefix  [0]{URL }%
\providecommand \Eprint [0]{\href }%
\providecommand \doibase [0]{https://doi.org/}%
\providecommand \selectlanguage [0]{\@gobble}%
\providecommand \bibinfo  [0]{\@secondoftwo}%
\providecommand \bibfield  [0]{\@secondoftwo}%
\providecommand \translation [1]{[#1]}%
\providecommand \BibitemOpen [0]{}%
\providecommand \bibitemStop [0]{}%
\providecommand \bibitemNoStop [0]{.\EOS\space}%
\providecommand \EOS [0]{\spacefactor3000\relax}%
\providecommand \BibitemShut  [1]{\csname bibitem#1\endcsname}%
\let\auto@bib@innerbib\@empty
\bibitem [{\citenamefont {Eisert}\ \emph {et~al.}(2020)\citenamefont {Eisert},
  \citenamefont {Hangleiter}, \citenamefont {Walk}, \citenamefont {Roth},
  \citenamefont {Markham}, \citenamefont {Parekh}, \citenamefont {Chabaud},\
  and\ \citenamefont {Kashefi}}]{Eisert2020certification}%
  \BibitemOpen
  \bibfield  {author} {\bibinfo {author} {\bibfnamefont {J.}~\bibnamefont
  {Eisert}}, \bibinfo {author} {\bibfnamefont {D.}~\bibnamefont {Hangleiter}},
  \bibinfo {author} {\bibfnamefont {N.}~\bibnamefont {Walk}}, \bibinfo {author}
  {\bibfnamefont {I.}~\bibnamefont {Roth}}, \bibinfo {author} {\bibfnamefont
  {D.}~\bibnamefont {Markham}}, \bibinfo {author} {\bibfnamefont
  {R.}~\bibnamefont {Parekh}}, \bibinfo {author} {\bibfnamefont
  {U.}~\bibnamefont {Chabaud}},\ and\ \bibinfo {author} {\bibfnamefont
  {E.}~\bibnamefont {Kashefi}},\ }\bibfield  {title} {\bibinfo {title} {Quantum
  certification and benchmarking},\ }\href
  {https://doi.org/10.1038/s42254-020-0186-4} {\bibfield  {journal} {\bibinfo
  {journal} {Nature Rev. Phys.}\ }\textbf {\bibinfo {volume} {2}},\ \bibinfo
  {pages} {382} (\bibinfo {year} {2020})}\BibitemShut {NoStop}%
\bibitem [{\citenamefont {Elben}\ \emph {et~al.}(2019)\citenamefont {Elben},
  \citenamefont {Vermersch}, \citenamefont {Roos},\ and\ \citenamefont
  {Zoller}}]{elben2019toolbox}%
  \BibitemOpen
  \bibfield  {author} {\bibinfo {author} {\bibfnamefont {A.}~\bibnamefont
  {Elben}}, \bibinfo {author} {\bibfnamefont {B.}~\bibnamefont {Vermersch}},
  \bibinfo {author} {\bibfnamefont {C.~F.}\ \bibnamefont {Roos}},\ and\
  \bibinfo {author} {\bibfnamefont {P.}~\bibnamefont {Zoller}},\ }\bibfield
  {title} {\bibinfo {title} {Statistical correlations between locally
  randomized measurements: A toolbox for probing entanglement in many-body
  quantum states},\ }\href {https://doi.org/10.1103/PhysRevA.99.052323}
  {\bibfield  {journal} {\bibinfo  {journal} {Phys. Rev. A}\ }\textbf {\bibinfo
  {volume} {99}},\ \bibinfo {pages} {052323} (\bibinfo {year}
  {2019})}\BibitemShut {NoStop}%
\bibitem [{\citenamefont {Kliesch}\ and\ \citenamefont
  {Roth}(2021)}]{Kliesch2021Certification}%
  \BibitemOpen
  \bibfield  {author} {\bibinfo {author} {\bibfnamefont {M.}~\bibnamefont
  {Kliesch}}\ and\ \bibinfo {author} {\bibfnamefont {I.}~\bibnamefont {Roth}},\
  }\bibfield  {title} {\bibinfo {title} {Theory of quantum system
  certification},\ }\href {https://doi.org/10.1103/PRXQuantum.2.010201}
  {\bibfield  {journal} {\bibinfo  {journal} {PRX Quantum}\ }\textbf {\bibinfo
  {volume} {2}},\ \bibinfo {pages} {010201} (\bibinfo {year}
  {2021})}\BibitemShut {NoStop}%
\bibitem [{\citenamefont {Cerezo}\ \emph {et~al.}(2022)\citenamefont {Cerezo},
  \citenamefont {Verdon}, \citenamefont {Huang}, \citenamefont {Cincio},\ and\
  \citenamefont {Coles}}]{Cerezo2022qml}%
  \BibitemOpen
  \bibfield  {author} {\bibinfo {author} {\bibfnamefont {M.}~\bibnamefont
  {Cerezo}}, \bibinfo {author} {\bibfnamefont {G.}~\bibnamefont {Verdon}},
  \bibinfo {author} {\bibfnamefont {H.-Y.}\ \bibnamefont {Huang}}, \bibinfo
  {author} {\bibfnamefont {L.}~\bibnamefont {Cincio}},\ and\ \bibinfo {author}
  {\bibfnamefont {P.~J.}\ \bibnamefont {Coles}},\ }\bibfield  {title} {\bibinfo
  {title} {Challenges and opportunities in quantum machine learning},\ }\href
  {https://doi.org/10.1038/s43588-022-00311-3} {\bibfield  {journal} {\bibinfo
  {journal} {Nature Comput. Sci.}\ }\textbf {\bibinfo {volume} {2}},\ \bibinfo
  {pages} {567} (\bibinfo {year} {2022})}\BibitemShut {NoStop}%
\bibitem [{\citenamefont {Eisert}\ and\ \citenamefont
  {Preskill}(2025)}]{Eisert2025Mind}%
  \BibitemOpen
  \bibfield  {author} {\bibinfo {author} {\bibfnamefont {J.}~\bibnamefont
  {Eisert}}\ and\ \bibinfo {author} {\bibfnamefont {J.}~\bibnamefont
  {Preskill}},\ }\bibfield  {title} {\bibinfo {title} {Mind the gaps: The
  fraught road to quantum advantage},\ }\href
  {https://arxiv.org/abs/2510.19928} {\bibfield  {journal} {\bibinfo  {journal}
  {arXiv:2510.19928}\ } (\bibinfo {year} {2025})}\BibitemShut {NoStop}%
\bibitem [{\citenamefont {Kaufman}\ \emph {et~al.}(2016)\citenamefont
  {Kaufman}, \citenamefont {Tai}, \citenamefont {Lukin}, \citenamefont
  {Rispoli}, \citenamefont {Schittko}, \citenamefont {Preiss},\ and\
  \citenamefont {Greiner}}]{adam2016thermalization}%
  \BibitemOpen
  \bibfield  {author} {\bibinfo {author} {\bibfnamefont {A.~M.}\ \bibnamefont
  {Kaufman}}, \bibinfo {author} {\bibfnamefont {M.~E.}\ \bibnamefont {Tai}},
  \bibinfo {author} {\bibfnamefont {A.}~\bibnamefont {Lukin}}, \bibinfo
  {author} {\bibfnamefont {M.}~\bibnamefont {Rispoli}}, \bibinfo {author}
  {\bibfnamefont {R.}~\bibnamefont {Schittko}}, \bibinfo {author}
  {\bibfnamefont {P.~M.}\ \bibnamefont {Preiss}},\ and\ \bibinfo {author}
  {\bibfnamefont {M.}~\bibnamefont {Greiner}},\ }\bibfield  {title} {\bibinfo
  {title} {Quantum thermalization through entanglement in an isolated many-body
  system},\ }\href {https://www.science.org/doi/abs/10.1126/science.aaf6725}
  {\bibfield  {journal} {\bibinfo  {journal} {Science}\ }\textbf {\bibinfo
  {volume} {353}},\ \bibinfo {pages} {794} (\bibinfo {year}
  {2016})}\BibitemShut {NoStop}%
\bibitem [{\citenamefont {Cotler}\ \emph {et~al.}(2019)\citenamefont {Cotler},
  \citenamefont {Choi}, \citenamefont {Lukin}, \citenamefont {Gharibyan},
  \citenamefont {Grover}, \citenamefont {Tai}, \citenamefont {Rispoli},
  \citenamefont {Schittko}, \citenamefont {Preiss}, \citenamefont {Kaufman},
  \citenamefont {Greiner}, \citenamefont {Pichler},\ and\ \citenamefont
  {Hayden}}]{cotler2019cooling}%
  \BibitemOpen
  \bibfield  {author} {\bibinfo {author} {\bibfnamefont {J.}~\bibnamefont
  {Cotler}}, \bibinfo {author} {\bibfnamefont {S.}~\bibnamefont {Choi}},
  \bibinfo {author} {\bibfnamefont {A.}~\bibnamefont {Lukin}}, \bibinfo
  {author} {\bibfnamefont {H.}~\bibnamefont {Gharibyan}}, \bibinfo {author}
  {\bibfnamefont {T.}~\bibnamefont {Grover}}, \bibinfo {author} {\bibfnamefont
  {M.~E.}\ \bibnamefont {Tai}}, \bibinfo {author} {\bibfnamefont
  {M.}~\bibnamefont {Rispoli}}, \bibinfo {author} {\bibfnamefont
  {R.}~\bibnamefont {Schittko}}, \bibinfo {author} {\bibfnamefont {P.~M.}\
  \bibnamefont {Preiss}}, \bibinfo {author} {\bibfnamefont {A.~M.}\
  \bibnamefont {Kaufman}}, \bibinfo {author} {\bibfnamefont {M.}~\bibnamefont
  {Greiner}}, \bibinfo {author} {\bibfnamefont {H.}~\bibnamefont {Pichler}},\
  and\ \bibinfo {author} {\bibfnamefont {P.}~\bibnamefont {Hayden}},\
  }\bibfield  {title} {\bibinfo {title} {Quantum virtual cooling},\ }\href
  {https://doi.org/10.1103/PhysRevX.9.031013} {\bibfield  {journal} {\bibinfo
  {journal} {Phys. Rev. X}\ }\textbf {\bibinfo {volume} {9}},\ \bibinfo {pages}
  {031013} (\bibinfo {year} {2019})}\BibitemShut {NoStop}%
\bibitem [{\citenamefont {Cai}\ \emph {et~al.}(2023{\natexlab{a}})\citenamefont
  {Cai}, \citenamefont {Babbush}, \citenamefont {Benjamin}, \citenamefont
  {Endo}, \citenamefont {Huggins}, \citenamefont {Li}, \citenamefont
  {McClean},\ and\ \citenamefont {O'Brien}}]{cai2023qem}%
  \BibitemOpen
  \bibfield  {author} {\bibinfo {author} {\bibfnamefont {Z.}~\bibnamefont
  {Cai}}, \bibinfo {author} {\bibfnamefont {R.}~\bibnamefont {Babbush}},
  \bibinfo {author} {\bibfnamefont {S.~C.}\ \bibnamefont {Benjamin}}, \bibinfo
  {author} {\bibfnamefont {S.}~\bibnamefont {Endo}}, \bibinfo {author}
  {\bibfnamefont {W.~J.}\ \bibnamefont {Huggins}}, \bibinfo {author}
  {\bibfnamefont {Y.}~\bibnamefont {Li}}, \bibinfo {author} {\bibfnamefont
  {J.~R.}\ \bibnamefont {McClean}},\ and\ \bibinfo {author} {\bibfnamefont
  {T.~E.}\ \bibnamefont {O'Brien}},\ }\bibfield  {title} {\bibinfo {title}
  {Quantum error mitigation},\ }\href
  {https://doi.org/10.1103/RevModPhys.95.045005} {\bibfield  {journal}
  {\bibinfo  {journal} {Rev. Mod. Phys.}\ }\textbf {\bibinfo {volume} {95}},\
  \bibinfo {pages} {045005} (\bibinfo {year} {2023}{\natexlab{a}})}\BibitemShut
  {NoStop}%
\bibitem [{\citenamefont {Huggins}\ \emph {et~al.}(2021)\citenamefont
  {Huggins}, \citenamefont {McArdle}, \citenamefont {O'Brien}, \citenamefont
  {Lee}, \citenamefont {Rubin}, \citenamefont {Boixo}, \citenamefont {Whaley},
  \citenamefont {Babbush},\ and\ \citenamefont
  {McClean}}]{hugginsVirtualDistillationQuantum2021}%
  \BibitemOpen
  \bibfield  {author} {\bibinfo {author} {\bibfnamefont {W.~J.}\ \bibnamefont
  {Huggins}}, \bibinfo {author} {\bibfnamefont {S.}~\bibnamefont {McArdle}},
  \bibinfo {author} {\bibfnamefont {T.~E.}\ \bibnamefont {O'Brien}}, \bibinfo
  {author} {\bibfnamefont {J.}~\bibnamefont {Lee}}, \bibinfo {author}
  {\bibfnamefont {N.~C.}\ \bibnamefont {Rubin}}, \bibinfo {author}
  {\bibfnamefont {S.}~\bibnamefont {Boixo}}, \bibinfo {author} {\bibfnamefont
  {K.~B.}\ \bibnamefont {Whaley}}, \bibinfo {author} {\bibfnamefont
  {R.}~\bibnamefont {Babbush}},\ and\ \bibinfo {author} {\bibfnamefont {J.~R.}\
  \bibnamefont {McClean}},\ }\bibfield  {title} {\bibinfo {title} {Virtual
  {{distillation}} for {{quantum error mitigation}}},\ }\href
  {https://doi.org/10.1103/PhysRevX.11.041036} {\bibfield  {journal} {\bibinfo
  {journal} {Phys. Rev. X}\ }\textbf {\bibinfo {volume} {11}},\ \bibinfo
  {pages} {041036} (\bibinfo {year} {2021})}\BibitemShut {NoStop}%
\bibitem [{\citenamefont {Koczor}(2021)}]{koczor2021exponential}%
  \BibitemOpen
  \bibfield  {author} {\bibinfo {author} {\bibfnamefont {B.}~\bibnamefont
  {Koczor}},\ }\bibfield  {title} {\bibinfo {title} {Exponential error
  suppression for near-term quantum devices},\ }\href
  {https://doi.org/10.1103/PhysRevX.11.031057} {\bibfield  {journal} {\bibinfo
  {journal} {Phys. Rev. X}\ }\textbf {\bibinfo {volume} {11}},\ \bibinfo
  {pages} {031057} (\bibinfo {year} {2021})}\BibitemShut {NoStop}%
\bibitem [{\citenamefont {Huang}\ \emph {et~al.}(2022)\citenamefont {Huang},
  \citenamefont {Broughton}, \citenamefont {Cotler}, \citenamefont {Chen},
  \citenamefont {Li}, \citenamefont {Mohseni}, \citenamefont {Neven},
  \citenamefont {Babbush}, \citenamefont {Kueng}, \citenamefont {Preskill},\
  and\ \citenamefont {McClean}}]{Huang2022QuantumAdvantage}%
  \BibitemOpen
  \bibfield  {author} {\bibinfo {author} {\bibfnamefont {H.-Y.}\ \bibnamefont
  {Huang}}, \bibinfo {author} {\bibfnamefont {M.}~\bibnamefont {Broughton}},
  \bibinfo {author} {\bibfnamefont {J.}~\bibnamefont {Cotler}}, \bibinfo
  {author} {\bibfnamefont {S.}~\bibnamefont {Chen}}, \bibinfo {author}
  {\bibfnamefont {J.}~\bibnamefont {Li}}, \bibinfo {author} {\bibfnamefont
  {M.}~\bibnamefont {Mohseni}}, \bibinfo {author} {\bibfnamefont
  {H.}~\bibnamefont {Neven}}, \bibinfo {author} {\bibfnamefont
  {R.}~\bibnamefont {Babbush}}, \bibinfo {author} {\bibfnamefont
  {R.}~\bibnamefont {Kueng}}, \bibinfo {author} {\bibfnamefont
  {J.}~\bibnamefont {Preskill}},\ and\ \bibinfo {author} {\bibfnamefont
  {J.~R.}\ \bibnamefont {McClean}},\ }\bibfield  {title} {\bibinfo {title}
  {Quantum advantage in learning from experiments},\ }\href
  {https://doi.org/10.1126/science.abn7293} {\bibfield  {journal} {\bibinfo
  {journal} {Science}\ }\textbf {\bibinfo {volume} {376}},\ \bibinfo {pages}
  {1182–1186} (\bibinfo {year} {2022})}\BibitemShut {NoStop}%
\bibitem [{\citenamefont {Lloyd}\ \emph {et~al.}(2014)\citenamefont {Lloyd},
  \citenamefont {Mohseni},\ and\ \citenamefont {Rebentrost}}]{Lloyd2014qpca}%
  \BibitemOpen
  \bibfield  {author} {\bibinfo {author} {\bibfnamefont {S.}~\bibnamefont
  {Lloyd}}, \bibinfo {author} {\bibfnamefont {M.}~\bibnamefont {Mohseni}},\
  and\ \bibinfo {author} {\bibfnamefont {P.}~\bibnamefont {Rebentrost}},\
  }\bibfield  {title} {\bibinfo {title} {Quantum principal component
  analysis},\ }\href {https://doi.org/10.1038/nphys3029} {\bibfield  {journal}
  {\bibinfo  {journal} {Nature Phys.}\ }\textbf {\bibinfo {volume} {10}},\
  \bibinfo {pages} {631} (\bibinfo {year} {2014})}\BibitemShut {NoStop}%
\bibitem [{\citenamefont {Rath}\ \emph {et~al.}(2021)\citenamefont {Rath},
  \citenamefont {Branciard}, \citenamefont {Minguzzi},\ and\ \citenamefont
  {Vermersch}}]{rath2021Fisher}%
  \BibitemOpen
  \bibfield  {author} {\bibinfo {author} {\bibfnamefont {A.}~\bibnamefont
  {Rath}}, \bibinfo {author} {\bibfnamefont {C.}~\bibnamefont {Branciard}},
  \bibinfo {author} {\bibfnamefont {A.}~\bibnamefont {Minguzzi}},\ and\
  \bibinfo {author} {\bibfnamefont {B.}~\bibnamefont {Vermersch}},\ }\bibfield
  {title} {\bibinfo {title} {Quantum fisher information from randomized
  measurements},\ }\href {https://doi.org/10.1103/PhysRevLett.127.260501}
  {\bibfield  {journal} {\bibinfo  {journal} {Phys. Rev. Lett.}\ }\textbf
  {\bibinfo {volume} {127}},\ \bibinfo {pages} {260501} (\bibinfo {year}
  {2021})}\BibitemShut {NoStop}%
\bibitem [{\citenamefont {Yu}\ \emph {et~al.}(2021)\citenamefont {Yu},
  \citenamefont {Li}, \citenamefont {Wang}, \citenamefont {Chu}, \citenamefont
  {Yang}, \citenamefont {Gong}, \citenamefont {Goldman},\ and\ \citenamefont
  {Cai}}]{yu2021Fisher}%
  \BibitemOpen
  \bibfield  {author} {\bibinfo {author} {\bibfnamefont {M.}~\bibnamefont
  {Yu}}, \bibinfo {author} {\bibfnamefont {D.}~\bibnamefont {Li}}, \bibinfo
  {author} {\bibfnamefont {J.}~\bibnamefont {Wang}}, \bibinfo {author}
  {\bibfnamefont {Y.}~\bibnamefont {Chu}}, \bibinfo {author} {\bibfnamefont
  {P.}~\bibnamefont {Yang}}, \bibinfo {author} {\bibfnamefont {M.}~\bibnamefont
  {Gong}}, \bibinfo {author} {\bibfnamefont {N.}~\bibnamefont {Goldman}},\ and\
  \bibinfo {author} {\bibfnamefont {J.}~\bibnamefont {Cai}},\ }\bibfield
  {title} {\bibinfo {title} {Experimental estimation of the quantum fisher
  information from randomized measurements},\ }\href
  {https://doi.org/10.1103/PhysRevResearch.3.043122} {\bibfield  {journal}
  {\bibinfo  {journal} {Phys. Rev. Res.}\ }\textbf {\bibinfo {volume} {3}},\
  \bibinfo {pages} {043122} (\bibinfo {year} {2021})}\BibitemShut {NoStop}%
\bibitem [{\citenamefont {Lee}\ \emph {et~al.}(2023)\citenamefont {Lee},
  \citenamefont {Jian},\ and\ \citenamefont {Xu}}]{lee2023decoherence}%
  \BibitemOpen
  \bibfield  {author} {\bibinfo {author} {\bibfnamefont {J.~Y.}\ \bibnamefont
  {Lee}}, \bibinfo {author} {\bibfnamefont {C.-M.}\ \bibnamefont {Jian}},\ and\
  \bibinfo {author} {\bibfnamefont {C.}~\bibnamefont {Xu}},\ }\bibfield
  {title} {\bibinfo {title} {Quantum criticality under decoherence or weak
  measurement},\ }\href {https://doi.org/10.1103/PRXQuantum.4.030317}
  {\bibfield  {journal} {\bibinfo  {journal} {PRX Quantum}\ }\textbf {\bibinfo
  {volume} {4}},\ \bibinfo {pages} {030317} (\bibinfo {year}
  {2023})}\BibitemShut {NoStop}%
\bibitem [{\citenamefont {Lee}\ \emph {et~al.}(2025)\citenamefont {Lee},
  \citenamefont {You},\ and\ \citenamefont {Xu}}]{Lee2025symmetryprotected}%
  \BibitemOpen
  \bibfield  {author} {\bibinfo {author} {\bibfnamefont {J.~Y.}\ \bibnamefont
  {Lee}}, \bibinfo {author} {\bibfnamefont {Y.-Z.}\ \bibnamefont {You}},\ and\
  \bibinfo {author} {\bibfnamefont {C.}~\bibnamefont {Xu}},\ }\bibfield
  {title} {\bibinfo {title} {Symmetry protected topological phases under
  decoherence},\ }\href {https://doi.org/10.22331/q-2025-01-23-1607} {\bibfield
   {journal} {\bibinfo  {journal} {{Quantum}}\ }\textbf {\bibinfo {volume}
  {9}},\ \bibinfo {pages} {1607} (\bibinfo {year} {2025})}\BibitemShut
  {NoStop}%
\bibitem [{\citenamefont {Zhang}\ \emph
  {et~al.}(2025{\natexlab{a}})\citenamefont {Zhang}, \citenamefont {Hsieh},
  \citenamefont {Kim},\ and\ \citenamefont
  {Zou}}]{zhang2025probingmixedstatephases}%
  \BibitemOpen
  \bibfield  {author} {\bibinfo {author} {\bibfnamefont {Y.}~\bibnamefont
  {Zhang}}, \bibinfo {author} {\bibfnamefont {T.~H.}\ \bibnamefont {Hsieh}},
  \bibinfo {author} {\bibfnamefont {Y.~B.}\ \bibnamefont {Kim}},\ and\ \bibinfo
  {author} {\bibfnamefont {Y.}~\bibnamefont {Zou}},\ }\bibfield  {title}
  {\bibinfo {title} {Probing mixed-state phases on a quantum computer via renyi
  correlators and variational decoding},\ }\href
  {https://arxiv.org/abs/2505.02900} {\bibfield  {journal} {\bibinfo  {journal}
  {arXiv:2505.02900}\ } (\bibinfo {year} {2025}{\natexlab{a}})}\BibitemShut
  {NoStop}%
\bibitem [{\citenamefont {Aharonov}\ \emph {et~al.}(2022)\citenamefont
  {Aharonov}, \citenamefont {Cotler},\ and\ \citenamefont
  {Qi}}]{aharonov2022quantum}%
  \BibitemOpen
  \bibfield  {author} {\bibinfo {author} {\bibfnamefont {D.}~\bibnamefont
  {Aharonov}}, \bibinfo {author} {\bibfnamefont {J.}~\bibnamefont {Cotler}},\
  and\ \bibinfo {author} {\bibfnamefont {X.-L.}\ \bibnamefont {Qi}},\
  }\bibfield  {title} {\bibinfo {title} {Quantum algorithmic measurement},\
  }\href {https://www.nature.com/articles/s41467-021-27922-0} {\bibfield
  {journal} {\bibinfo  {journal} {Nature Commun.}\ }\textbf {\bibinfo {volume}
  {13}},\ \bibinfo {pages} {887} (\bibinfo {year} {2022})}\BibitemShut
  {NoStop}%
\bibitem [{\citenamefont {Chen}\ \emph
  {et~al.}(2022{\natexlab{a}})\citenamefont {Chen}, \citenamefont {Cotler},
  \citenamefont {Huang},\ and\ \citenamefont {Li}}]{chen2022memory}%
  \BibitemOpen
  \bibfield  {author} {\bibinfo {author} {\bibfnamefont {S.}~\bibnamefont
  {Chen}}, \bibinfo {author} {\bibfnamefont {J.}~\bibnamefont {Cotler}},
  \bibinfo {author} {\bibfnamefont {H.-Y.}\ \bibnamefont {Huang}},\ and\
  \bibinfo {author} {\bibfnamefont {J.}~\bibnamefont {Li}},\ }\bibfield
  {title} {\bibinfo {title} {Exponential separations between learning with and
  without quantum memory},\ }in\ \href
  {https://ieeexplore.ieee.org/abstract/document/9719827/} {\emph {\bibinfo
  {booktitle} {Proc. 62nd Annu. IEEE Symp. Found. Comput. Sci. (FOCS)}}}\
  (\bibinfo {year} {2022})\ pp.\ \bibinfo {pages} {574--585}\BibitemShut
  {NoStop}%
\bibitem [{\citenamefont {Gong}\ \emph {et~al.}(2024)\citenamefont {Gong},
  \citenamefont {Haferkamp}, \citenamefont {Ye},\ and\ \citenamefont
  {Zhang}}]{gong2024sample}%
  \BibitemOpen
  \bibfield  {author} {\bibinfo {author} {\bibfnamefont {W.}~\bibnamefont
  {Gong}}, \bibinfo {author} {\bibfnamefont {J.}~\bibnamefont {Haferkamp}},
  \bibinfo {author} {\bibfnamefont {Q.}~\bibnamefont {Ye}},\ and\ \bibinfo
  {author} {\bibfnamefont {Z.}~\bibnamefont {Zhang}},\ }\bibfield  {title}
  {\bibinfo {title} {On the sample complexity of purity and inner product
  estimation},\ }\href {https://arxiv.org/abs/2410.12712} {\bibfield  {journal}
  {\bibinfo  {journal} {arXiv:2410.12712}\ } (\bibinfo {year}
  {2024})}\BibitemShut {NoStop}%
\bibitem [{\citenamefont {Barenco}\ \emph {et~al.}(1997)\citenamefont
  {Barenco}, \citenamefont {Berthiaume}, \citenamefont {Deutsch}, \citenamefont
  {Ekert}, \citenamefont {Jozsa},\ and\ \citenamefont
  {Macchiavello}}]{Barenco1997Stabilization}%
  \BibitemOpen
  \bibfield  {author} {\bibinfo {author} {\bibfnamefont {A.}~\bibnamefont
  {Barenco}}, \bibinfo {author} {\bibfnamefont {A.}~\bibnamefont {Berthiaume}},
  \bibinfo {author} {\bibfnamefont {D.}~\bibnamefont {Deutsch}}, \bibinfo
  {author} {\bibfnamefont {A.}~\bibnamefont {Ekert}}, \bibinfo {author}
  {\bibfnamefont {R.}~\bibnamefont {Jozsa}},\ and\ \bibinfo {author}
  {\bibfnamefont {C.}~\bibnamefont {Macchiavello}},\ }\bibfield  {title}
  {\bibinfo {title} {Stabilization of quantum computations by symmetrization},\
  }\href {https://doi.org/10.1137/S0097539796302452} {\bibfield  {journal}
  {\bibinfo  {journal} {SIAM J. Comput.}\ }\textbf {\bibinfo {volume} {26}},\
  \bibinfo {pages} {1541} (\bibinfo {year} {1997})}\BibitemShut {NoStop}%
\bibitem [{\citenamefont {Buhrman}\ \emph {et~al.}(2001)\citenamefont
  {Buhrman}, \citenamefont {Cleve}, \citenamefont {Watrous},\ and\
  \citenamefont {de~Wolf}}]{Buhrman2001Quantum}%
  \BibitemOpen
  \bibfield  {author} {\bibinfo {author} {\bibfnamefont {H.}~\bibnamefont
  {Buhrman}}, \bibinfo {author} {\bibfnamefont {R.}~\bibnamefont {Cleve}},
  \bibinfo {author} {\bibfnamefont {J.}~\bibnamefont {Watrous}},\ and\ \bibinfo
  {author} {\bibfnamefont {R.}~\bibnamefont {de~Wolf}},\ }\bibfield  {title}
  {\bibinfo {title} {Quantum fingerprinting},\ }\href
  {https://doi.org/10.1103/PhysRevLett.87.167902} {\bibfield  {journal}
  {\bibinfo  {journal} {Phys. Rev. Lett.}\ }\textbf {\bibinfo {volume} {87}},\
  \bibinfo {pages} {167902} (\bibinfo {year} {2001})}\BibitemShut {NoStop}%
\bibitem [{\citenamefont {Ekert}\ \emph {et~al.}(2002)\citenamefont {Ekert},
  \citenamefont {Alves}, \citenamefont {Oi}, \citenamefont {Horodecki},
  \citenamefont {Horodecki},\ and\ \citenamefont {Kwek}}]{ekert2002direct}%
  \BibitemOpen
  \bibfield  {author} {\bibinfo {author} {\bibfnamefont {A.~K.}\ \bibnamefont
  {Ekert}}, \bibinfo {author} {\bibfnamefont {C.~M.}\ \bibnamefont {Alves}},
  \bibinfo {author} {\bibfnamefont {D.~K.~L.}\ \bibnamefont {Oi}}, \bibinfo
  {author} {\bibfnamefont {M.}~\bibnamefont {Horodecki}}, \bibinfo {author}
  {\bibfnamefont {P.}~\bibnamefont {Horodecki}},\ and\ \bibinfo {author}
  {\bibfnamefont {L.~C.}\ \bibnamefont {Kwek}},\ }\bibfield  {title} {\bibinfo
  {title} {Direct estimations of linear and nonlinear functionals of a quantum
  state},\ }\href {https://doi.org/10.1103/PhysRevLett.88.217901} {\bibfield
  {journal} {\bibinfo  {journal} {Phys. Rev. Lett.}\ }\textbf {\bibinfo
  {volume} {88}},\ \bibinfo {pages} {217901} (\bibinfo {year}
  {2002})}\BibitemShut {NoStop}%
\bibitem [{\citenamefont {Islam}\ \emph {et~al.}(2015)\citenamefont {Islam},
  \citenamefont {Ma}, \citenamefont {Preiss}, \citenamefont {Eric~Tai},
  \citenamefont {Lukin}, \citenamefont {Rispoli},\ and\ \citenamefont
  {Greiner}}]{Islam2015entropy}%
  \BibitemOpen
  \bibfield  {author} {\bibinfo {author} {\bibfnamefont {R.}~\bibnamefont
  {Islam}}, \bibinfo {author} {\bibfnamefont {R.}~\bibnamefont {Ma}}, \bibinfo
  {author} {\bibfnamefont {P.~M.}\ \bibnamefont {Preiss}}, \bibinfo {author}
  {\bibfnamefont {M.}~\bibnamefont {Eric~Tai}}, \bibinfo {author}
  {\bibfnamefont {A.}~\bibnamefont {Lukin}}, \bibinfo {author} {\bibfnamefont
  {M.}~\bibnamefont {Rispoli}},\ and\ \bibinfo {author} {\bibfnamefont
  {M.}~\bibnamefont {Greiner}},\ }\bibfield  {title} {\bibinfo {title}
  {Measuring entanglement entropy in a quantum many-body system},\ }\href
  {https://doi.org/10.1038/nature15750} {\bibfield  {journal} {\bibinfo
  {journal} {Nature}\ }\textbf {\bibinfo {volume} {528}},\ \bibinfo {pages}
  {77} (\bibinfo {year} {2015})}\BibitemShut {NoStop}%
\bibitem [{\citenamefont {Zhou}\ and\ \citenamefont
  {Liu}(2024)}]{Zhou2024hybrid}%
  \BibitemOpen
  \bibfield  {author} {\bibinfo {author} {\bibfnamefont {Y.}~\bibnamefont
  {Zhou}}\ and\ \bibinfo {author} {\bibfnamefont {Z.}~\bibnamefont {Liu}},\
  }\bibfield  {title} {\bibinfo {title} {A hybrid framework for estimating
  nonlinear functions of quantum states},\ }\href
  {https://doi.org/10.1038/s41534-024-00846-5} {\bibfield  {journal} {\bibinfo
  {journal} {npj Quantum Inf.}\ }\textbf {\bibinfo {volume} {10}},\ \bibinfo
  {pages} {62} (\bibinfo {year} {2024})}\BibitemShut {NoStop}%
\bibitem [{\citenamefont {Liu}\ \emph {et~al.}(2024{\natexlab{a}})\citenamefont
  {Liu}, \citenamefont {Li}, \citenamefont {Yuan}, \citenamefont {Zhu},\ and\
  \citenamefont {Zhou}}]{liu2024auxiliary}%
  \BibitemOpen
  \bibfield  {author} {\bibinfo {author} {\bibfnamefont {Q.}~\bibnamefont
  {Liu}}, \bibinfo {author} {\bibfnamefont {Z.}~\bibnamefont {Li}}, \bibinfo
  {author} {\bibfnamefont {X.}~\bibnamefont {Yuan}}, \bibinfo {author}
  {\bibfnamefont {H.}~\bibnamefont {Zhu}},\ and\ \bibinfo {author}
  {\bibfnamefont {Y.}~\bibnamefont {Zhou}},\ }\bibfield  {title} {\bibinfo
  {title} {Auxiliary-free replica shadow estimation},\ }\href
  {https://arxiv.org/abs/2407.20865} {\bibfield  {journal} {\bibinfo  {journal}
  {arXiv:2407.20865}\ } (\bibinfo {year} {2024}{\natexlab{a}})}\BibitemShut
  {NoStop}%
\bibitem [{\citenamefont {Miller}\ \emph {et~al.}(2024)\citenamefont {Miller},
  \citenamefont {Levi}, \citenamefont {Postler}, \citenamefont {Steiner},
  \citenamefont {Bittel}, \citenamefont {White}, \citenamefont {Tang},
  \citenamefont {Kuehnke}, \citenamefont {Mele}, \citenamefont {Khatri},
  \citenamefont {Leone}, \citenamefont {Carrasco}, \citenamefont {Marciniak},
  \citenamefont {Pogorelov}, \citenamefont {Guevara-Bertsch}, \citenamefont
  {Freund}, \citenamefont {Blatt}, \citenamefont {Schindler}, \citenamefont
  {Monz}, \citenamefont {Ringbauer},\ and\ \citenamefont
  {Eisert}}]{miller2024experimental}%
  \BibitemOpen
  \bibfield  {author} {\bibinfo {author} {\bibfnamefont {D.}~\bibnamefont
  {Miller}}, \bibinfo {author} {\bibfnamefont {K.}~\bibnamefont {Levi}},
  \bibinfo {author} {\bibfnamefont {L.}~\bibnamefont {Postler}}, \bibinfo
  {author} {\bibfnamefont {A.}~\bibnamefont {Steiner}}, \bibinfo {author}
  {\bibfnamefont {L.}~\bibnamefont {Bittel}}, \bibinfo {author} {\bibfnamefont
  {G.~A.~L.}\ \bibnamefont {White}}, \bibinfo {author} {\bibfnamefont
  {Y.}~\bibnamefont {Tang}}, \bibinfo {author} {\bibfnamefont {E.~J.}\
  \bibnamefont {Kuehnke}}, \bibinfo {author} {\bibfnamefont {A.~A.}\
  \bibnamefont {Mele}}, \bibinfo {author} {\bibfnamefont {S.}~\bibnamefont
  {Khatri}}, \bibinfo {author} {\bibfnamefont {L.}~\bibnamefont {Leone}},
  \bibinfo {author} {\bibfnamefont {J.}~\bibnamefont {Carrasco}}, \bibinfo
  {author} {\bibfnamefont {C.~D.}\ \bibnamefont {Marciniak}}, \bibinfo {author}
  {\bibfnamefont {I.}~\bibnamefont {Pogorelov}}, \bibinfo {author}
  {\bibfnamefont {M.}~\bibnamefont {Guevara-Bertsch}}, \bibinfo {author}
  {\bibfnamefont {R.}~\bibnamefont {Freund}}, \bibinfo {author} {\bibfnamefont
  {R.}~\bibnamefont {Blatt}}, \bibinfo {author} {\bibfnamefont
  {P.}~\bibnamefont {Schindler}}, \bibinfo {author} {\bibfnamefont
  {T.}~\bibnamefont {Monz}}, \bibinfo {author} {\bibfnamefont {M.}~\bibnamefont
  {Ringbauer}},\ and\ \bibinfo {author} {\bibfnamefont {J.}~\bibnamefont
  {Eisert}},\ }\bibfield  {title} {\bibinfo {title} {Experimental measurement
  and a physical interpretation of quantum shadow enumerators},\ }\href
  {https://arxiv.org/abs/2408.16914} {\bibfield  {journal} {\bibinfo  {journal}
  {arXiv:2408.16914}\ } (\bibinfo {year} {2024})}\BibitemShut {NoStop}%
\bibitem [{\citenamefont {Yao}\ \emph {et~al.}(2024)\citenamefont {Yao},
  \citenamefont {Liu}, \citenamefont {Lin},\ and\ \citenamefont
  {Wang}}]{Yao2025nonlinear}%
  \BibitemOpen
  \bibfield  {author} {\bibinfo {author} {\bibfnamefont {H.}~\bibnamefont
  {Yao}}, \bibinfo {author} {\bibfnamefont {Y.}~\bibnamefont {Liu}}, \bibinfo
  {author} {\bibfnamefont {T.}~\bibnamefont {Lin}},\ and\ \bibinfo {author}
  {\bibfnamefont {X.}~\bibnamefont {Wang}},\ }\bibfield  {title} {\bibinfo
  {title} {Sample-efficient estimation of nonlinear quantum state functions},\
  }\href {https://arxiv.org/abs/2412.01696} {\bibfield  {journal} {\bibinfo
  {journal} {arXiv:2412.01696}\ } (\bibinfo {year} {2024})}\BibitemShut
  {NoStop}%
\bibitem [{\citenamefont {Elben}\ \emph {et~al.}(2020)\citenamefont {Elben},
  \citenamefont {Kueng}, \citenamefont {Huang}, \citenamefont {van Bijnen},
  \citenamefont {Kokail}, \citenamefont {Dalmonte}, \citenamefont {Calabrese},
  \citenamefont {Kraus}, \citenamefont {Preskill}, \citenamefont {Zoller},\
  and\ \citenamefont {Vermersch}}]{elben2020mix}%
  \BibitemOpen
  \bibfield  {author} {\bibinfo {author} {\bibfnamefont {A.}~\bibnamefont
  {Elben}}, \bibinfo {author} {\bibfnamefont {R.}~\bibnamefont {Kueng}},
  \bibinfo {author} {\bibfnamefont {H.-Y.~R.}\ \bibnamefont {Huang}}, \bibinfo
  {author} {\bibfnamefont {R.}~\bibnamefont {van Bijnen}}, \bibinfo {author}
  {\bibfnamefont {C.}~\bibnamefont {Kokail}}, \bibinfo {author} {\bibfnamefont
  {M.}~\bibnamefont {Dalmonte}}, \bibinfo {author} {\bibfnamefont
  {P.}~\bibnamefont {Calabrese}}, \bibinfo {author} {\bibfnamefont
  {B.}~\bibnamefont {Kraus}}, \bibinfo {author} {\bibfnamefont
  {J.}~\bibnamefont {Preskill}}, \bibinfo {author} {\bibfnamefont
  {P.}~\bibnamefont {Zoller}},\ and\ \bibinfo {author} {\bibfnamefont
  {B.}~\bibnamefont {Vermersch}},\ }\bibfield  {title} {\bibinfo {title}
  {Mixed-state entanglement from local randomized measurements},\ }\href
  {https://doi.org/10.1103/PhysRevLett.125.200501} {\bibfield  {journal}
  {\bibinfo  {journal} {Phys. Rev. Lett.}\ }\textbf {\bibinfo {volume} {125}},\
  \bibinfo {pages} {200501} (\bibinfo {year} {2020})}\BibitemShut {NoStop}%
\bibitem [{\citenamefont {Vitale}\ \emph {et~al.}(2024)\citenamefont {Vitale},
  \citenamefont {Rath}, \citenamefont {Jurcevic}, \citenamefont {Elben},
  \citenamefont {Branciard},\ and\ \citenamefont
  {Vermersch}}]{Vitale2024Robust}%
  \BibitemOpen
  \bibfield  {author} {\bibinfo {author} {\bibfnamefont {V.}~\bibnamefont
  {Vitale}}, \bibinfo {author} {\bibfnamefont {A.}~\bibnamefont {Rath}},
  \bibinfo {author} {\bibfnamefont {P.}~\bibnamefont {Jurcevic}}, \bibinfo
  {author} {\bibfnamefont {A.}~\bibnamefont {Elben}}, \bibinfo {author}
  {\bibfnamefont {C.}~\bibnamefont {Branciard}},\ and\ \bibinfo {author}
  {\bibfnamefont {B.}~\bibnamefont {Vermersch}},\ }\bibfield  {title} {\bibinfo
  {title} {Robust estimation of the quantum fisher information on a quantum
  processor},\ }\href {https://doi.org/10.1103/PRXQuantum.5.030338} {\bibfield
  {journal} {\bibinfo  {journal} {PRX Quantum}\ }\textbf {\bibinfo {volume}
  {5}},\ \bibinfo {pages} {030338} (\bibinfo {year} {2024})}\BibitemShut
  {NoStop}%
\bibitem [{\citenamefont {Huang}\ \emph {et~al.}(2020)\citenamefont {Huang},
  \citenamefont {Kueng},\ and\ \citenamefont {Preskill}}]{huang2020predicting}%
  \BibitemOpen
  \bibfield  {author} {\bibinfo {author} {\bibfnamefont {H.-Y.}\ \bibnamefont
  {Huang}}, \bibinfo {author} {\bibfnamefont {R.}~\bibnamefont {Kueng}},\ and\
  \bibinfo {author} {\bibfnamefont {J.}~\bibnamefont {Preskill}},\ }\bibfield
  {title} {\bibinfo {title} {Predicting many properties of a quantum system
  from very few measurements},\ }\href
  {https://doi.org/10.1038/s41567-020-0932-7} {\bibfield  {journal} {\bibinfo
  {journal} {Nature Phys.}\ }\textbf {\bibinfo {volume} {16}},\ \bibinfo
  {pages} {1050} (\bibinfo {year} {2020})}\BibitemShut {NoStop}%
\bibitem [{\citenamefont {Bertoni}\ \emph {et~al.}(2022)\citenamefont
  {Bertoni}, \citenamefont {Haferkamp}, \citenamefont {Hinsche}, \citenamefont
  {Ioannou}, \citenamefont {Eisert},\ and\ \citenamefont
  {Pashayan}}]{bertoni2022shallow}%
  \BibitemOpen
  \bibfield  {author} {\bibinfo {author} {\bibfnamefont {C.}~\bibnamefont
  {Bertoni}}, \bibinfo {author} {\bibfnamefont {J.}~\bibnamefont {Haferkamp}},
  \bibinfo {author} {\bibfnamefont {M.}~\bibnamefont {Hinsche}}, \bibinfo
  {author} {\bibfnamefont {M.}~\bibnamefont {Ioannou}}, \bibinfo {author}
  {\bibfnamefont {J.}~\bibnamefont {Eisert}},\ and\ \bibinfo {author}
  {\bibfnamefont {H.}~\bibnamefont {Pashayan}},\ }\bibfield  {title} {\bibinfo
  {title} {Shallow shadows: Expectation estimation using low-depth random
  clifford circuits},\ }\href {https://arxiv.org/abs/2209.12924} {\bibfield
  {journal} {\bibinfo  {journal} {arXiv:2209.12924}\ } (\bibinfo {year}
  {2022})}\BibitemShut {NoStop}%
\bibitem [{\citenamefont {van Enk}\ and\ \citenamefont
  {Beenakker}(2012)}]{van2012Measuring}%
  \BibitemOpen
  \bibfield  {author} {\bibinfo {author} {\bibfnamefont {S.~J.}\ \bibnamefont
  {van Enk}}\ and\ \bibinfo {author} {\bibfnamefont {C.~W.~J.}\ \bibnamefont
  {Beenakker}},\ }\bibfield  {title} {\bibinfo {title} {Measuring
  $\mathrm{Tr}{\ensuremath{\rho}}^{n}$ on single copies of $\ensuremath{\rho}$
  using random measurements},\ }\href
  {https://doi.org/10.1103/PhysRevLett.108.110503} {\bibfield  {journal}
  {\bibinfo  {journal} {Phys. Rev. Lett.}\ }\textbf {\bibinfo {volume} {108}},\
  \bibinfo {pages} {110503} (\bibinfo {year} {2012})}\BibitemShut {NoStop}%
\bibitem [{\citenamefont {Brydges}\ \emph {et~al.}(2019)\citenamefont
  {Brydges}, \citenamefont {Elben}, \citenamefont {Jurcevic}, \citenamefont
  {Vermersch}, \citenamefont {Maier}, \citenamefont {Lanyon}, \citenamefont
  {Zoller}, \citenamefont {Blatt},\ and\ \citenamefont
  {Roos}}]{Brydges2019Probing}%
  \BibitemOpen
  \bibfield  {author} {\bibinfo {author} {\bibfnamefont {T.}~\bibnamefont
  {Brydges}}, \bibinfo {author} {\bibfnamefont {A.}~\bibnamefont {Elben}},
  \bibinfo {author} {\bibfnamefont {P.}~\bibnamefont {Jurcevic}}, \bibinfo
  {author} {\bibfnamefont {B.}~\bibnamefont {Vermersch}}, \bibinfo {author}
  {\bibfnamefont {C.}~\bibnamefont {Maier}}, \bibinfo {author} {\bibfnamefont
  {B.~P.}\ \bibnamefont {Lanyon}}, \bibinfo {author} {\bibfnamefont
  {P.}~\bibnamefont {Zoller}}, \bibinfo {author} {\bibfnamefont
  {R.}~\bibnamefont {Blatt}},\ and\ \bibinfo {author} {\bibfnamefont {C.~F.}\
  \bibnamefont {Roos}},\ }\bibfield  {title} {\bibinfo {title} {{Probing
  R{\'e}nyi entanglement entropy via randomized measurements}},\ }\href
  {https://science.sciencemag.org/content/364/6437/260} {\bibfield  {journal}
  {\bibinfo  {journal} {Science}\ }\textbf {\bibinfo {volume} {364}},\ \bibinfo
  {pages} {260} (\bibinfo {year} {2019})}\BibitemShut {NoStop}%
\bibitem [{\citenamefont {Elben}\ \emph {et~al.}(2023)\citenamefont {Elben},
  \citenamefont {Flammia}, \citenamefont {Huang}, \citenamefont {Kueng},
  \citenamefont {Preskill}, \citenamefont {Vermersch},\ and\ \citenamefont
  {Zoller}}]{elben2023randomized}%
  \BibitemOpen
  \bibfield  {author} {\bibinfo {author} {\bibfnamefont {A.}~\bibnamefont
  {Elben}}, \bibinfo {author} {\bibfnamefont {S.~T.}\ \bibnamefont {Flammia}},
  \bibinfo {author} {\bibfnamefont {H.-Y.}\ \bibnamefont {Huang}}, \bibinfo
  {author} {\bibfnamefont {R.}~\bibnamefont {Kueng}}, \bibinfo {author}
  {\bibfnamefont {J.}~\bibnamefont {Preskill}}, \bibinfo {author}
  {\bibfnamefont {B.}~\bibnamefont {Vermersch}},\ and\ \bibinfo {author}
  {\bibfnamefont {P.}~\bibnamefont {Zoller}},\ }\bibfield  {title} {\bibinfo
  {title} {The randomized measurement toolbox},\ }\href
  {https://www.nature.com/articles/s42254-022-00535-2} {\bibfield  {journal}
  {\bibinfo  {journal} {Nature Rev. Phys.}\ }\textbf {\bibinfo {volume} {5}},\
  \bibinfo {pages} {9} (\bibinfo {year} {2023})}\BibitemShut {NoStop}%
\bibitem [{\citenamefont {Cieśliński}\ \emph {et~al.}(2024)\citenamefont
  {Cieśliński}, \citenamefont {Imai}, \citenamefont {Dziewior}, \citenamefont
  {Gühne}, \citenamefont {Knips}, \citenamefont {Laskowski}, \citenamefont
  {Meinecke}, \citenamefont {Paterek},\ and\ \citenamefont
  {Vértesi}}]{cieslinski2023analysing}%
  \BibitemOpen
  \bibfield  {author} {\bibinfo {author} {\bibfnamefont {P.}~\bibnamefont
  {Cieśliński}}, \bibinfo {author} {\bibfnamefont {S.}~\bibnamefont {Imai}},
  \bibinfo {author} {\bibfnamefont {J.}~\bibnamefont {Dziewior}}, \bibinfo
  {author} {\bibfnamefont {O.}~\bibnamefont {Gühne}}, \bibinfo {author}
  {\bibfnamefont {L.}~\bibnamefont {Knips}}, \bibinfo {author} {\bibfnamefont
  {W.}~\bibnamefont {Laskowski}}, \bibinfo {author} {\bibfnamefont
  {J.}~\bibnamefont {Meinecke}}, \bibinfo {author} {\bibfnamefont
  {T.}~\bibnamefont {Paterek}},\ and\ \bibinfo {author} {\bibfnamefont
  {T.}~\bibnamefont {Vértesi}},\ }\bibfield  {title} {\bibinfo {title}
  {Analysing quantum systems with randomised measurements},\ }\href
  {https://doi.org/https://doi.org/10.1016/j.physrep.2024.09.009} {\bibfield
  {journal} {\bibinfo  {journal} {Physics Reports}\ }\textbf {\bibinfo {volume}
  {1095}},\ \bibinfo {pages} {1} (\bibinfo {year} {2024})},\ \bibinfo {note}
  {analysing quantum systems with randomised measurements}\BibitemShut
  {NoStop}%
\bibitem [{\citenamefont {Ohliger}\ \emph {et~al.}(2013)\citenamefont
  {Ohliger}, \citenamefont {Nesme},\ and\ \citenamefont
  {Eisert}}]{Ohliger2013efficient}%
  \BibitemOpen
  \bibfield  {author} {\bibinfo {author} {\bibfnamefont {M.}~\bibnamefont
  {Ohliger}}, \bibinfo {author} {\bibfnamefont {V.}~\bibnamefont {Nesme}},\
  and\ \bibinfo {author} {\bibfnamefont {J.}~\bibnamefont {Eisert}},\
  }\bibfield  {title} {\bibinfo {title} {Efficient and feasible state
  tomography of quantum many-body systems},\ }\href
  {https://doi.org/10.1088/1367-2630/15/1/015024} {\bibfield  {journal}
  {\bibinfo  {journal} {New J. Phys.}\ }\textbf {\bibinfo {volume} {15}},\
  \bibinfo {pages} {015024} (\bibinfo {year} {2013})}\BibitemShut {NoStop}%
\bibitem [{\citenamefont {Ye}\ \emph {et~al.}(2025)\citenamefont {Ye},
  \citenamefont {Liu},\ and\ \citenamefont
  {Deng}}]{Ye2025ExponentialAdvantageReplica}%
  \BibitemOpen
  \bibfield  {author} {\bibinfo {author} {\bibfnamefont {Q.}~\bibnamefont
  {Ye}}, \bibinfo {author} {\bibfnamefont {Z.}~\bibnamefont {Liu}},\ and\
  \bibinfo {author} {\bibfnamefont {D.-L.}\ \bibnamefont {Deng}},\ }\bibfield
  {title} {\bibinfo {title} {Exponential advantage from one more replica in
  estimating nonlinear properties of quantum states},\ }\href
  {https://arxiv.org/abs/2509.24000} {\bibfield  {journal} {\bibinfo  {journal}
  {arXiv:2509.24000}\ } (\bibinfo {year} {2025})}\BibitemShut {NoStop}%
\bibitem [{\citenamefont {Arute}\ \emph {et~al.}(2019)\citenamefont {Arute},
  \citenamefont {Arya}, \citenamefont {Babbush}, \citenamefont {Bacon},
  \citenamefont {Bardin}, \citenamefont {Barends}, \citenamefont {Biswas},
  \citenamefont {Boixo}, \citenamefont {Brandao}, \citenamefont {Buell},
  \citenamefont {Burkett}, \citenamefont {Chen}, \citenamefont {Chen},
  \citenamefont {Chiaro}, \citenamefont {Collins}, \citenamefont {Courtney},
  \citenamefont {Dunsworth}, \citenamefont {Farhi}, \citenamefont {Foxen},
  \citenamefont {Fowler}, \citenamefont {Gidney}, \citenamefont {Giustina},
  \citenamefont {Graff}, \citenamefont {Guerin}, \citenamefont {Habegger},
  \citenamefont {Harrigan}, \citenamefont {Hartmann}, \citenamefont {Ho},
  \citenamefont {Hoffmann}, \citenamefont {Huang}, \citenamefont {Humble},
  \citenamefont {Isakov}, \citenamefont {Jeffrey}, \citenamefont {Jiang},
  \citenamefont {Kafri}, \citenamefont {Kechedzhi}, \citenamefont {Kelly},
  \citenamefont {Klimov}, \citenamefont {Knysh}, \citenamefont {Korotkov},
  \citenamefont {Kostritsa}, \citenamefont {Landhuis}, \citenamefont
  {Lindmark}, \citenamefont {Lucero}, \citenamefont {Lyakh}, \citenamefont
  {Mandr{\`a}}, \citenamefont {McClean}, \citenamefont {McEwen}, \citenamefont
  {Megrant}, \citenamefont {Mi}, \citenamefont {Michielsen}, \citenamefont
  {Mohseni}, \citenamefont {Mutus}, \citenamefont {Naaman}, \citenamefont
  {Neeley}, \citenamefont {Neill}, \citenamefont {Niu}, \citenamefont {Ostby},
  \citenamefont {Petukhov}, \citenamefont {Platt}, \citenamefont {Quintana},
  \citenamefont {Rieffel}, \citenamefont {Roushan}, \citenamefont {Rubin},
  \citenamefont {Sank}, \citenamefont {Satzinger}, \citenamefont {Smelyanskiy},
  \citenamefont {Sung}, \citenamefont {Trevithick}, \citenamefont
  {Vainsencher}, \citenamefont {Villalonga}, \citenamefont {White},
  \citenamefont {Yao}, \citenamefont {Yeh}, \citenamefont {Zalcman},
  \citenamefont {Neven},\ and\ \citenamefont {Martinis}}]{Arute2019supermacy}%
  \BibitemOpen
  \bibfield  {author} {\bibinfo {author} {\bibfnamefont {F.}~\bibnamefont
  {Arute}}, \bibinfo {author} {\bibfnamefont {K.}~\bibnamefont {Arya}},
  \bibinfo {author} {\bibfnamefont {R.}~\bibnamefont {Babbush}}, \bibinfo
  {author} {\bibfnamefont {D.}~\bibnamefont {Bacon}}, \bibinfo {author}
  {\bibfnamefont {J.~C.}\ \bibnamefont {Bardin}}, \bibinfo {author}
  {\bibfnamefont {R.}~\bibnamefont {Barends}}, \bibinfo {author} {\bibfnamefont
  {R.}~\bibnamefont {Biswas}}, \bibinfo {author} {\bibfnamefont
  {S.}~\bibnamefont {Boixo}}, \bibinfo {author} {\bibfnamefont {F.~G. S.~L.}\
  \bibnamefont {Brandao}}, \bibinfo {author} {\bibfnamefont {D.~A.}\
  \bibnamefont {Buell}}, \bibinfo {author} {\bibfnamefont {B.}~\bibnamefont
  {Burkett}}, \bibinfo {author} {\bibfnamefont {Y.}~\bibnamefont {Chen}},
  \bibinfo {author} {\bibfnamefont {Z.}~\bibnamefont {Chen}}, \bibinfo {author}
  {\bibfnamefont {B.}~\bibnamefont {Chiaro}}, \bibinfo {author} {\bibfnamefont
  {R.}~\bibnamefont {Collins}}, \bibinfo {author} {\bibfnamefont
  {W.}~\bibnamefont {Courtney}}, \bibinfo {author} {\bibfnamefont
  {A.}~\bibnamefont {Dunsworth}}, \bibinfo {author} {\bibfnamefont
  {E.}~\bibnamefont {Farhi}}, \bibinfo {author} {\bibfnamefont
  {B.}~\bibnamefont {Foxen}}, \bibinfo {author} {\bibfnamefont
  {A.}~\bibnamefont {Fowler}}, \bibinfo {author} {\bibfnamefont
  {C.}~\bibnamefont {Gidney}}, \bibinfo {author} {\bibfnamefont
  {M.}~\bibnamefont {Giustina}}, \bibinfo {author} {\bibfnamefont
  {R.}~\bibnamefont {Graff}}, \bibinfo {author} {\bibfnamefont
  {K.}~\bibnamefont {Guerin}}, \bibinfo {author} {\bibfnamefont
  {S.}~\bibnamefont {Habegger}}, \bibinfo {author} {\bibfnamefont {M.~P.}\
  \bibnamefont {Harrigan}}, \bibinfo {author} {\bibfnamefont {M.~J.}\
  \bibnamefont {Hartmann}}, \bibinfo {author} {\bibfnamefont {A.}~\bibnamefont
  {Ho}}, \bibinfo {author} {\bibfnamefont {M.}~\bibnamefont {Hoffmann}},
  \bibinfo {author} {\bibfnamefont {T.}~\bibnamefont {Huang}}, \bibinfo
  {author} {\bibfnamefont {T.~S.}\ \bibnamefont {Humble}}, \bibinfo {author}
  {\bibfnamefont {S.~V.}\ \bibnamefont {Isakov}}, \bibinfo {author}
  {\bibfnamefont {E.}~\bibnamefont {Jeffrey}}, \bibinfo {author} {\bibfnamefont
  {Z.}~\bibnamefont {Jiang}}, \bibinfo {author} {\bibfnamefont
  {D.}~\bibnamefont {Kafri}}, \bibinfo {author} {\bibfnamefont
  {K.}~\bibnamefont {Kechedzhi}}, \bibinfo {author} {\bibfnamefont
  {J.}~\bibnamefont {Kelly}}, \bibinfo {author} {\bibfnamefont {P.~V.}\
  \bibnamefont {Klimov}}, \bibinfo {author} {\bibfnamefont {S.}~\bibnamefont
  {Knysh}}, \bibinfo {author} {\bibfnamefont {A.}~\bibnamefont {Korotkov}},
  \bibinfo {author} {\bibfnamefont {F.}~\bibnamefont {Kostritsa}}, \bibinfo
  {author} {\bibfnamefont {D.}~\bibnamefont {Landhuis}}, \bibinfo {author}
  {\bibfnamefont {M.}~\bibnamefont {Lindmark}}, \bibinfo {author}
  {\bibfnamefont {E.}~\bibnamefont {Lucero}}, \bibinfo {author} {\bibfnamefont
  {D.}~\bibnamefont {Lyakh}}, \bibinfo {author} {\bibfnamefont
  {S.}~\bibnamefont {Mandr{\`a}}}, \bibinfo {author} {\bibfnamefont {J.~R.}\
  \bibnamefont {McClean}}, \bibinfo {author} {\bibfnamefont {M.}~\bibnamefont
  {McEwen}}, \bibinfo {author} {\bibfnamefont {A.}~\bibnamefont {Megrant}},
  \bibinfo {author} {\bibfnamefont {X.}~\bibnamefont {Mi}}, \bibinfo {author}
  {\bibfnamefont {K.}~\bibnamefont {Michielsen}}, \bibinfo {author}
  {\bibfnamefont {M.}~\bibnamefont {Mohseni}}, \bibinfo {author} {\bibfnamefont
  {J.}~\bibnamefont {Mutus}}, \bibinfo {author} {\bibfnamefont
  {O.}~\bibnamefont {Naaman}}, \bibinfo {author} {\bibfnamefont
  {M.}~\bibnamefont {Neeley}}, \bibinfo {author} {\bibfnamefont
  {C.}~\bibnamefont {Neill}}, \bibinfo {author} {\bibfnamefont {M.~Y.}\
  \bibnamefont {Niu}}, \bibinfo {author} {\bibfnamefont {E.}~\bibnamefont
  {Ostby}}, \bibinfo {author} {\bibfnamefont {A.}~\bibnamefont {Petukhov}},
  \bibinfo {author} {\bibfnamefont {J.~C.}\ \bibnamefont {Platt}}, \bibinfo
  {author} {\bibfnamefont {C.}~\bibnamefont {Quintana}}, \bibinfo {author}
  {\bibfnamefont {E.~G.}\ \bibnamefont {Rieffel}}, \bibinfo {author}
  {\bibfnamefont {P.}~\bibnamefont {Roushan}}, \bibinfo {author} {\bibfnamefont
  {N.~C.}\ \bibnamefont {Rubin}}, \bibinfo {author} {\bibfnamefont
  {D.}~\bibnamefont {Sank}}, \bibinfo {author} {\bibfnamefont {K.~J.}\
  \bibnamefont {Satzinger}}, \bibinfo {author} {\bibfnamefont {V.}~\bibnamefont
  {Smelyanskiy}}, \bibinfo {author} {\bibfnamefont {K.~J.}\ \bibnamefont
  {Sung}}, \bibinfo {author} {\bibfnamefont {M.~D.}\ \bibnamefont
  {Trevithick}}, \bibinfo {author} {\bibfnamefont {A.}~\bibnamefont
  {Vainsencher}}, \bibinfo {author} {\bibfnamefont {B.}~\bibnamefont
  {Villalonga}}, \bibinfo {author} {\bibfnamefont {T.}~\bibnamefont {White}},
  \bibinfo {author} {\bibfnamefont {Z.~J.}\ \bibnamefont {Yao}}, \bibinfo
  {author} {\bibfnamefont {P.}~\bibnamefont {Yeh}}, \bibinfo {author}
  {\bibfnamefont {A.}~\bibnamefont {Zalcman}}, \bibinfo {author} {\bibfnamefont
  {H.}~\bibnamefont {Neven}},\ and\ \bibinfo {author} {\bibfnamefont {J.~M.}\
  \bibnamefont {Martinis}},\ }\bibfield  {title} {\bibinfo {title} {Quantum
  supremacy using a programmable superconducting processor},\ }\href
  {https://doi.org/10.1038/s41586-019-1666-5} {\bibfield  {journal} {\bibinfo
  {journal} {Nature}\ }\textbf {\bibinfo {volume} {574}},\ \bibinfo {pages}
  {505} (\bibinfo {year} {2019})}\BibitemShut {NoStop}%
\bibitem [{\citenamefont {Hangleiter}\ and\ \citenamefont
  {Eisert}(2023)}]{hangleiter2023sampling}%
  \BibitemOpen
  \bibfield  {author} {\bibinfo {author} {\bibfnamefont {D.}~\bibnamefont
  {Hangleiter}}\ and\ \bibinfo {author} {\bibfnamefont {J.}~\bibnamefont
  {Eisert}},\ }\bibfield  {title} {\bibinfo {title} {Computational advantage of
  quantum random sampling},\ }\href
  {https://doi.org/10.1103/RevModPhys.95.035001} {\bibfield  {journal}
  {\bibinfo  {journal} {Rev. Mod. Phys.}\ }\textbf {\bibinfo {volume} {95}},\
  \bibinfo {pages} {035001} (\bibinfo {year} {2023})}\BibitemShut {NoStop}%
\bibitem [{\citenamefont {Knill}\ \emph {et~al.}(2008)\citenamefont {Knill},
  \citenamefont {Leibfried}, \citenamefont {Reichle}, \citenamefont {Britton},
  \citenamefont {Blakestad}, \citenamefont {Jost}, \citenamefont {Langer},
  \citenamefont {Ozeri}, \citenamefont {Seidelin},\ and\ \citenamefont
  {Wineland}}]{knill2008benchmarking}%
  \BibitemOpen
  \bibfield  {author} {\bibinfo {author} {\bibfnamefont {E.}~\bibnamefont
  {Knill}}, \bibinfo {author} {\bibfnamefont {D.}~\bibnamefont {Leibfried}},
  \bibinfo {author} {\bibfnamefont {R.}~\bibnamefont {Reichle}}, \bibinfo
  {author} {\bibfnamefont {J.}~\bibnamefont {Britton}}, \bibinfo {author}
  {\bibfnamefont {R.~B.}\ \bibnamefont {Blakestad}}, \bibinfo {author}
  {\bibfnamefont {J.~D.}\ \bibnamefont {Jost}}, \bibinfo {author}
  {\bibfnamefont {C.}~\bibnamefont {Langer}}, \bibinfo {author} {\bibfnamefont
  {R.}~\bibnamefont {Ozeri}}, \bibinfo {author} {\bibfnamefont
  {S.}~\bibnamefont {Seidelin}},\ and\ \bibinfo {author} {\bibfnamefont
  {D.~J.}\ \bibnamefont {Wineland}},\ }\bibfield  {title} {\bibinfo {title}
  {Randomized benchmarking of quantum gates},\ }\href
  {https://doi.org/10.1103/PhysRevA.77.012307} {\bibfield  {journal} {\bibinfo
  {journal} {Phys. Rev. A}\ }\textbf {\bibinfo {volume} {77}},\ \bibinfo
  {pages} {012307} (\bibinfo {year} {2008})}\BibitemShut {NoStop}%
\bibitem [{\citenamefont {Helsen}\ \emph {et~al.}(2022)\citenamefont {Helsen},
  \citenamefont {Roth}, \citenamefont {Onorati}, \citenamefont {Werner},\ and\
  \citenamefont {Eisert}}]{helsen2022general}%
  \BibitemOpen
  \bibfield  {author} {\bibinfo {author} {\bibfnamefont {J.}~\bibnamefont
  {Helsen}}, \bibinfo {author} {\bibfnamefont {I.}~\bibnamefont {Roth}},
  \bibinfo {author} {\bibfnamefont {E.}~\bibnamefont {Onorati}}, \bibinfo
  {author} {\bibfnamefont {A.}~\bibnamefont {Werner}},\ and\ \bibinfo {author}
  {\bibfnamefont {J.}~\bibnamefont {Eisert}},\ }\bibfield  {title} {\bibinfo
  {title} {General framework for randomized benchmarking},\ }\href
  {https://doi.org/10.1103/PRXQuantum.3.020357} {\bibfield  {journal} {\bibinfo
   {journal} {PRX Quantum}\ }\textbf {\bibinfo {volume} {3}},\ \bibinfo {pages}
  {020357} (\bibinfo {year} {2022})}\BibitemShut {NoStop}%
\bibitem [{\citenamefont {Wallman}\ and\ \citenamefont
  {Emerson}(2016)}]{wallmanNoiseTailoringScalable2016}%
  \BibitemOpen
  \bibfield  {author} {\bibinfo {author} {\bibfnamefont {J.~J.}\ \bibnamefont
  {Wallman}}\ and\ \bibinfo {author} {\bibfnamefont {J.}~\bibnamefont
  {Emerson}},\ }\bibfield  {title} {\bibinfo {title} {Noise tailoring for
  scalable quantum computation via randomized compiling},\ }\href
  {https://doi.org/10.1103/PhysRevA.94.052325} {\bibfield  {journal} {\bibinfo
  {journal} {Phys. Rev. A}\ }\textbf {\bibinfo {volume} {94}},\ \bibinfo
  {pages} {052325} (\bibinfo {year} {2016})}\BibitemShut {NoStop}%
\bibitem [{\citenamefont {Cai}\ and\ \citenamefont
  {Benjamin}(2019)}]{caiConstructingSmallerPauli2019}%
  \BibitemOpen
  \bibfield  {author} {\bibinfo {author} {\bibfnamefont {Z.}~\bibnamefont
  {Cai}}\ and\ \bibinfo {author} {\bibfnamefont {S.~C.}\ \bibnamefont
  {Benjamin}},\ }\bibfield  {title} {\bibinfo {title} {Constructing {{smaller
  Pauli twirling sets}} for {{arbitrary error channels}}},\ }\href
  {https://doi.org/10.1038/s41598-019-46722-7} {\bibfield  {journal} {\bibinfo
  {journal} {Sci. Rep.}\ }\textbf {\bibinfo {volume} {9}},\ \bibinfo {pages}
  {1} (\bibinfo {year} {2019})}\BibitemShut {NoStop}%
\bibitem [{\citenamefont {Tsubouchi}\ \emph {et~al.}(2025)\citenamefont
  {Tsubouchi}, \citenamefont {Mitsuhashi}, \citenamefont {Sharma},\ and\
  \citenamefont {Yoshioka}}]{tsubouchi2024symmetric}%
  \BibitemOpen
  \bibfield  {author} {\bibinfo {author} {\bibfnamefont {K.}~\bibnamefont
  {Tsubouchi}}, \bibinfo {author} {\bibfnamefont {Y.}~\bibnamefont
  {Mitsuhashi}}, \bibinfo {author} {\bibfnamefont {K.}~\bibnamefont {Sharma}},\
  and\ \bibinfo {author} {\bibfnamefont {N.}~\bibnamefont {Yoshioka}},\
  }\bibfield  {title} {\bibinfo {title} {Symmetric clifford twirling for
  cost-optimal quantum error mitigation in early ftqc regime},\ }\href
  {https://doi.org/10.1038/s41534-025-01050-9} {\bibfield  {journal} {\bibinfo
  {journal} {npj Quantum Information}\ }\textbf {\bibinfo {volume} {11}},\
  \bibinfo {pages} {104} (\bibinfo {year} {2025})}\BibitemShut {NoStop}%
\bibitem [{\citenamefont {Ji}\ \emph {et~al.}(2018)\citenamefont {Ji},
  \citenamefont {Liu},\ and\ \citenamefont {Song}}]{ji2018pseudo}%
  \BibitemOpen
  \bibfield  {author} {\bibinfo {author} {\bibfnamefont {Z.}~\bibnamefont
  {Ji}}, \bibinfo {author} {\bibfnamefont {Y.-K.}\ \bibnamefont {Liu}},\ and\
  \bibinfo {author} {\bibfnamefont {F.}~\bibnamefont {Song}},\ }\bibfield
  {title} {\bibinfo {title} {Pseudorandom quantum states},\ }in\ \href
  {https://link.springer.com/chapter/10.1007/978-3-319-96878-0_5} {\emph
  {\bibinfo {booktitle} {Adv. in Cryptol. -- CRYPTO 2018}}},\ \bibinfo {editor}
  {edited by\ \bibinfo {editor} {\bibfnamefont {H.}~\bibnamefont {Shacham}}\
  and\ \bibinfo {editor} {\bibfnamefont {A.}~\bibnamefont {Boldyreva}}}\
  (\bibinfo  {publisher} {Springer International Publishing},\ \bibinfo
  {address} {Cham},\ \bibinfo {year} {2018})\ pp.\ \bibinfo {pages}
  {126--152}\BibitemShut {NoStop}%
\bibitem [{\citenamefont {Gross}\ \emph {et~al.}(2007)\citenamefont {Gross},
  \citenamefont {Audenaert},\ and\ \citenamefont {Eisert}}]{Designs}%
  \BibitemOpen
  \bibfield  {author} {\bibinfo {author} {\bibfnamefont {D.}~\bibnamefont
  {Gross}}, \bibinfo {author} {\bibfnamefont {K.}~\bibnamefont {Audenaert}},\
  and\ \bibinfo {author} {\bibfnamefont {J.}~\bibnamefont {Eisert}},\
  }\bibfield  {title} {\bibinfo {title} {Evenly distributed unitaries: On the
  structure of unitary designs},\ }\href {https://doi.org/10.1063/1.2716992}
  {\bibfield  {journal} {\bibinfo  {journal} {J. Math. Phys.}\ }\textbf
  {\bibinfo {volume} {48}},\ \bibinfo {pages} {052104} (\bibinfo {year}
  {2007})}\BibitemShut {NoStop}%
\bibitem [{\citenamefont {Elben}\ \emph {et~al.}(2018)\citenamefont {Elben},
  \citenamefont {Vermersch}, \citenamefont {Dalmonte}, \citenamefont {Cirac},\
  and\ \citenamefont {Zoller}}]{elben2018renyi}%
  \BibitemOpen
  \bibfield  {author} {\bibinfo {author} {\bibfnamefont {A.}~\bibnamefont
  {Elben}}, \bibinfo {author} {\bibfnamefont {B.}~\bibnamefont {Vermersch}},
  \bibinfo {author} {\bibfnamefont {M.}~\bibnamefont {Dalmonte}}, \bibinfo
  {author} {\bibfnamefont {J.~I.}\ \bibnamefont {Cirac}},\ and\ \bibinfo
  {author} {\bibfnamefont {P.}~\bibnamefont {Zoller}},\ }\bibfield  {title}
  {\bibinfo {title} {{R\'enyi entropies from random quenches in atomic Hubbard
  and spin models}},\ }\href {https://doi.org/10.1103/PhysRevLett.120.050406}
  {\bibfield  {journal} {\bibinfo  {journal} {Phys. Rev. Lett.}\ }\textbf
  {\bibinfo {volume} {120}},\ \bibinfo {pages} {050406} (\bibinfo {year}
  {2018})}\BibitemShut {NoStop}%
\bibitem [{\citenamefont {Anshu}\ \emph {et~al.}(2022)\citenamefont {Anshu},
  \citenamefont {Landau},\ and\ \citenamefont {Liu}}]{anshu2022distributed}%
  \BibitemOpen
  \bibfield  {author} {\bibinfo {author} {\bibfnamefont {A.}~\bibnamefont
  {Anshu}}, \bibinfo {author} {\bibfnamefont {Z.}~\bibnamefont {Landau}},\ and\
  \bibinfo {author} {\bibfnamefont {Y.}~\bibnamefont {Liu}},\ }\bibfield
  {title} {\bibinfo {title} {Distributed quantum inner product estimation},\
  }in\ \href {https://doi.org/10.1145/3519935.3519974} {\emph {\bibinfo
  {booktitle} {Proc. 54th Annu. ACM Symp. Theory Comput.}}},\ \bibinfo {series
  and number} {STOC 2022}\ (\bibinfo  {publisher} {Association for Computing
  Machinery},\ \bibinfo {address} {New York, NY, USA},\ \bibinfo {year}
  {2022})\ p.\ \bibinfo {pages} {44–51}\BibitemShut {NoStop}%
\bibitem [{\citenamefont {Liu}\ \emph {et~al.}(2025)\citenamefont {Liu},
  \citenamefont {Liu}, \citenamefont {Chen}, \citenamefont {Nie}, \citenamefont
  {Liu},\ and\ \citenamefont {Lu}}]{Liu2025CertifyingQuantumTemporal}%
  \BibitemOpen
  \bibfield  {author} {\bibinfo {author} {\bibfnamefont {H.}~\bibnamefont
  {Liu}}, \bibinfo {author} {\bibfnamefont {Z.}~\bibnamefont {Liu}}, \bibinfo
  {author} {\bibfnamefont {S.}~\bibnamefont {Chen}}, \bibinfo {author}
  {\bibfnamefont {X.}~\bibnamefont {Nie}}, \bibinfo {author} {\bibfnamefont
  {X.}~\bibnamefont {Liu}},\ and\ \bibinfo {author} {\bibfnamefont
  {D.}~\bibnamefont {Lu}},\ }\bibfield  {title} {\bibinfo {title} {Certifying
  quantum temporal correlation via randomized measurements: Theory and
  experiment},\ }\href {https://doi.org/10.1103/PhysRevLett.134.040201}
  {\bibfield  {journal} {\bibinfo  {journal} {Phys. Rev. Lett.}\ }\textbf
  {\bibinfo {volume} {134}},\ \bibinfo {pages} {040201} (\bibinfo {year}
  {2025})}\BibitemShut {NoStop}%
\bibitem [{\citenamefont {Seif}\ \emph {et~al.}(2023)\citenamefont {Seif},
  \citenamefont {Cian}, \citenamefont {Zhou}, \citenamefont {Chen},\ and\
  \citenamefont {Jiang}}]{Seif2023ShadowDistillation}%
  \BibitemOpen
  \bibfield  {author} {\bibinfo {author} {\bibfnamefont {A.}~\bibnamefont
  {Seif}}, \bibinfo {author} {\bibfnamefont {Z.-P.}\ \bibnamefont {Cian}},
  \bibinfo {author} {\bibfnamefont {S.}~\bibnamefont {Zhou}}, \bibinfo {author}
  {\bibfnamefont {S.}~\bibnamefont {Chen}},\ and\ \bibinfo {author}
  {\bibfnamefont {L.}~\bibnamefont {Jiang}},\ }\bibfield  {title} {\bibinfo
  {title} {Shadow distillation: Quantum error mitigation with classical shadows
  for near-term quantum processors},\ }\href
  {https://doi.org/10.1103/PRXQuantum.4.010303} {\bibfield  {journal} {\bibinfo
   {journal} {PRX Quantum}\ }\textbf {\bibinfo {volume} {4}},\ \bibinfo {pages}
  {010303} (\bibinfo {year} {2023})}\BibitemShut {NoStop}%
\bibitem [{\citenamefont {Hu}\ \emph {et~al.}(2022)\citenamefont {Hu},
  \citenamefont {LaRose}, \citenamefont {You}, \citenamefont {Rieffel},\ and\
  \citenamefont {Wang}}]{hu2022logical}%
  \BibitemOpen
  \bibfield  {author} {\bibinfo {author} {\bibfnamefont {H.-Y.}\ \bibnamefont
  {Hu}}, \bibinfo {author} {\bibfnamefont {R.}~\bibnamefont {LaRose}}, \bibinfo
  {author} {\bibfnamefont {Y.-Z.}\ \bibnamefont {You}}, \bibinfo {author}
  {\bibfnamefont {E.}~\bibnamefont {Rieffel}},\ and\ \bibinfo {author}
  {\bibfnamefont {Z.}~\bibnamefont {Wang}},\ }\bibfield  {title} {\bibinfo
  {title} {Logical shadow tomography: Efficient estimation of error-mitigated
  observables},\ }\href {https://arxiv.org/abs/2203.07263} {\bibfield
  {journal} {\bibinfo  {journal} {arXiv:2203.07263}\ } (\bibinfo {year}
  {2022})}\BibitemShut {NoStop}%
\bibitem [{\citenamefont {Myerson}\ \emph {et~al.}(2008)\citenamefont
  {Myerson}, \citenamefont {Szwer}, \citenamefont {Webster}, \citenamefont
  {Allcock}, \citenamefont {Curtis}, \citenamefont {Imreh}, \citenamefont
  {Sherman}, \citenamefont {Stacey}, \citenamefont {Steane},\ and\
  \citenamefont {Lucas}}]{Myerson2008High}%
  \BibitemOpen
  \bibfield  {author} {\bibinfo {author} {\bibfnamefont {A.~H.}\ \bibnamefont
  {Myerson}}, \bibinfo {author} {\bibfnamefont {D.~J.}\ \bibnamefont {Szwer}},
  \bibinfo {author} {\bibfnamefont {S.~C.}\ \bibnamefont {Webster}}, \bibinfo
  {author} {\bibfnamefont {D.~T.~C.}\ \bibnamefont {Allcock}}, \bibinfo
  {author} {\bibfnamefont {M.~J.}\ \bibnamefont {Curtis}}, \bibinfo {author}
  {\bibfnamefont {G.}~\bibnamefont {Imreh}}, \bibinfo {author} {\bibfnamefont
  {J.~A.}\ \bibnamefont {Sherman}}, \bibinfo {author} {\bibfnamefont {D.~N.}\
  \bibnamefont {Stacey}}, \bibinfo {author} {\bibfnamefont {A.~M.}\
  \bibnamefont {Steane}},\ and\ \bibinfo {author} {\bibfnamefont {D.~M.}\
  \bibnamefont {Lucas}},\ }\bibfield  {title} {\bibinfo {title} {High-fidelity
  readout of trapped-ion qubits},\ }\href
  {https://doi.org/10.1103/PhysRevLett.100.200502} {\bibfield  {journal}
  {\bibinfo  {journal} {Phys. Rev. Lett.}\ }\textbf {\bibinfo {volume} {100}},\
  \bibinfo {pages} {200502} (\bibinfo {year} {2008})}\BibitemShut {NoStop}%
\bibitem [{\citenamefont {Lenzini}\ \emph {et~al.}(2018)\citenamefont
  {Lenzini}, \citenamefont {Janousek}, \citenamefont {Thearle}, \citenamefont
  {Villa}, \citenamefont {Haylock}, \citenamefont {Kasture}, \citenamefont
  {Cui}, \citenamefont {Phan}, \citenamefont {Dao}, \citenamefont {Yonezawa},
  \citenamefont {Lam}, \citenamefont {Huntington},\ and\ \citenamefont
  {Lobino}}]{Lenzini2018Integrated}%
  \BibitemOpen
  \bibfield  {author} {\bibinfo {author} {\bibfnamefont {F.}~\bibnamefont
  {Lenzini}}, \bibinfo {author} {\bibfnamefont {J.}~\bibnamefont {Janousek}},
  \bibinfo {author} {\bibfnamefont {O.}~\bibnamefont {Thearle}}, \bibinfo
  {author} {\bibfnamefont {M.}~\bibnamefont {Villa}}, \bibinfo {author}
  {\bibfnamefont {B.}~\bibnamefont {Haylock}}, \bibinfo {author} {\bibfnamefont
  {S.}~\bibnamefont {Kasture}}, \bibinfo {author} {\bibfnamefont
  {L.}~\bibnamefont {Cui}}, \bibinfo {author} {\bibfnamefont {H.-P.}\
  \bibnamefont {Phan}}, \bibinfo {author} {\bibfnamefont {D.~V.}\ \bibnamefont
  {Dao}}, \bibinfo {author} {\bibfnamefont {H.}~\bibnamefont {Yonezawa}},
  \bibinfo {author} {\bibfnamefont {P.~K.}\ \bibnamefont {Lam}}, \bibinfo
  {author} {\bibfnamefont {E.~H.}\ \bibnamefont {Huntington}},\ and\ \bibinfo
  {author} {\bibfnamefont {M.}~\bibnamefont {Lobino}},\ }\bibfield  {title}
  {\bibinfo {title} {Integrated photonic platform for quantum information with
  continuous variables},\ }\href {https://doi.org/10.1126/sciadv.aat9331}
  {\bibfield  {journal} {\bibinfo  {journal} {Science Advances}\ }\textbf
  {\bibinfo {volume} {4}},\ \bibinfo {pages} {eaat9331} (\bibinfo {year}
  {2018})}\BibitemShut {NoStop}%
\bibitem [{\citenamefont {Dassonneville}\ \emph {et~al.}(2020)\citenamefont
  {Dassonneville}, \citenamefont {Ramos}, \citenamefont {Milchakov},
  \citenamefont {Planat}, \citenamefont {Dumur}, \citenamefont {Foroughi},
  \citenamefont {Puertas}, \citenamefont {Leger}, \citenamefont {Bharadwaj},
  \citenamefont {Delaforce}, \citenamefont {Naud}, \citenamefont
  {Hasch-Guichard}, \citenamefont {Garc\'{\i}a-Ripoll}, \citenamefont {Roch},\
  and\ \citenamefont {Buisson}}]{Dassonneville2020Fast}%
  \BibitemOpen
  \bibfield  {author} {\bibinfo {author} {\bibfnamefont {R.}~\bibnamefont
  {Dassonneville}}, \bibinfo {author} {\bibfnamefont {T.}~\bibnamefont
  {Ramos}}, \bibinfo {author} {\bibfnamefont {V.}~\bibnamefont {Milchakov}},
  \bibinfo {author} {\bibfnamefont {L.}~\bibnamefont {Planat}}, \bibinfo
  {author} {\bibfnamefont {E.}~\bibnamefont {Dumur}}, \bibinfo {author}
  {\bibfnamefont {F.}~\bibnamefont {Foroughi}}, \bibinfo {author}
  {\bibfnamefont {J.}~\bibnamefont {Puertas}}, \bibinfo {author} {\bibfnamefont
  {S.}~\bibnamefont {Leger}}, \bibinfo {author} {\bibfnamefont
  {K.}~\bibnamefont {Bharadwaj}}, \bibinfo {author} {\bibfnamefont
  {J.}~\bibnamefont {Delaforce}}, \bibinfo {author} {\bibfnamefont
  {C.}~\bibnamefont {Naud}}, \bibinfo {author} {\bibfnamefont {W.}~\bibnamefont
  {Hasch-Guichard}}, \bibinfo {author} {\bibfnamefont {J.~J.}\ \bibnamefont
  {Garc\'{\i}a-Ripoll}}, \bibinfo {author} {\bibfnamefont {N.}~\bibnamefont
  {Roch}},\ and\ \bibinfo {author} {\bibfnamefont {O.}~\bibnamefont
  {Buisson}},\ }\bibfield  {title} {\bibinfo {title} {Fast high-fidelity
  quantum nondemolition qubit readout via a nonperturbative cross-kerr
  coupling},\ }\href {https://doi.org/10.1103/PhysRevX.10.011045} {\bibfield
  {journal} {\bibinfo  {journal} {Phys. Rev. X}\ }\textbf {\bibinfo {volume}
  {10}},\ \bibinfo {pages} {011045} (\bibinfo {year} {2020})}\BibitemShut
  {NoStop}%
\bibitem [{\citenamefont {Opremcak}\ \emph {et~al.}(2021)\citenamefont
  {Opremcak}, \citenamefont {Liu}, \citenamefont {Wilen}, \citenamefont
  {Okubo}, \citenamefont {Christensen}, \citenamefont {Sank}, \citenamefont
  {White}, \citenamefont {Vainsencher}, \citenamefont {Giustina}, \citenamefont
  {Megrant}, \citenamefont {Burkett}, \citenamefont {Plourde},\ and\
  \citenamefont {McDermott}}]{Opremcak2021High}%
  \BibitemOpen
  \bibfield  {author} {\bibinfo {author} {\bibfnamefont {A.}~\bibnamefont
  {Opremcak}}, \bibinfo {author} {\bibfnamefont {C.~H.}\ \bibnamefont {Liu}},
  \bibinfo {author} {\bibfnamefont {C.}~\bibnamefont {Wilen}}, \bibinfo
  {author} {\bibfnamefont {K.}~\bibnamefont {Okubo}}, \bibinfo {author}
  {\bibfnamefont {B.~G.}\ \bibnamefont {Christensen}}, \bibinfo {author}
  {\bibfnamefont {D.}~\bibnamefont {Sank}}, \bibinfo {author} {\bibfnamefont
  {T.~C.}\ \bibnamefont {White}}, \bibinfo {author} {\bibfnamefont
  {A.}~\bibnamefont {Vainsencher}}, \bibinfo {author} {\bibfnamefont
  {M.}~\bibnamefont {Giustina}}, \bibinfo {author} {\bibfnamefont
  {A.}~\bibnamefont {Megrant}}, \bibinfo {author} {\bibfnamefont
  {B.}~\bibnamefont {Burkett}}, \bibinfo {author} {\bibfnamefont {B.~L.~T.}\
  \bibnamefont {Plourde}},\ and\ \bibinfo {author} {\bibfnamefont
  {R.}~\bibnamefont {McDermott}},\ }\bibfield  {title} {\bibinfo {title}
  {High-fidelity measurement of a superconducting qubit using an on-chip
  microwave photon counter},\ }\href
  {https://doi.org/10.1103/PhysRevX.11.011027} {\bibfield  {journal} {\bibinfo
  {journal} {Phys. Rev. X}\ }\textbf {\bibinfo {volume} {11}},\ \bibinfo
  {pages} {011027} (\bibinfo {year} {2021})}\BibitemShut {NoStop}%
\bibitem [{\citenamefont {Schuster}\ \emph {et~al.}(2025)\citenamefont
  {Schuster}, \citenamefont {Haferkamp},\ and\ \citenamefont
  {Huang}}]{schuster2024random}%
  \BibitemOpen
  \bibfield  {author} {\bibinfo {author} {\bibfnamefont {T.}~\bibnamefont
  {Schuster}}, \bibinfo {author} {\bibfnamefont {J.}~\bibnamefont
  {Haferkamp}},\ and\ \bibinfo {author} {\bibfnamefont {H.-Y.}\ \bibnamefont
  {Huang}},\ }\bibfield  {title} {\bibinfo {title} {Random unitaries in
  extremely low depth},\ }\href {https://doi.org/10.1126/science.adv8590}
  {\bibfield  {journal} {\bibinfo  {journal} {Science}\ }\textbf {\bibinfo
  {volume} {389}},\ \bibinfo {pages} {92} (\bibinfo {year} {2025})}\BibitemShut
  {NoStop}%
\bibitem [{\citenamefont {Onorati}\ \emph {et~al.}(2017)\citenamefont
  {Onorati}, \citenamefont {Buerschaper}, \citenamefont {Kliesch},
  \citenamefont {Brown}, \citenamefont {Werner},\ and\ \citenamefont
  {Eisert}}]{RandomHamiltonians}%
  \BibitemOpen
  \bibfield  {author} {\bibinfo {author} {\bibfnamefont {E.}~\bibnamefont
  {Onorati}}, \bibinfo {author} {\bibfnamefont {O.}~\bibnamefont
  {Buerschaper}}, \bibinfo {author} {\bibfnamefont {M.}~\bibnamefont
  {Kliesch}}, \bibinfo {author} {\bibfnamefont {W.}~\bibnamefont {Brown}},
  \bibinfo {author} {\bibfnamefont {A.~H.}\ \bibnamefont {Werner}},\ and\
  \bibinfo {author} {\bibfnamefont {J.}~\bibnamefont {Eisert}},\ }\bibfield
  {title} {\bibinfo {title} {{Mixing properties of stochastic quantum
  Hamiltonians}},\ }\href {https://doi.org/10.1007/s00220-017-2950-6}
  {\bibfield  {journal} {\bibinfo  {journal} {Commun. Math. Phys.}\ }\textbf
  {\bibinfo {volume} {355}},\ \bibinfo {pages} {905} (\bibinfo {year}
  {2017})}\BibitemShut {NoStop}%
\bibitem [{\citenamefont {Nakata}\ \emph {et~al.}(2017)\citenamefont {Nakata},
  \citenamefont {Hirche}, \citenamefont {Koashi},\ and\ \citenamefont
  {Winter}}]{Nakata2017RandomHamiltonian}%
  \BibitemOpen
  \bibfield  {author} {\bibinfo {author} {\bibfnamefont {Y.}~\bibnamefont
  {Nakata}}, \bibinfo {author} {\bibfnamefont {C.}~\bibnamefont {Hirche}},
  \bibinfo {author} {\bibfnamefont {M.}~\bibnamefont {Koashi}},\ and\ \bibinfo
  {author} {\bibfnamefont {A.}~\bibnamefont {Winter}},\ }\bibfield  {title}
  {\bibinfo {title} {{Efficient quantum pseudorandomness with nearly
  time-independent Hamiltonian dynamics}},\ }\href
  {https://doi.org/10.1103/PhysRevX.7.021006} {\bibfield  {journal} {\bibinfo
  {journal} {Phys. Rev. X}\ }\textbf {\bibinfo {volume} {7}},\ \bibinfo {pages}
  {021006} (\bibinfo {year} {2017})}\BibitemShut {NoStop}%
\bibitem [{\citenamefont {Ho}\ and\ \citenamefont
  {Choi}(2022)}]{Ho2022Emergent}%
  \BibitemOpen
  \bibfield  {author} {\bibinfo {author} {\bibfnamefont {W.~W.}\ \bibnamefont
  {Ho}}\ and\ \bibinfo {author} {\bibfnamefont {S.}~\bibnamefont {Choi}},\
  }\bibfield  {title} {\bibinfo {title} {Exact emergent quantum state designs
  from quantum chaotic dynamics},\ }\href
  {https://doi.org/10.1103/PhysRevLett.128.060601} {\bibfield  {journal}
  {\bibinfo  {journal} {Phys. Rev. Lett.}\ }\textbf {\bibinfo {volume} {128}},\
  \bibinfo {pages} {060601} (\bibinfo {year} {2022})}\BibitemShut {NoStop}%
\bibitem [{\citenamefont {Cotler}\ \emph {et~al.}(2023)\citenamefont {Cotler},
  \citenamefont {Mark}, \citenamefont {Huang}, \citenamefont {Hernández},
  \citenamefont {Choi}, \citenamefont {Shaw}, \citenamefont {Endres},\ and\
  \citenamefont {Choi}}]{Cotler2023EmergentIndividual}%
  \BibitemOpen
  \bibfield  {author} {\bibinfo {author} {\bibfnamefont {J.~S.}\ \bibnamefont
  {Cotler}}, \bibinfo {author} {\bibfnamefont {D.~K.}\ \bibnamefont {Mark}},
  \bibinfo {author} {\bibfnamefont {H.-Y.}\ \bibnamefont {Huang}}, \bibinfo
  {author} {\bibfnamefont {F.}~\bibnamefont {Hernández}}, \bibinfo {author}
  {\bibfnamefont {J.}~\bibnamefont {Choi}}, \bibinfo {author} {\bibfnamefont
  {A.~L.}\ \bibnamefont {Shaw}}, \bibinfo {author} {\bibfnamefont
  {M.}~\bibnamefont {Endres}},\ and\ \bibinfo {author} {\bibfnamefont
  {S.}~\bibnamefont {Choi}},\ }\bibfield  {title} {\bibinfo {title} {Emergent
  quantum state designs from individual many-body wave functions},\ }\href
  {http://dx.doi.org/10.1103/PRXQuantum.4.010311} {\bibfield  {journal}
  {\bibinfo  {journal} {PRX Quantum}\ }\textbf {\bibinfo {volume} {4}}
  (\bibinfo {year} {2023})}\BibitemShut {NoStop}%
\bibitem [{\citenamefont {Tran}\ \emph {et~al.}(2023)\citenamefont {Tran},
  \citenamefont {Mark}, \citenamefont {Ho},\ and\ \citenamefont
  {Choi}}]{Tran2023Measuring}%
  \BibitemOpen
  \bibfield  {author} {\bibinfo {author} {\bibfnamefont {M.~C.}\ \bibnamefont
  {Tran}}, \bibinfo {author} {\bibfnamefont {D.~K.}\ \bibnamefont {Mark}},
  \bibinfo {author} {\bibfnamefont {W.~W.}\ \bibnamefont {Ho}},\ and\ \bibinfo
  {author} {\bibfnamefont {S.}~\bibnamefont {Choi}},\ }\bibfield  {title}
  {\bibinfo {title} {Measuring arbitrary physical properties in analog quantum
  simulation},\ }\href {https://doi.org/10.1103/PhysRevX.13.011049} {\bibfield
  {journal} {\bibinfo  {journal} {Phys. Rev. X}\ }\textbf {\bibinfo {volume}
  {13}},\ \bibinfo {pages} {011049} (\bibinfo {year} {2023})}\BibitemShut
  {NoStop}%
\bibitem [{\citenamefont {Liu}\ \emph {et~al.}(2024{\natexlab{b}})\citenamefont
  {Liu}, \citenamefont {Hao},\ and\ \citenamefont {Hu}}]{liu2024predicting}%
  \BibitemOpen
  \bibfield  {author} {\bibinfo {author} {\bibfnamefont {Z.}~\bibnamefont
  {Liu}}, \bibinfo {author} {\bibfnamefont {Z.}~\bibnamefont {Hao}},\ and\
  \bibinfo {author} {\bibfnamefont {H.-Y.}\ \bibnamefont {Hu}},\ }\bibfield
  {title} {\bibinfo {title} {{Predicting arbitrary state properties from single
  Hamiltonian quench dynamics}},\ }\href
  {https://doi.org/10.1103/PhysRevResearch.6.043118} {\bibfield  {journal}
  {\bibinfo  {journal} {Phys. Rev. Res.}\ }\textbf {\bibinfo {volume} {6}},\
  \bibinfo {pages} {043118} (\bibinfo {year} {2024}{\natexlab{b}})}\BibitemShut
  {NoStop}%
\bibitem [{\citenamefont {Mark}\ \emph {et~al.}(2025)\citenamefont {Mark},
  \citenamefont {Hu}, \citenamefont {Kwan}, \citenamefont {Kokail},
  \citenamefont {Choi},\ and\ \citenamefont {Yelin}}]{mark2024efficiently}%
  \BibitemOpen
  \bibfield  {author} {\bibinfo {author} {\bibfnamefont {D.~K.}\ \bibnamefont
  {Mark}}, \bibinfo {author} {\bibfnamefont {H.-Y.}\ \bibnamefont {Hu}},
  \bibinfo {author} {\bibfnamefont {J.}~\bibnamefont {Kwan}}, \bibinfo {author}
  {\bibfnamefont {C.}~\bibnamefont {Kokail}}, \bibinfo {author} {\bibfnamefont
  {S.}~\bibnamefont {Choi}},\ and\ \bibinfo {author} {\bibfnamefont {S.~F.}\
  \bibnamefont {Yelin}},\ }\bibfield  {title} {\bibinfo {title} {Efficiently
  measuring $d$-wave pairing and beyond in quantum gas microscopes},\ }\href
  {https://doi.org/10.1103/dqyf-kl8x} {\bibfield  {journal} {\bibinfo
  {journal} {Phys. Rev. Lett.}\ }\textbf {\bibinfo {volume} {135}},\ \bibinfo
  {pages} {123402} (\bibinfo {year} {2025})}\BibitemShut {NoStop}%
\bibitem [{\citenamefont {Choi}\ \emph {et~al.}(2023)\citenamefont {Choi},
  \citenamefont {Shaw}, \citenamefont {Madjarov}, \citenamefont {Xie},
  \citenamefont {Finkelstein}, \citenamefont {Covey}, \citenamefont {Cotler},
  \citenamefont {Mark}, \citenamefont {Huang}, \citenamefont {Kale},
  \citenamefont {Pichler}, \citenamefont {Brandão}, \citenamefont {Choi},\
  and\ \citenamefont {Endres}}]{Choi2023PreparingRandom}%
  \BibitemOpen
  \bibfield  {author} {\bibinfo {author} {\bibfnamefont {J.}~\bibnamefont
  {Choi}}, \bibinfo {author} {\bibfnamefont {A.~L.}\ \bibnamefont {Shaw}},
  \bibinfo {author} {\bibfnamefont {I.~S.}\ \bibnamefont {Madjarov}}, \bibinfo
  {author} {\bibfnamefont {X.}~\bibnamefont {Xie}}, \bibinfo {author}
  {\bibfnamefont {R.}~\bibnamefont {Finkelstein}}, \bibinfo {author}
  {\bibfnamefont {J.~P.}\ \bibnamefont {Covey}}, \bibinfo {author}
  {\bibfnamefont {J.~S.}\ \bibnamefont {Cotler}}, \bibinfo {author}
  {\bibfnamefont {D.~K.}\ \bibnamefont {Mark}}, \bibinfo {author}
  {\bibfnamefont {H.-Y.}\ \bibnamefont {Huang}}, \bibinfo {author}
  {\bibfnamefont {A.}~\bibnamefont {Kale}}, \bibinfo {author} {\bibfnamefont
  {H.}~\bibnamefont {Pichler}}, \bibinfo {author} {\bibfnamefont {F.~G. S.~L.}\
  \bibnamefont {Brandão}}, \bibinfo {author} {\bibfnamefont {S.}~\bibnamefont
  {Choi}},\ and\ \bibinfo {author} {\bibfnamefont {M.}~\bibnamefont {Endres}},\
  }\bibfield  {title} {\bibinfo {title} {Preparing random states and
  benchmarking with many-body quantum chaos},\ }\href
  {https://doi.org/10.1038/s41586-022-05442-1} {\bibfield  {journal} {\bibinfo
  {journal} {Nature}\ }\textbf {\bibinfo {volume} {613}},\ \bibinfo {pages}
  {468–473} (\bibinfo {year} {2023})}\BibitemShut {NoStop}%
\bibitem [{\citenamefont {Nakata}\ \emph {et~al.}(2021)\citenamefont {Nakata},
  \citenamefont {Zhao}, \citenamefont {Okuda}, \citenamefont {Bannai},
  \citenamefont {Suzuki}, \citenamefont {Tamiya}, \citenamefont {Heya},
  \citenamefont {Yan}, \citenamefont {Zuo}, \citenamefont {Tamate},
  \citenamefont {Tabuchi},\ and\ \citenamefont
  {Nakamura}}]{Nakata2021ExactDesign}%
  \BibitemOpen
  \bibfield  {author} {\bibinfo {author} {\bibfnamefont {Y.}~\bibnamefont
  {Nakata}}, \bibinfo {author} {\bibfnamefont {D.}~\bibnamefont {Zhao}},
  \bibinfo {author} {\bibfnamefont {T.}~\bibnamefont {Okuda}}, \bibinfo
  {author} {\bibfnamefont {E.}~\bibnamefont {Bannai}}, \bibinfo {author}
  {\bibfnamefont {Y.}~\bibnamefont {Suzuki}}, \bibinfo {author} {\bibfnamefont
  {S.}~\bibnamefont {Tamiya}}, \bibinfo {author} {\bibfnamefont
  {K.}~\bibnamefont {Heya}}, \bibinfo {author} {\bibfnamefont {Z.}~\bibnamefont
  {Yan}}, \bibinfo {author} {\bibfnamefont {K.}~\bibnamefont {Zuo}}, \bibinfo
  {author} {\bibfnamefont {S.}~\bibnamefont {Tamate}}, \bibinfo {author}
  {\bibfnamefont {Y.}~\bibnamefont {Tabuchi}},\ and\ \bibinfo {author}
  {\bibfnamefont {Y.}~\bibnamefont {Nakamura}},\ }\bibfield  {title} {\bibinfo
  {title} {Quantum circuits for exact unitary $t$-designs and applications to
  higher-order randomized benchmarking},\ }\href
  {https://doi.org/10.1103/PRXQuantum.2.030339} {\bibfield  {journal} {\bibinfo
   {journal} {PRX Quantum}\ }\textbf {\bibinfo {volume} {2}},\ \bibinfo {pages}
  {030339} (\bibinfo {year} {2021})}\BibitemShut {NoStop}%
\bibitem [{\citenamefont {Zhu}(2017)}]{zhu2017multiqubit}%
  \BibitemOpen
  \bibfield  {author} {\bibinfo {author} {\bibfnamefont {H.}~\bibnamefont
  {Zhu}},\ }\bibfield  {title} {\bibinfo {title} {Multiqubit clifford groups
  are unitary 3-designs},\ }\href {https://doi.org/10.1103/PhysRevA.96.062336}
  {\bibfield  {journal} {\bibinfo  {journal} {Phys. Rev. A}\ }\textbf {\bibinfo
  {volume} {96}},\ \bibinfo {pages} {062336} (\bibinfo {year}
  {2017})}\BibitemShut {NoStop}%
\bibitem [{\citenamefont {Zhu}\ \emph {et~al.}(2016)\citenamefont {Zhu},
  \citenamefont {Kueng}, \citenamefont {Grassl},\ and\ \citenamefont
  {Gross}}]{zhu2016clifford}%
  \BibitemOpen
  \bibfield  {author} {\bibinfo {author} {\bibfnamefont {H.}~\bibnamefont
  {Zhu}}, \bibinfo {author} {\bibfnamefont {R.}~\bibnamefont {Kueng}}, \bibinfo
  {author} {\bibfnamefont {M.}~\bibnamefont {Grassl}},\ and\ \bibinfo {author}
  {\bibfnamefont {D.}~\bibnamefont {Gross}},\ }\bibfield  {title} {\bibinfo
  {title} {The clifford group fails gracefully to be a unitary 4-design},\
  }\href {https://arxiv.org/abs/1609.08172} {\bibfield  {journal} {\bibinfo
  {journal} {arXiv:1609.08172}\ } (\bibinfo {year} {2016})}\BibitemShut
  {NoStop}%
\bibitem [{\citenamefont {Maslov}\ and\ \citenamefont
  {Roetteler}(2018)}]{Maslov2018Stabilizer}%
  \BibitemOpen
  \bibfield  {author} {\bibinfo {author} {\bibfnamefont {D.}~\bibnamefont
  {Maslov}}\ and\ \bibinfo {author} {\bibfnamefont {M.}~\bibnamefont
  {Roetteler}},\ }\bibfield  {title} {\bibinfo {title} {Shorter stabilizer
  circuits via bruhat decomposition and quantum circuit transformations},\
  }\href {https://doi.org/10.1109/TIT.2018.2825602} {\bibfield  {journal}
  {\bibinfo  {journal} {IEEE Trans. Inf. Th.}\ }\textbf {\bibinfo {volume}
  {64}},\ \bibinfo {pages} {4729} (\bibinfo {year} {2018})}\BibitemShut
  {NoStop}%
\bibitem [{\citenamefont {Bravyi}\ and\ \citenamefont
  {Maslov}(2021)}]{Bravyi2021Clifford}%
  \BibitemOpen
  \bibfield  {author} {\bibinfo {author} {\bibfnamefont {S.}~\bibnamefont
  {Bravyi}}\ and\ \bibinfo {author} {\bibfnamefont {D.}~\bibnamefont
  {Maslov}},\ }\bibfield  {title} {\bibinfo {title} {Hadamard-free circuits
  expose the structure of the clifford group},\ }\href
  {https://doi.org/10.1109/TIT.2021.3081415} {\bibfield  {journal} {\bibinfo
  {journal} {IEEE Trans. Inf. Th.}\ }\textbf {\bibinfo {volume} {67}},\
  \bibinfo {pages} {4546} (\bibinfo {year} {2021})}\BibitemShut {NoStop}%
\bibitem [{\citenamefont {Zheng}\ \emph {et~al.}(2025)\citenamefont {Zheng},
  \citenamefont {Wang}, \citenamefont {Yu}, \citenamefont {Xu},\ and\
  \citenamefont {Zhang}}]{Zheng2025Distributed}%
  \BibitemOpen
  \bibfield  {author} {\bibinfo {author} {\bibfnamefont {C.}~\bibnamefont
  {Zheng}}, \bibinfo {author} {\bibfnamefont {K.}~\bibnamefont {Wang}},
  \bibinfo {author} {\bibfnamefont {X.}~\bibnamefont {Yu}}, \bibinfo {author}
  {\bibfnamefont {P.}~\bibnamefont {Xu}},\ and\ \bibinfo {author}
  {\bibfnamefont {Z.}~\bibnamefont {Zhang}},\ }\bibfield  {title} {\bibinfo
  {title} {Distributed quantum inner product estimation with low-depth
  circuits},\ }\href {https://arxiv.org/abs/2506.19574} {\bibfield  {journal}
  {\bibinfo  {journal} {arXiv:2506.19574}\ } (\bibinfo {year}
  {2025})}\BibitemShut {NoStop}%
\bibitem [{\citenamefont {Flammia}\ and\ \citenamefont
  {Liu}(2011)}]{Flammia2011DirectFidelityEstimation}%
  \BibitemOpen
  \bibfield  {author} {\bibinfo {author} {\bibfnamefont {S.~T.}\ \bibnamefont
  {Flammia}}\ and\ \bibinfo {author} {\bibfnamefont {Y.-K.}\ \bibnamefont
  {Liu}},\ }\bibfield  {title} {\bibinfo {title} {Direct fidelity estimation
  from few pauli measurements},\ }\href
  {https://doi.org/10.1103/PhysRevLett.106.230501} {\bibfield  {journal}
  {\bibinfo  {journal} {Phys. Rev. Lett.}\ }\textbf {\bibinfo {volume} {106}},\
  \bibinfo {pages} {230501} (\bibinfo {year} {2011})}\BibitemShut {NoStop}%
\bibitem [{\citenamefont {Huang}\ \emph {et~al.}(2021)\citenamefont {Huang},
  \citenamefont {Kueng},\ and\ \citenamefont {Preskill}}]{huang2021efficient}%
  \BibitemOpen
  \bibfield  {author} {\bibinfo {author} {\bibfnamefont {H.-Y.}\ \bibnamefont
  {Huang}}, \bibinfo {author} {\bibfnamefont {R.}~\bibnamefont {Kueng}},\ and\
  \bibinfo {author} {\bibfnamefont {J.}~\bibnamefont {Preskill}},\ }\bibfield
  {title} {\bibinfo {title} {Efficient estimation of pauli observables by
  derandomization},\ }\href {https://doi.org/10.1103/PhysRevLett.127.030503}
  {\bibfield  {journal} {\bibinfo  {journal} {Phys. Rev. Lett.}\ }\textbf
  {\bibinfo {volume} {127}},\ \bibinfo {pages} {030503} (\bibinfo {year}
  {2021})}\BibitemShut {NoStop}%
\bibitem [{\citenamefont {Gottesman}(1997)}]{gottesman1997stabilizer}%
  \BibitemOpen
  \bibfield  {author} {\bibinfo {author} {\bibfnamefont {D.}~\bibnamefont
  {Gottesman}},\ }\bibfield  {title} {\bibinfo {title} {Stabilizer codes and
  quantum error correction},\ }\href {https://arxiv.org/abs/quant-ph/9705052}
  {\bibfield  {journal} {\bibinfo  {journal} {arXiv:quant-ph/9705052}\ }
  (\bibinfo {year} {1997})}\BibitemShut {NoStop}%
\bibitem [{\citenamefont {Chen}\ \emph {et~al.}(2024)\citenamefont {Chen},
  \citenamefont {Zeng}, \citenamefont {Zhao}, \citenamefont {Ma},\ and\
  \citenamefont {Zhou}}]{chen2024quantumsubspaceverificationerror}%
  \BibitemOpen
  \bibfield  {author} {\bibinfo {author} {\bibfnamefont {J.}~\bibnamefont
  {Chen}}, \bibinfo {author} {\bibfnamefont {P.}~\bibnamefont {Zeng}}, \bibinfo
  {author} {\bibfnamefont {Q.}~\bibnamefont {Zhao}}, \bibinfo {author}
  {\bibfnamefont {X.}~\bibnamefont {Ma}},\ and\ \bibinfo {author}
  {\bibfnamefont {Y.}~\bibnamefont {Zhou}},\ }\bibfield  {title} {\bibinfo
  {title} {Quantum subspace verification for error correction codes},\ }\href
  {https://arxiv.org/abs/2410.12551} {\bibfield  {journal} {\bibinfo  {journal}
  {arXiv:2410.12551}\ } (\bibinfo {year} {2024})}\BibitemShut {NoStop}%
\bibitem [{\citenamefont {Hinsche}\ \emph {et~al.}(2025)\citenamefont
  {Hinsche}, \citenamefont {Ioannou}, \citenamefont {Jerbi}, \citenamefont
  {Leone}, \citenamefont {Eisert},\ and\ \citenamefont
  {Carrasco}}]{Hinsche2025DistributedPauliSampling}%
  \BibitemOpen
  \bibfield  {author} {\bibinfo {author} {\bibfnamefont {M.}~\bibnamefont
  {Hinsche}}, \bibinfo {author} {\bibfnamefont {M.}~\bibnamefont {Ioannou}},
  \bibinfo {author} {\bibfnamefont {S.}~\bibnamefont {Jerbi}}, \bibinfo
  {author} {\bibfnamefont {L.}~\bibnamefont {Leone}}, \bibinfo {author}
  {\bibfnamefont {J.}~\bibnamefont {Eisert}},\ and\ \bibinfo {author}
  {\bibfnamefont {J.}~\bibnamefont {Carrasco}},\ }\bibfield  {title} {\bibinfo
  {title} {Efficient distributed inner-product estimation via pauli sampling},\
  }\href {https://doi.org/10.1103/g53f-z8cr} {\bibfield  {journal} {\bibinfo
  {journal} {PRX Quantum}\ }\textbf {\bibinfo {volume} {6}},\ \bibinfo {pages}
  {030354} (\bibinfo {year} {2025})}\BibitemShut {NoStop}%
\bibitem [{\citenamefont {Cai}\ \emph {et~al.}(2023{\natexlab{b}})\citenamefont
  {Cai}, \citenamefont {Babbush}, \citenamefont {Benjamin}, \citenamefont
  {Endo}, \citenamefont {Huggins}, \citenamefont {Li}, \citenamefont
  {McClean},\ and\ \citenamefont {O'Brien}}]{Cai2023mitigation}%
  \BibitemOpen
  \bibfield  {author} {\bibinfo {author} {\bibfnamefont {Z.}~\bibnamefont
  {Cai}}, \bibinfo {author} {\bibfnamefont {R.}~\bibnamefont {Babbush}},
  \bibinfo {author} {\bibfnamefont {S.~C.}\ \bibnamefont {Benjamin}}, \bibinfo
  {author} {\bibfnamefont {S.}~\bibnamefont {Endo}}, \bibinfo {author}
  {\bibfnamefont {W.~J.}\ \bibnamefont {Huggins}}, \bibinfo {author}
  {\bibfnamefont {Y.}~\bibnamefont {Li}}, \bibinfo {author} {\bibfnamefont
  {J.~R.}\ \bibnamefont {McClean}},\ and\ \bibinfo {author} {\bibfnamefont
  {T.~E.}\ \bibnamefont {O'Brien}},\ }\bibfield  {title} {\bibinfo {title}
  {Quantum error mitigation},\ }\href
  {https://doi.org/10.1103/RevModPhys.95.045005} {\bibfield  {journal}
  {\bibinfo  {journal} {Rev. Mod. Phys.}\ }\textbf {\bibinfo {volume} {95}},\
  \bibinfo {pages} {045005} (\bibinfo {year} {2023}{\natexlab{b}})}\BibitemShut
  {NoStop}%
\bibitem [{\citenamefont {Takagi}\ \emph {et~al.}(2022)\citenamefont {Takagi},
  \citenamefont {Endo}, \citenamefont {Minagawa},\ and\ \citenamefont
  {Gu}}]{ErrorMitigationObstructionsOld}%
  \BibitemOpen
  \bibfield  {author} {\bibinfo {author} {\bibfnamefont {R.}~\bibnamefont
  {Takagi}}, \bibinfo {author} {\bibfnamefont {S.}~\bibnamefont {Endo}},
  \bibinfo {author} {\bibfnamefont {S.}~\bibnamefont {Minagawa}},\ and\
  \bibinfo {author} {\bibfnamefont {M.}~\bibnamefont {Gu}},\ }\bibfield
  {title} {\bibinfo {title} {Fundamental limits of quantum error mitigation},\
  }\href {https://doi.org/10.1038/s41534-022-00618-z} {\bibfield  {journal}
  {\bibinfo  {journal} {npj Quantum Information}\ }\textbf {\bibinfo {volume}
  {8}},\ \bibinfo {pages} {114} (\bibinfo {year} {2022})}\BibitemShut {NoStop}%
\bibitem [{\citenamefont {Quek}\ \emph {et~al.}(2024)\citenamefont {Quek},
  \citenamefont {França}, \citenamefont {Khatri}, \citenamefont {Meyer},\ and\
  \citenamefont {Eisert}}]{ErrorMitigationObstructions}%
  \BibitemOpen
  \bibfield  {author} {\bibinfo {author} {\bibfnamefont {Y.}~\bibnamefont
  {Quek}}, \bibinfo {author} {\bibfnamefont {D.~S.}\ \bibnamefont {França}},
  \bibinfo {author} {\bibfnamefont {S.}~\bibnamefont {Khatri}}, \bibinfo
  {author} {\bibfnamefont {J.~J.}\ \bibnamefont {Meyer}},\ and\ \bibinfo
  {author} {\bibfnamefont {J.}~\bibnamefont {Eisert}},\ }\bibfield  {title}
  {\bibinfo {title} {Exponentially tighter bounds on limitations of quantum
  error mitigation},\ }\href {https://doi.org/10.1038/s41567-024-02536-7}
  {\bibfield  {journal} {\bibinfo  {journal} {Nature Phys.}\ }\textbf {\bibinfo
  {volume} {20}},\ \bibinfo {pages} {1648–1658} (\bibinfo {year}
  {2024})}\BibitemShut {NoStop}%
\bibitem [{\citenamefont {Takagi}\ \emph {et~al.}(2023)\citenamefont {Takagi},
  \citenamefont {Tajima},\ and\ \citenamefont {Gu}}]{PhysRevLett.131.210602}%
  \BibitemOpen
  \bibfield  {author} {\bibinfo {author} {\bibfnamefont {R.}~\bibnamefont
  {Takagi}}, \bibinfo {author} {\bibfnamefont {H.}~\bibnamefont {Tajima}},\
  and\ \bibinfo {author} {\bibfnamefont {M.}~\bibnamefont {Gu}},\ }\bibfield
  {title} {\bibinfo {title} {Universal sampling lower bounds for quantum error
  mitigation},\ }\href {https://doi.org/10.1103/PhysRevLett.131.210602}
  {\bibfield  {journal} {\bibinfo  {journal} {Phys. Rev. Lett.}\ }\textbf
  {\bibinfo {volume} {131}},\ \bibinfo {pages} {210602} (\bibinfo {year}
  {2023})}\BibitemShut {NoStop}%
\bibitem [{\citenamefont {Tsubouchi}\ \emph {et~al.}(2023)\citenamefont
  {Tsubouchi}, \citenamefont {Sagawa},\ and\ \citenamefont
  {Yoshioka}}]{kento2023bound_qem}%
  \BibitemOpen
  \bibfield  {author} {\bibinfo {author} {\bibfnamefont {K.}~\bibnamefont
  {Tsubouchi}}, \bibinfo {author} {\bibfnamefont {T.}~\bibnamefont {Sagawa}},\
  and\ \bibinfo {author} {\bibfnamefont {N.}~\bibnamefont {Yoshioka}},\
  }\bibfield  {title} {\bibinfo {title} {Universal cost bound of quantum error
  mitigation based on quantum estimation theory},\ }\href
  {https://doi.org/10.1103/PhysRevLett.131.210601} {\bibfield  {journal}
  {\bibinfo  {journal} {Phys. Rev. Lett.}\ }\textbf {\bibinfo {volume} {131}},\
  \bibinfo {pages} {210601} (\bibinfo {year} {2023})}\BibitemShut {NoStop}%
\bibitem [{\citenamefont {Jurcevic}\ \emph {et~al.}(2014)\citenamefont
  {Jurcevic}, \citenamefont {Lanyon}, \citenamefont {Hauke}, \citenamefont
  {Hempel}, \citenamefont {Zoller}, \citenamefont {Blatt},\ and\ \citenamefont
  {Roos}}]{jurcevic2014quasiparticle}%
  \BibitemOpen
  \bibfield  {author} {\bibinfo {author} {\bibfnamefont {P.}~\bibnamefont
  {Jurcevic}}, \bibinfo {author} {\bibfnamefont {B.~P.}\ \bibnamefont
  {Lanyon}}, \bibinfo {author} {\bibfnamefont {P.}~\bibnamefont {Hauke}},
  \bibinfo {author} {\bibfnamefont {C.}~\bibnamefont {Hempel}}, \bibinfo
  {author} {\bibfnamefont {P.}~\bibnamefont {Zoller}}, \bibinfo {author}
  {\bibfnamefont {R.}~\bibnamefont {Blatt}},\ and\ \bibinfo {author}
  {\bibfnamefont {C.~F.}\ \bibnamefont {Roos}},\ }\bibfield  {title} {\bibinfo
  {title} {Quasiparticle engineering and entanglement propagation in a quantum
  many-body system},\ }\href {https://www.nature.com/articles/nature13461.pdf}
  {\bibfield  {journal} {\bibinfo  {journal} {Nature}\ }\textbf {\bibinfo
  {volume} {511}},\ \bibinfo {pages} {202} (\bibinfo {year}
  {2014})}\BibitemShut {NoStop}%
\bibitem [{\citenamefont {Sachdev}(2011)}]{Sachdev}%
  \BibitemOpen
  \bibfield  {author} {\bibinfo {author} {\bibfnamefont {S.}~\bibnamefont
  {Sachdev}},\ }\href {https://doi.org/10.1017/CBO9780511973765} {\emph
  {\bibinfo {title} {Quantum phase transitions}}},\ \bibinfo {edition} {2nd}\
  ed.\ (\bibinfo  {publisher} {Cambridge University Press},\ \bibinfo {year}
  {2011})\BibitemShut {NoStop}%
\bibitem [{\citenamefont {Zhu}\ \emph {et~al.}(2023)\citenamefont {Zhu},
  \citenamefont {Tantivasadakarn}, \citenamefont {Vishwanath}, \citenamefont
  {Trebst},\ and\ \citenamefont {Verresen}}]{zhu2023nishimori}%
  \BibitemOpen
  \bibfield  {author} {\bibinfo {author} {\bibfnamefont {G.-Y.}\ \bibnamefont
  {Zhu}}, \bibinfo {author} {\bibfnamefont {N.}~\bibnamefont
  {Tantivasadakarn}}, \bibinfo {author} {\bibfnamefont {A.}~\bibnamefont
  {Vishwanath}}, \bibinfo {author} {\bibfnamefont {S.}~\bibnamefont {Trebst}},\
  and\ \bibinfo {author} {\bibfnamefont {R.}~\bibnamefont {Verresen}},\
  }\bibfield  {title} {\bibinfo {title} {Nishimori's cat: Stable long-range
  entanglement from finite-depth unitaries and weak measurements},\ }\href
  {https://doi.org/10.1103/PhysRevLett.131.200201} {\bibfield  {journal}
  {\bibinfo  {journal} {Phys. Rev. Lett.}\ }\textbf {\bibinfo {volume} {131}},\
  \bibinfo {pages} {200201} (\bibinfo {year} {2023})}\BibitemShut {NoStop}%
\bibitem [{\citenamefont {Lessa}\ \emph {et~al.}(2025)\citenamefont {Lessa},
  \citenamefont {Ma}, \citenamefont {Zhang}, \citenamefont {Bi}, \citenamefont
  {Cheng},\ and\ \citenamefont {Wang}}]{lessa2025swssb}%
  \BibitemOpen
  \bibfield  {author} {\bibinfo {author} {\bibfnamefont {L.~A.}\ \bibnamefont
  {Lessa}}, \bibinfo {author} {\bibfnamefont {R.}~\bibnamefont {Ma}}, \bibinfo
  {author} {\bibfnamefont {J.-H.}\ \bibnamefont {Zhang}}, \bibinfo {author}
  {\bibfnamefont {Z.}~\bibnamefont {Bi}}, \bibinfo {author} {\bibfnamefont
  {M.}~\bibnamefont {Cheng}},\ and\ \bibinfo {author} {\bibfnamefont
  {C.}~\bibnamefont {Wang}},\ }\bibfield  {title} {\bibinfo {title}
  {Strong-to-weak spontaneous symmetry breaking in mixed quantum states},\
  }\href {https://doi.org/10.1103/PRXQuantum.6.010344} {\bibfield  {journal}
  {\bibinfo  {journal} {PRX Quantum}\ }\textbf {\bibinfo {volume} {6}},\
  \bibinfo {pages} {010344} (\bibinfo {year} {2025})}\BibitemShut {NoStop}%
\bibitem [{\citenamefont {de~Groot}\ \emph {et~al.}(2022)\citenamefont
  {de~Groot}, \citenamefont {Turzillo},\ and\ \citenamefont
  {Schuch}}]{deGroot2022symmetryprotected}%
  \BibitemOpen
  \bibfield  {author} {\bibinfo {author} {\bibfnamefont {C.}~\bibnamefont
  {de~Groot}}, \bibinfo {author} {\bibfnamefont {A.}~\bibnamefont {Turzillo}},\
  and\ \bibinfo {author} {\bibfnamefont {N.}~\bibnamefont {Schuch}},\
  }\bibfield  {title} {\bibinfo {title} {Symmetry {p}rotected {t}opological
  {o}rder in {o}pen {q}uantum {s}ystems},\ }\href
  {https://doi.org/10.22331/q-2022-11-10-856} {\bibfield  {journal} {\bibinfo
  {journal} {{Quantum}}\ }\textbf {\bibinfo {volume} {6}},\ \bibinfo {pages}
  {856} (\bibinfo {year} {2022})}\BibitemShut {NoStop}%
\bibitem [{\citenamefont {Petz}(1986)}]{Petz1986QuasiEntropies}%
  \BibitemOpen
  \bibfield  {author} {\bibinfo {author} {\bibfnamefont {D.}~\bibnamefont
  {Petz}},\ }\bibfield  {title} {\bibinfo {title} {Quasi-entropies for finite
  quantum systems},\ }\href
  {https://www.sciencedirect.com/science/article/abs/pii/0034487786900674}
  {\bibfield  {journal} {\bibinfo  {journal} {Rep. Math. Phys.}\ }\textbf
  {\bibinfo {volume} {23}},\ \bibinfo {pages} {57} (\bibinfo {year}
  {1986})}\BibitemShut {NoStop}%
\bibitem [{\citenamefont {Wilde}(2013)}]{Wilde2017QuantumInformation}%
  \BibitemOpen
  \bibfield  {author} {\bibinfo {author} {\bibfnamefont {M.~M.}\ \bibnamefont
  {Wilde}},\ }\href {https://doi.org/10.1017/CBO9781139525343} {\emph {\bibinfo
  {title} {Quantum information theory}}}\ (\bibinfo  {publisher} {Cambridge
  University Press},\ \bibinfo {year} {2013})\BibitemShut {NoStop}%
\bibitem [{\citenamefont {Streltsov}\ \emph {et~al.}(2018)\citenamefont
  {Streltsov}, \citenamefont {Kampermann}, \citenamefont {Wölk}, \citenamefont
  {Gessner},\ and\ \citenamefont {Bruß}}]{Streltsov2018ResourcePurity}%
  \BibitemOpen
  \bibfield  {author} {\bibinfo {author} {\bibfnamefont {A.}~\bibnamefont
  {Streltsov}}, \bibinfo {author} {\bibfnamefont {H.}~\bibnamefont
  {Kampermann}}, \bibinfo {author} {\bibfnamefont {S.}~\bibnamefont {Wölk}},
  \bibinfo {author} {\bibfnamefont {M.}~\bibnamefont {Gessner}},\ and\ \bibinfo
  {author} {\bibfnamefont {D.}~\bibnamefont {Bruß}},\ }\bibfield  {title}
  {\bibinfo {title} {Maximal coherence and the resource theory of purity},\
  }\href {https://doi.org/10.1088/1367-2630/aac484} {\bibfield  {journal}
  {\bibinfo  {journal} {New Journal of Physics}\ }\textbf {\bibinfo {volume}
  {20}},\ \bibinfo {pages} {053058} (\bibinfo {year} {2018})}\BibitemShut
  {NoStop}%
\bibitem [{\citenamefont {Chen}\ \emph
  {et~al.}(2022{\natexlab{b}})\citenamefont {Chen}, \citenamefont {Li},
  \citenamefont {Huang},\ and\ \citenamefont {Liu}}]{chen2022tight}%
  \BibitemOpen
  \bibfield  {author} {\bibinfo {author} {\bibfnamefont {S.}~\bibnamefont
  {Chen}}, \bibinfo {author} {\bibfnamefont {J.}~\bibnamefont {Li}}, \bibinfo
  {author} {\bibfnamefont {B.}~\bibnamefont {Huang}},\ and\ \bibinfo {author}
  {\bibfnamefont {A.}~\bibnamefont {Liu}},\ }\bibfield  {title} {\bibinfo
  {title} {Tight bounds for quantum state certification with incoherent
  measurements},\ }in\ \href
  {https://ieeexplore.ieee.org/abstract/document/9996689/} {\emph {\bibinfo
  {booktitle} {Proc. 62nd Annu. IEEE Symp. Found. Comput. Sci. (FOCS)}}}\
  (\bibinfo {organization} {IEEE},\ \bibinfo {year} {2022})\ pp.\ \bibinfo
  {pages} {1205--1213}\BibitemShut {NoStop}%
\bibitem [{\citenamefont {Helsen}\ and\ \citenamefont
  {Walter}(2023)}]{helsen2023thrifty}%
  \BibitemOpen
  \bibfield  {author} {\bibinfo {author} {\bibfnamefont {J.}~\bibnamefont
  {Helsen}}\ and\ \bibinfo {author} {\bibfnamefont {M.}~\bibnamefont
  {Walter}},\ }\bibfield  {title} {\bibinfo {title} {Thrifty shadow estimation:
  Reusing quantum circuits and bounding tails},\ }\href
  {https://doi.org/10.1103/PhysRevLett.131.240602} {\bibfield  {journal}
  {\bibinfo  {journal} {Phys. Rev. Lett.}\ }\textbf {\bibinfo {volume} {131}},\
  \bibinfo {pages} {240602} (\bibinfo {year} {2023})}\BibitemShut {NoStop}%
\bibitem [{\citenamefont {Zhou}\ and\ \citenamefont
  {Liu}(2023)}]{Zhou2023performanceanalysis}%
  \BibitemOpen
  \bibfield  {author} {\bibinfo {author} {\bibfnamefont {Y.}~\bibnamefont
  {Zhou}}\ and\ \bibinfo {author} {\bibfnamefont {Q.}~\bibnamefont {Liu}},\
  }\bibfield  {title} {\bibinfo {title} {Performance analysis of multi-shot
  shadow estimation},\ }\href {https://doi.org/10.22331/q-2023-06-29-1044}
  {\bibfield  {journal} {\bibinfo  {journal} {{Quantum}}\ }\textbf {\bibinfo
  {volume} {7}},\ \bibinfo {pages} {1044} (\bibinfo {year} {2023})}\BibitemShut
  {NoStop}%
\bibitem [{\citenamefont {Blais}\ \emph {et~al.}(2021)\citenamefont {Blais},
  \citenamefont {Grimsmo}, \citenamefont {Girvin},\ and\ \citenamefont
  {Wallraff}}]{blais2021cqed}%
  \BibitemOpen
  \bibfield  {author} {\bibinfo {author} {\bibfnamefont {A.}~\bibnamefont
  {Blais}}, \bibinfo {author} {\bibfnamefont {A.~L.}\ \bibnamefont {Grimsmo}},
  \bibinfo {author} {\bibfnamefont {S.~M.}\ \bibnamefont {Girvin}},\ and\
  \bibinfo {author} {\bibfnamefont {A.}~\bibnamefont {Wallraff}},\ }\bibfield
  {title} {\bibinfo {title} {Circuit quantum electrodynamics},\ }\href
  {https://doi.org/10.1103/RevModPhys.93.025005} {\bibfield  {journal}
  {\bibinfo  {journal} {Rev. Mod. Phys.}\ }\textbf {\bibinfo {volume} {93}},\
  \bibinfo {pages} {025005} (\bibinfo {year} {2021})}\BibitemShut {NoStop}%
\bibitem [{\citenamefont {Pan}\ \emph {et~al.}(2012)\citenamefont {Pan},
  \citenamefont {Chen}, \citenamefont {Lu}, \citenamefont {Weinfurter},
  \citenamefont {Zeilinger},\ and\ \citenamefont {\ifmmode~\dot{Z}\else
  \.{Z}\fi{}ukowski}}]{pan2012multiphoton}%
  \BibitemOpen
  \bibfield  {author} {\bibinfo {author} {\bibfnamefont {J.-W.}\ \bibnamefont
  {Pan}}, \bibinfo {author} {\bibfnamefont {Z.-B.}\ \bibnamefont {Chen}},
  \bibinfo {author} {\bibfnamefont {C.-Y.}\ \bibnamefont {Lu}}, \bibinfo
  {author} {\bibfnamefont {H.}~\bibnamefont {Weinfurter}}, \bibinfo {author}
  {\bibfnamefont {A.}~\bibnamefont {Zeilinger}},\ and\ \bibinfo {author}
  {\bibfnamefont {M.}~\bibnamefont {\ifmmode~\dot{Z}\else \.{Z}\fi{}ukowski}},\
  }\bibfield  {title} {\bibinfo {title} {Multiphoton entanglement and
  interferometry},\ }\href {https://doi.org/10.1103/RevModPhys.84.777}
  {\bibfield  {journal} {\bibinfo  {journal} {Rev. Mod. Phys.}\ }\textbf
  {\bibinfo {volume} {84}},\ \bibinfo {pages} {777} (\bibinfo {year}
  {2012})}\BibitemShut {NoStop}%
\bibitem [{\citenamefont {Zhou}\ \emph {et~al.}(2020)\citenamefont {Zhou},
  \citenamefont {Zeng},\ and\ \citenamefont {Liu}}]{zhou2020Single}%
  \BibitemOpen
  \bibfield  {author} {\bibinfo {author} {\bibfnamefont {Y.}~\bibnamefont
  {Zhou}}, \bibinfo {author} {\bibfnamefont {P.}~\bibnamefont {Zeng}},\ and\
  \bibinfo {author} {\bibfnamefont {Z.}~\bibnamefont {Liu}},\ }\bibfield
  {title} {\bibinfo {title} {Single-copies estimation of entanglement
  negativity},\ }\href {https://doi.org/10.1103/PhysRevLett.125.200502}
  {\bibfield  {journal} {\bibinfo  {journal} {Phys. Rev. Lett.}\ }\textbf
  {\bibinfo {volume} {125}},\ \bibinfo {pages} {200502} (\bibinfo {year}
  {2020})}\BibitemShut {NoStop}%
\bibitem [{\citenamefont {Van~Kirk}\ \emph {et~al.}(2024)\citenamefont
  {Van~Kirk}, \citenamefont {Kokail}, \citenamefont {Kunjummen}, \citenamefont
  {Hu}, \citenamefont {Teng}, \citenamefont {Cain}, \citenamefont {Taylor},
  \citenamefont {Yelin}, \citenamefont {Pichler},\ and\ \citenamefont
  {Lukin}}]{van2024derandomized}%
  \BibitemOpen
  \bibfield  {author} {\bibinfo {author} {\bibfnamefont {K.}~\bibnamefont
  {Van~Kirk}}, \bibinfo {author} {\bibfnamefont {C.}~\bibnamefont {Kokail}},
  \bibinfo {author} {\bibfnamefont {J.}~\bibnamefont {Kunjummen}}, \bibinfo
  {author} {\bibfnamefont {H.-Y.}\ \bibnamefont {Hu}}, \bibinfo {author}
  {\bibfnamefont {Y.}~\bibnamefont {Teng}}, \bibinfo {author} {\bibfnamefont
  {M.}~\bibnamefont {Cain}}, \bibinfo {author} {\bibfnamefont {J.}~\bibnamefont
  {Taylor}}, \bibinfo {author} {\bibfnamefont {S.~F.}\ \bibnamefont {Yelin}},
  \bibinfo {author} {\bibfnamefont {H.}~\bibnamefont {Pichler}},\ and\ \bibinfo
  {author} {\bibfnamefont {M.}~\bibnamefont {Lukin}},\ }\bibfield  {title}
  {\bibinfo {title} {{Derandomized shallow shadows: Efficient Pauli learning
  with bounded-depth circuits}},\ }\href {https://arxiv.org/abs/2412.18973}
  {\bibfield  {journal} {\bibinfo  {journal} {arXiv:2412.18973}\ } (\bibinfo
  {year} {2024})}\BibitemShut {NoStop}%
\bibitem [{\citenamefont {Onorati}\ \emph {et~al.}(2024)\citenamefont
  {Onorati}, \citenamefont {Kitzinger}, \citenamefont {Helsen}, \citenamefont
  {Ioannou}, \citenamefont {Werner}, \citenamefont {Roth},\ and\ \citenamefont
  {Eisert}}]{Onorati2024Noise}%
  \BibitemOpen
  \bibfield  {author} {\bibinfo {author} {\bibfnamefont {E.}~\bibnamefont
  {Onorati}}, \bibinfo {author} {\bibfnamefont {J.}~\bibnamefont {Kitzinger}},
  \bibinfo {author} {\bibfnamefont {J.}~\bibnamefont {Helsen}}, \bibinfo
  {author} {\bibfnamefont {M.}~\bibnamefont {Ioannou}}, \bibinfo {author}
  {\bibfnamefont {A.}~\bibnamefont {Werner}}, \bibinfo {author} {\bibfnamefont
  {I.}~\bibnamefont {Roth}},\ and\ \bibinfo {author} {\bibfnamefont
  {J.}~\bibnamefont {Eisert}},\ }\bibfield  {title} {\bibinfo {title}
  {Noise-mitigated randomized measurements and self-calibrating shadow
  estimation},\ }\href {https://arxiv.org/abs/2403.04751} {\bibfield  {journal}
  {\bibinfo  {journal} {arXiv:2403.04751}\ } (\bibinfo {year}
  {2024})}\BibitemShut {NoStop}%
\bibitem [{\citenamefont {Gross}\ \emph {et~al.}(2010)\citenamefont {Gross},
  \citenamefont {Liu}, \citenamefont {Flammia}, \citenamefont {Becker},\ and\
  \citenamefont {Eisert}}]{Compressed}%
  \BibitemOpen
  \bibfield  {author} {\bibinfo {author} {\bibfnamefont {D.}~\bibnamefont
  {Gross}}, \bibinfo {author} {\bibfnamefont {Y.-K.}\ \bibnamefont {Liu}},
  \bibinfo {author} {\bibfnamefont {S.~T.}\ \bibnamefont {Flammia}}, \bibinfo
  {author} {\bibfnamefont {S.}~\bibnamefont {Becker}},\ and\ \bibinfo {author}
  {\bibfnamefont {J.}~\bibnamefont {Eisert}},\ }\bibfield  {title} {\bibinfo
  {title} {Quantum state tomography via compressed sensing},\ }\href
  {https://doi.org/10.1103/PhysRevLett.105.150401} {\bibfield  {journal}
  {\bibinfo  {journal} {Phys. Rev. Lett.}\ }\textbf {\bibinfo {volume} {105}},\
  \bibinfo {pages} {150401} (\bibinfo {year} {2010})}\BibitemShut {NoStop}%
\bibitem [{\citenamefont {Chen}\ \emph {et~al.}(2025)\citenamefont {Chen},
  \citenamefont {Wang}, \citenamefont {Yu},\ and\ \citenamefont
  {Zhang}}]{Chen2025SimultaneousEstimation}%
  \BibitemOpen
  \bibfield  {author} {\bibinfo {author} {\bibfnamefont {K.}~\bibnamefont
  {Chen}}, \bibinfo {author} {\bibfnamefont {Q.}~\bibnamefont {Wang}}, \bibinfo
  {author} {\bibfnamefont {Z.}~\bibnamefont {Yu}},\ and\ \bibinfo {author}
  {\bibfnamefont {Z.}~\bibnamefont {Zhang}},\ }\bibfield  {title} {\bibinfo
  {title} {Simultaneous estimation of nonlinear functionals of a quantum
  state},\ }\href {https://arxiv.org/abs/2505.16715} {\bibfield  {journal}
  {\bibinfo  {journal} {arXiv:2505.16715}\ } (\bibinfo {year}
  {2025})}\BibitemShut {NoStop}%
\bibitem [{\citenamefont {Zhang}\ \emph
  {et~al.}(2025{\natexlab{b}})\citenamefont {Zhang}, \citenamefont {Wu},
  \citenamefont {Zhou},\ and\ \citenamefont
  {Yuan}}]{Zhang2025MeasuringLearnMoreQuadratic}%
  \BibitemOpen
  \bibfield  {author} {\bibinfo {author} {\bibfnamefont {Y.}~\bibnamefont
  {Zhang}}, \bibinfo {author} {\bibfnamefont {Y.}~\bibnamefont {Wu}}, \bibinfo
  {author} {\bibfnamefont {Y.}~\bibnamefont {Zhou}},\ and\ \bibinfo {author}
  {\bibfnamefont {X.}~\bibnamefont {Yuan}},\ }\bibfield  {title} {\bibinfo
  {title} {Measuring less to learn more: Quadratic speedup in learning
  nonlinear properties of quantum density matrices},\ }\href
  {https://arxiv.org/abs/2509.01571} {\bibfield  {journal} {\bibinfo  {journal}
  {arXiv:2509.01571}\ } (\bibinfo {year} {2025}{\natexlab{b}})}\BibitemShut
  {NoStop}%
\bibitem [{\citenamefont {Tang}\ \emph {et~al.}(2025)\citenamefont {Tang},
  \citenamefont {Wright},\ and\ \citenamefont
  {Zhandry}}]{Tang2025ConjugateQueries}%
  \BibitemOpen
  \bibfield  {author} {\bibinfo {author} {\bibfnamefont {E.}~\bibnamefont
  {Tang}}, \bibinfo {author} {\bibfnamefont {J.}~\bibnamefont {Wright}},\ and\
  \bibinfo {author} {\bibfnamefont {M.}~\bibnamefont {Zhandry}},\ }\bibfield
  {title} {\bibinfo {title} {Conjugate queries can help},\ }\href
  {https://arxiv.org/abs/2510.07622} {\bibfield  {journal} {\bibinfo  {journal}
  {arXiv:2510.07622}\ } (\bibinfo {year} {2025})}\BibitemShut {NoStop}%
\bibitem [{\citenamefont {Chen}\ \emph {et~al.}(2021)\citenamefont {Chen},
  \citenamefont {Cotler}, \citenamefont {Huang},\ and\ \citenamefont
  {Li}}]{Chen2021HierarchyReplicaQuantumAdvantage}%
  \BibitemOpen
  \bibfield  {author} {\bibinfo {author} {\bibfnamefont {S.}~\bibnamefont
  {Chen}}, \bibinfo {author} {\bibfnamefont {J.}~\bibnamefont {Cotler}},
  \bibinfo {author} {\bibfnamefont {H.-Y.}\ \bibnamefont {Huang}},\ and\
  \bibinfo {author} {\bibfnamefont {J.}~\bibnamefont {Li}},\ }\bibfield
  {title} {\bibinfo {title} {A hierarchy for replica quantum advantage},\
  }\href {https://arxiv.org/abs/2111.05874} {\bibfield  {journal} {\bibinfo
  {journal} {arXiv:2111.05874}\ } (\bibinfo {year} {2021})}\BibitemShut
  {NoStop}%
\bibitem [{\citenamefont {Javadi-Abhari}\ \emph {et~al.}(2024)\citenamefont
  {Javadi-Abhari}, \citenamefont {Treinish}, \citenamefont {Krsulich},
  \citenamefont {Wood}, \citenamefont {Lishman}, \citenamefont {Gacon},
  \citenamefont {Martiel}, \citenamefont {Nation}, \citenamefont {Bishop},
  \citenamefont {Cross}, \citenamefont {Johnson},\ and\ \citenamefont
  {Gambetta}}]{qiskit2024Quantum}%
  \BibitemOpen
  \bibfield  {author} {\bibinfo {author} {\bibfnamefont {A.}~\bibnamefont
  {Javadi-Abhari}}, \bibinfo {author} {\bibfnamefont {M.}~\bibnamefont
  {Treinish}}, \bibinfo {author} {\bibfnamefont {K.}~\bibnamefont {Krsulich}},
  \bibinfo {author} {\bibfnamefont {C.~J.}\ \bibnamefont {Wood}}, \bibinfo
  {author} {\bibfnamefont {J.}~\bibnamefont {Lishman}}, \bibinfo {author}
  {\bibfnamefont {J.}~\bibnamefont {Gacon}}, \bibinfo {author} {\bibfnamefont
  {S.}~\bibnamefont {Martiel}}, \bibinfo {author} {\bibfnamefont {P.~D.}\
  \bibnamefont {Nation}}, \bibinfo {author} {\bibfnamefont {L.~S.}\
  \bibnamefont {Bishop}}, \bibinfo {author} {\bibfnamefont {A.~W.}\
  \bibnamefont {Cross}}, \bibinfo {author} {\bibfnamefont {B.~R.}\ \bibnamefont
  {Johnson}},\ and\ \bibinfo {author} {\bibfnamefont {J.~M.}\ \bibnamefont
  {Gambetta}},\ }\bibfield  {title} {\bibinfo {title} {Quantum computing with
  qiskit},\ }\href {https://arxiv.org/abs/2405.08810} {\bibfield  {journal}
  {\bibinfo  {journal} {arXiv:2405.08810}\ } (\bibinfo {year}
  {2024})}\BibitemShut {NoStop}%
\bibitem [{\citenamefont {Mele}(2024)}]{Mele2024introductiontohaar}%
  \BibitemOpen
  \bibfield  {author} {\bibinfo {author} {\bibfnamefont {A.~A.}\ \bibnamefont
  {Mele}},\ }\bibfield  {title} {\bibinfo {title} {Introduction to {H}aar
  {m}easure {t}ools in {q}uantum {i}nformation: {A} {b}eginner's {t}utorial},\
  }\href {https://doi.org/10.22331/q-2024-05-08-1340} {\bibfield  {journal}
  {\bibinfo  {journal} {{Quantum}}\ }\textbf {\bibinfo {volume} {8}},\ \bibinfo
  {pages} {1340} (\bibinfo {year} {2024})}\BibitemShut {NoStop}%
\bibitem [{\citenamefont {Weingarten}(1978)}]{Weingarten1978Asymptotic}%
  \BibitemOpen
  \bibfield  {author} {\bibinfo {author} {\bibfnamefont {D.}~\bibnamefont
  {Weingarten}},\ }\bibfield  {title} {\bibinfo {title} {Asymptotic behavior of
  group integrals in the limit of infinite rank},\ }\href
  {https://doi.org/10.1063/1.523807} {\bibfield  {journal} {\bibinfo  {journal}
  {J. Math. Phys.}\ }\textbf {\bibinfo {volume} {19}},\ \bibinfo {pages} {999}
  (\bibinfo {year} {1978})}\BibitemShut {NoStop}%
\bibitem [{\citenamefont {Collins}(2003)}]{Collins2003Moments}%
  \BibitemOpen
  \bibfield  {author} {\bibinfo {author} {\bibfnamefont {B.}~\bibnamefont
  {Collins}},\ }\bibfield  {title} {\bibinfo {title} {Moments and cumulants of
  polynomial random variables on unitarygroups, the itzykson-zuber integral,
  and free probability},\ }\href {https://doi.org/10.1155/S107379280320917X}
  {\bibfield  {journal} {\bibinfo  {journal} {Int. Math. Res. Not.}\ }\textbf
  {\bibinfo {volume} {2003}},\ \bibinfo {pages} {953} (\bibinfo {year}
  {2003})}\BibitemShut {NoStop}%
\bibitem [{\citenamefont
  {Köstenberger}(2021)}]{köstenberger2021weingartencalculus}%
  \BibitemOpen
  \bibfield  {author} {\bibinfo {author} {\bibfnamefont {G.}~\bibnamefont
  {Köstenberger}},\ }\bibfield  {title} {\bibinfo {title} {Weingarten
  calculus},\ }\href {https://arxiv.org/abs/2101.00921} {\bibfield  {journal}
  {\bibinfo  {journal} {arXiv:2101.00921}\ } (\bibinfo {year}
  {2021})}\BibitemShut {NoStop}%
\bibitem [{\citenamefont {Liu}\ \emph {et~al.}(2022)\citenamefont {Liu},
  \citenamefont {Tang}, \citenamefont {Dai}, \citenamefont {Liu}, \citenamefont
  {Chen},\ and\ \citenamefont {Ma}}]{liu2022permutation}%
  \BibitemOpen
  \bibfield  {author} {\bibinfo {author} {\bibfnamefont {Z.}~\bibnamefont
  {Liu}}, \bibinfo {author} {\bibfnamefont {Y.}~\bibnamefont {Tang}}, \bibinfo
  {author} {\bibfnamefont {H.}~\bibnamefont {Dai}}, \bibinfo {author}
  {\bibfnamefont {P.}~\bibnamefont {Liu}}, \bibinfo {author} {\bibfnamefont
  {S.}~\bibnamefont {Chen}},\ and\ \bibinfo {author} {\bibfnamefont
  {X.}~\bibnamefont {Ma}},\ }\bibfield  {title} {\bibinfo {title} {Detecting
  entanglement in quantum many-body systems via permutation moments},\ }\href
  {https://doi.org/10.1103/PhysRevLett.129.260501} {\bibfield  {journal}
  {\bibinfo  {journal} {Phys. Rev. Lett.}\ }\textbf {\bibinfo {volume} {129}},\
  \bibinfo {pages} {260501} (\bibinfo {year} {2022})}\BibitemShut {NoStop}%
\bibitem [{\citenamefont {Bittel}\ \emph {et~al.}(2025)\citenamefont {Bittel},
  \citenamefont {Eisert}, \citenamefont {Leone}, \citenamefont {Mele},\ and\
  \citenamefont {Oliviero}}]{Bittel2025Complete}%
  \BibitemOpen
  \bibfield  {author} {\bibinfo {author} {\bibfnamefont {L.}~\bibnamefont
  {Bittel}}, \bibinfo {author} {\bibfnamefont {J.}~\bibnamefont {Eisert}},
  \bibinfo {author} {\bibfnamefont {L.}~\bibnamefont {Leone}}, \bibinfo
  {author} {\bibfnamefont {A.~A.}\ \bibnamefont {Mele}},\ and\ \bibinfo
  {author} {\bibfnamefont {S.~F.~E.}\ \bibnamefont {Oliviero}},\ }\bibfield
  {title} {\bibinfo {title} {A complete theory of the {{Clifford}} commutant},\
  }\href {https://arxiv.org/abs/2504.12263} {\bibfield  {journal} {\bibinfo
  {journal} {arXiv:2504.12263}\ } (\bibinfo {year} {2025})}\BibitemShut
  {NoStop}%
\bibitem [{\citenamefont {Roth}\ \emph {et~al.}(2018)\citenamefont {Roth},
  \citenamefont {Kueng}, \citenamefont {Kimmel}, \citenamefont {Liu},
  \citenamefont {Gross}, \citenamefont {Eisert},\ and\ \citenamefont
  {Kliesch}}]{Roth2018Recovering}%
  \BibitemOpen
  \bibfield  {author} {\bibinfo {author} {\bibfnamefont {I.}~\bibnamefont
  {Roth}}, \bibinfo {author} {\bibfnamefont {R.}~\bibnamefont {Kueng}},
  \bibinfo {author} {\bibfnamefont {S.}~\bibnamefont {Kimmel}}, \bibinfo
  {author} {\bibfnamefont {Y.-K.}\ \bibnamefont {Liu}}, \bibinfo {author}
  {\bibfnamefont {D.}~\bibnamefont {Gross}}, \bibinfo {author} {\bibfnamefont
  {J.}~\bibnamefont {Eisert}},\ and\ \bibinfo {author} {\bibfnamefont
  {M.}~\bibnamefont {Kliesch}},\ }\bibfield  {title} {\bibinfo {title}
  {Recovering {{quantum gates}} from {{few average gate fidelities}}},\ }\href
  {https://doi.org/10.1103/PhysRevLett.121.170502} {\bibfield  {journal}
  {\bibinfo  {journal} {Phys. Rev. Lett.}\ }\textbf {\bibinfo {volume} {121}},\
  \bibinfo {pages} {170502} (\bibinfo {year} {2018})}\BibitemShut {NoStop}%
\end{thebibliography}%

\clearpage

\appendix
\onecolumngrid

\section{Preliminaries}
\subsection{Random unitaries} \label{app:random_unitaries}
In this section, we will briefly introduce the foundations of random unitary integral, especially the twirling function. One could refer to the review article  Ref.~\cite{Mele2024introductiontohaar} for more information on random unitaries.
A unitary distribution is called Haar 
measure if and only if the following holds
\begin{equation}
\int_{\mathrm{Haar}}f(U)dU=\int_{\mathrm{Haar}}f(UV)dU=\int_{\mathrm{Haar}}f(VU)dU
\end{equation}
for arbitrary continuous function $f(\cdot)$ and unitary $V$.
According to the Schur-Weyl duality, the $t$-th order integral over Haar random unitaries is related to the $t$-th order permutation group. For $X\in\mathcal{L}\left(\left(\mathbb{C}^d\right)^{\otimes t}\right)$, we define its $t$-th order twirling and write it as a linear combination of permutation operators
\begin{equation}\label{eq:t-th_order_twirling}
\Phi_t(X)\coloneqq\int_\mathrm{Haar}U^{\otimes t}XU^{\dagger\otimes t}dU=\sum\limits_{\pi,\sigma\in S_t}\text{Wg}(\pi^{-1}\sigma,d)\Tr\left(\mathbb{P}^t_{\sigma^{-1}}X\right)\mathbb{P}^t_{\pi},
\end{equation}
where $\mathbb{P}^t_{\pi}\in\mathcal{L}\left(\left(\mathbb{C}^d\right)^{\otimes t}\right)$ acts as permutation $\pi\in S_t$ on $t$ states in $\mathbb{C}^d$ according to
\begin{equation}
\mathbb{P}^t_{\pi}=\sum\limits_{s_1,\dots,s_t}\ket{s_{\pi^{-1}(1)},\dots,s_{\pi^{-1}(t)}}\bra{s_1,\dots,s_t}.
\end{equation}
The Weingarten coefficient $\text{Wg}(\pi^{-1}\sigma,d)$~\cite{Weingarten1978Asymptotic,Collins2003Moments,köstenberger2021weingartencalculus} is defined as the element in the pseudo-inverse of the Gram matrix,
\begin{equation}\label{eq:weingarten}
\begin{aligned}
\text{Wg}(\pi^{-1}\sigma,d)&\coloneqq\left(G^+\right)_{\pi,\sigma},
\\
\text{where } \,G_{\pi,\sigma}&=\Tr\left(\mathbb{P}^t_{\pi^{-1}}\mathbb{P}^t_\sigma\right).
\end{aligned}
\end{equation}
The design of post-processing functions in the RM protocol aims to construct a suitable operator $X_t$, such that $\Phi_t(X_t)$ is a combination of desired permutation operators. For example, when estimating purities, we want to realize $\Phi_2(X_2)=\mathbb{S}$, which is then reduced to solving the set
\begin{equation}
\begin{aligned}
\Tr(X_2)&=d,\\
\Tr(\mathbb{S}X_2)&=d^2
\end{aligned}
\end{equation}of linear equations~\cite{elben2019toolbox}.
A direct construction is to make $X_2$ diagonal as $X_2=\sum\limits_{s_1,s_2}X(s_1,s_2)\ket{s_1,s_2}\bra{s_1,s_2}$ and determine a function $X_2(s_1,s_2)$ satisfying
\begin{equation}
\begin{aligned}
\sum\limits_{s_1,s_2}X_2(s_1,s_2)&=d,\\
\sum\limits_{s_1}X_2(s_1,s_1)&=d^2.
\end{aligned}
\end{equation}
The function in Eq.~\eqref{eq:RRM_coef_purity} provides such a solution. Furthermore, the function $X_3$ defined in Eq.~\eqref{eq:X3_coef} constructs an operator whose third-order twirling evaluates as a linear combination of cyclic permutation operators. Similarly, one can generalize this theory to design functions for constructing high-order permutation operators. It is worth mentioning that the inherent symmetry of the symmetric group may forbid a solution to equations corresponding to an arbitrary combination of permutation operators in Eq.~\eqref{eq:t-th_order_twirling}. One example is that when $X_3$ is subjected to a diagonal operator, the two cyclic permutation operators $\mathbb{P}^3_{(123)},\mathbb{P}^3_{(132)}$ always share the same coefficients~\cite{zhou2020Single}.

\subsection{Median-of-means method} \label{app:median-of-mean}
Here, we briefly introduce how the median-of-means method exponentially suppresses the failure probability. The median-of-means estimator is summarized in the following lemma:

\begin{lemma}[Median-of-means estimator]\label{lem:median-of-mean}
    For given parameters $0 < \delta < 1$ and $0 < c < \tfrac{1}{2}$, let
    \begin{equation}\label{eq:def_T}
        T = \frac{1}{2(\frac{1}{2} - c)^2} \log \left(\delta^{-1}\right).
    \end{equation}
    Suppose \(\{\omega^{(t)}\}_{t=1}^T\) are independent random variables satisfying $\mathbb{E}\left[\omega^{(i)}\right] = \bar{\omega}$ and \(\mathrm{Var}\left[\omega^{(t)}\right]\le c\,\varepsilon^2\) for any $t$. Define the median-of-means estimator
    \begin{equation}
        \omega = \mathrm{median}\left\{\omega^{(1)},\omega^{(2)},\dots,\omega^{(T)}\right\}.
    \end{equation}
    Then
    \begin{equation}
        \Pr\bigl[|\omega - \bar{\omega}| > \varepsilon\bigr] \le \delta.
    \end{equation}
\end{lemma}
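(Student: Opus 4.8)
The plan is to prove this \emph{median trick} in three moves: a per-estimator Chebyshev bound, a deterministic reduction of the median's deviation to a combinatorial counting event, and a Hoeffding bound to control that event. The conceptual point to emphasize is that each $\omega^{(t)}$ is \emph{already} accurate on average (its variance is at most $c\,\varepsilon^2$, with $c<1/2$), so the role of the median is only to amplify the \emph{confidence}, not the accuracy.

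First I would apply Chebyshev's inequality to a single estimator. Since $\mathbb{E}[\omega^{(t)}] = \bar\omega$ and $\Var[\omega^{(t)}] \le c\,\varepsilon^2$,
\begin{equation}
\Pr\bigl[\,|\omega^{(t)} - \bar\omega| > \varepsilon\,\bigr] \le \frac{\Var[\omega^{(t)}]}{\varepsilon^2} \le c < \tfrac12 .
\end{equation}
I would call this event a \emph{failure} of the $t$-th estimator and write $B_t$ for its indicator. Because the $\omega^{(t)}$ are independent and each $B_t$ is a function of $\omega^{(t)}$ alone, the $B_t$ are independent $\{0,1\}$-valued variables with $\mathbb{E}[B_t] = \Pr[B_t=1] \le c$.

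Second, I would establish the deterministic inclusion $\{|\omega - \bar\omega| > \varepsilon\} \subseteq \{\sum_t B_t \ge T/2\}$. If the median $\omega$ exceeds $\bar\omega + \varepsilon$, then by the definition of the median at least $\lceil T/2\rceil$ of the sampled values lie at or above $\omega$, hence strictly above $\bar\omega + \varepsilon$, so each of those estimators fails; the symmetric argument covers $\omega < \bar\omega - \varepsilon$. Either way at least $\lceil T/2\rceil \ge T/2$ of the indicators equal one, which gives the inclusion (no union over the two tails is needed, since both cases feed into the same counting event). Finally I would bound the counting event with Hoeffding's inequality for independent variables in $[0,1]$. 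Using $\sum_t \mathbb{E}[B_t] \le cT$,
\begin{equation}
\Pr\Bigl[\sum_{t=1}^T B_t \ge \tfrac{T}{2}\Bigr] \le \Pr\Bigl[\sum_{t=1}^T (B_t - \mathbb{E}[B_t]) \ge (\tfrac12 - c)T\Bigr] \le \exp\bigl(-2T(\tfrac12 - c)^2\bigr),
\end{equation}
and substituting $T = \tfrac{1}{2(1/2 - c)^2}\log(\delta^{-1})$ collapses the right-hand side to exactly $\delta$ (with $\log$ the natural logarithm). Chaining the inclusion with this bound yields $\Pr[|\omega - \bar\omega| > \varepsilon] \le \delta$.

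The argument is standard, and the only genuinely delicate part is matching the stated constant $1/(2(1/2-c)^2)$ in $T$. This is what forces the use of Hoeffding rather than a cruder binomial Chernoff bound: Hoeffding's variance-free exponent $\exp(-2T(\tfrac12-c)^2)$ is precisely what cancels against the prescribed $T$ to give the clean equality with $\delta$, whereas a different concentration inequality would alter the constant. A secondary point requiring care is the even/odd convention for the median in the counting step, but in every case at least half the estimators must fail, so the reduction goes through unchanged.
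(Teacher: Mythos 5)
Your proof is correct and follows essentially the same route as the paper's: a Chebyshev bound on each estimator's failure probability, the inclusion of the median's deviation event in the event that at least half the estimators fail, and Hoeffding's inequality with the prescribed $T$ yielding exactly $\delta$. The only differences are cosmetic — you spell out the median-counting step and the even/odd convention that the paper leaves implicit, and you correctly name Chebyshev where the paper invokes Markov (equivalent here, as Chebyshev is Markov applied to the squared deviation).
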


\begin{proof}
    The event \(\{|\omega - \bar{\omega}| > \varepsilon\}\) is contained by the event
    \begin{equation}
        \frac{1}{T} \sum\limits_{t=1}^T \mathbbm{1}\bigl(|\omega^{(t)} - \bar{\omega}| > \varepsilon\bigr) 
        \ge \frac{1}{2},
    \end{equation}
    where \(\mathbbm{1}(\cdot)\) is the indicator function. Define
    \begin{equation}
        X_t = \frac1T\mathbbm{1}\bigl(|\omega^{(t)} - \bar{\omega}| > \varepsilon\bigr).
    \end{equation}
    Then \(X_t\) are independent random variables taking values in \(\{0,\frac1T\}\). By Markov's inequality,
    \begin{equation}
        \Pr\left[|\omega^{(t)} - \bar{\omega}| > \varepsilon\right]
        \le \frac{\Var[\omega^{(t)}]}{\varepsilon^2}
        \le c.
    \end{equation}
    Hence, \(\mathbb{E}[X_t] \le \frac cT\) and $\mathbb{E}\left[\sum\limits_{t=1}^T X_t\right]\leq c$. Applying Hoeffding's inequality, we have
    \begin{equation}
        \Pr\left[
          \sum\limits_{t=1}^T X_t \ge \frac{1}{2}
        \right]
        \leq
        \Pr\left[
          \sum\limits_{t=1}^T X_t-\mathbb{E}\left[\sum\limits_{t=1}^T X_t\right] \ge \frac{1}{2}-c
        \right]
        \le
        \exp\bigl(-2\bigl(\tfrac{1}{2}-c\bigr)^2\,T\bigr).
    \end{equation}
    Substituting
    \begin{equation}
        T = \frac{1}{2(\frac{1}{2} - c)^2} \log \left(\delta^{-1}\right)
    \end{equation}
    completes the proof.
\end{proof}

Now suppose we have a protocol that produces an estimator \(\omega^{(t)}\) with expectation \(\bar{\omega}\) and variance
\begin{equation}
    \Var\left[\omega^{(t)}\right] \le c\varepsilon^2.
\end{equation}
We run this protocol \(T = \cO(\log(\delta^{-1}))\) times (given by~\eqref{eq:def_T}), thereby obtaining estimator \(\omega\) to be the median of these \(T\) estimators:
\begin{equation}
    \omega = \mathrm{median}\left\{\omega^{(1)},\omega^{(2)},\dots,\omega^{(T)}\right\}.
\end{equation}
By Lemma~\ref{lem:median-of-mean}, \(\omega\) satisfies
\begin{equation}
    \Pr\left[|\omega-\bar{\omega}|\ge\varepsilon\right] \le\delta.
\end{equation}
Thus, the median-of-means method exponentially suppresses the failure probability compared to the average estimator $\frac1T\sum\limits_{t=1}^T\omega^{(t)}$.

\subsection{Classical shadows} \label{app:classical_shadow}

Here, we briefly review the classical shadows using global random unitaries introduced in Ref.~\cite{huang2020predicting} and discuss its relevant performance guarantees. In this scheme, one performs $N_s$ single-copy experiments on the input state $\rho$. In each round, one draws a global random unitary $U$ from a unitary 3-design and applies it to the state: $\rho \rightarrow U\rho U^{\dagger}$. A computational-basis measurement is then performed on $U\rho U^{\dagger}$. 
Suppose the measurement outcome for the $i$th unitary $U_i$ is $\ket{b_i}$, a ``classical shadow'' of $\rho$ is constructed as 
\begin{equation}\label{eq:classical_shadow}
    \hat{\rho}_i = (2^n+1)U_i^{\dagger}\ketbra{b_i} U_i - \mathbb{I},
\end{equation}
which satisfies $\mathbb{E}[\hat{\rho}_i] = \rho$.
To estimate $\Tr(O\rho^2)$, one defines the estimator~\cite{huang2020predicting}
\begin{equation}
    \omega_{\mathrm{CS}} = \binom{N_s}{2}^{-1} \sum\limits_{1 \le i < j \le N_s} \Tr(O_{\sym} \hat{\rho}_i \otimes \hat{\rho}_j),
\end{equation}
where 
\begin{equation}\label{eq:O_sym}
    O_{\sym} = \frac{1}{2} \mathbb{S} (O \otimes \mathbb{I} + \mathbb{I} \otimes O)
\end{equation}
is the symmetrized version of $O$, and $\mathbb{S}$ denotes the swap operator acting on $\left(\mathbb{C}^{2^n}\right)^{\otimes2}$. It follows that
\begin{equation}
    \mathbb{E}[\omega_{\mathrm{CS}}] = \binom{N_s}{2}^{-1}\sum\limits_{1 \le i < j \le N_s} \Tr(O_{\sym} \mathbb{E}[\hat{\rho}_i] \otimes \mathbb{E}[\hat{\rho}_j]) = \binom{N_s}{2}^{-1}\sum\limits_{1 \le i < j \le N_s} \Tr(O_{\sym} \rho \otimes \rho) = \Tr(O\rho^2).
\end{equation}
Hence, $\omega_{\mathrm{CS}}$ is an unbiased estimator of $\Tr(O\rho^2)$. Next, we show that the associated variance obeys the following bound.

\begin{lemma}\label{lem:variance_shadow_second_moment}
    The variance of the estimator $\omega_{\mathrm{CS}}$ satisfies
    \begin{equation}
        \Var[\omega_{\mathrm{CS}}] = \frac{1}{N_s^2}\cO(\Tr(O)^2 + d\Tr(O^2)) + \frac{1}{N_s} \cO(\norm{O}_{\infty}^2)
    \end{equation}
\end{lemma}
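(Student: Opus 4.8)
The plan is to recognize $\omega_{\mathrm{CS}}$ as a second-order U-statistic with a symmetric kernel and to control its variance through the standard decomposition by index overlap. Writing $Y_{ij} = \Tr(O_{\sym}\,\hat\rho_i\otimes\hat\rho_j)$, symmetry of $O_{\sym}$ under exchange of its two tensor factors gives $Y_{ij}=Y_{ji}$, while the shadows $\hat\rho_1,\dots,\hat\rho_{N_s}$ are i.i.d.\ with $\mathbb{E}[\hat\rho_i]=\rho$. Then
\begin{equation}
\Var[\omega_{\mathrm{CS}}] = \binom{N_s}{2}^{-2}\sum_{i<j}\sum_{k<l}\mathrm{Cov}(Y_{ij},Y_{kl}).
\end{equation}
Classifying the pairs by their overlap $|\{i,j\}\cap\{k,l\}|\in\{0,1,2\}$, the disjoint ($0$-overlap) terms vanish by independence, leaving the $\binom{N_s}{2}$ coincident pairs, each contributing $\zeta_2 := \Var[Y_{12}]$, and the $\binom{N_s}{2}\cdot 2(N_s-2)$ single-overlap pairs, each contributing $\zeta_1 := \mathrm{Cov}(Y_{12},Y_{13})$. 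Collecting the combinatorial prefactors yields $\Var[\omega_{\mathrm{CS}}] = \tfrac{1}{N_s^2}\cO(\zeta_2) + \tfrac{1}{N_s}\cO(\zeta_1)$, so the task reduces to bounding $\zeta_1$ and $\zeta_2$.

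For $\zeta_1$ I would first compute the conditional mean $h_1(\hat\rho_1) := \mathbb{E}_{\hat\rho_2}[Y_{12}] = \Tr(O_{\sym}\,\hat\rho_1\otimes\rho)$. Using $O_{\sym}=\tfrac12\mathbb{S}(O\otimes\mathbb{I}+\mathbb{I}\otimes O)$ and $\Tr[\mathbb{S}(A\otimes B)]=\Tr(AB)$, this collapses to a \emph{linear} functional $h_1(\hat\rho_1)=\Tr(A\hat\rho_1)$ with $A=\tfrac12(O\rho+\rho O)$. Since $\hat\rho_2,\hat\rho_3$ are independent with mean $\rho$, one checks $\zeta_1=\Var[h_1(\hat\rho_1)]$, which is exactly the single-shot variance of the global-shadow linear estimator of $\Tr(A\rho)$ and is bounded by the shadow norm $\|A_0\|_{\mathrm{shadow}}^2=\cO(\Tr(A_0^2))$, where $A_0$ is the traceless part of $A$~\cite{huang2020predicting}. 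Finally $\Tr(A_0^2)\le\Tr(A^2)=\tfrac12[\Tr(O\rho O\rho)+\Tr(O^2\rho^2)]\le\|O\|_{\infty}^2\Tr(\rho^2)=\cO(\|O\|_{\infty}^2)$, using Cauchy--Schwarz in the Hilbert--Schmidt inner product for the cross term, which supplies the $\tfrac{1}{N_s}\cO(\|O\|_{\infty}^2)$ contribution.

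For $\zeta_2=\mathbb{E}[Y_{12}^2]-\Tr(O\rho^2)^2$ I would write $Y_{12}^2$ as a single trace of $O_{\sym}\otimes O_{\sym}$ against four tensor slots, permute the slots so that the two copies of $\hat\rho_1$ and the two copies of $\hat\rho_2$ are grouped, and then factor the expectation by independence into $\mathbb{E}[\hat\rho\otimes\hat\rho]\otimes\mathbb{E}[\hat\rho\otimes\hat\rho]$. The remaining ingredient is the global-shadow second-moment operator $\mathbb{E}[\hat\rho\otimes\hat\rho]$; its evaluation is a \emph{third}-order unitary twirl (the extra order arising from the Born-rule weight $\bra{b}U\rho U^\dagger\ket{b}$, which is precisely why a $3$-design is required), expressible through the Weingarten expansion of Eq.~\eqref{eq:t-th_order_twirling} as a combination of $\rho\otimes\rho$, $\rho\otimes\mathbb{I}$, $\mathbb{I}\otimes\rho$, $\mathbb{I}$ and the swap $\mathbb{S}$ with $d$-dependent coefficients. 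Contracting these permutation structures with $O_{\sym}\otimes O_{\sym}$ produces scalars of the type $\Tr(O)^2$, $d\,\Tr(O^2)$, and lower-order terms; retaining the dominant pieces gives $\zeta_2=\cO(\Tr(O)^2+d\,\Tr(O^2))$, hence the $\tfrac{1}{N_s^2}\cO(\Tr(O)^2+d\,\Tr(O^2))$ contribution.

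The main obstacle is the explicit evaluation of $\zeta_2$: assembling the third-order twirl for $\mathbb{E}[\hat\rho\otimes\hat\rho]$, correctly bookkeeping the tensor-slot permutation induced by squaring $Y_{12}$, and contracting against $O_{\sym}\otimes O_{\sym}$ while tracking which permutation pairings generate the leading $d\,\Tr(O^2)$ and $\Tr(O)^2$ terms as opposed to subleading ones that can be absorbed into the $\cO$. By contrast, the $\zeta_1$ bound is essentially a direct invocation of the known single-shot shadow-norm estimate, so the difficulty is concentrated entirely in the second-moment contraction.
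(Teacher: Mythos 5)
Your proposal is correct and follows essentially the same route as the paper: the overlap decomposition of the U-statistic variance is the content of the cited Lemma S5 of Ref.~\cite{huang2020predicting}, and your treatment of $\zeta_1$ (linearize to $\Tr(A\hat\rho_1)$ with $A=\tfrac12(O\rho+\rho O)$, invoke the single-shot shadow-norm bound, then control $\Tr(O\rho O\rho)$ and $\Tr(O^2\rho^2)$ by $\norm{O}_\infty^2$) is exactly the paper's argument via its auxiliary lemma. The step you single out as the main obstacle---the third-order twirl needed for $\zeta_2$---is precisely the cited Lemma S6 of Ref.~\cite{huang2020predicting}, which gives $\zeta_2=\cO(\Tr(O_{\sym}^2))=\cO(\Tr(O)^2+d\,\Tr(O^2))$ directly, so it need not be rederived.
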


We postpone the proof of Lemma~\ref{lem:variance_shadow_second_moment} and first discuss its implications. When $O = \mathbb{I}$, the classical-shadow estimator provides an estimation of state purity $\Tr(\rho^2)$, yielding $\Var[\omega_{\mathrm{CS}}] = \cO(\frac{d^2}{N_s^2} + \frac{1}{N_s})$.
To estimate $\Tr(\rho^2)$ to constant additive error with high probability, the variance should be suppressed to a constant. This implies a sample complexity of $N_s = \cO(d)$, which is worse than the $\cO(\sqrt{d})$ scaling of the RM protocol in Sec.~\ref{sec:pre}. For a general traceless observable $O_0$ satisfying $\Tr(O_0) = 0$, achieving a constant-precision estimate of $\Tr(O\rho^2_0)$ requires a sample complexity
\begin{equation}
     N_s = \cO\left(\sqrt{d\,\Tr(O_0^2)}\right).
\end{equation}

Before proving Lemma~\ref{lem:variance_shadow_second_moment}, we first establish the following auxiliary lemma.

\begin{lemma}[Auxiliary lemma]\label{lem:O2}
For any state $\rho$ and Hermitian observable $O$ act on $\rho$, $\Tr(\rho O \rho O) \le \norm{O^2}_{\infty}$.
\end{lemma}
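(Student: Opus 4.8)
The plan is to reduce the quadratic expression to a linear one by exploiting positivity. First I would use cyclicity of the trace to regroup
\begin{equation}
\Tr(\rho O \rho O) = \Tr\bigl((O\rho O)\,\rho\bigr).
\end{equation}
The key structural observation is that $O\rho O \succeq 0$: since $\rho$ is a density matrix it is positive semidefinite, and conjugating a PSD operator by the Hermitian $O$ preserves positivity. This is what lets me discard the second copy of $\rho$.

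Next I would invoke the fact that $\rho$, being a density matrix, satisfies $\rho \preceq \mathbb{I}$ (its eigenvalues lie in $[0,1]$). Because the trace of a product of two positive semidefinite operators is nonnegative, we have $\Tr\bigl((O\rho O)(\mathbb{I}-\rho)\bigr) \ge 0$, and therefore
\begin{equation}
\Tr\bigl((O\rho O)\,\rho\bigr) \le \Tr\bigl((O\rho O)\,\mathbb{I}\bigr) = \Tr(O\rho O) = \Tr(\rho O^2),
\end{equation}
where the last equality is again cyclicity. This step is the heart of the argument: it trades the genuinely non-linear quantity $\Tr(\rho O \rho O)$ for the linear expectation value $\Tr(\rho O^2)$.

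Finally I would bound the linear term. Writing $O^2 \preceq \lambda_{\max}(O^2)\,\mathbb{I} = \norm{O^2}_{\infty}\,\mathbb{I}$ and using $\Tr(\rho)=1$ together with positivity of $\rho$ gives
\begin{equation}
\Tr(\rho O^2) \le \norm{O^2}_{\infty}\,\Tr(\rho) = \norm{O^2}_{\infty},
\end{equation}
which completes the chain of inequalities. An entirely equivalent route, should one prefer to avoid operator orderings, is to diagonalize $\rho = \sum_i p_i \ketbra{i}{i}$ and write $\Tr(\rho O\rho O) = \sum_{i,j} p_i p_j \abs{\bra{i}O\ket{j}}^2$; bounding $p_j \le 1$ and using $\sum_j \abs{\bra{i}O\ket{j}}^2 = \bra{i}O^2\ket{i} \le \norm{O^2}_{\infty}$ followed by $\sum_i p_i = 1$ yields the same result. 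I do not anticipate a genuine obstacle here, as the statement is elementary; the only point requiring care is justifying the monotonicity step, namely that replacing $\rho$ by $\mathbb{I}$ can only increase $\Tr\bigl((O\rho O)\,\cdot\bigr)$, which follows from the nonnegativity of the trace of a product of PSD operators.
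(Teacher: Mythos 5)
Your proof is correct, and it takes a mildly different route from the paper's. The paper first applies H\"older's inequality, $\Tr(\rho\, O\rho O) \le \norm{O\rho O}_{\infty}$, and then bounds the operator norm by expanding $\rho=\sum_i p_i \ketbra{v_i}{v_i}$, using the triangle inequality and $\norm{O\ketbra{v_i}{v_i}O}_{\infty}=\bra{v_i}O^2\ket{v_i}\le\norm{O^2}_{\infty}$; the intermediate quantity it controls is the operator norm $\norm{O\rho O}_{\infty}$. You instead stay entirely at the level of traces: from $O\rho O\succeq 0$ and $\rho\preceq\mathbb{I}$ you get $\Tr\bigl((O\rho O)\rho\bigr)\le\Tr(O\rho O)=\Tr(\rho O^2)\le\norm{O^2}_{\infty}$, so your intermediate quantity is the linear expectation value $\Tr(\rho O^2)$. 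Both arguments are elementary and rest on the same two ingredients (positivity of $O\rho O$ and normalization of $\rho$); yours avoids any appeal to convexity of the operator norm, while the paper's establishes the slightly stronger statement $\norm{O\rho O}_{\infty}\le\norm{O^2}_{\infty}$ along the way. Your alternative diagonalization argument, $\Tr(\rho O\rho O)=\sum_{i,j}p_ip_j\abs{\bra{i}O\ket{j}}^2$ with $p_j\le 1$ and $\sum_j\abs{\bra{i}O\ket{j}}^2=\bra{i}O^2\ket{i}$, is also correct and is essentially a coordinate version of the same bound. All steps you flag as needing care (positivity of the trace of a product of PSD operators) are standard and correctly justified.
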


\begin{proof}
Let $\rho=\sum\limits_i p_i\ketbra{v_i}$ be the spectral decomposition of $\rho$, then
\begin{equation}
\begin{split}
    \Tr(\rho O \rho O) &\le \norm{O \rho O}_{\infty} = \norm{\sum\limits_i p_i O \ketbra{v_i} O}_{\infty} \le \sum\limits_i p_i \norm{O\ketbra{v_i}O}_{\infty} \le \sum\limits_i p_i \norm{O^2}_{\infty} = \norm{O^2}_{\infty}.
\end{split}
\end{equation}
\end{proof}

\begin{proof}[Proof of Lemma~\ref{lem:variance_shadow_second_moment}] 
 We use the following properties of the classical shadow estimator. 
\begin{fact}[{\cite[Lemma S1]{huang2020predicting}}]\label{fact:var_o_CS}
Fix observable $O$ and set $\hat{o}=\Tr(O\hat{\rho})$, where $\rho$ is a classical shadow in Eq.~\eqref{eq:classical_shadow}. Then the variance of $\hat{o}$ satisfies
\begin{equation}
\Var[\hat{o}]\leq3\Tr(O^2).
\end{equation}
\end{fact}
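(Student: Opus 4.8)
The plan is to prove the variance bound directly for the global $3$-design ensemble by first isolating the traceless part of $O$ and then evaluating the second moment of $\hat o$ through a third-order Haar twirl. First I would reduce to a traceless observable: writing $O = O_0 + \tfrac{\Tr(O)}{d}\mathbb{I}$ with $\Tr(O_0) = 0$, I note that $\Tr(\mathbb{I}\,\hat\rho) = (d+1)\langle b|U\mathbb{I}U^\dagger|b\rangle - \Tr(\mathbb{I}) = (d+1) - d = 1$ is a deterministic constant, so the identity component contributes only the fixed number $\Tr(O)/d$ to $\hat o = \Tr(O\hat\rho)$. Hence $\Var[\hat o] = \Var[\Tr(O_0\hat\rho)]$, and since $\Tr(O_0^2) = \Tr(O^2) - \Tr(O)^2/d \le \Tr(O^2)$, it suffices to establish $\Var[\hat o]\le 3\Tr(O^2)$ under the assumption $\Tr(O) = 0$.

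Next I would expand the second moment. From Eq.~\eqref{eq:classical_shadow}, $\hat o = (d+1)\langle b|UOU^\dagger|b\rangle$ when $\Tr(O)=0$, so that $\Var[\hat o]\le\mathbb{E}[\hat o^2] = (d+1)^2\,\mathbb{E}_{U,b}\!\left[\langle b|UOU^\dagger|b\rangle^2\right]$. Since the outcome $b$ appears with probability $\langle b|U\rho U^\dagger|b\rangle$, I would rewrite the joint average as a trace against the diagonal projector $P = \sum_b\ketbra{b,b,b}{b,b,b}$, giving
\begin{equation}
\mathbb{E}[\hat o^2] = (d+1)^2\,\Tr\!\left[(\rho\otimes O\otimes O)\,\Phi_3(P)\right],
\end{equation}
where $\Phi_3$ is the third-order twirl of Eq.~\eqref{eq:t-th_order_twirling}, valid because $U$ is drawn from a $3$-design. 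The key simplification is that $P$ is invariant under every permutation of its three tensor factors, so $\Tr(\mathbb{P}^3_{\sigma^{-1}}P) = \Tr(P) = d$ for all $\sigma\in S_3$; together with the Weingarten sum rule $\sum_{\sigma\in S_3}\mathrm{Wg}(\sigma,d) = [d(d+1)(d+2)]^{-1}$ this collapses the twirl to $\Phi_3(P) = \frac{1}{(d+1)(d+2)}\sum_{\pi\in S_3}\mathbb{P}^3_\pi$, which I would sanity-check against the required $\Tr(\Phi_3(P)) = d$.

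Finally I would evaluate $\Tr[(\rho\otimes O\otimes O)\mathbb{P}^3_\pi]$ term by term using cycle decompositions. With $\Tr(O) = 0$, every permutation that isolates an $O$-slot into its own cycle vanishes, leaving only the transposition $(23)$, which yields $\Tr(O^2)$, and the two $3$-cycles, each yielding $\Tr(\rho O^2)$. This gives
\begin{equation}
\mathbb{E}[\hat o^2] = \frac{d+1}{d+2}\left[\Tr(O^2) + 2\Tr(\rho O^2)\right].
\end{equation}
I would then bound $\Tr(\rho O^2)\le\norm{O^2}_{\infty}\le\Tr(O^2)$ (in the spirit of Lemma~\ref{lem:O2}, using $O^2\succeq 0$ and that $\rho$ is a state), so that $\mathbb{E}[\hat o^2]\le\frac{d+1}{d+2}\cdot 3\Tr(O^2)\le 3\Tr(O^2)$, completing the bound. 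The hard part will be executing the $S_3$ cycle bookkeeping without error and justifying the collapse of $\Phi_3(P)$; once those are in hand, the remaining estimates are routine.
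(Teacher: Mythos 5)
Your proof is correct and takes essentially the same route as the source: the paper does not prove this statement itself but imports it as a fact from Lemma~S1 of Ref.~\cite{huang2020predicting}, whose standard argument is exactly your three steps --- reduce to the traceless part $O_0$ (legitimate since $\Tr(\mathbb{I}\hat{\rho})=1$ deterministically and $\Tr(O_0^2)\le\Tr(O^2)$), collapse the third-order twirl of $P=\sum_b\ketbra{b,b,b}{b,b,b}$ to $\frac{1}{(d+1)(d+2)}\sum_{\pi\in S_3}\mathbb{P}^3_\pi$ using permutation invariance of $P$ and the 3-design property, and bound the cross term via $\Tr(\rho O_0^2)\le\norm{O_0^2}_\infty\le\Tr(O_0^2)$. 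All intermediate identities check out ($\sum_{\sigma\in S_3}\mathrm{Wg}(\sigma,d)=[d(d+1)(d+2)]^{-1}$, the $S_3$ cycle bookkeeping giving $\Tr(O_0^2)+2\Tr(\rho O_0^2)$ since every permutation isolating an $O_0$-slot vanishes, and the prefactor $\frac{d+1}{d+2}\le 1$), so $\Var[\hat{o}]\le\mathbb{E}[\hat{o}^2]\le 3\Tr(O_0^2)\le 3\Tr(O^2)$ follows as claimed.
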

\begin{fact}[{\cite[Lemma S5]{huang2020predicting}}]\label{fact:var_omega_CS}
The variance of $\omega_{\mathrm{CS}}$ satisfies
\begin{equation}
    \Var[\omega_{\mathrm{CS}}] \le \frac{4}{N_s^2} \Var[\Tr(O_{\sym} \hat{\rho}_1 \otimes \hat{\rho_2})] + \frac{4}{N_s} \Var[\Tr(O_{\sym} \hat{\rho}_1 \otimes \rho)],
\end{equation}
where $\hat{\rho}_1$ and $\hat{\rho}_2$ are two shadow estimators defined in Eq.~\eqref{eq:classical_shadow} constructed in independent experiments.
\end{fact}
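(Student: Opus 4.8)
The plan is to treat $\omega_{\mathrm{CS}}$ as a degree-two U-statistic with the symmetric kernel $h(\hat\rho_i,\hat\rho_j)=\Tr(O_{\sym}\,\hat\rho_i\otimes\hat\rho_j)$, whose expectation over independent shadows is $\Tr(O\rho^2)$, and to \emph{derive} the decomposition of Fact~\ref{fact:var_omega_CS} rather than invoke it. First I would apply the Hoeffding (ANOVA) decomposition: writing the conditional kernel $h_1(\hat\rho_i):=\mathbb{E}_{\hat\rho_j}[h(\hat\rho_i,\hat\rho_j)]$, the independence of the shadows kills every cross term whose index sets are disjoint, leaving
\begin{equation}
\Var[\omega_{\mathrm{CS}}]=\binom{N_s}{2}^{-1}\Bigl[2(N_s-2)\,\zeta_1+\zeta_2\Bigr],
\end{equation}
with $\zeta_1=\Var[h_1(\hat\rho_1)]$ and $\zeta_2=\Var[h(\hat\rho_1,\hat\rho_2)]$. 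Since $\tfrac{N_s-2}{N_s-1}\le1$ gives $\binom{N_s}{2}^{-1}2(N_s-2)\le 4/N_s$ and $\binom{N_s}{2}^{-1}\le 4/N_s^2$, this reproduces $\Var[\omega_{\mathrm{CS}}]\le \tfrac{4}{N_s}\zeta_1+\tfrac{4}{N_s^2}\zeta_2$, so it remains to bound the two variances.

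The key simplification is a partial-trace identity. Using $O_{\sym}=\tfrac12\mathbb{S}(O\otimes\mathbb{I}+\mathbb{I}\otimes O)$ and $\mathrm{Tr}_1[\mathbb{S}(A\otimes B)]=AB$, I would contract the first shadow to obtain $h(\hat\rho_i,\hat\rho_j)=\Tr\bigl(B(\hat\rho_i)\,\hat\rho_j\bigr)$ with the \emph{Hermitian} operator $B(\hat\rho_i)=\tfrac12\{O,\hat\rho_i\}$. Taking the inner expectation yields $h_1(\hat\rho_1)=\Tr(A\,\hat\rho_1)$ with $A=\tfrac12\{O,\rho\}$, a \emph{linear} functional of a single shadow. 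Fact~\ref{fact:var_o_CS} then gives $\zeta_1\le 3\Tr(A^2)$, and expanding $\Tr(A^2)=\tfrac12[\Tr(O\rho O\rho)+\Tr(O^2\rho^2)]$ and applying Lemma~\ref{lem:O2} together with $\Tr(O^2\rho^2)\le\norm{O}_\infty^2$ bounds $\Tr(A^2)\le\norm{O}_\infty^2$. This contributes the term $\tfrac{4}{N_s}\zeta_1=\tfrac{1}{N_s}\cO(\norm{O}_\infty^2)$.

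For $\zeta_2\le\mathbb{E}[h^2]$ I would condition on $\hat\rho_1$ and apply the single-shadow variance identity to the inner shadow, $\mathbb{E}_{\hat\rho_2}[\Tr(B\hat\rho_2)^2]=\Var[\Tr(B\hat\rho_2)]+\Tr(B\rho)^2\le 3\Tr(B(\hat\rho_1)^2)+\Tr(A\hat\rho_1)^2$, so that $\mathbb{E}[h^2]\le\mathbb{E}_{\hat\rho_1}\bigl[3\Tr(B(\hat\rho_1)^2)+\Tr(A\hat\rho_1)^2\bigr]$. The second term is again a linear-functional second moment, bounded by $3\Tr(A^2)+\Tr(A\rho)^2=\cO(\norm{O}_\infty^2)$. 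The first expands as $\tfrac34\mathbb{E}[\Tr(\{O,\hat\rho_1\}^2)]=\tfrac34\bigl(2\,\mathbb{E}[\Tr(O\hat\rho_1 O\hat\rho_1)]+2\,\mathbb{E}[\Tr(O^2\hat\rho_1^2)]\bigr)$, and this is where the genuine work lies. I would substitute $\hat\rho=(d+1)P-\mathbb{I}$ with $P=U^\dagger\ketbra{b}U$ rank one (so $\hat\rho^2=(d^2-1)P+\mathbb{I}$), reducing both expectations to Haar integrals of a single projector weighted by the Born probability $\bra{b}U\rho U^\dagger\ket{b}$. The term $\mathbb{E}[\Tr(O^2\hat\rho^2)]$ needs only the second-moment twirl $\Phi_2(\sum_b\ketbra{b,b})=\tfrac1{d+1}(\mathbb{I}+\mathbb{S})$ and evaluates to $d\Tr(O^2)+(d-1)\Tr(\rho O^2)=\cO(d\Tr(O^2))$, whereas $\mathbb{E}[\Tr(O\hat\rho O\hat\rho)]$ requires the third-moment twirl $\Phi_3(\sum_b\ketbra{b,b,b})=\tfrac1{(d+1)(d+2)}\sum_{\pi\in S_3}\mathbb{P}^3_\pi$ (whence the $3$-design hypothesis), a sum over the six permutations of $S_3$.

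The main obstacle I anticipate is exactly this third-moment evaluation: after the symmetric-subspace contraction, the leading $\cO(d^2)$ contributions from $(d+1)^2\mathbb{E}[(\bra{b}UOU^\dagger\ket{b})^2]$ and from the subtracted $-2(d+1)\mathbb{E}[\bra{b}UO^2U^\dagger\ket{b}]$ must cancel precisely, leaving $\mathbb{E}[\Tr(O\hat\rho O\hat\rho)]=\tfrac{d+1}{d+2}\bigl[\Tr(O)^2+2\Tr(O)\Tr(O\rho)+\Tr(O^2)+2\Tr(\rho O^2)\bigr]-2\Tr(O^2)-2\Tr(\rho O^2)+\Tr(O^2)$, which simplifies to $\cO(\Tr(O)^2+\Tr(O)\Tr(O\rho))$. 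Bounding $|\Tr(O)|\le\sqrt{d\Tr(O^2)}$ and $|\Tr(O\rho)|\le\norm{O}_\infty\le\sqrt{\Tr(O^2)}$ then gives $\mathbb{E}[\Tr(O\hat\rho O\hat\rho)]=\cO(\Tr(O)^2+d\Tr(O^2))$. Collecting the pieces yields $\zeta_2=\cO(\Tr(O)^2+d\Tr(O^2))$, and substituting $\zeta_1,\zeta_2$ into the decomposition produces the claimed $\tfrac1{N_s^2}\cO(\Tr(O)^2+d\Tr(O^2))+\tfrac1{N_s}\cO(\norm{O}_\infty^2)$.
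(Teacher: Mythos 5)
Your proposal is correct, and for the statement itself it coincides with the source: the paper offers no in-text proof of this Fact, importing it directly as Lemma~S5 of Ref.~\cite{huang2020predicting}, and your Hoeffding/ANOVA decomposition of the degree-two U-statistic---$\Var[\omega_{\mathrm{CS}}]=\binom{N_s}{2}^{-1}[2(N_s-2)\zeta_1+\zeta_2]$ with $\zeta_1=\Var[\Tr(O_{\sym}\hat\rho_1\otimes\rho)]$ (valid since $\mathbb{E}_{\hat\rho_2}[h(\hat\rho_1,\hat\rho_2)]=h(\hat\rho_1,\rho)$ by bilinearity, and $h(A,B)=\tfrac12\Tr\bigl(O(AB+BA)\bigr)$ is indeed a symmetric kernel) followed by the elementary bounds $\binom{N_s}{2}^{-1}2(N_s-2)\le 4/N_s$ and $\binom{N_s}{2}^{-1}\le 4/N_s^2$---is exactly the standard argument behind that cited lemma. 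Where you genuinely diverge from the paper is in everything after the decomposition, which strictly speaking proves the surrounding Lemma~\ref{lem:variance_shadow_second_moment} rather than the Fact: the paper bounds $\zeta_2$ by citing Lemma~S6 of Ref.~\cite{huang2020predicting}, $\Var[\Tr(O_{\sym}\hat\rho_1\otimes\hat\rho_2)]=\cO(\Tr(O_{\sym}^2))$, and then simply computes $\Tr(O_{\sym}^2)=\tfrac12\bigl(\Tr(O)^2+d\Tr(O^2)\bigr)$, whereas you re-derive the bound from scratch by conditioning on $\hat\rho_1$ and evaluating the Haar moments $\mathbb{E}[\Tr(O^2\hat\rho^2)]=d\Tr(O^2)+(d-1)\Tr(\rho O^2)$ and $\mathbb{E}[\Tr(O\hat\rho O\hat\rho)]$ via the second- and third-moment twirls of the diagonal projectors; I checked your third-moment expression and the claimed cancellation of the $\cO(d^2)$ pieces, and both are correct, yielding the same $\zeta_2=\cO(\Tr(O)^2+d\Tr(O^2))$. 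The paper's citation route is shorter, while your computation is self-contained and makes explicit why the unitary $3$-design hypothesis is the precise requirement; your $\zeta_1$ bound via Fact~\ref{fact:var_o_CS} and Lemma~\ref{lem:O2} matches the paper's Eq.~\eqref{eq:CS_variance_2} up to an immaterial constant.
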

\noindent We bound these two terms in Fact~\ref{fact:var_omega_CS} separately. For the first term, we use the following fact:
\begin{fact}[{\cite[Lemma S6]{huang2020predicting}}]
    For the symmetrized observable $O_{\sym}$,
    \begin{equation}
        \Var[\Tr(O_{\sym} \hat{\rho}_1 \otimes \hat{\rho}_2)] = \cO\left(\Tr(O_{\sym}^2)\right)
    \end{equation}
\end{fact}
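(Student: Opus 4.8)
The plan is to exploit the independence of the two shadows $\hat\rho_1,\hat\rho_2$. I would write $\hat\rho_i = \rho + \Delta_i$ with $\mathbb{E}[\Delta_i]=0$ and expand the estimator $\hat o \coloneqq \Tr(O_{\sym}\,\hat\rho_1\otimes\hat\rho_2)$ into a constant term plus
\begin{equation}
T_1 = \Tr(O_{\sym}\,\Delta_1\otimes\rho),\quad T_2 = \Tr(O_{\sym}\,\rho\otimes\Delta_2),\quad T_{12} = \Tr(O_{\sym}\,\Delta_1\otimes\Delta_2).
\end{equation}
Each has zero mean, and using $\mathbb{E}[\Delta_i]=0$ together with the independence of the two shadows, every cross-covariance vanishes (for instance $\mathbb{E}[T_1 T_{12}] = \mathbb{E}_{\Delta_1}[T_1\,\mathbb{E}_{\Delta_2}[T_{12}]]=0$). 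Hence the variance splits cleanly as $\Var[\hat o] = \Var[T_1] + \Var[T_2] + \Var[T_{12}]$, and I would bound each piece by $\cO(\Tr(O_{\sym}^2))$, recalling the identity $\Tr(O_{\sym}^2) = \tfrac12\bigl(d\Tr(O^2)+\Tr(O)^2\bigr)$, which follows from $O_{\sym}^2 = \tfrac14(O\otimes\mathbb{I}+\mathbb{I}\otimes O)^2$.

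For the single-fluctuation terms I would first collapse one tensor slot using $\Tr(O_{\sym}\,X\otimes Y)=\tfrac12\Tr\bigl(O(XY+YX)\bigr)$, which turns $T_1$ into the single-copy functional $\Tr(B\,\Delta_1)=\Tr(B\hat\rho_1)-\Tr(B\rho)$ with effective operator $B=\tfrac12(O\rho+\rho O)$. Since $O$ and $\rho$ are Hermitian, $B$ is Hermitian, so Fact~\ref{fact:var_o_CS} applies and gives $\Var[T_1]=\Var[\Tr(B\hat\rho_1)]\le 3\Tr(B^2)$. Expanding $\Tr(B^2)=\tfrac12\bigl(\Tr(O\rho O\rho)+\Tr(O^2\rho^2)\bigr)$ and invoking Lemma~\ref{lem:O2} (together with $\Tr(O^2\rho^2)\le\norm{O}_\infty^2\Tr(\rho^2)$) bounds this by $\cO(\norm{O}_\infty^2)$. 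Because $\Tr(O_{\sym}^2)\ge\tfrac12 d\,\norm{O}_\infty^2$, this is $\cO(\Tr(O_{\sym}^2)/d)$, hence certainly $\cO(\Tr(O_{\sym}^2))$; $T_2$ is identical.

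The double-fluctuation term is the crux, and I would treat it by conditioning on $\hat\rho_2$. For fixed $\Delta_2$ the map $\Delta_1\mapsto T_{12}$ is the linear functional $\Tr(C(\Delta_2)\,\Delta_1)$ with the Hermitian operator $C(\Delta_2)=\tfrac12(O\Delta_2+\Delta_2 O)$; applying Fact~\ref{fact:var_o_CS} conditionally and averaging over $\hat\rho_2$ yields
\begin{equation}
\Var[T_{12}]\le 3\,\mathbb{E}\bigl[\Tr(C(\Delta_2)^2)\bigr] = \tfrac32\,\mathbb{E}\bigl[\Tr(O\Delta_2 O\Delta_2)+\Tr(O^2\Delta_2^2)\bigr].
\end{equation}
The task then reduces to the two single-shadow second moments $\mathbb{E}[\Tr(O\hat\rho O\hat\rho)]$ and $\mathbb{E}[\Tr(O^2\hat\rho^2)]$, since the $\Delta$-moments differ from these only by the subtraction $\Tr(O\rho O\rho)$ and $\Tr(O^2\rho^2)$, both controlled by $\mathbb{E}[\hat\rho]=\rho$ and Lemma~\ref{lem:O2}. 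Writing $\hat\rho=(d+1)U^\dagger\ketbra{b}U-\mathbb{I}$ with $b$ drawn from the Born rule, $\Tr(O\hat\rho O\hat\rho)$ becomes a combination of $\langle b|UOU^\dagger|b\rangle^2$ and $\langle b|UO^2U^\dagger|b\rangle$; averaging over the Born distribution and the $3$-design reduces, via the single-index Weingarten formulas of Appendix~\ref{app:random_unitaries}, to a symmetric sum of $\Tr(O)^2,\ \Tr(O^2),\ \Tr(\rho O)\Tr(O),\ \Tr(\rho O^2)$ with a $1/[(d+1)(d+2)]$ prefactor. The $(d+1)^2$ shadow weight then leaves $\mathbb{E}[\Tr(O\hat\rho O\hat\rho)]=\cO(\Tr(O)^2+d\Tr(O^2))$, while the identity $\hat\rho^2=(d^2-1)U^\dagger\ketbra{b}U+\mathbb{I}$ gives $\mathbb{E}[\Tr(O^2\hat\rho^2)]=\cO(d\Tr(O^2))$ using only the $2$-design. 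Both are $\cO(\Tr(O_{\sym}^2))$, so $\Var[T_{12}]=\cO(\Tr(O_{\sym}^2))$ and the three contributions assemble into the claim.

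I expect the main obstacle to be the $3$-design moment computation for $\mathbb{E}[\Tr(O\hat\rho O\hat\rho)]$: once the $(d+1)^2$ normalization is included, one must carefully track which of the six permutations in $S_3$ contributes at which order in $d$, and verify that the cross terms coupling $\Tr(O)$ with $\Tr(\rho O)$ remain within the $\cO(d\Tr(O^2)+\Tr(O)^2)$ budget rather than producing a spurious higher power of $d$. The remaining steps—the variance decomposition, the reduction to Hermitian effective observables, and the invocations of Fact~\ref{fact:var_o_CS} and Lemma~\ref{lem:O2}—are routine once this moment formula is in hand.
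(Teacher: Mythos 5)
Your proof is correct, and it necessarily differs from the paper's treatment, because the paper does not prove this statement at all: it imports it verbatim as Lemma~S6 of Ref.~\cite{huang2020predicting}. Compared with the original argument there, which expands the full second moment $\mathbb{E}\bigl[\Tr(O_{\sym}\,\hat\rho_1\otimes\hat\rho_2)^2\bigr]$ and evaluates twirls over both independent shadows, your route is a genuine streamlining: after the fluctuation decomposition (whose cross-covariances do vanish exactly as you argue, since $\mathbb{E}_{\Delta_2}[T_{12}\mid\Delta_1]=0$), you write $T_{12}=\Tr\bigl(C(\Delta_2)\,\Delta_1\bigr)$ with $C(\Delta_2)=\tfrac12(O\Delta_2+\Delta_2O)$ Hermitian and apply Fact~\ref{fact:var_o_CS} conditionally—legitimate because $\mathbb{E}_{\Delta_1}[T_{12}\mid\Delta_2]=0$, so the law of total variance leaves only $\mathbb{E}_{\Delta_2}\bigl[\Var_{\Delta_1}(T_{12}\mid\Delta_2)\bigr]\le 3\,\mathbb{E}[\Tr(C(\Delta_2)^2)]$—thereby trading the double twirl for two single-shadow moments. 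All checkable steps hold: $\Tr(O_{\sym}^2)=\tfrac12\bigl(d\Tr(O^2)+\Tr(O)^2\bigr)$ is correct; your $T_1,T_2$ bounds via $B=\tfrac12(O\rho+\rho O)$, Fact~\ref{fact:var_o_CS}, and Lemma~\ref{lem:O2} reproduce exactly the paper's own estimate in Eq.~\eqref{eq:CS_variance_2} for the closely related one-shadow term; and $\hat\rho^2=(d^2-1)U^\dagger\ketbra{b}U+\mathbb{I}$ gives $\mathbb{E}[\Tr(O^2\hat\rho^2)]=(d-1)\bigl(\Tr(O^2)+\Tr(\rho O^2)\bigr)+\Tr(O^2)=\cO(d\Tr(O^2))$ using only a 2-design. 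The step you flag as the main obstacle is in fact benign and better than you budgeted: the symmetric-subspace formula yields $\mathbb{E}\sum_b p(b)\,\langle b|UOU^\dagger|b\rangle^2=\frac{1}{(d+1)(d+2)}\bigl[\Tr(O)^2+2\Tr(\rho O)\Tr(O)+\Tr(O^2)+2\Tr(\rho O^2)\bigr]$, so after the $(d+1)^2$ shadow weight the prefactor is $\frac{d+1}{d+2}=\cO(1)$, and $\mathbb{E}[\Tr(O\hat\rho O\hat\rho)]=\cO\bigl(\Tr(O)^2+\Tr(O^2)\bigr)$ with no $d$-enhancement at all—the cross term is tamed by $2\abs{\Tr(\rho O)\Tr(O)}\le\Tr(O)^2+\Tr(O^2)$, and the $d\Tr(O^2)$ in your budget is contributed solely by the $\Tr(O^2\hat\rho^2)$ moment. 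Everything then assembles into $\Var[\Tr(O_{\sym}\hat\rho_1\otimes\hat\rho_2)]=\cO(\Tr(O_{\sym}^2))$ as claimed, so your proposal constitutes a complete, self-contained proof of the imported fact.
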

\noindent Incorporating Eq.~\eqref{eq:O_sym} into the above, we find
\begin{equation}\label{eq:CS_variance_1}
    \frac{4}{N_s^2} \Var[\Tr(O_{\sym} \hat{\rho}_1 \otimes \hat{\rho_2})] = \frac{1}{N_s^2}\cO(\Tr(O_\sym^2)) = \frac{1}{N_s^2}\cO(\Tr(O)^2 + d \Tr(O^2)).
\end{equation}
\noindent For the second term, we have
\begin{equation}\label{eq:CS_variance_2}
\begin{split}
    \frac{4}{N_s}\Var[\Tr(O_{\sym} \hat{\rho}_1 \otimes \rho)]
    &= \frac{2}{N_s} \Var[\Tr((\rho O + O\rho) \hat{\rho}_1)]\\
    &= \frac{1}{N_s} \cO(\Tr(\rho O \rho O + O^2\rho^2)) \\
    &= \frac{1}{N_s} \cO(\norm{O}_{\infty}^2)
\end{split}
\end{equation}
where in the second step we
have used  Fact~\ref{fact:var_o_CS} and in the last step we apply Lemma~\ref{lem:O2}.
Combining Eqs.~\eqref{eq:CS_variance_1} and~\eqref{eq:CS_variance_2}, we obtain the stated variance bound for $\omega_{\mathrm{CS}}$.
\end{proof}

\section{Additional analysis on estimating dichotomic observables}\label{app:observable-driven_analysis}
Here, we provide the analysis of the protocols introduced in Section~\ref{sec:dichotomic}. First, we present the proof of Lemma~\ref{lem:estimating_dichotomic}. Then, we give a decomposition of the dichotomic observable that leads to the sample complexity stated in Theorem~\ref{thm:good_dicho_estimation}.

\subsection{Proof of Lemma \ref{lem:estimating_dichotomic}}\label{app:proof_first_theorem}
First, we analyze the expectation value of the estimator $\omega$ defined in Eq.~\eqref{eq:unbiased_estimator}.

\begin{lemma}[Expectation value of estimator $\omega$]\label{lem:expectation_value_w}
    Given $N_M$ and a dichotomic observable $O$, the estimator $\omega_U$ defined in Eq.~\eqref{eq:unbiased_estimator} is unbiased, satisfying 
    \begin{equation}
        \mathbb{E}[\omega_U] = \Tr(O\rho^2).
    \end{equation}
\end{lemma}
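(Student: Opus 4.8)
The plan is to exploit the $U$-statistic structure of $\omega_U$ together with the block-decoupling induced by the block-diagonal unitary, thereby reducing the claim to two independent copies of the purity computation in Eq.~\eqref{eq:purity_expectation}. Since $\omega_U$ in Eq.~\eqref{eq:unbiased_estimator_single} is an average of $\binom{N_M}{2}$ terms that are identically distributed (the outcomes $s_1,\dots,s_{N_M}$ being i.i.d.\ conditioned on $U$), linearity of expectation gives $\mathbb{E}[\omega_U] = \mathbb{E}[X_2^{\mathrm b}(s_1,s_2)]$, where the expectation runs over both the two measurement outcomes and the random unitary $U = \tilde U_+ \oplus \tilde U_-$. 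First I would pass to the rotated frame $\rho' = V_O \rho V_O^\dagger$, noting that $\Tr(O\rho^2) = \Tr((V_O O V_O^\dagger)\rho'^2)$ by cyclicity of the trace, so that $O$ is diagonal and $\rho'$ takes the block form of Eq.~\eqref{eq:block_rho} with diagonal blocks $\tilde\rho_\pm = P_\pm \rho' P_\pm$.

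Next I would condition on $U$ and write the measurement expectation as $\sum_{s_1,s_2} X_2^{\mathrm b}(s_1,s_2)\, p(s_1) p(s_2)$ with single-outcome probabilities $p(s) = \bra{s} U\rho' U^\dagger \ket{s}$. The crucial structural fact is that $U$ is block-diagonal and hence preserves each eigenspace $\Pi_\pm$: for $s\in\Pi_\ell$ the diagonal element $p(s)$ only samples the corresponding diagonal block, so $p(s) = \bra{s}\tilde U_\ell \tilde\rho_\ell \tilde U_\ell^\dagger \ket{s}$ and the off-diagonal block $B$ drops out. Combined with the definition of $X_2^{\mathrm b}$ in Eq.~\eqref{eq:ORM_data_processing}, which vanishes unless $s_1,s_2$ lie in the \emph{same} block, the sum splits cleanly into a $+$ contribution and a $-$ contribution with no cross terms.

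Then I would recognize each block contribution as exactly the RM purity estimator applied to the (subnormalized) block $\tilde\rho_\ell$ on the $d_\ell$-dimensional subspace, with post-processing function $X_2^\ell$ carrying the replacement $d\to d_\ell$. Running the twirl computation of Eq.~\eqref{eq:purity_expectation} verbatim on each block---which only requires $\tilde U_\ell$ to be a $2$-design and holds for any, not necessarily normalized, operator since $\Tr(\mathbb{S}_\ell\, \sigma^{\otimes 2}) = \Tr(\sigma^2)$---yields $\mathbb{E}_{\tilde U_\ell}[\cdots] = \Tr(\tilde\rho_\ell^2)$. Weighting by $\lambda_\pm = \pm 1$ and summing gives $\mathbb{E}[\omega_U] = \Tr(\tilde\rho_+^2) - \Tr(\tilde\rho_-^2)$, which equals $\Tr(O\rho^2)$ by the key observation in Eq.~\eqref{eq:observation_block}; the averaged estimator of Eq.~\eqref{eq:unbiased_estimator} inherits the same expectation.

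I expect the only genuinely delicate point to be the decoupling step: one must verify that the off-diagonal blocks $B, B^\dagger$ of $\rho'$ contribute neither to the single-outcome probabilities $p(s)$ (they do not, since a diagonal entry of $U\rho'U^\dagger$ within a block originates only from that block) nor to the estimator (they do not, since $X_2^{\mathrm b}$ annihilates all mixed-block pairs). Everything else is a faithful, block-by-block transcription of the purity argument already established in Section~\ref{sec:pre}.
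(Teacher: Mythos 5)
Your proposal is correct and follows essentially the same route as the paper's proof: exploit the block-diagonal structure of $U$ to decouple the two eigenspaces, observe that $X_2^{\mathrm b}$ kills mixed-block pairs and that the off-diagonal blocks of $\rho'$ never enter the diagonal outcome probabilities, then run the purity twirl of Eq.~\eqref{eq:purity_expectation} separately on each (subnormalized) block and combine via Eq.~\eqref{eq:observation_block}. Your explicit remark that the swap-twirl identity needs no normalization of $\tilde\rho_\pm$ is a point the paper uses implicitly; otherwise the two arguments coincide.
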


\begin{proof}
First consider a general observable $O$ in the Hilbert space of an $n$-qubit system $\mathcal{H}_n = (\mathbb{C}^2)^{\otimes n}$. Let
\begin{equation}
O = \sum\limits_{\ell=1}^L \lambda_\ell P_\ell
\end{equation}
be its spectral decomposition, where $P_\ell$ is the projector onto the eigenspace $\Pi_\ell$ corresponding to the eigenvalue $\lambda_\ell$. Our protocol proceeds as follows. For each eigenspace $\Pi_\ell$, we draw a Haar-random unitary $\tilde{U}_\ell$, and then apply their direct sum
\begin{equation}\label{eq:direct_sum_U}
U = \bigoplus_{\ell=1}^L \tilde{U}_\ell
\end{equation}
to the state $\rho$. Next, we measure the resulting state on the eigenbasis of $O$ and record the measurement outcomes $\{s_1,\ldots,s_{N_M}\}$. Suppose we generate these random unitaries $\{\tilde{U}_\ell\}_{\ell=1}^L$ for $N_U$ times, and in each time we perform $N_M$ repeated measurements. For each unitary $U$, we construct the estimator
\begin{equation}\label{eq:direct_sum_estimator}
\omega_U
= \binom{N_M}{2}^{-1} 
  \sum\limits_{\{i,j\} \in \binom{[N_M]}{2}} 
  X_2^{\mathrm{b}}(s_i,s_j),
\end{equation}
where
\begin{equation}
X_2^{\mathrm{b}}(s_i,s_j) =
\begin{cases}
\lambda_\ell \,X_2^\ell(s_i,s_j), & s_i,s_j \in \Pi_\ell \text{ for some } \ell \in [L],\\
0, & \text{otherwise}.
\end{cases}
\end{equation}
Here, $\{i,j\} \in \binom{[N_M]}{2}$ 
means $1 \le i < j \le N_M, i\neq j$, and $X_2^\ell(s_i,s_j)$ is defined in Eq.~\eqref{eq:RRM_coef_purity}, with $d$ replaced by $d_\ell$, the dimension of $\Pi_\ell$. The expectation value of $\omega_U$ is
\begin{equation}\label{eq:direct_sum_exp_1}
\begin{aligned}
\mathbb{E}\left[\omega_U\right]&=\underset{U}{\mathbb{E}}\left[\sum\limits_{s_1,s_2}\langle s_1|U\rho U^\dagger|s_1\rangle\langle s_2|U\rho U^\dagger|s_2\rangle X_2^{\mathrm{b}}(s_1,s_2)\right]
\\&=\underset{U}{\mathbb{E}}\left[\sum\limits_{\ell=1}^L\sum\limits_{s_1,s_2}\langle s_1|P_\ell U\rho U^\dagger P_\ell|s_1\rangle\langle s_2|P_\ell U\rho U^\dagger P_\ell|s_2\rangle \lambda_\ell X_2^\ell(s_1,s_2)\right].
\end{aligned}
\end{equation}
Note that $U$, constructed as in Eq.~\eqref{eq:direct_sum_U}, commutes with each projector $P_\ell$ (and thus commutes with $O$). Define $U_\ell = \tilde{U}_\ell \oplus \mathbf{0}$ to be the unitary acting on the whole space $\mathcal{H}_n$ but keeping only terms in subspace $\Pi_\ell$, and denote
\begin{equation}
\tilde{\rho}_\ell = (P_\ell\rho P_\ell)\bigl\vert_{\Pi_\ell}, 
\quad
\vert \tilde{s}\rangle = \ket{s}\bigl\vert_{\Pi_\ell}\text{ for }s\in\Pi_\ell.
\end{equation}
Then it follows that
\begin{equation}
\begin{aligned}
U_\ell&=P_\ell UP_\ell=P_\ell U=UP_\ell,\\
U_\ell^\dagger&=P_\ell U^\dagger P_\ell=P_\ell U^\dagger=U^\dagger P_\ell.
\end{aligned}
\end{equation}
Using these definitions, we can simplify the terms in Eq.~\eqref{eq:direct_sum_exp_1}:
\begin{equation}\label{eq:direct_sum_exp_2}
\begin{aligned}
\mathbb{E}\left[\omega_U\right]&=\underset{\{U_\ell\}}{\mathbb{E}}\left[\sum\limits_{\ell=1}^L\sum\limits_{s_1,s_2}\langle s_1|U_\ell\rho U_\ell^\dagger |s_1\rangle\langle s_2|U_\ell\rho U_\ell^\dagger|s_2\rangle \lambda_\ell X_2^\ell(s_1,s_2)\right]
\\&=\underset{\{\tilde{U}_\ell\}}{\mathbb{E}}\left[\sum\limits_{\ell=1}^L\sum\limits_{\tilde{s}_1,\tilde{s}_2}\langle \tilde{s}_1|\tilde{U}_\ell\tilde{\rho}_\ell \tilde{U}_\ell^\dagger |\tilde{s}_1\rangle\langle \tilde{s}_2|\tilde{U}_\ell\tilde{\rho}_\ell \tilde{U}_\ell^\dagger|\tilde{s}_2\rangle \lambda_\ell X_2^\ell(\tilde{s}_1,\tilde{s}_2)\right]
\\&=\sum\limits_{\ell=1}^L\lambda_\ell\Tr(\tilde{\rho}_\ell^2).
\end{aligned}
\end{equation}
In the last step, we use Eq.~\eqref{eq:purity_expectation}. When $O$ is a dichotomic observable with eigenvalues $\pm 1$, Eq.~\eqref{eq:direct_sum_exp_2} simplifies to
\begin{equation}\label{eq:Pauli_exp}
\mathbb{E}\left[\omega_U\right]=\Tr(\tilde{\rho}_+^2)-\Tr(\tilde{\rho}_-^2) = \Tr(O\rho^2),
\end{equation}
where the last equality uses Eq.~\eqref{eq:observation_block}.
\end{proof}

Furthermore, we could sample $U$ for $N_U$ times independently and construct $\omega_u$ according to Eq.~\eqref{eq:unbiased_estimator} for each $u\in[N_U]$. Then their average $\omega$ satisfies 
\begin{equation}
    \mathbb{E}[\omega] = \frac{1}{N_U} \sum\limits_{u=1}^{N_U} \mathbb{E}[\omega_u] = \Tr(O\rho^2),
\end{equation}
thus is also an unbiased estimator. We establish the variance of the estimator $\omega$. 

\begin{lemma}[Variance of estimator $\omega$ with block-diagonal unitary 4-design]\label{lem:variance_w}
    Given $N_U, N_M$, and a dichotomic observable $O$, the estimator $\omega$ defined in Eq.~\eqref{eq:unbiased_estimator}, obtained by following Protocol~\ref{protocol:dichotomic} with $\tilde{U}_\pm$ sampled from a unitary 4-design, has variance
    \begin{equation}
        \Var(\omega) = \cO\left(\frac{1}{N_U}\bigl(\frac{d}{N_M^2} + \frac{1}{N_M} + \frac{1}{d_+} + \frac{1}{d_-})\right),
    \end{equation}
    where $d_+$ and $d_-$ denote the dimensions of the eigenspaces $\Pi_+$ and $\Pi_-$ of $O$, respectively, and $d=d_++d_-$.
\end{lemma}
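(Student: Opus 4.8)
The plan is to reduce the claimed bound to the variance of a single-unitary estimator and then carry out the randomized-measurement purity variance analysis separately within each eigen-block. Since the $N_U$ random unitaries $U_u=\tilde U_+^{(u)}\oplus\tilde U_-^{(u)}$ and their associated measurement outcomes are drawn independently, the summands $\omega_{U_u}$ in Eq.~\eqref{eq:unbiased_estimator} are i.i.d., so $\Var(\omega)=\tfrac{1}{N_U}\Var(\omega_U)$ for a single draw $U=\tilde U_+\oplus\tilde U_-$. It therefore suffices to prove $\Var(\omega_U)=\cO\!\big(\tfrac{d}{N_M^2}+\tfrac1{N_M}+\tfrac1{d_+}+\tfrac1{d_-}\big)$.

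I would organize this via the law of total variance conditioned on the drawn unitary,
\begin{equation}
\Var(\omega_U)=\mathbb E_U\!\big[\Var(\omega_U\mid U)\big]+\Var_U\!\big(\mathbb E[\omega_U\mid U]\big),
\end{equation}
where the first term captures the measurement shot noise at fixed basis and the second the fluctuation across bases. For the first term, note that conditioned on $U$ the outcomes $\{s_i\}$ are i.i.d.\ and $\omega_U$ is a degree-two U-statistic with symmetric kernel $h=X_2^{\mathrm b}$. Its Hoeffding decomposition gives $\Var(\omega_U\mid U)=\cO\!\big(\tfrac1{N_M}\zeta_1(U)+\tfrac1{N_M^2}\zeta_2(U)\big)$, with $\zeta_2(U)=\Var\!\big(h(s_1,s_2)\mid U\big)$ and $\zeta_1(U)$ the variance of its one-argument projection. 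Because the kernel vanishes on cross-block pairs and equals $\pm X_2^\pm$ within a block, the diagonal coefficient $d_\pm$ dominates $\zeta_2$; averaging over $\tilde U_\pm$ with the two-design identity $\mathbb E\big[\sum_{s}(\bra{s}\tilde U_\pm\tilde\rho_\pm\tilde U_\pm^\dagger\ket{s})^2\big]=\tfrac{\Tr(\tilde\rho_\pm)^2+\Tr(\tilde\rho_\pm^2)}{d_\pm+1}$ yields $\mathbb E_U[\zeta_2(U)]=\cO(d_++d_-)=\cO(d)$ and $\mathbb E_U[\zeta_1(U)]=\cO(1)$, producing the $\tfrac{d}{N_M^2}+\tfrac1{N_M}$ contribution. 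This is also where the total dimension $d$, rather than $d_\pm$, enters.

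The remaining term is the crux. The conditional mean $\mathbb E[\omega_U\mid U]$ is a quadratic form in the diagonal entries $\bra{s}\tilde U_\pm\tilde\rho_\pm\tilde U_\pm^\dagger\ket{s}$, and since $\mathbb E_U\,\mathbb E[\omega_U\mid U]=\Tr(O\rho^2)$ by Lemma~\ref{lem:expectation_value_w}, computing $\Var_U(\cdot)$ requires the fourth moment of $\tilde U_\pm$—this is precisely where the unitary $4$-design hypothesis enters. Using the Weingarten expansion of Eq.~\eqref{eq:t-th_order_twirling} at $t=4$ on each block independently (the two blocks decouple because $\tilde U_+$ and $\tilde U_-$ are independent), I expect the leading contribution in block $\ell\in\{+,-\}$ to scale as $\cO(1/d_\ell)$, giving the $\tfrac1{d_+}+\tfrac1{d_-}$ term. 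I anticipate this fourth-moment bookkeeping—enumerating the Weingarten terms, isolating the pieces that survive after subtracting the squared mean, and checking that subleading terms are absorbed into those already produced—to be the main obstacle; the reduction and shot-noise steps are routine by comparison. The block structure, together with the fact that the per-block sample counts are multinomially distributed but handled automatically by the global $\binom{N_M}{2}^{-1}$ normalization (the kernel being zero on cross-block pairs), guarantees that no additional cross-block terms arise.
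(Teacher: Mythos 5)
Your proposal is correct in structure, and every asserted scaling matches what the paper's computation actually produces, but it is organized differently and leaves the central computation asserted rather than carried out. The paper proves Lemma~\ref{lem:variance_w} by brute force: it expands $\mathbb{E}[\omega_u^2]$ into the three classes of index-pair overlaps $\Delta(i,j;i',j')\in\{2,3,4\}$, evaluates each class with an explicit second-, third-, and fourth-order twirling formula ($\Phi_2(X_+^{(1,2)2})$, $\Phi_3(X_+^{(1,2),(1,3)})$, $\Phi_4(X_+^{(1,2)\otimes2})$ in Eqs.~\eqref{eq:A_2}--\eqref{eq:A_4}), and only at the end subtracts the squared mean and bounds all $\Tr(\tilde\rho_\pm^l)\le 1$. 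Your route---law of total variance conditioned on $U$, plus the Hoeffding decomposition of the degree-two U-statistic---repackages exactly the same moments: your $\zeta_2$ and $\zeta_1$ terms are the paper's $A_2$ and $A_3$ contributions, and your $\Var_U(\mathbb{E}[\omega_U\mid U])$ is what remains of $A_4$ after the squared mean is removed. What your organization buys is conceptual clarity: it cleanly separates shot noise (which needs only second and third moments of $\tilde U_\pm$) from basis-to-basis fluctuation (which is the sole place the 4-design is needed), and the independence of $\tilde U_+$ and $\tilde U_-$ immediately gives $\Var(g_+-g_-)=\Var(g_+)+\Var(g_-)$, absorbing the cross term $-\Tr(\tilde\rho_+^2)\Tr(\tilde\rho_-^2)$ that the paper has to track explicitly inside $A_4$.

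Two caveats. First, the crux---showing $\Var_{\tilde U_\pm}(g_\pm)=\cO(1/d_\pm)$---is precisely the fourth-order Weingarten bookkeeping that constitutes the bulk of the paper's proof (Eq.~\eqref{eq:4-order_twirling_X_tensor_2} and the cancellations in Eq.~\eqref{eq:variance_final}); you correctly identify it as the main obstacle but do not perform it, so as written the proposal is a sound plan rather than a proof. The claim is true (it is the known variance of the RM purity estimator on a $d_\pm$-dimensional block), but a self-contained argument must either do that computation or cite it. Second, a small imprecision: you attribute the bound $\mathbb{E}_U[\zeta_1(U)]=\cO(1)$ to ``the two-design identity,'' but $\zeta_1$ is the variance of the one-argument projection $\sum_{s_2}p(s_2)h(s_1,s_2)$ and its average over $U$ involves products of three diagonal probabilities, i.e.\ a third moment of $\tilde U_\pm$; this is the paper's $A_3$ term and genuinely requires a 3-design, not a 2-design. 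Neither issue changes the conclusion, since the hypothesis supplies a 4-design.
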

\begin{proof}
We calculate the variance of the estimator $\omega_u$ (see  Eq.~\eqref{eq:direct_sum_estimator}), to get
\begin{equation}\label{eq:var_omega_u}
\begin{aligned}
\Var(\omega_u)&=\mathbb{E}[\omega_u^2]-\mathbb{E}[\omega_u]^2
\\&={\binom{N_M}2}^{-2}\sum\limits_{\substack{\{i,j\}\in\binom{[N_M]}2\\\{i',j'\}\in\binom{[N_M]}2}}\mathbb{E}\left[X_2^{\mathrm{b}}(s_i,s_j)X_2^{\mathrm{b}}(s_{i'},s_{j'})\right]-\mathbb{E}[\omega_u]^2.
\end{aligned}
\end{equation}

Define $\Delta(i,j;i',j')=\left|\{i,j\}\cup\{i',j'\}\right|\in\{2,3,4\}$ and split the above summation into three terms
\begin{equation}
\begin{aligned}
\sum\limits_{\substack{\{i,j\}\in\binom{[N_M]}2\\\{i',j'\}\in\binom{[N_M]}2}}&=\sum\limits_{\substack{\{i,j\}\in\binom{[N_M]}2\\\{i',j'\}\in\binom{[N_M]}2\\\Delta(i,j;i',j')=2}}+\sum\limits_{\substack{\{i,j\}\in\binom{[N_M]}2\\\{i',j'\}\in\binom{[N_M]}2\\\Delta(i,j;i',j')=3}}+\sum\limits_{\substack{\{i,j\}\in\binom{[N_M]}2\\\{i',j'\}\in\binom{[N_M]}2\\\Delta(i,j;i',j')=4}}.
\end{aligned}
\end{equation}

\begin{itemize}
\item The first term is
\begin{equation}\label{eq:Delta=2}
\begin{aligned}
&\sum\limits_{\substack{\{i,j\}\in\binom{[N_M]}2\\\{i',j'\}\in\binom{[N_M]}2\\\Delta(i,j;i',j')=2}}\mathbb{E}\left[X_2^{\mathrm{b}}(s_i,s_j)X_2^{\mathrm{b}}(s_{i'},s_{j'})\right]
\\=&\binom{N_M}2\underset{U_+}{\mathbb{E}}\left[\sum\limits_{s_1,s_2}\langle s_1|U_+\rho U_+^\dagger|s_1\rangle\langle s_2|U_+\rho U_+^\dagger|s_2\rangle X_2^+(s_1,s_2)^2\right]+\left(+\Longleftrightarrow-\right)
\\=&\binom{N_M}2\underset{\tilde{U}_+}{\mathbb{E}}\left[\sum\limits_{\tilde{s}_1,\tilde{s}_2}\langle \tilde{s}_1|\tilde{U}_+\tilde{\rho}_+ \tilde{U}_+^\dagger|\tilde{s}_1\rangle\langle \tilde{s}_2|\tilde{U}_+\tilde{\rho}_+ \tilde{U}_+^\dagger|\tilde{s}_2\rangle X_2^+(\tilde{s}_1,\tilde{s}_2)^2\right]+\left(+\Longleftrightarrow-\right)
\\=&\binom{N_M}2\underset{\tilde{U}_+}{\mathbb{E}}\left[\Tr\left(\tilde{U}_+^{\dagger\otimes2}\left(\sum\limits_{\tilde{s}_1,\tilde{s}_2}X_2^+(\tilde{s}_1,\tilde{s}_2)^2|\tilde{s}_1\tilde{s}_2\rangle\langle\tilde{s}_1\tilde{s}_2|\right)\tilde{U}_+^{\otimes2}\tilde{\rho}_+^{\otimes2}\right)\right]+\left(+\Longleftrightarrow-\right),
\end{aligned}
\end{equation}
where $\left(+\Longleftrightarrow-\right)$ is obtained by replacing all the label ``$+$'' with label ``$-$'' (and vice versa) in the previous terms.
We 
define
\begin{equation}
X_+^{(1,2)2}=\sum\limits_{\tilde{s}_1,\tilde{s}_2}X_2^+(\tilde{s}_1,\tilde{s}_2)^2|\tilde{s}_1\tilde{s}_2\rangle\langle\tilde{s}_1\tilde{s}_2|
\end{equation}
and calculate its second-order twirling~\cite{zhou2020Single}
\begin{equation}
\Phi_2(X_+^{(1,2)2})=d_+\mathbb{I}+(d_+-1)\mathbb{S},
\end{equation}
where $\mathbb{I}$ and $\mathbb{S}$ are respectively the identity and swap operator acting on $\Pi_+^{\otimes2}$. In this way, the contribution of this term becomes
\begin{equation}\label{eq:A_2}
A_{2}=\binom{N_M}2\left(d_+\Tr(\tilde{\rho}_+)^2+(d_+-1)\Tr(\tilde{\rho}_+^2)\right)+\left(+\Longleftrightarrow-\right).
\end{equation}

\item The second term is
\begin{equation}\label{eq:Delta=2}
\begin{aligned}
&\sum\limits_{\substack{\{i,j\}\in\binom{[N_M]}2\\\{i',j'\}\in\binom{[N_M]}2\\\Delta(i,j;i',j')=3}}\mathbb{E}\left[X_2^{\mathrm{b}}(s_i,s_j)X_2^{\mathrm{b}}(s_{i'},s_{j'})\right]
\\
=&6\binom{N_M}3\underset{U_+}{\mathbb{E}}\Biggl[\sum\limits_{s_1,s_2,s_3}\langle s_1|U_+\rho U_+^\dagger|s_1\rangle\langle s_2|U_+\rho U_+^\dagger|s_2\rangle\langle s_3|U_+\rho U_+^\dagger|s_3\rangle X_2^+(s_1,s_2)X_2^+(s_1,s_3)\Biggr]+\left(+\Longleftrightarrow-\right)
\\=&6\binom{N_M}3\underset{\tilde{U}_+}{\mathbb{E}}\left[\sum\limits_{\tilde{s}_1,\tilde{s}_2,\tilde{s}_3}\langle \tilde{s}_1|\tilde{U}_+\tilde{\rho}_+ \tilde{U}_+^\dagger|\tilde{s}_1\rangle\langle \tilde{s}_2|\tilde{U}_+\tilde{\rho}_+ \tilde{U}_+^\dagger|\tilde{s}_2\rangle\langle \tilde{s}_3|\tilde{U}_+\tilde{\rho}_+ \tilde{U}_+^\dagger|\tilde{s}_3\rangle X_2^+(\tilde{s}_1,\tilde{s}_2)X_2^+(\tilde{s}_1,\tilde{s}_3)\right]+\left(+\Longleftrightarrow-\right)
\\=&6\binom{N_M}3\underset{\tilde{U}_+}{\mathbb{E}}\left[\Tr\left(\tilde{U}_+^{\dagger\otimes3}\left(\sum\limits_{\tilde{s}_1,\tilde{s}_2,\tilde{s}_3}X_2^+(\tilde{s}_1,\tilde{s}_2)X_2^+(\tilde{s}_1,\tilde{s}_3)|\tilde{s}_1\tilde{s}_2\tilde{s}_3\rangle\langle\tilde{s}_1\tilde{s}_2\tilde{s}_3|\right)\tilde{U}_+^{\otimes3}\tilde{\rho}_+^{\otimes3}\right)\right]+\left(+\Longleftrightarrow-\right).
\end{aligned}
\end{equation}
We further define the operator
\begin{equation}
X_+^{(1,2),(1,3)}=\sum\limits_{\tilde{s}_1,\tilde{s}_2,\tilde{s}_3}X_2^+(\tilde{s}_1,\tilde{s}_2)X_2^+(\tilde{s}_1,\tilde{s}_3)|\tilde{s}_1\tilde{s}_2\tilde{s}_3\rangle\langle\tilde{s}_1\tilde{s}_2\tilde{s}_3|
\end{equation}
and calculate its third-order twirling~\cite{zhou2020Single}
\begin{equation}
\Phi_3(X_+^{(1,2),(1,3)})=-\frac1{d_++2}\left(\mathbb{P}^3_\mathbb{I}+\mathbb{P}^3_{(12)}+\mathbb{P}^3_{(13)}\right)+\frac{d_++1}{d_++2}\left(\mathbb{P}^3_{(23)}+\mathbb{P}^3_{(123)}+\mathbb{P}^3_{(132)}\right).
\end{equation}
Here $\mathbb{P}^3_\pi$ is the permutation operator corresponding to $\pi\in S_3$ acting on $\Pi_+^{\otimes3}$ and we write $\pi$ in its cycle decomposition. Notice that each $l$-cycle contributes a factor $\Tr(\tilde{\rho}_+^l)$, so the summation simplifies to
\begin{equation}\label{eq:A_3}
A_{3}=6\binom{N_M}3\left(\frac{d_+-1}{d_++2}\Tr(\tilde{\rho}_+^2)\Tr(\tilde{\rho}_+)+\frac{2(d_++1)}{d_++2}\Tr(\tilde{\rho}_+^3)-\frac1{d_++2}\Tr(\tilde{\rho}_+)^3\right)+\left(+\Longleftrightarrow-\right).
\end{equation}

\item The last term is
given by
\begin{equation}\label{eq:Delta=2}
\begin{aligned}
&\sum\limits_{\substack{\{i,j\}\in\binom{[N_M]}2\\\{i',j'\}\in\binom{[N_M]}2\\\Delta(i,j;i',j')=4}}\mathbb{E}\left[X_2^{\mathrm{b}}(s_i,s_j)X_2^{\mathrm{b}}(s_{i'},s_{j'})\right]
\\
=&6\binom{N_M}4\underset{U_+}{\mathbb{E}}\Biggl[\sum\limits_{s_1,s_2,s_3,s_4}\langle s_1|U_+\rho U_+^\dagger|s_1\rangle\langle s_2|U_+\rho U_+^\dagger|s_2\rangle\langle s_3|U_+\rho U_+^\dagger|s_3\rangle\langle s_4|U_+\rho U_+^\dagger|s_4\rangle X_2^+(s_1,s_2)X_2^+(s_3,s_4)\Biggr]
\\&-6\binom{N_M}4\underset{U_+,U_-}{\mathbb{E}}\Biggl[\sum\limits_{s_1,s_2,s_3,s_4}\langle s_1|U_+\rho U_+^\dagger|s_1\rangle\langle s_2|U_+\rho U_+^\dagger|s_2\rangle\langle s_3|U_-\rho U_-^\dagger|s_3\rangle\langle s_4|U_-\rho U_-^\dagger|s_4\rangle X_2^+(s_1,s_2)X_2^-(s_3,s_4)\Biggr]
\\&+\left(+\Longleftrightarrow-\right)
\\=&6\binom{N_M}4\underset{\tilde{U}_+}{\mathbb{E}}\left[\sum\limits_{\tilde{s}_1,\tilde{s}_2,\tilde{s}_3,\tilde{s}_4}\langle \tilde{s}_1|\tilde{U}_+\tilde{\rho}_+ \tilde{U}_+^\dagger|\tilde{s}_1\rangle\langle \tilde{s}_2|\tilde{U}_+\tilde{\rho}_+ \tilde{U}_+^\dagger|\tilde{s}_2\rangle\langle \tilde{s}_3|\tilde{U}_+\tilde{\rho}_+ \tilde{U}_+^\dagger|\tilde{s}_3\rangle\langle \tilde{s}_4|\tilde{U}_+\tilde{\rho}_+ \tilde{U}_+^\dagger|\tilde{s}_4\rangle X_2^+(\tilde{s}_1,\tilde{s}_2)X_2^+(\tilde{s}_3,\tilde{s}_4)\right]
\\&-6\binom{N_M}4\underset{\tilde{U}_+,\tilde{U}_-}{\mathbb{E}}\left[\sum\limits_{\tilde{s}_1,\tilde{s}_2,\tilde{s}_3,\tilde{s}_4}\langle \tilde{s}_1|\tilde{U}_+\tilde{\rho}_+ \tilde{U}_+^\dagger|\tilde{s}_1\rangle\langle \tilde{s}_2|\tilde{U}_+\tilde{\rho}_+ \tilde{U}_+^\dagger|\tilde{s}_2\rangle\langle \tilde{s}_3|\tilde{U}_-\tilde{\rho}_- \tilde{U}_-^\dagger|\tilde{s}_3\rangle\langle \tilde{s}_4|\tilde{U}_-\tilde{\rho}_- \tilde{U}_-^\dagger|\tilde{s}_4\rangle X_2^+(\tilde{s}_1,\tilde{s}_2)X_2^-(\tilde{s}_3,\tilde{s}_4)\right]
\\&+\left(+\Longleftrightarrow-\right)
\\=&6\binom{N_M}4\underset{\tilde{U}_+}{\mathbb{E}}\left[\Tr\left(\tilde{U}_+^{\dagger\otimes4}\left(\sum\limits_{\tilde{s}_1,\tilde{s}_2,\tilde{s}_3,\tilde{s}_4}X_2^+(\tilde{s}_1,\tilde{s}_2)X_2^+(\tilde{s}_3,\tilde{s}_4)|\tilde{s}_1\tilde{s}_2\tilde{s}_3\tilde{s}_4\rangle\langle\tilde{s}_1\tilde{s}_2\tilde{s}_3\tilde{s}_4|\right)\tilde{U}_+^{\otimes4}\tilde{\rho}_+^{\otimes4}\right)\right]
\\&-6\binom{N_M}4\underset{\tilde{U}_+,\tilde{U}_-}{\mathbb{E}}\left[\Tr\left(\tilde{U}_+^{\dagger\otimes2}\left(\sum\limits_{\tilde{s}_1,\tilde{s}_2}X_2^+(\tilde{s}_1,\tilde{s}_2)|\tilde{s}_1\tilde{s}_2\rangle\langle\tilde{s}_1\tilde{s}_2|\right)\tilde{U}_+^{\otimes2}\tilde{\rho}_+^{\otimes2}\right)\Tr\left(\tilde{U}_-^{\dagger\otimes2}\left(\sum\limits_{\tilde{s}_3,\tilde{s}_4}X_2^+(\tilde{s}_3,\tilde{s}_4)|\tilde{s}_3\tilde{s}_4\rangle\langle\tilde{s}_3\tilde{s}_4|\right)\tilde{U}_-^{\otimes2}\tilde{\rho}_-^{\otimes2}\right)\right]
\\&+\left(+\Longleftrightarrow-\right).
\end{aligned}
\end{equation}
We define
\begin{equation}\label{eq:X_+_tensor_2}
X_+^{(1,2)\otimes2}=\sum\limits_{\tilde{s}_1,\tilde{s}_2,\tilde{s}_3,\tilde{s}_4}X_2^+(\tilde{s}_1,\tilde{s}_2)X_2^+(\tilde{s}_3,\tilde{s}_4)|\tilde{s}_1\tilde{s}_2\tilde{s}_3\tilde{s}_4\rangle\langle\tilde{s}_1\tilde{s}_2\tilde{s}_3\tilde{s}_4|,
\end{equation}
and its fourth-order twirling is~\cite{liu2022permutation}
\begin{equation}\label{eq:4-order_twirling_X_tensor_2}
\begin{aligned}
\Phi_4(X_+^{(1,2)\otimes2})=&\left(1+\frac2{d_+(d_++2)(d_++3)}\right)\mathbb{P}^4_{(12)(34)}\\
&+\frac2{d_+(d_++2)(d_++3)}\left(\mathbb{P}^4_{\mathbb{I}}+\mathbb{P}^4_{(12)}+\mathbb{P}^4_{(34)}\right)\\
&+\frac{d_++1}{d_+(d_++3)}\left(\mathbb{P}^4_{(13)(24)}+\mathbb{P}^4_{(14)(23)}+\mathbb{P}^4_{(1324)}+\mathbb{P}^4_{(1423)}\right)\\
&-\frac{d_++1}{d_+(d_++2)(d_++3)}\left(\text{the other elements in }S_4\right).
\end{aligned}
\end{equation}
Where ``the other elements in $S_4$'' contain all the elements not listed previously. Represented by the lengths of cycles in cycle decomposition, they consist of four of type $1+1+2$, eight of type $1+3$, and four of type $4$.
Consequently, the contribution of this summation is
\begin{equation}\label{eq:A_4}
\begin{aligned}
A_{4}=&6\binom{N_M}4\Biggl[\frac2{d_+(d_++2)(d_++3)}\Tr(\tilde{\rho}_+)^4-\frac4{(d_++2)(d_++3)}\Tr(\tilde{\rho}_+)^2\Tr(\tilde{\rho}_+^2)\\
&-\frac{8(d_++1)}{d_+(d_++2)(d_++3)}\Tr(\tilde{\rho}_+)\Tr(\tilde{\rho}_+^3)+\left(1+\frac{2d_+^2+6d_++6}{d_+(d_++2)(d_++3)}\right)\Tr(\tilde{\rho}_+^2)^2\\
&+\frac{2(d_++1)}{(d_++2)(d_++3)}\Tr(\tilde{\rho}_+^4)-\Tr(\tilde{\rho}_+^2)\Tr(\tilde{\rho}_-^2)\Biggr]+\left(+\Longleftrightarrow-\right).
\end{aligned}
\end{equation}
\end{itemize}

Inserting Eqs.~\eqref{eq:A_2},~\eqref{eq:A_3},~\eqref{eq:A_4}, and~\eqref{eq:Pauli_exp} to Eq.~\eqref{eq:var_omega_u}, we have
\begin{equation}\label{eq:variance_final}
\begin{aligned}
\Var(\omega_u)=&{\binom{N_M}2}^{-2}\left(A_2+A_3+A_4\right)-\left(\Tr(\tilde{\rho}_+^2)-\Tr(\tilde{\rho}_-^2)\right)^2
\\=&\frac{2d_+}{N_M(N_M-1)}\Tr(\tilde{\rho}_+)^2+\frac{2d_-}{N_M(N_M-1)}\Tr(\tilde{\rho}_-)^2
\\&+\frac{2(d_+-1)}{N_M(N_M-1)}\Tr(\tilde{\rho}_+^2)+\frac{2(d_--1)}{N_M(N_M-1)}\Tr(\tilde{\rho}_-^2)
\\&+\frac{4(d_+-1)(N_M-2)}{(d_++2)N_M(N_M-1)}\Tr(\tilde{\rho}_+^2)\Tr(\tilde{\rho}_+)+\frac{4(d_--1)(N_M-2)}{(d_-+2)N_M(N_M-1)}\Tr(\tilde{\rho}_-^2)\Tr(\tilde{\rho}_-)
\\&+\frac{8(d_++1)(N_M-2)}{(d_++2)N_M(N_M-1)}\Tr(\tilde{\rho}_+^3)+\frac{8(d_-+1)(N_M-2)}{(d_-+2)N_M(N_M-1)}\Tr(\tilde{\rho}_-^3)
\\&-\frac{4(N_M-2)}{(d_++2)N_M(N_M-1)}\Tr(\tilde{\rho}_+)^3-\frac{4(N_M-2)}{(d_-+2)N_M(N_M-1)}\Tr(\tilde{\rho}_-)^3
\\&+\frac{2(N_M-2)(N_M-3)}{d_+(d_++2)(d_++3)N_M(N_M-1)}\Tr(\tilde{\rho}_+)^4+\frac{2(N_M-2)(N_M-3)}{d_-(d_-+2)(d_-+3)N_M(N_M-1)}\Tr(\tilde{\rho}_-)^4
\\&-\frac{4(N_M-2)(N_M-3)}{(d_++2)(d_++3)N_M(N_M-1)}\Tr(\tilde{\rho}_+)^2\Tr(\tilde{\rho}_+^2)-\frac{4(N_M-2)(N_M-3)}{(d_-+2)(d_-+3)N_M(N_M-1)}\Tr(\tilde{\rho}_-)^2\Tr(\tilde{\rho}_-^2)
\\&-\frac{8(d_++1)(N_M-2)(N_M-3)}{d_+(d_++2)(d_++3)N_M(N_M-1)}\Tr(\tilde{\rho}_+)\Tr(\tilde{\rho}_+^3)-\frac{8(d_-+1)(N_M-2)(N_M-3)}{d_-(d_-+2)(d_-+3)N_M(N_M-1)}\Tr(\tilde{\rho}_-)\Tr(\tilde{\rho}_-^3)
\\&+\frac{(2d_+^2+6d_++6)(N_M-2)(N_M-3)}{d_+(d_++2)(d_++3)N_M(N_M-1)}\Tr(\tilde{\rho}_+^2)^2+\frac{(2d_-^2+6d_-+6)(N_M-2)(N_M-3)}{d_-(d_-+2)(d_-+3)N_M(N_M-1)}\Tr(\tilde{\rho}_-^2)^2
\\&+\frac{2(d_++1)(N_M-2)(N_M-3)}{(d_++2)(d_++3)N_M(N_M-1)}\Tr(\tilde{\rho}_+^4)+\frac{2(d_-+1)(N_M-2)(N_M-3)}{(d_-+2)(d_-+3)N_M(N_M-1)}\Tr(\tilde{\rho}_-^4)
\\&-\frac{4N_M-6}{N_M(N_M-1)}\left[\Tr(\tilde{\rho}_+^2)-\Tr(\tilde{\rho}_-^2)\right]^2.
\end{aligned}
\end{equation}
Notice that $\tilde{\rho}_\pm$ are principal submatrices of $\rho$, so that they are positive semi-definite and $\Tr(\tilde{\rho}_\pm^l)\leq1$ for $l\in\mathbb{N}_+$. We obtain the upper bound for $\Var(\omega_u)$
\begin{equation}
\begin{aligned}
\Var(\omega_u)\leq&\frac{4}{N_M(N_M-1)}(d_++d_-)+\frac{4(N_M-2)}{N_M(N_M-1)}\left(\frac{3d_++1}{d_++2}+\frac{3d_-+1}{d_-+2}\right)
\\&+\frac{4(N_M-2)(N_M-3)}{N_M(N_M-1)}\left(\frac{d_+^2+2d_++2}{d_+(d_++2)(d_++3)}+\frac{d_-^2+2d_-+2}{d_-(d_-+2)(d_-+3)}\right)
\\=&\cO\left(\frac{d}{N_M^2} + \frac{1}{N_M} + \frac{1}{d_+} + \frac{1}{d_{-}}\right).
\end{aligned}
\end{equation}

Furthermore, suppose we draw $N_U$ independent random unitaries and construct an estimator by averaging the individual estimators obtained for each choice of random unitary, i.e.,
\begin{equation}
\omega =\frac{1}{N_U}\sum\limits_{u=1}^{N_U}\omega_u.
\end{equation}
Then we can estimate $\Tr(O \rho^2)$ unbiasedly, with the variance given by
\begin{equation}
\Var(\omega) = \frac{1}{N_U}\Var(\omega_u) = \frac{1}{N_U} \cO\left(\frac{d}{N_M^2} + \frac{1}{N_M} +  \frac{1}{d_+} + \frac{1}{d_{-}}\right).
\end{equation}
\end{proof}
Based on Lemmas~\ref{lem:median-of-mean},~\ref{lem:expectation_value_w}, and~\ref{lem:variance_w}, we can now prove Lemma~\ref{lem:estimating_dichotomic} in the main text.

\begin{proof}[Proof of Lemma~\ref{lem:estimating_dichotomic}]
To apply Lemma~\ref{lem:median-of-mean}, it is sufficient to suppress $\Var(\omega)\leq\varepsilon^2/3<\varepsilon^2/2$. Define $\eta=\max\left\{\frac{1}{d_+},\frac{1}{d_-}\right\}=\Omega\left(\frac{1}{d}\right)$.
When $\varepsilon\ge\Omega\left(\eta^{1/2}\right)$, we can set $N_U=\cO(1)$ and $N_M=\cO\left(\frac{d^{1/2}}{\varepsilon}\right)$ and the total sample complexity to be $N_U\times N_M=\cO\left(\frac{d^{1/2}}{\varepsilon}\right)$.
When $\varepsilon\le\cO\left(\eta^{1/2}\right)$, we should set $N_U=\cO\left(\frac{\eta}{\varepsilon^2}\right)$ and $N_M=\cO\left(\frac{d^{1/2}}{\eta^{1/2}}\right)$, which results in the total sample complexity 
$N_U\times N_M=\cO\left(\frac{d^{1/2}\eta^{1/2}}{\varepsilon^2}\right)$.
In conclusion, the sample complexity is upper bounded by $\cO\left(\max\left\{\frac{\sqrt{d}}{\varepsilon},\frac{\sqrt{d\eta}}{\varepsilon^2}\right\}\right)$.
\end{proof}

\subsection{Proof of Lemma~\ref{lem:estimating_dichotomic_Clifford}}\label{app:proof_dichotomic_Clifford}

In this section, we prove Lemma~\ref{lem:estimating_dichotomic_Clifford}. Notice that it only differs from Lemma~\ref{lem:estimating_dichotomic} by drawing $\tilde{U}_\pm$ randomly from the group of multi-qubit Clifford gates (which is a unitary 3-design~\cite{zhu2017multiqubit} but not a 4-design~\cite{zhu2016clifford}), the proof is similar to that in Appendix~\ref{app:proof_first_theorem}. The unbiasedness of $\omega$ follows directly from Lemma~\ref{lem:expectation_value_w}. For the variance, we obtain the following upper bound (cf. Lemma~\ref{lem:variance_w}):

\begin{lemma}[Variance of estimator $\omega$ with block-diagonal Clifford gates]\label{lem:variance_w_Clifford}
    Given $N_U, N_M$, and a traceless dichotomic observable $O$, the estimator $\omega$ defined in Eq.~\eqref{eq:unbiased_estimator}, obtained by following Protocol~\ref{protocol:dichotomic} with $\tilde{U}_\pm$ sampled from the multi-qubit Clifford group, has variance
    \begin{equation}
        \Var^{\operatorname{Cl}}(\omega) = \cO\left(\frac{1}{N_U}\bigl(\frac{d}{N_M^2} + \frac{1}{N_M} + 1)\right),
    \end{equation}
    where $d$ denotes the dimension of the qubit system.
\end{lemma}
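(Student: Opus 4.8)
The plan is to replay the variance calculation of Lemma~\ref{lem:variance_w} (Appendix~\ref{app:proof_first_theorem}) and isolate the single step where the unitary $4$-design property was used. Recall from Eq.~\eqref{eq:var_omega_u} that the single-unitary estimator obeys $\Var(\omega_u) = \binom{N_M}{2}^{-2}(A_2 + A_3 + A_4) - \mathbb{E}[\omega_u]^2$, where the three sums $A_2, A_3, A_4$ correspond to index-overlap patterns $\Delta = 2, 3, 4$ and are evaluated via the twirls $\Phi_2$, $\Phi_3$, and $\Phi_4$ of the diagonal operators $X_+^{(1,2)2}$, $X_+^{(1,2),(1,3)}$, and $X_+^{(1,2)\otimes2}$ (together with their $(+\Longleftrightarrow -)$ partners). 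Since the multi-qubit Clifford group is an exact unitary $3$-design~\cite{zhu2017multiqubit}, the Clifford twirls $\Phi_2$ and $\Phi_3$ coincide with their Haar values, so $A_2$ and $A_3$ are unchanged from Eqs.~\eqref{eq:A_2} and~\eqref{eq:A_3}. The whole modification is thus confined to $A_4$, and I would only need to recompute the fourth-order Clifford twirl $\Phi_4^{\mathrm{Cl}}(X_+^{(1,2)\otimes2})$.

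Next I would use the known structure of the order-$4$ Clifford commutant, which is strictly larger than the $24$-dimensional span of the permutation operators $\{\mathbb{P}^4_\pi\}_{\pi\in S_4}$ (being $30$-dimensional for sufficiently many subsystem qubits): the extra invariants are generated by the stabilizer-type operator $Q = \sum_P P^{\otimes4}$, where the sum runs over all Pauli operators on the $(n-1)$-qubit subsystem and the fourth tensor power cancels the Pauli phases, so that $Q$ commutes with $\tilde{U}_+^{\otimes4}$ for every Clifford $\tilde{U}_+$. Writing $\Phi_4^{\mathrm{Cl}}(X_+^{(1,2)\otimes2}) = \Phi_4(X_+^{(1,2)\otimes2}) + \Delta_{\mathrm{Cl}}$, the first term is the Haar expression of Eq.~\eqref{eq:4-order_twirling_X_tensor_2} and $\Delta_{\mathrm{Cl}}$ is a correction supported on the extra stabilizer invariants, with coefficients fixed by inverting the enlarged Clifford Gram matrix.

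I would then contract $\Phi_4^{\mathrm{Cl}}(X_+^{(1,2)\otimes2})$ against $\tilde{\rho}_+^{\otimes4}$ (and add the $-$ contribution). The permutation sector reproduces, up to $\cO(1/d_+)$ differences in the twirl coefficients, the same cycle-type polynomials in $\Tr(\tilde{\rho}_+^l)$ as in the Haar case, so after subtracting $\mathbb{E}[\omega_u]^2$ its net contribution is again $\cO(d/N_M^2 + 1/N_M)$. The stabilizer correction contracts to the fourth-order Pauli-collision quantity $\Tr(Q\,\tilde{\rho}_+^{\otimes4}) = \sum_P \Tr(P\tilde{\rho}_+)^4 \le d_+\Tr(\tilde{\rho}_+^2) \le d_+$; weighted by its $\cO(1/d_+)$ twirl coefficient this contributes at most an $\cO(1)$ constant. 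Collecting terms and using $\Tr(\tilde{\rho}_\pm^l)\le1$ exactly as in the Haar proof yields $\Var^{\mathrm{Cl}}(\omega_u) = \cO(d/N_M^2 + 1/N_M + 1)$, and averaging over $N_U$ independent unitaries divides this by $N_U$, giving the stated bound.

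The hard part will be the explicit evaluation of $\Phi_4^{\mathrm{Cl}}(X_+^{(1,2)\otimes2})$: one must project the diagonal operator onto the enlarged Clifford commutant and track how the stabilizer invariants mix with the permutations, since the delicate cancellation of the leading $\Tr(\tilde{\rho}_+^2)^2$ term against $\mathbb{E}[\omega_u]^2$ (which in the Haar case left only $\cO(1/d_\pm)$) must be shown to survive. The essential check is that the new Clifford terms do not reintroduce an uncancelled piece scaling with $d$ but are instead uniformly $\cO(1)$---precisely the origin of the weaker $+1$ term replacing the Haar $+1/d_\pm$ in the claimed variance.
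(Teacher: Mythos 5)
Your strategy is the same as the paper's: keep $A_2$ and $A_3$ from the Haar calculation because the Clifford group is an exact $3$-design, and recompute only the $\Delta=4$ term using the order-$4$ Clifford commutant, which is spanned by the permutations $\mathbb{P}^4_\sigma$ together with the stabilizer projector $Q\propto\sum_P P^{\otimes4}$ and its products with permutations. The paper does exactly this via the explicit fourth-order Clifford twirling formula of Ref.~\cite{Roth2018Recovering}, and your "Haar plus correction" framing, the $\cO(1/d_\pm)$ shifts in the permutation coefficients, and the conclusion that the stabilizer sector contributes the extra $\cO(1)$ term are all consistent with the paper's Eqs.~\eqref{eq:A_4_Clifford}--\eqref{eq:variance_final_Clifford}.

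There is one substantive step you assert rather than establish. The correction $\Delta_{\mathrm{Cl}}$ is not supported on $Q$ alone but on the full family $Q\mathbb{P}^4_\sigma$, so contracting against $\tilde\rho_\pm^{\otimes4}$ produces not only $\Tr(Q\,\tilde\rho_+^{\otimes4})=\sum_P\Tr(P\tilde\rho_+)^4$ (which you bound correctly) but the whole family $Q_\sigma=\Tr(\mathbb{P}^4_\sigma Q\,\tilde\rho_+^{\otimes4})$ for every cycle type, i.e., normalized sums such as $\sum_P\Tr((P\rho)^2)\Tr(P\rho)^2$, $\sum_P\Tr((P\rho)^3)\Tr(P\rho)$, and $\sum_P\Tr((P\rho)^4)$. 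Since $P\rho$ is not positive semidefinite, these do not reduce termwise to the $\sigma=\mathrm{id}$ case, and because the coefficients multiplying them in the final variance expression are $\Theta(d)$, an unproved bound here is exactly where a spurious $\cO(d)$ term could enter. The paper closes this with a dedicated lemma (Lemma~\ref{lem:upper_bound_Q_sigma}): writing $Q_\sigma=\Tr(\Psi Q\mathbb{P}^4_\sigma Q\Psi)$ for projectors $\Psi$ built from the eigenvectors of $\rho$ and applying Cauchy--Schwarz for the Hilbert--Schmidt inner product reduces every $Q_\sigma$ to $\Tr(Q\rho^{\otimes4})\le R_2/d$ (in the paper's normalization). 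You should add this argument; with it, your proof goes through as outlined.
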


\subsubsection{Notations}
We first introduce some notations for a given state $\rho$ that will be used in the proof. We define the state moments $R_k=\Tr(\rho^k)$ for $k\in\mathbb{N}_+$. It is direct to see $R_k\leq1$ for any $k$. We denote the $n$-qubit Pauli group as $\mathcal{P}_n=\{I,X,Y,Z\}^{\otimes n}$, the projector $Q=\frac1{d^2}\sum\limits_{P\in\mathcal{P}_n}P^{\otimes4}$, and its complementary projector $Q^\perp=I-Q$. For a 4-order permutation operator $\sigma\in S_4$, define $Q_\sigma=\Tr\left(\mathbb{P}_\sigma^4Q\rho^{\otimes4}\right)$. Since $Q_\sigma$ is determined by the cycle type $\lambda$ of $\sigma$, we also denote $Q_\lambda=Q_\sigma$:

\begin{equation}
\begin{aligned}
Q_{1,1,1,1}&=\frac1{d^2}\sum\limits_{P\in\mathcal{P}_n}\Tr(P\rho)^4,\\
Q_{2,1,1}&=\frac1{d^2}\sum\limits_{P\in\mathcal{P}_n}\Tr((P\rho)^2)\Tr(P\rho)^2,\\
Q_{2,2}&=\frac1{d^2}\sum\limits_{P\in\mathcal{P}_n}\Tr((P\rho)^2)^2,\\
Q_{3,1}&=\frac1{d^2}\sum\limits_{P\in\mathcal{P}_n}\Tr((P\rho)^3)\Tr(P\rho),\\
Q_{4}&=\frac1{d^2}\sum\limits_{P\in\mathcal{P}_n}\Tr((P\rho)^4).
\end{aligned}
\end{equation}

We obtain a tight upper bound for these quantities as follows.

\begin{lemma}[Upper bound of $Q_\sigma$]\label{lem:upper_bound_Q_sigma}
For $n$-qubit state $\rho$ and $\sigma\in S_4$, we have $Q_\sigma=\Tr\left(\mathbb{P}_\sigma^4Q\rho^{\otimes4}\right)\leq\frac1d$, where $d=2^n$. The bound is saturated when $\rho$ is a pure stabilizer state.
\end{lemma}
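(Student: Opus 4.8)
The plan is to reduce every $Q_\sigma$ to power sums of a single Hermitian operator and then exploit a Pauli-twirl identity that makes the resulting sums collapse. First I would use $Q\rho^{\otimes4}=\frac1{d^2}\sum_{P\in\mathcal{P}_n}(P\rho)^{\otimes4}$, so that pairing with $\mathbb{P}^4_\sigma$ factorizes over the cycles of $\sigma$: a cycle of length $k$ contributes a factor $\Tr\bigl((P\rho)^k\bigr)$. Writing $t_{P,k}:=\Tr\bigl((P\rho)^k\bigr)$ reproduces exactly the five displayed expressions, one per cycle type of $S_4$. Although $P\rho$ is not Hermitian, it is similar to the Hermitian matrix $\rho^{1/2}P\rho^{1/2}$ and shares its nonzero spectrum, so $P\rho$ has real eigenvalues $\{\mu_{P,j}\}_j$ with $t_{P,k}=\sum_j\mu_{P,j}^k$.

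Next I would record three facts. (i) Spectral bound: $\max_j|\mu_{P,j}|=\norm{\rho^{1/2}P\rho^{1/2}}_\infty\le\norm{\rho}_\infty\le1$, since $\norm{P}_\infty=1$; likewise $t_{P,2}=\Tr(\rho P\rho P)\le1$ by Lemma~\ref{lem:O2}. (ii) Twirl normalization: the identity $\sum_{P\in\mathcal{P}_n}P\rho P=d\,I$ gives $\sum_P t_{P,2}=\Tr\bigl(\rho\sum_P P\rho P\bigr)=d$. (iii) Pauli-expansion bound: $|t_{P,1}|=|\Tr(P\rho)|\le1$ and $\sum_P t_{P,1}^2=\sum_P\Tr(P\rho)^2=d\,\Tr(\rho^2)\le d$.

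With these, the four non-identity cycle types become one-line estimates, since I can bound each summand by $t_{P,2}$. Using (i), $t_{P,4}\le\max_j\mu_{P,j}^2\,t_{P,2}\le t_{P,2}$ and $|t_{P,3}|\le\max_j|\mu_{P,j}|\,t_{P,2}\le t_{P,2}$; combined with $|t_{P,1}|\le1$ and $t_{P,2}\le1$ this gives $t_{P,2}t_{P,1}^2\le t_{P,2}$, $t_{P,2}^2\le t_{P,2}$, and $|t_{P,3}t_{P,1}|\le t_{P,2}$. Summing over $P$ and invoking (ii) then yields $Q_{2,1,1},Q_{2,2},Q_{3,1},Q_4\le\frac1{d^2}\sum_P t_{P,2}=\frac1d$. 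The identity type is treated by (iii): $Q_{1,1,1,1}=\frac1{d^2}\sum_P t_{P,1}^4\le\frac1{d^2}\bigl(\max_P t_{P,1}^2\bigr)\sum_P t_{P,1}^2\le\frac1{d^2}\cdot d\,\Tr(\rho^2)\le\frac1d$.

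For saturation I would specialize to a pure stabilizer state $\rho=\ketbra{\psi}$: then $P\rho$ is rank one with sole eigenvalue $\Tr(P\rho)=\bra{\psi}P\ket{\psi}\in\{-1,0,1\}$, so $t_{P,k}=\Tr(P\rho)^k$ and every $Q_\lambda$ collapses to $\frac1{d^2}\sum_P\Tr(P\rho)^4$; as the stabilizer group supplies exactly $d$ Paulis with $\Tr(P\rho)=\pm1$, this equals $\frac1d$, saturating all five bounds at once. The main obstacle is the $Q_{3,1}$ and $Q_4$ terms, whose summands are neither manifestly nonnegative nor obviously controlled; the resolution is exactly the spectral estimate (i), which dominates these higher power sums by the normalized second power sum $t_{P,2}$ that the twirl identity (ii) pins to $d$.
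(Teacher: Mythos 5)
Your proof is correct, but it takes a genuinely different route from the paper's. The paper proves the bound uniformly over all $\sigma\in S_4$ in one stroke: it spectrally decomposes $\rho$, uses that $Q$ and $\Psi_{ijkl}$ are projectors commuting appropriately with $\mathbb{P}^4_\sigma$, and applies the Cauchy--Schwarz inequality for the Hilbert--Schmidt inner product to obtain $Q_\sigma\le\Tr(Q\rho^{\otimes4})=\frac1{d^2}\sum_P\Tr(P\rho)^4\le\frac{R_2}{d}\le\frac1d$, which incidentally yields the slightly sharper state-dependent bound $R_2/d$ for every cycle type. You instead expand $Q$ in the Pauli basis so that $Q_\sigma$ factorizes over the cycles of $\sigma$ into power sums $t_{P,k}=\Tr\bigl((P\rho)^k\bigr)$ of the (real, since $P\rho$ shares its nonzero spectrum with the Hermitian $\rho^{1/2}P\rho^{1/2}$) eigenvalues of $P\rho$, dominate each cycle-type expression by $t_{P,2}$ via the spectral bound $\norm{\rho^{1/2}P\rho^{1/2}}_\infty\le1$, and close with the Pauli-twirl identity $\sum_P P\rho P=d\,\mathbb{I}$, which forces $\sum_P t_{P,2}=d$. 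All steps check out, including the handling of the potentially sign-indefinite $Q_{3,1}$ and $Q_4$ terms and the saturation argument for stabilizer states. What the paper's route buys is brevity and uniformity in $\sigma$ (plus the extra factor of $R_2$ for free); what your route buys is a more elementary, cycle-by-cycle derivation that avoids Cauchy--Schwarz entirely and makes transparent exactly which normalization ($\sum_P t_{P,2}=d$) pins the answer to $1/d$ --- at the mild cost of losing the $R_2$ refinement for the non-identity cycle types, which the lemma does not claim anyway.
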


\begin{proof}[Proof of Lemma~\ref{lem:upper_bound_Q_sigma}]
Let the spectral decomposition be $\rho=\sum\limits_ip_i\ket{\psi_i}\bra{\psi_i}$. Then we have
\begin{equation}
Q_\sigma=\sum\limits_{i,j,k,l}p_ip_jp_kp_l\Tr\left(\mathbb{P}_\sigma^4Q\Psi_{ijkl}\right),
\end{equation}
where $\Psi_{ijkl}=\bigotimes\limits_{\alpha=i,j,k,l}\ket{\psi_\alpha}\bra{\psi_\alpha}$. Notice that $Q$ and $\Psi_{ijkl}$ are projectors and $[\mathbb{P}_\sigma^4,Q]=0$, $\forall\sigma\in S_4$, we then have
\begin{equation}
\begin{aligned}
\Tr\left(\mathbb{P}_\sigma^4Q\Psi_{ijkl}\right)&=\Tr\left(\mathbb{P}_\sigma^4QQ\Psi_{ijkl}\Psi_{ijkl}\right)\\
&=\Tr\left(\Psi_{ijkl}Q\mathbb{P}_\sigma^4Q\Psi_{ijkl}\right).
\end{aligned}
\end{equation}
Using the Cauchy-Schwarz inequality of the Hilbert-Schmidt norm, we have
\begin{equation}
\begin{aligned}
Q_\sigma&=\sum\limits_{i,j,k,l}p_ip_jp_kp_l\Tr\left(\Psi_{ijkl}Q\mathbb{P}_\sigma^4Q\Psi_{ijkl}\right)
\\&\leq\sum\limits_{i,j,k,l}p_ip_jp_kp_l\abs{\Tr\left(\Psi_{ijkl}Q\mathbb{P}_\sigma^4Q\Psi_{ijkl}\right)}
\\&\leq\sum\limits_{i,j,k,l}p_ip_jp_kp_l\sqrt{\Tr\left(Q\Psi_{ijkl}\Psi_{ijkl}Q\right)\Tr\left(Q\Psi_{ijkl}\mathbb{P}_\sigma^{4\dagger}\mathbb{P}_\sigma^4\Psi_{ijkl}Q\right)}
\\&=\sum\limits_{i,j,k,l}p_ip_jp_kp_l\Tr\left(Q\Psi_{ijkl}\right)
\\&=\Tr(Q\rho^{\otimes4})=\frac1{d^2}\sum\limits_{P\in\mathcal{P}_n}\Tr(P\rho)^4\leq\frac1{d^2}\sum\limits_{P\in\mathcal{P}_n}\Tr(P\rho)^2=\frac1dR_2\leq\frac1d,
\end{aligned}
\end{equation}
where we use $\Tr(P\rho)\in[-1,1]$ for $P\in\mathcal{P}_n$.
When $\rho=\ket{\psi}\bra{\psi}$ and $\ket{\psi}$ is a stabilizer states, there are exactly $2^n=d$ Pauli operators in $\mathcal{P}_n$ such that $\Tr(P\rho)=1$ and for others $\Tr(P\rho)=0$, we have $Q_\sigma=\frac1d$.
\end{proof}

In the following, we use $R_k^\pm$ and $Q_\lambda^\pm$ to denote the corresponding quantities defined for $\tilde{\rho}_\pm$.

\subsubsection{Formula of 4-order Clifford twirling}
The multi-qubit Clifford forms a unitary 3-design but fails gracefully to become a unitary 4-design~\cite{zhu2016clifford,zhu2017multiqubit,Bittel2025Complete}. Based on the proof of Lemma~\ref{lem:variance_w}, we only need to reanalyze the 4-order term (cf. $A_4$ in Eq.~\eqref{eq:A_4}), by replacing the 4-order unitary twirling formula in Eq.~\eqref{eq:4-order_twirling_X_tensor_2} with the corresponding Clifford twirling formula. Denote the $n$-qubit Clifford group as $\text{Cl}(d)$ with $d=2^n$, its 4-order twirling acting on an operator $X\in\mathcal{L}(\mathbb{C}^d)$ is stated as Theorem S.5 in Ref.~\cite{Roth2018Recovering}:

\begin{equation}
\begin{aligned}
E_{\Delta_{\text{Cl}(d)}^4}(X)&\coloneqq\int_{\mathrm{\text{Cl}}(d)}U^{\otimes 4}XU^{\dagger\otimes 4}dU\\&=\frac1{24}\sum\limits_{\lambda\vdash4,l(\lambda)\leq d}d_\lambda\sum\limits_{\sigma\in S_4}\left[\frac1{D_\lambda^+}\Tr\left(XQ\mathbb{P}_{\sigma^{-1}}^4\right)Q+\frac1{D_\lambda^-}\Tr\left(XQ^\perp\mathbb{P}_{\sigma^{-1}}^4\right)Q^\perp\right]\mathbb{P}_\sigma^4P_\lambda,
\end{aligned}
\end{equation}
where $\lambda\vdash4$ is a partition of 4 and $l(\lambda)$ is its length. The dimensions of the Specht modules $d_\lambda$ and Weyl modules $D_\lambda^\pm$ are provided in Table 1 of Ref.~\cite{zhu2017multiqubit}. The orthogonal projectors to the submodules are (cf. Eq.~(S.8) in Ref.~\cite{Roth2018Recovering})
\begin{equation}
P_\lambda=\frac{d_\lambda}{24}\sum\limits_{\sigma\in S_4}\chi^\lambda(\sigma)\mathbb{P}_\sigma^4,
\end{equation}
where the character values $\chi^\lambda(\sigma)$ can be calculated using the Murnaghan-Nakayama rule and are listed in Table~\ref{tab:character_S_4}.

\begin{table}[h!]
\centering
\begin{tabular}{c|ccccc}
\hline
$\chi^\lambda(\sigma)$ & $(1,1,1,1)$ & $(2,1,1)$ & $(2,2)$ & $(3,1)$ & $(4)$ \\ \hline
$(4)$       & 1 & 1  & 1  & 1  & 1  \\
$(3,1)$     & 3 & 1  & -1 & 0  & -1 \\
$(2,2)$     & 2 & 0  & 2  & -1 & 0  \\
$(2,1,1)$   & 3 & -1 & -1 & 0  & 1  \\
$(1,1,1,1)$ & 1 & -1 & 1  & 1  & -1 \\ \hline
\end{tabular}
\caption{Character table of the symmetric group $S_4$. 
Rows correspond to irreducible representations (partitions $\lambda \vdash 4$), 
columns to conjugacy classes (cycle types of $\sigma\in S_4$).}
\label{tab:character_S_4}
\end{table}

We will use the 4-order Clifford twirling of the operator (cf. Eq.~\eqref{eq:RRM_coef_purity} and Eq.~\eqref{eq:X_+_tensor_2}, ignoring script $+$)
\begin{equation}\label{eq:X_tensor_2}
X^{(1,2)\otimes2}=\sum\limits_{\tilde{s}_1,\tilde{s}_2,\tilde{s}_3,\tilde{s}_4}X_2(\tilde{s}_1,\tilde{s}_2)X_2(\tilde{s}_3,\tilde{s}_4)|\tilde{s}_1\tilde{s}_2\tilde{s}_3\tilde{s}_4\rangle\langle\tilde{s}_1\tilde{s}_2\tilde{s}_3\tilde{s}_4|.
\end{equation}

We consider $d\geq4$ to include all partitions of 4, consistent with implied the assumption $d\geq8$ in Lemma~\ref{lem:estimating_dichotomic_Clifford}. Here, we list all the terms in
\begin{equation}
\Tr\left(E_{\Delta_{\text{Cl}(d)}^4}\left(X^{(1,2)\otimes2}\right)\rho^{\otimes4}\right)=\frac1{24}\sum\limits_{\lambda\vdash4,l(\lambda)\leq d}\sum\limits_{\sigma\in S_4}T(\lambda,\sigma),
\end{equation}
where we define
\begin{equation}
\begin{aligned}
T(\lambda,\sigma)\coloneqq&\Tr\left(d_\lambda\times\left[\frac1{D_\lambda^+}\Tr\left(X^{(1,2)\otimes2}Q\mathbb{P}^4_{\sigma^{-1}}\right)Q+\frac1{D_\lambda^-}\Tr\left(X^{(1,2)\otimes2}Q^\perp\mathbb{P}_{\sigma^{-1}}^4\right)Q^\perp\right]\mathbb{P}_\sigma^4P_\lambda\rho^{\otimes4}\right)\\
=&d_\lambda\left(\frac1{D_\lambda^+}\Tr\left(X^{(1,2)\otimes2}Q\mathbb{P}_{\sigma^{-1}}^4\right)
-\frac1{D_\lambda^-}\Tr\left(X^{(1,2)\otimes2}\mathbb{P}_{\sigma^{-1}}^4\right)
+\frac1{D_\lambda^-}\Tr\left(X^{(1,2)\otimes2}Q\mathbb{P}_{\sigma^{-1}}^4\right)\right)\Tr\left(Q\mathbb{P}_\sigma^4P_\lambda\rho^{\otimes4}\right)
\\&+d_\lambda\left(\frac1{D_\lambda^-}\Tr\left(X^{(1,2)\otimes2}\mathbb{P}_{\sigma^{-1}}^4\right)
-\frac1{D_\lambda^-}\Tr\left(X^{(1,2)\otimes2}Q\mathbb{P}_{\sigma^{-1}}^4\right)\right)\Tr\left(\mathbb{P}_\sigma^4P_\lambda\rho^{\otimes4}\right).
\end{aligned}
\end{equation}

\begin{itemize}
\item[(1)] $\lambda=[1,1,1,1]$:
$d_\lambda=1$, $D_\lambda^+=\frac{(d-1)(d-2)}6$, $D_\lambda^-=\frac{(d-4)(d-2)(d-1)(d+1)}{24}$.
\begin{itemize}
\item[\ding{172}] $\sigma\sim[1,1,1,1]$
\begin{itemize}
\item $\sigma=\mathbb{P}_{(1)}^4$
\begin{equation}
\begin{aligned}
T(\lambda,\sigma)=&\frac{d^2(d^2-3d-1)}{4(d-4)(d-2)(d-1)}\left(1\times Q_{1,1,1,1}+(-6)\times Q_{2,1,1}+3\times Q_{2,2}+8\times Q_{3,1}+(-6)\times Q_4\right)\\
&-\frac{d}{(d-4)(d-2)}\left(1\times R_1+(-6)\times R_2+3\times R_2^2+8\times R_3+(-6)\times R_4\right),
\end{aligned}
\end{equation}
\end{itemize}

\item[\ding{173}] $\sigma\sim[2,1,1]$
\begin{itemize}
\item $\sigma=\mathbb{P}_{(12)}^4, \mathbb{P}_{(34)}^4$
\begin{equation}
\begin{aligned}
T(\lambda,\sigma)=&\frac{d^3}{4(d-2)(d-1)}\left((-1)\times Q_{1,1,1,1}+6\times Q_{2,1,1}+(-3)\times Q_{2,2}+(-8)\times Q_{3,1}+6\times Q_4\right),
\end{aligned}
\end{equation}
\end{itemize}

\begin{itemize}
\item $\sigma=\mathbb{P}_{(13)}^4, \mathbb{P}_{(14)}^4, \mathbb{P}_{(23)}^4, \mathbb{P}_{(24)}^4$
\begin{equation}
\begin{aligned}
T(\lambda,\sigma)=&\frac{d(d^2-3d-1)}{4(d-4)(d-2)(d-1)}\left((-1)\times Q_{1,1,1,1}+6\times Q_{2,1,1}+(-3)\times Q_{2,2}+(-8)\times Q_{3,1}+6\times Q_4\right)\\
&-\frac{1}{(d-4)(d-2)}\left((-1)\times R_1+6\times R_2+(-3)\times R_2^2+(-8)\times R_3+6\times R_4\right),
\end{aligned}
\end{equation}
\end{itemize}

\item[\ding{174}] $\sigma\sim[2,2]$
\begin{itemize}
\item $\sigma=\mathbb{P}_{(12)(34)}^4$
\begin{equation}
\begin{aligned}
T(\lambda,\sigma)=&\frac{d^2(d^2-7d+3)}{4(d-4)(d-2)(d-1)}\left(1\times Q_{1,1,1,1}+(-6)\times Q_{2,1,1}+3\times Q_{2,2}+8\times Q_{3,1}+(-6)\times Q_4\right)\\
&+\frac{d(d-1)}{(d-4)(d-2)}\left(1\times R_1+(-6)\times R_2+3\times R_2^2+8\times R_3+(-6)\times R_4\right),
\end{aligned}
\end{equation}
\end{itemize}

\begin{itemize}
\item $\sigma=\mathbb{P}_{(13)(24)}^4, \mathbb{P}_{(14)(23)}^4$
\begin{equation}
\begin{aligned}
T(\lambda,\sigma)=&\frac{d^3+d^2-d}{4(d-2)(d-1)}\left(1\times Q_{1,1,1,1}+(-6)\times Q_{2,1,1}+3\times Q_{2,2}+8\times Q_{3,1}+(-6)\times Q_4\right),
\end{aligned}
\end{equation}
\end{itemize}

\item[\ding{175}] $\sigma\sim[3,1]$
\begin{itemize}
\item $\sigma=\mathbb{P}_{(123)}^4, \mathbb{P}_{(132)}^4, \mathbb{P}_{(124)}^4, \mathbb{P}_{(142)}^4, \mathbb{P}_{(134)}^4, \mathbb{P}_{(143)}^4, \mathbb{P}_{(234)}^4, \mathbb{P}_{(243)}^4$
\begin{equation}
\begin{aligned}
T(\lambda,\sigma)=&\frac{d^2}{4(d-2)(d-1)}\left(1\times Q_{1,1,1,1}+(-6)\times Q_{2,1,1}+3\times Q_{2,2}+8\times Q_{3,1}+(-6)\times Q_4\right),
\end{aligned}
\end{equation}
\end{itemize}

\item[\ding{176}] $\sigma\sim[4]$
\begin{itemize}
\item $\sigma=\mathbb{P}_{(1234)}^4, \mathbb{P}_{(1243)}^4, \mathbb{P}_{(1342)}^4, \mathbb{P}_{(1432)}^4$
\begin{equation}
\begin{aligned}
T(\lambda,\sigma)=&\frac{d(d^2-7d+3)}{4(d-4)(d-2)(d-1)}\left((-1)\times Q_{1,1,1,1}+6\times Q_{2,1,1}+(-3)\times Q_{2,2}+(-8)\times Q_{3,1}+6\times Q_4\right)\\
&+\frac{d-1}{(d-4)(d-2)}\left((-1)\times R_1+6\times R_2+(-3)\times R_2^2+(-8)\times R_3+6\times R_4\right),
\end{aligned}
\end{equation}
\end{itemize}

\begin{itemize}
\item $\sigma=\mathbb{P}_{(1324)}^4, \mathbb{P}_{(1423)}^4$
\begin{equation}
\begin{aligned}
T(\lambda,\sigma)=&\frac{d^3}{4(d-2)(d-1)}\left((-1)\times Q_{1,1,1,1}+6\times Q_{2,1,1}+(-3)\times Q_{2,2}+(-8)\times Q_{3,1}+6\times Q_4\right),
\end{aligned}
\end{equation}
\end{itemize}
\end{itemize}

\item[(2)] $\lambda=[2,1,1]$:
$d_\lambda=3$, $D_\lambda^+=0$, $D_\lambda^-=\frac{d(d-2)(d-1)(d+1)}{8}$.
\begin{itemize}
\item[\ding{172}] $\sigma\sim[1,1,1,1]$
\begin{itemize}
\item $\sigma=\mathbb{P}_{(1)}^4$
\begin{equation}
\begin{aligned}
T(\lambda,\sigma)=&\frac{3}{d-2}\left(3\times Q_{1,1,1,1}+(-6)\times Q_{2,1,1}+(-3)\times Q_{2,2}+0\times Q_{3,1}+6\times Q_4\right)\\
&-\frac{3}{d-2}\left(3\times R_1+(-6)\times R_2+(-3)\times R_2^2+0\times R_3+6\times R_4\right),
\end{aligned}
\end{equation}
\end{itemize}

\item[\ding{173}] $\sigma\sim[2,1,1]$
\begin{itemize}
\item $\sigma=\mathbb{P}_{(12)}^4, \mathbb{P}_{(34)}^4$
\begin{equation}
\begin{aligned}
T(\lambda,\sigma)=0,
\end{aligned}
\end{equation}
\end{itemize}

\begin{itemize}
\item $\sigma=\mathbb{P}_{(13)}^4, \mathbb{P}_{(14)}^4, \mathbb{P}_{(23)}^4, \mathbb{P}_{(24)}^4$
\begin{equation}
\begin{aligned}
T(\lambda,\sigma)=&\frac{3}{d(d-2)}\left((-1)\times Q_{1,1,1,1}+2\times Q_{2,1,1}+1\times Q_{2,2}+0\times Q_{3,1}+(-2)\times Q_4\right)\\
&-\frac{3}{d(d-2)}\left((-1)\times R_1+2\times R_2+1\times R_2^2+0\times R_3+(-2)\times R_4\right),
\end{aligned}
\end{equation}
\end{itemize}

\item[\ding{174}] $\sigma\sim[2,2]$
\begin{itemize}
\item $\sigma=\mathbb{P}_{(12)(34)}^4$
\begin{equation}
\begin{aligned}
T(\lambda,\sigma)=&\frac{3(1-d)}{d-2}\left((-1)\times Q_{1,1,1,1}+2\times Q_{2,1,1}+1\times Q_{2,2}+0\times Q_{3,1}+(-2)\times Q_4\right)\\
&+\frac{3(d-1)}{d-2}\left((-1)\times R_1+2\times R_2+1\times R_2^2+0\times R_3+(-2)\times R_4\right),
\end{aligned}
\end{equation}
\end{itemize}

\begin{itemize}
\item $\sigma=\mathbb{P}_{(13)(24)}^4, \mathbb{P}_{(14)(23)}^4$
\begin{equation}
\begin{aligned}
T(\lambda,\sigma)=0,
\end{aligned}
\end{equation}
\end{itemize}

\item[\ding{175}] $\sigma\sim[3,1]$
\begin{itemize}
\item $\sigma=\mathbb{P}_{(123)}^4, \mathbb{P}_{(132)}^4, \mathbb{P}_{(124)}^4, \mathbb{P}_{(142)}^4, \mathbb{P}_{(134)}^4, \mathbb{P}_{(143)}^4, \mathbb{P}_{(234)}^4, \mathbb{P}_{(243)}^4$
\begin{equation}
\begin{aligned}
T(\lambda,\sigma)=0,
\end{aligned}
\end{equation}
\end{itemize}

\item[\ding{176}] $\sigma\sim[4]$
\begin{itemize}
\item $\sigma=\mathbb{P}_{(1234)}^4, \mathbb{P}_{(1243)}^4, \mathbb{P}_{(1342)}^4, \mathbb{P}_{(1432)}^4$
\begin{equation}
\begin{aligned}
T(\lambda,\sigma)=&\frac{3(1-d)}{d(d-2)}\left(1\times Q_{1,1,1,1}+(-2)\times Q_{2,1,1}+(-1)\times Q_{2,2}+0\times Q_{3,1}+2\times Q_4\right)\\
&+\frac{3(d-1)}{d(d-2)}\left(1\times R_1+(-2)\times R_2+(-1)\times R_2^2+0\times R_3+2\times R_4\right),
\end{aligned}
\end{equation}
\end{itemize}

\begin{itemize}
\item $\sigma=\mathbb{P}_{(1324)}^4, \mathbb{P}_{(1423)}^4$
\begin{equation}
\begin{aligned}
T(\lambda,\sigma)=0,
\end{aligned}
\end{equation}
\end{itemize}
\end{itemize}

\item[(3)] $\lambda=[2,2]$:
$d_\lambda=2$, $D_\lambda^+=\frac{(d-1)(d+1)}3$, $D_\lambda^-=\frac{(d-2)(d-1)(d+1)(d+2)}{12}$.
\begin{itemize}
\item[\ding{172}] $\sigma\sim[1,1,1,1]$
\begin{itemize}
\item $\sigma=\mathbb{P}_{(1)}^4$
\begin{equation}
\begin{aligned}
T(\lambda,\sigma)=&\frac{d^2}{2(d-2)(d-1)(d+1)(d+2)}\left(2\times Q_{1,1,1,1}+0\times Q_{2,1,1}+6\times Q_{2,2}+(-8)\times Q_{3,1}+0\times Q_4\right)\\
&-\frac{2d}{(d-2)(d+2)}\left(2\times R_1+0\times R_2+6\times R_2^2+(-8)\times R_3+0\times R_4\right),
\end{aligned}
\end{equation}
\end{itemize}

\item[\ding{173}] $\sigma\sim[2,1,1]$
\begin{itemize}
\item $\sigma=\mathbb{P}_{(12)}^4, \mathbb{P}_{(34)}^4$
\begin{equation}
\begin{aligned}
T(\lambda,\sigma)=0,
\end{aligned}
\end{equation}
\end{itemize}

\begin{itemize}
\item $\sigma=\mathbb{P}_{(13)}^4, \mathbb{P}_{(14)}^4, \mathbb{P}_{(23)}^4, \mathbb{P}_{(24)}^4$
\begin{equation}
\begin{aligned}
T(\lambda,\sigma)=0,
\end{aligned}
\end{equation}
\end{itemize}

\item[\ding{174}] $\sigma\sim[2,2]$
\begin{itemize}
\item $\sigma=\mathbb{P}_{(12)(34)}^4$
\begin{equation}
\begin{aligned}
T(\lambda,\sigma)=&\frac{d^3(d^2-3d-1)}{2(d-2)(d-1)(d+1)(d+2)}\left(2\times Q_{1,1,1,1}+0\times Q_{2,1,1}+6\times Q_{2,2}+(-8)\times Q_{3,1}+0\times Q_4\right)\\
&+\frac{2d(d-1)}{(d-2)(d+2)}\left(2\times R_1+0\times R_2+6\times R_2^2+(-8)\times R_3+0\times R_4\right),
\end{aligned}
\end{equation}
\end{itemize}

\begin{itemize}
\item $\sigma=\mathbb{P}_{(13)(24)}^4, \mathbb{P}_{(14)(23)}^4$
\begin{equation}
\begin{aligned}
T(\lambda,\sigma)=&\frac{d^3+d^2-d}{2(d-1)(d+1)}\left(2\times Q_{1,1,1,1}+0\times Q_{2,1,1}+6\times Q_{2,2}+(-8)\times Q_{3,1}+0\times Q_4\right),
\end{aligned}
\end{equation}
\end{itemize}

\item[\ding{175}] $\sigma\sim[3,1]$
\begin{itemize}
\item $\sigma=\mathbb{P}_{(123)}^4, \mathbb{P}_{(132)}^4, \mathbb{P}_{(124)}^4, \mathbb{P}_{(142)}^4, \mathbb{P}_{(134)}^4, \mathbb{P}_{(143)}^4, \mathbb{P}_{(234)}^4, \mathbb{P}_{(243)}^4$
\begin{equation}
\begin{aligned}
T(\lambda,\sigma)=&\frac{d^2}{2(d-1)(d+1)}\left((-1)\times Q_{1,1,1,1}+0\times Q_{2,1,1}+(-3)\times Q_{2,2}+4\times Q_{3,1}+0\times Q_4\right),
\end{aligned}
\end{equation}
\end{itemize}

\item[\ding{176}] $\sigma\sim[4]$
\begin{itemize}
\item $\sigma=\mathbb{P}_{(1234)}^4, \mathbb{P}_{(1243)}^4, \mathbb{P}_{(1342)}^4, \mathbb{P}_{(1432)}^4$
\begin{equation}
\begin{aligned}
T(\lambda,\sigma)=0,
\end{aligned}
\end{equation}
\end{itemize}

\begin{itemize}
\item $\sigma=\mathbb{P}_{(1324)}^4, \mathbb{P}_{(1423)}^4$
\begin{equation}
\begin{aligned}
T(\lambda,\sigma)=0,
\end{aligned}
\end{equation}
\end{itemize}
\end{itemize}

\item[(4)] $\lambda=[3,1]$:
$d_\lambda=3$, $D_\lambda^+=0$, $D_\lambda^-=\frac{d(d-1)(d+1)(d+2)}{8}$.
\begin{itemize}
\item[\ding{172}] $\sigma\sim[1,1,1,1]$
\begin{itemize}
\item $\sigma=\mathbb{P}_{(1)}^4$
\begin{equation}
\begin{aligned}
T(\lambda,\sigma)=&\frac{3}{d+2}\left(3\times Q_{1,1,1,1}+6\times Q_{2,1,1}+(-3)\times Q_{2,2}+0\times Q_{3,1}+(-6)\times Q_4\right)\\
&-\frac{3}{d+2}\left(3\times R_1+6\times R_2+(-3)\times R_2^2+0\times R_3+(-6)\times R_4\right),
\end{aligned}
\end{equation}
\end{itemize}

\item[\ding{173}] $\sigma\sim[2,1,1]$
\begin{itemize}
\item $\sigma=\mathbb{P}_{(12)}^4, \mathbb{P}_{(34)}^4$
\begin{equation}
\begin{aligned}
T(\lambda,\sigma)=0,
\end{aligned}
\end{equation}
\end{itemize}

\begin{itemize}
\item $\sigma=\mathbb{P}_{(13)}^4, \mathbb{P}_{(14)}^4, \mathbb{P}_{(23)}^4, \mathbb{P}_{(24)}^4$
\begin{equation}
\begin{aligned}
T(\lambda,\sigma)=&\frac{3}{d(d+2)}\left(1\times Q_{1,1,1,1}+2\times Q_{2,1,1}+(-1)\times Q_{2,2}+0\times Q_{3,1}+(-2)\times Q_4\right)\\
&-\frac{3}{d(d+2)}\left(1\times R_1+2\times R_2+(-1)\times R_2^2+0\times R_3+(-2)\times R_4\right),
\end{aligned}
\end{equation}
\end{itemize}

\item[\ding{174}] $\sigma\sim[2,2]$
\begin{itemize}
\item $\sigma=\mathbb{P}_{(12)(34)}^4$
\begin{equation}
\begin{aligned}
T(\lambda,\sigma)=&\frac{3(1-d)}{d+2}\left((-1)\times Q_{1,1,1,1}+(-2)\times Q_{2,1,1}+1\times Q_{2,2}+0\times Q_{3,1}+2\times Q_4\right)\\
&+\frac{3(d-1)}{d+2}\left((-1)\times R_1+(-2)\times R_2+1\times R_2^2+0\times R_3+2\times R_4\right),
\end{aligned}
\end{equation}
\end{itemize}

\begin{itemize}
\item $\sigma=\mathbb{P}_{(13)(24)}^4, \mathbb{P}_{(14)(23)}^4$
\begin{equation}
\begin{aligned}
T(\lambda,\sigma)=0,
\end{aligned}
\end{equation}
\end{itemize}

\item[\ding{175}] $\sigma\sim[3,1]$
\begin{itemize}
\item $\sigma=\mathbb{P}_{(123)}^4, \mathbb{P}_{(132)}^4, \mathbb{P}_{(124)}^4, \mathbb{P}_{(142)}^4, \mathbb{P}_{(134)}^4, \mathbb{P}_{(143)}^4, \mathbb{P}_{(234)}^4, \mathbb{P}_{(243)}^4$
\begin{equation}
\begin{aligned}
T(\lambda,\sigma)=0,
\end{aligned}
\end{equation}
\end{itemize}

\item[\ding{176}] $\sigma\sim[4]$
\begin{itemize}
\item $\sigma=\mathbb{P}_{(1234)}^4, \mathbb{P}_{(1243)}^4, \mathbb{P}_{(1342)}^4, \mathbb{P}_{(1432)}^4$
\begin{equation}
\begin{aligned}
T(\lambda,\sigma)=&\frac{3(1-d)}{d(d+2)}\left((-1)\times Q_{1,1,1,1}+(-2)\times Q_{2,1,1}+1\times Q_{2,2}+0\times Q_{3,1}+2\times Q_4\right)\\
&+\frac{3(d-1)}{d(d+2)}\left((-1)\times R_1+(-2)\times R_2+1\times R_2^2+0\times R_3+2\times R_4\right),
\end{aligned}
\end{equation}
\end{itemize}

\begin{itemize}
\item $\sigma=\mathbb{P}_{(1324)}^4, \mathbb{P}_{(1423)}^4$
\begin{equation}
\begin{aligned}
T(\lambda,\sigma)=0,
\end{aligned}
\end{equation}
\end{itemize}
\end{itemize}

\item[(5)] $\lambda=[4]$:
$d_\lambda=1$, $D_\lambda^+=\frac{(d+1)(d+2)}6$, $D_\lambda^-=\frac{(d-1)(d+1)(d+2)(d+4)}{24}$.
\begin{itemize}
\item[\ding{172}] $\sigma\sim[1,1,1,1]$
\begin{itemize}
\item $\sigma=\mathbb{P}_{(1)}^4$
\begin{equation}
\begin{aligned}
T(\lambda,\sigma)=&\frac{d^2(d^2+5d+7)}{4(d+1)(d+2)(d+4)}\left(1\times Q_{1,1,1,1}+6\times Q_{2,1,1}+3\times Q_{2,2}+8\times Q_{3,1}+6\times Q_4\right)\\
&-\frac{d}{(d+2)(d+4)}\left(1\times R_1+6\times R_2+3\times R_2^2+8\times R_3+6\times R_4\right),
\end{aligned}
\end{equation}
\end{itemize}

\item[\ding{173}] $\sigma\sim[2,1,1]$
\begin{itemize}
\item $\sigma=\mathbb{P}_{(12)}^4, \mathbb{P}_{(34)}^4$
\begin{equation}
\begin{aligned}
T(\lambda,\sigma)=&\frac{d^3}{4(d+1)(d+2)}\left(1\times Q_{1,1,1,1}+6\times Q_{2,1,1}+3\times Q_{2,2}+8\times Q_{3,1}+6\times Q_4\right),
\end{aligned}
\end{equation}
\end{itemize}

\begin{itemize}
\item $\sigma=\mathbb{P}_{(13)}^4, \mathbb{P}_{(14)}^4, \mathbb{P}_{(23)}^4, \mathbb{P}_{(24)}^4$
\begin{equation}
\begin{aligned}
T(\lambda,\sigma)=&\frac{d(d^2+5d+7)}{4(d+1)(d+2)(d+4)}\left(1\times Q_{1,1,1,1}+6\times Q_{2,1,1}+3\times Q_{2,2}+8\times Q_{3,1}+6\times Q_4\right)\\
&-\frac{1}{(d+2)(d+4)}\left(1\times R_1+6\times R_2+3\times R_2^2+8\times R_3+6\times R_4\right),
\end{aligned}
\end{equation}
\end{itemize}

\item[\ding{174}] $\sigma\sim[2,2]$
\begin{itemize}
\item $\sigma=\mathbb{P}_{(12)(34)}^4$
\begin{equation}
\begin{aligned}
T(\lambda,\sigma)=&\frac{d^2(d^2+d+3)}{4(d+1)(d+2)(d+4)}\left(1\times Q_{1,1,1,1}+6\times Q_{2,1,1}+3\times Q_{2,2}+8\times Q_{3,1}+6\times Q_4\right)\\
&+\frac{d(d-1)}{(d+2)(d+4)}\left(1\times R_1+6\times R_2+3\times R_2^2+8\times R_3+6\times R_4\right),
\end{aligned}
\end{equation}
\end{itemize}

\begin{itemize}
\item $\sigma=\mathbb{P}_{(13)(24)}^4, \mathbb{P}_{(14)(23)}^4$
\begin{equation}
\begin{aligned}
T(\lambda,\sigma)=&\frac{d^3+d^2-d}{4(d+1)(d+2)}\left(1\times Q_{1,1,1,1}+6\times Q_{2,1,1}+3\times Q_{2,2}+8\times Q_{3,1}+6\times Q_4\right),
\end{aligned}
\end{equation}
\end{itemize}

\item[\ding{175}] $\sigma\sim[3,1]$
\begin{itemize}
\item $\sigma=\mathbb{P}_{(123)}^4, \mathbb{P}_{(132)}^4, \mathbb{P}_{(124)}^4, \mathbb{P}_{(142)}^4, \mathbb{P}_{(134)}^4, \mathbb{P}_{(143)}^4, \mathbb{P}_{(234)}^4, \mathbb{P}_{(243)}^4$
\begin{equation}
\begin{aligned}
T(\lambda,\sigma)=&\frac{d^2}{4(d+1)(d+2)}\left(1\times Q_{1,1,1,1}+6\times Q_{2,1,1}+3\times Q_{2,2}+8\times Q_{3,1}+6\times Q_4\right),
\end{aligned}
\end{equation}
\end{itemize}

\item[\ding{176}] $\sigma\sim[4]$
\begin{itemize}
\item $\sigma=\mathbb{P}_{(1234)}^4, \mathbb{P}_{(1243)}^4, \mathbb{P}_{(1342)}^4, \mathbb{P}_{(1432)}^4$
\begin{equation}
\begin{aligned}
T(\lambda,\sigma)=&\frac{d(d^2+d+3)}{4(d+1)(d+2)(d+4)}\left(1\times Q_{1,1,1,1}+6\times Q_{2,1,1}+3\times Q_{2,2}+8\times Q_{3,1}+6\times Q_4\right)\\
&+\frac{d-1}{(d+2)(d+4)}\left(1\times R_1+6\times R_2+3\times R_2^2+8\times R_3+6\times R_4\right),
\end{aligned}
\end{equation}
\end{itemize}

\begin{itemize}
\item $\sigma=\mathbb{P}_{(1324)}^4, \mathbb{P}_{(1423)}^4$
\begin{equation}
\begin{aligned}
T(\lambda,\sigma)=&\frac{d^3}{4(d+1)(d+2)}\left(1\times Q_{1,1,1,1}+6\times Q_{2,1,1}+3\times Q_{2,2}+8\times Q_{3,1}+6\times Q_4\right),
\end{aligned}
\end{equation}
\end{itemize}
\end{itemize}

Summarizing all the terms, we obtain
\begin{equation}\label{eq:4-order_Clifford_X2}
\begin{aligned}
\Tr\left(E_{\Delta_{\text{Cl}(d)}^4}\left(X^{(1,2)\otimes2}\right)\rho^{\otimes4}\right)=&
\frac{5d^5-d^4-5d^3-43d^2+48}{24(d-2)(d-1)(d+1)(d+2)}Q_{1,1,1,1}
+\frac{d^2+d+2}{2(d+2)}Q_{2,1,1}
\\&+\frac{5d^5-9d^4-21d^3+29d^2+16d-16}{8(d-2)(d-1)(d+1)(d+2)}Q_{2,2}
\\&+\frac{d^2(d^3+d^2-d-5)}{6(d-2)(d-1)(d+1)(d+2)}Q_{3,1}
\\&+\frac{d-1}{2}Q_4
-\frac1{d+2}R_1
-\frac2{d+2}R_2
+\frac{d+1}{d+2}R_2^2.
\end{aligned}
\end{equation}

\end{itemize}

\subsubsection{Proof of Lemma~\ref{lem:variance_w_Clifford}}

\begin{proof}

The proof is similar to that of Lemma~\ref{lem:variance_w}. The first two terms $A_2^{\text{Cl}}$ and $A_3^{\text{Cl}}$ are the same as Eq.~\eqref{eq:A_2} and Eq.~\eqref{eq:A_3} since the multi-qubit Clifford group is a unitary 3-design~\cite{zhu2017multiqubit}. Inserting $d_+=d_-=d/2$, we have
\begin{equation}\label{eq:A_2_Clifford}
\begin{aligned}
A_{2}^{\text{Cl}}&=\binom{N_M}2\left(d_+{R_1^+}^2+(d_+-1){R_2^+}\right)+\left(+\Longleftrightarrow-\right)
\\&=\binom{N_M}2\left[\frac d2\left({R_1^+}^2+{R_1^-}^2\right)+\left(\frac d2-1\right)\left(R_2^++R_2^-\right)\right].
\end{aligned}
\end{equation}
and
\begin{equation}\label{eq:A_3_Clifford}
\begin{aligned}
A_{3}^{\text{Cl}}&=6\binom{N_M}3\left(\frac{d_+-1}{d_++2}{R_2^+}^2R_1^++\frac{2(d_++1)}{d_++2}R_3^+-\frac1{d_++2}{R_1^+}^3\right)+\left(+\Longleftrightarrow-\right)
\\&=6\binom{N_M}3\left(\frac{d-2}{d+4}\left({R_2^+}^2R_1^++{R_2^-}^2R_1^-\right)+\frac{2(d+2)}{d+4}\left(R_3^++R_3^-\right)-\frac1{d+4}\left({R_1^+}^3+{R_1^-}^3\right)\right)
\end{aligned}
\end{equation}

Instead of $A_4$ in Eq.~\eqref{eq:A_4}, we obtain
\begin{equation}\label{eq:A_4_Clifford}
\begin{aligned}
A_{4}^{\text{Cl}}=&6\binom{N_M}4\Tr\left(E_{\Delta_{\text{Cl}(d_+)}^4}\left(X_+^{(1,2)\otimes2}\right)\tilde{\rho}_+^{\otimes4}\right)-6\binom{N_M}4\Tr\left(\Phi_2\left(X_+^{(1,2)}\right)\tilde{\rho}_+^{\otimes2}\right)\Tr\left(\Phi_2\left(X_-^{(1,2)}\right)\tilde{\rho}_-^{\otimes2}\right)
\\&+(+\Longleftrightarrow-)
\\=&6\binom{N_M}4\Biggl[
\frac{5{d_+}^5-{d_+}^4-5{d_+}^3-43{d_+}^2+48}{24({d_+}-2)({d_+}-1)({d_+}+1)({d_+}+2)}Q_{1,1,1,1}^+
+\frac{{d_+}^2+{d_+}+2}{2({d_+}+2)}Q_{2,1,1}^+
\\&+\frac{5{d_+}^5-9{d_+}^4-21{d_+}^3+29{d_+}^2+16{d_+}-16}{8({d_+}-2)({d_+}-1)({d_+}+1)({d_+}+2)}Q_{2,2}^+
\\&+\frac{{d_+}^2({d_+}^3+{d_+}^2-{d_+}-5)}{6({d_+}-2)({d_+}-1)({d_+}+1)({d_+}+2)}Q_{3,1}^+
\\&+\frac{{d_+}-1}{2}Q_4^+
-\frac1{{d_+}+2}R_1^+
-\frac2{{d_+}+2}R_2^+
+\frac{{d_+}+1}{{d_+}+2}{R_2^+}^2-R_2^+R_2^-\Biggr]
+(+\Longleftrightarrow-)
\\=&6\binom{N_M}4\Biggl[
\frac{5d^5-2d^4-20d^3-344d^2+1536}{48(d-4)(d-2)(d+2)(d+4)}\left(Q_{1,1,1,1}^++Q_{1,1,1,1}^-\right)
+\frac{d^2+2d+8}{4(d+4)}\left(Q_{2,1,1}^++Q_{2,1,1}^-\right)
\\&+\frac{5d^5-18d^4-84d^3+232d^2+256d-512}{16(d-4)(d-2)(d+2)(d+4)}\left(Q_{2,2}^++Q_{2,2}^-\right)
\\&+\frac{d^2\left(d^3+2d^2-4d-40\right)}{12(d-4)(d-2)(d+2)(d+4)}\left(Q_{3,1}^++Q_{3,1}^-\right)
\\&+\frac{d-2}{4}\left(Q_4^++Q_4^-\right)
-\frac2{d+4}\left(R_1^++R_1^-\right)
-\frac4{d+4}\left(R_2^++R_2^-\right)
+\frac{d+2}{d+4}\left({R_2^+}^2+{R_2^-}^2\right)-2R_2^+R_2^-\Biggr].
\end{aligned}
\end{equation}

We calculate the variance of the estimator $\omega_u$ (see  Eq.~\eqref{eq:direct_sum_estimator}, cf. Eq.~\eqref{eq:variance_final}):
\begin{equation}\label{eq:var_omega_u_Clifford}
\begin{aligned}
\Var^{\text{Cl}}(\omega_u)&=\mathbb{E}[\omega_u^2]-\mathbb{E}[\omega_u]^2
\\&={\binom{N_M}2}^{-2}\sum\limits_{\substack{\{i,j\}\in\binom{[N_M]}2\\\{i',j'\}\in\binom{[N_M]}2}}\mathbb{E}\left[X_2(s_i,s_j)X_2(s_{i'},s_{j'})\right]-\mathbb{E}[\omega_u]^2
\\&={\binom{N_M}2}^{-2}\left(A_2^{\text{Cl}}+A_3^{\text{Cl}}+A_4^{\text{Cl}}\right)-\left(R_2^+-R_2^-\right)^2.
\end{aligned}
\end{equation}

Inserting Eqs.~\eqref{eq:A_2_Clifford},~\eqref{eq:A_3_Clifford}, and ~\eqref{eq:A_4_Clifford}, we have
\begin{equation}\label{eq:variance_final_Clifford}
\begin{aligned}
\Var^{\text{Cl}}(\omega_u)=&{\binom{N_M}2}^{-2}\left(A_2^{\text{Cl}}+A_3^{\text{Cl}}+A_4^{\text{Cl}}\right)-\left(R_2^+-R_2^-\right)^2
\\=&-\frac{2(N_M-2)(N_M-3)}{N_M(N_M-1)(d+4)}
\\&+\frac{d}{N_M(N_M-1)}\left({R_1^+}^2+{R_1^-}^2\right)
\\&-\frac{4(N_M-2)}{N_M(N_M-1)(d+4)}\left({R_1^+}^3+{R_1^-}^3\right)
\\&+\frac{4(N_M-2)(d-2)}{N_M(N_M-1)(d+4)}\left(R_1^+{R_2^+}^2+R_1^-{R_2^-}^2\right)
\\&+\left(\frac{d-2}{N_M(N_M-1)}-\frac{4(N_M-2)(N_M-3)}{N_M(N_M-1)(d+4)}\right)\left(R_2^++R_2^-\right)
\\&-\frac{4N_M-6}{N_M(N_M-1)}\left(R_2^+-R_2^-\right)^2
\\&-\frac{2(N_M-2)(N_M-3)}{N_M(N_M-1)(d+4)}\left({R_2^+}^2+{R_2^-}^2\right)
\\&+\frac{8(N_M-2)(d+2)}{N_M(N_M-1)(d+4)}\left(R_3^++R_3^-\right)
\\&+\frac{(N_M-2)(N_M-3)(5d^5-2d^4-20d^3-344d^2+1536)}{48N_M(N_M-1)(d-4)(d-2)(d+2)(d+4)}\left(Q_{1,1,1,1}^++Q_{1,1,1,1}^-\right)
\\&+\frac{(N_M-2)(N_M-3)(d^2+2d+8)}{4N_M(N_M-1)(d+4)}\left(Q_{2,1,1}^++Q_{2,1,1}^-\right)
\\&+\frac{(N_M-2)(N_M-3)(5d^5-18d^4-84d^3+232d^2+256d-512)}{16N_M(N_M-1)(d-4)(d-2)(d+2)(d+4)}\left(Q_{2,2}^++Q_{2,2}^-\right)
\\&+\frac{(N_M-2)(N_M-3)d^2\left(d^3+2d^2-4d-40\right)}{12N_M(N_M-1)(d-4)(d-2)(d+2)(d+4)}\left(Q_{3,1}^++Q_{3,1}^-\right)
\\&+\frac{(N_M-2)(N_M-3)(d-2)}{4N_M(N_M-1)}\left(Q_4^++Q_4^-\right).
\end{aligned}
\end{equation}

We have $R_k^\pm\leq1$ and $Q_\sigma^\pm\leq2/d$ from Lemma~\ref{lem:upper_bound_Q_sigma}, so that we obtain an upper bound for $\Var^{\text{Cl}}(\omega_u)$
\begin{equation}
\begin{aligned}
\Var^{\text{Cl}}(\omega_u)\leq&\frac{4d}{N_M(N_M-1)}+\frac{24(N_M-2)}{N_M(N_M-1)}
+\frac{4(N_M-2)(N_M-3)}{N_M(N_M-1)}
\\=&\cO\left(\frac{d}{N_M^2} + \frac{1}{N_M} + 1\right).
\end{aligned}
\end{equation}

Furthermore, suppose we draw $N_U$ independent random unitaries and construct an estimator by averaging the individual estimators obtained for each choice of random unitary, i.e.,
\begin{equation}
\omega =\frac{1}{N_U}\sum\limits_{u=1}^{N_U}\omega_u.
\end{equation}
Then we can estimate $\Tr(O \rho^2)$ unbiasedly, with the variance given by
\begin{equation}
\Var^{\text{Cl}}(\omega) = \frac{1}{N_U}\Var^{\text{Cl}}(\omega_u) = \frac{1}{N_U} \cO\left(\frac{d}{N_M^2} + \frac{1}{N_M} + 1\right).
\end{equation}

\end{proof}

Based on Lemmas~\ref{lem:median-of-mean},~\ref{lem:expectation_value_w}, and~\ref{lem:variance_w_Clifford}, we can now prove Lemma~\ref{lem:estimating_dichotomic_Clifford} in the main text.

\begin{proof}[Proof of Lemma~\ref{lem:estimating_dichotomic_Clifford}]

To apply Lemma~\ref{lem:median-of-mean}, it is sufficient to suppress $\Var^{\text{Cl}}(\omega)\leq\varepsilon^2/3<\varepsilon^2/2$. If we choose $N_U=\cO\left(\frac1{\varepsilon^2}\right)$ and $N_M=\cO\left(\sqrt{d}\right)$, then $N_U\times N_M=\cO\left(\frac{\sqrt{d}}{\varepsilon^2}\right)$ copies of states are sufficient to estimate $\Tr(O\rho^2)$ for traceless dichotomic $O$ to $\varepsilon$-precision within constant failure probability.
\end{proof}


\subsection{Decomposition of dichotomic observables}\label{app:decomp_dichotomic_obs}
Now we provide the analysis of the sample complexity stated in Theorem~\ref{thm:good_dicho_estimation}. We assume that $\min\{d_-,d_+\} < \frac{d}{4}$; otherwise, Lemma~\ref{lem:estimating_dichotomic} already yields the desired sample complexity since $d_- = d_+ = \Omega(d)$. We then obtain the following decomposition, which is illustrated in  Fig.~\ref{fig:dichotomic}.

\begin{lemma}[Formal version of Lemma \ref{lem:Dichotomic_decomposition}]
    Suppose $O$ is a dichotomic observable satisfying $\min\{d_+, d_-\} < \frac{d}{4}$. Then, $O = \frac{1}{2}(\pm\mathbb{I}+O_1+O_2+O_3)$, where each $O_i$ is a dichotomic observable with eigen-subspace dimensions $d_-^{(i)}$ and $d_+^{(i)}$ satisfying $\min\{d_-^{(i)},d_+^{(i)}\} \ge \frac{d}{3}$.
\end{lemma}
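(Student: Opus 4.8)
The plan is to pass to the eigenbasis of $O$, in which $O$ and all the $O_i$ become simultaneously diagonal, and to arrange the decomposition entry by entry. Writing $O = P_+ - P_-$ for the projectors onto its $\pm1$ eigenspaces, I would first fix a sign convention: replacing $O$ by $-O$ if necessary (which only flips the global $\pm\mathbb{I}$ term) I assume the smaller eigenspace is the $-1$ one, so that $d_- = \min\{d_+,d_-\} < d/4$ and hence $d_+ > 3d/4$. Since the $O_i$ are to be diagonal in the same basis, producing the identity $O = \tfrac12(\mathbb{I}+O_1+O_2+O_3)$ reduces to finding, for each diagonal index $i$ with eigenvalue $o_i\in\{\pm1\}$, three signs $o_i^{(1)},o_i^{(2)},o_i^{(3)}\in\{\pm1\}$ solving $2o_i = 1 + o_i^{(1)}+o_i^{(2)}+o_i^{(3)}$.

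Next I would solve this pointwise constraint, whose structure is rigid. On the $-1$ eigenspace ($o_i=-1$) the only solution is $o_i^{(1)}=o_i^{(2)}=o_i^{(3)}=-1$, while on the $+1$ eigenspace ($o_i=+1$) the solutions are exactly those in which precisely one of the three signs equals $-1$. This dictates the construction: partition the $d_+$ indices of the $+1$ eigenspace into three disjoint sets $S_1,S_2,S_3$ of sizes as equal as possible, i.e. $|S_k|\in\{\lfloor d_+/3\rfloor,\lceil d_+/3\rceil\}$, and define $O_k$ to take the value $-1$ on the $-1$ eigenspace of $O$ together with $S_k$, and $+1$ on $\bigcup_{j\neq k}S_j$. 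Each $O_k$ is then manifestly dichotomic, and collecting the pointwise relations over all indices yields the operator identity $2O = \mathbb{I}+O_1+O_2+O_3$.

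The only genuine computation is the verification of the dimension bounds, which I expect to be the main (though modest) obstacle, since this is where the hypothesis $d_-<d/4$ must be invoked in both directions. The $-1$ eigenspace of $O_k$ has dimension $d_-^{(k)} = d_- + |S_k|$ and the $+1$ eigenspace has dimension $d_+^{(k)} = d_+ - |S_k| = d - d_-^{(k)}$, so I would reduce the claim to the two-sided bound $d/3 \le d_-^{(k)} \le 2d/3$. For the lower bound I would use $|S_k|\ge\lfloor d_+/3\rfloor\ge(d_+-2)/3$ together with $d_-\ge1$ to obtain $d_-^{(k)} \ge d_- + d_+/3 - \tfrac23 \ge (d_-+d_+)/3 = d/3$; for the upper bound (equivalently $d_+^{(k)}\ge d/3$) I would use $|S_k|\le\lceil d_+/3\rceil\le(d_++2)/3$ together with $d_+\ge d_-+2$, which follows from $d_-<d/4$ whenever $d=2^n\ge4$, to get $d_+^{(k)} \ge (2d_+-2)/3 \ge d/3$. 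These give $\min\{d_-^{(k)},d_+^{(k)}\}\ge d/3$ for every $k$, completing the decomposition. The residual subtlety is purely the integer rounding of the partition, which I would dispatch by noting that the admissible interval $[\,d/3-d_-,\ 2d/3-d_-\,]$ for each $|S_k|$ has length $d/3>1$ and contains the mean $(d-d_-)/3$ strictly in its interior, so a balanced partition is guaranteed to fall inside it.
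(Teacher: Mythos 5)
Your proposal is correct and follows essentially the same route as the paper: assume (after possibly negating $O$, which flips the sign of the $\mathbb{I}$ term) that the $-1$ eigenspace is the small one, split the large eigenspace into three nearly equal blocks, and let $O_k$ flip exactly one block, with the dimension bounds verified by the same elementary counting using $d_-\ge 1$ and $d_+>3d/4$. The only cosmetic difference is that you motivate the construction via the pointwise sign equation $2o_i=1+o_i^{(1)}+o_i^{(2)}+o_i^{(3)}$, whereas the paper writes down the projector identity $O_i=\bigl(\sum_j(-1)^{\delta_{i,j}}P_+^{(j)}\bigr)-P_-$ directly.
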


\begin{proof}
Let $P_{+}$ and $P_{-}$ denote the projectors onto the eigenspaces of $O$ corresponding to the eigenvalues $+1$ and $-1$, respectively, so that $O = P_+-P_-$ and $\mathbb{I} = P_+ + P_-$. 
Without loss of generality, we first assume $d_- < d_+$ and $d_- < \frac{d}{4}$.
Next, decompose the $+1$ eigenspace $\Pi_{+}$ into three orthogonal subspaces, denoted by $\Pi_{+}^{(1)}$, $\Pi_{+}^{(2)}$, and $\Pi_{+}^{(3)}$ (projectors correspondingly denoted as $P_+^{(j)}$), each having dimension at least $\lfloor \frac{d}{3} \rfloor$.  
Define 
\begin{equation}
    O_i = \left(\sum_{j=1}^3 (-1)^{\delta_{i,j}} P_{+}^{(j)}\right) - P_-,
\end{equation}
where the exponents $\delta_{i,j}$ are the delta function, $\delta_{i,j} = 1$ if and only if $i = j$.
A direct calculation then shows that $O = \frac{1}{2}(\mathbb{I} + O_1 + O_2 + O_3)$.
Moreover, for each $O_i$, one can verify that the dimensions of its positive and negative eigenspaces satisfy
\begin{equation}
    d_+^{(i)} \ge d_+ - \left\lceil \frac{d_+}{3}\right\rceil >\frac{2}{3}d_+ - 1 > \frac{d}{2} - 1 \ge \frac{d}{3}
\end{equation}
where the last inequality holds because $0 < d_- < \frac{d}{4}$ implies that $d \ge 8$. Also,
\begin{equation}
    d_-^{(i)} = d_- + \left\lfloor \frac{d_+}{3} \right\rfloor \ge 1 + \left\lfloor \frac{d-1}{3} \right\rfloor \ge \frac{d}{3}.
\end{equation}
This completes the proof of the cases $d_+ > d_-$.

For the case where $d_+ < d_-$, we apply the above procedure to the observable $-O = \frac{1}{2} (\mathbb{I}+O_1+O_2+O_3)$. This yields the decomposition $O = \frac{1}{2}(-\mathbb{I} + O_1'+O_2'+O_3')$,
where $O_i' = -O_i$ for $1\le i \le 3$. Since negating an observable swaps its eigenspaces but preserves their dimensions, the components $O_i'$ also satisfy the condition  $\min\{d_+^{(i)}, d_-^{(i)}\}  \ge \frac{d}{3}$. This completes the proof.
\end{proof}

\begin{figure}[!htbp]
\centering
\includegraphics[width=.6\linewidth]{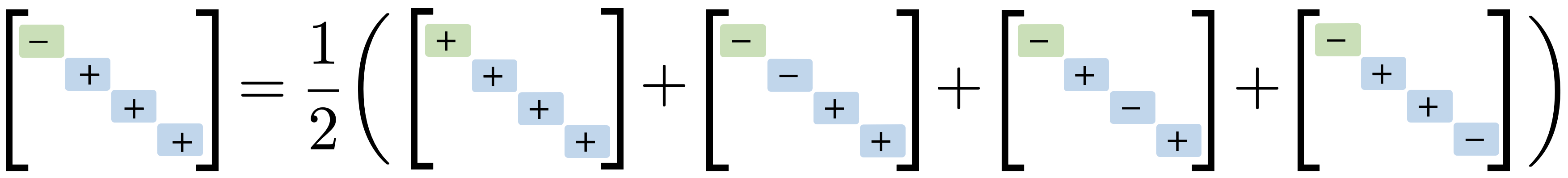}
\caption{Decomposition of a dichotomic observable. On the left, an observable $O$ with $d_- < \frac{d}{4}$ is depicted, where the green and blue blocks represent the eigensubspaces corresponding to $\Pi_-$ and the three partitions of $\Pi_+$, denoted by $\Pi_+^{(1)}$, $\Pi_+^{(2)}$, and $\Pi_+^{(3)}$, respectively. The $\pm$ signs within each block denote the corresponding diagonal entries of $\pm1$. On the right, the decomposition $O=\frac{1}{2}(\mathbb{I}+O_1+O_2+O_3)$
is shown: the first matrix represents the identity $\mathbb{I}$, while the remaining matrices correspond to $O_1$, $O_2$, and $O_3$, each satisfying $d_{\pm}^{(i)}\geq \frac{d}{3}$.}
\label{fig:dichotomic}
\end{figure}

\subsection{Proof of Theorem~\ref{thm:estimating_general}}\label{app:proof_of_general}
\begin{proof}
    Let $k = \lceil \log_2(\varepsilon^{-1}) \rceil + 1$, and use Lemma~\ref{lem:decompose} to decompose $O$ as
    \begin{equation}
        \Tr(O\rho^2) =  \sum\limits_{l=1}^k 2^{-l} \Tr(O_l\rho^2)  \;+\; \Tr(O_{\Delta}\rho^2)
    \end{equation}
    We set $\delta' = \delta/k$ and $\varepsilon'_l =(\tfrac{3}{2})^l \,\tfrac{\varepsilon}{8}$.
    For each \(1 \le l \le k\), we obtain an estimator \(\omega_{l}\) of \(\Tr(O\rho^2_l)\) such that
    \begin{equation}
        \bigl|\omega_{l} - \Tr(O\rho^2_{l})\bigr| \;\le\; \varepsilon'_l 
    \end{equation}
    with probability at least \(1 - \delta'\). By a union bound, with probability at least \(1-\delta\), the above holds for all \(1 \le l \le k\). We then set 
    \begin{equation}
        \omega = \sum\limits_{l=1}^k 2^{-l} \omega_l
    \end{equation}
    as the estimator for $\Tr(O\rho^2)$. With probability at least \(1-\delta\),
    \begin{equation}
    \begin{split}
        \abs{\omega-\Tr(O\rho^2)} &= \abs{\sum\limits_{l=1}^k 2^{-l}\left(\omega_l-\Tr(O_l\rho^2)\right) + \Tr(O_{\Delta}\rho^2)} \\
        &\le \sum\limits_{l=1}^k 2^{-l} \abs{\omega_l-\Tr(O_l\rho^2)} + \abs{\Tr(O_{\Delta}\rho^2)} \\
        & \le \sum\limits_{l=1}^{k} 2^{-l} \varepsilon'_l + \norm{O_{\Delta}}_{\infty} \\
        &\le \sum\limits_{l=1}^k  \left(\frac{3}{4}\right)^l \frac{\varepsilon}{8} + \frac{\varepsilon}{2} < \varepsilon .
    \end{split}
    \end{equation}
    
    Finally, we account for the sample complexity. We obtain $k$ estimators $\{\omega_i\}_{i=1}^l$. For each estimator $\omega_l$, according to Theorem~\ref{thm:good_dicho_estimation}, the sample complexity is $\cO\left(\max\left\{\frac{\sqrt{d}}{\varepsilon'_l}, \frac{1}{(\varepsilon'_l)^2} \right\} \log(1/\delta')\right) = \cO\left([(\frac{2}{3})^{l}\frac{\sqrt{d}}{\varepsilon} + (\frac{2}{3})^{2l} \frac{1}{\varepsilon^2}]\log(1/\delta')\right)$. 
    Summing over \(l = 1, 2, \dots, k\) gives the total sample complexity
    \begin{equation}
    \begin{split}
        &\sum\limits_{l=0}^k\Bigl[ \Bigl(\frac{2}{3}\Bigr)^{l}\Bigl(\frac{\sqrt{d}}{\varepsilon}\Bigr) + \Bigl(\frac{2}{3}\Bigr)^{2l}\Bigl(\frac{1}{\varepsilon^2}\Bigr)\Bigr] \cO\Bigl(\log(1/\delta')\Bigr) \\
        =& \cO\Bigl(\Bigl(\frac{\sqrt{d}}{\varepsilon} + \frac{1}{\varepsilon^2}\Bigr) \log\bigl[\delta^{-1}\log(\varepsilon^{-1})\bigr] \Bigr) \\
        =& \cO\Bigl(\max\Bigl\{\frac{\sqrt{d}}{\varepsilon}, \frac{1}{\varepsilon^{2}}\Bigr\}\log\bigl[\delta^{-1}\log(\varepsilon^{-1})\bigr]\Bigr).
    \end{split}
    \end{equation}
\end{proof}

\section{Lower-bound analysis for sample complexity}\label{app:lower_bound}
In this section, we will give a brief proof of Fact~\ref{fact:large_one_norm}, a direct corollary of Theorem~9 in the Supplementary Material of Ref.~\cite{Huang2022QuantumAdvantage}.
In that work, the authors proved the sample complexity lower bound for estimating $\Tr(Z_1\rho^2)$, where $Z_1$ represents the Pauli-$Z$ observable on the first qubit, by constructing a state discrimination task. We consider specifically two state ensembles:
\begin{enumerate}
\item $\mathcal{E}_1$ containing states of 
\begin{equation}
    \rho=\frac{1}{2}\ketbra{0}{0}\otimes\ketbra{\psi}{\psi}+\frac{1}{2}\ketbra{1}{1}\otimes\frac{\mathbb{I}}{2^{n-1}}.
\end{equation}
\item $\mathcal{E}_2$ containing states of 
\begin{equation}
\rho=\frac{1}{2}\ketbra{1}{1}\otimes\ketbra{\psi}{\psi}+\frac{1}{2}\ketbra{0}{0}\otimes\frac{\mathbb{I}}{2^{n-1}}.
\end{equation}
\end{enumerate}
Here, $\frac{\mathbb{I}}{2^{n-1}}$ and $\ket{\psi}$ represent the $(n-1)$-qubit maximally mixed state and Haar-random state vector.
A Haar-random pure state can be sampled by $U\ket{\psi_0}$, where $U$ is a Haar-random unitary and $\ket{\psi_0}$ is some fixed state vector.
The task is that, given multiple copies of a state $\rho$ sampled from either $\mathcal{E}_1$ or $\mathcal{E}_2$ with equal probabilities, how many copies of $\rho$ should be provided to decide which ensemble it comes from.
Note that, when $n$ is large enough, for states from ensemble $\mathcal{E}_1$ we have $\Tr(Z_1\rho^2)\approx\frac{1}{4}$; while $\Tr(Z_1\rho^2)\approx-\frac{1}{4}$ for states from ensemble $\mathcal{E}_2$.
Therefore, if one can accurately estimate the value of $\Tr(Z_1\rho^2)$, one can distinguish these two state ensembles with a high probability.
Similarly, if the discrimination of these two state ensembles has a fundamental limitation, it also poses difficulties for the estimation of $\Tr(Z_1\rho^2)$.
In Ref.~\cite{Huang2022QuantumAdvantage}, the authors proved that, for protocols utilizing arbitrary adaptive single-copy operations and measurements, $\Omega(\sqrt{d})$ copies of $\rho$ should be provided to decide the state ensemble with success probability $\frac{1}{2}+\Omega(1)$, where $d$ represents the dimension of states in ensembles $\mathcal{E}_1$ and $\mathcal{E}_2$.

It is direct to generalize their conclusion to other observables by defining two new state ensembles and the corresponding state discrimination task:
\begin{enumerate}
\item $\mathcal{E}_1^U$ containing states of 
\begin{equation}
    \rho=U\left(\frac{1}{2}\ketbra{0}{0}\otimes\ketbra{\psi}{\psi}+\frac{1}{2}\ketbra{1}{1}\otimes\frac{\mathbb{I}}{2^{n-1}}\right)U^\dagger.
\end{equation}
\item $\mathcal{E}_2^U$ containing states of 
\begin{equation}
\rho=U\left(\frac{1}{2}\ketbra{1}{1}\otimes\ketbra{\psi}{\psi}+\frac{1}{2}\ketbra{0}{0}\otimes\frac{\mathbb{I}}{2^{n-1}}\right)U^\dagger,
\end{equation}
\end{enumerate}
where $U$ is an arbitrary $d$-dimensional unitary.
Note that the new state ensemble discrimination task is no more difficult than the original discrimination task between $\mathcal{E}_1$ and $\mathcal{E}_2$.
This is because the two original state ensembles can be transformed into $\mathcal{E}_1^U$ and $\mathcal{E}_2^U$ using the unitary transformation $U$, and the protocols we consider can implement arbitrary single-copy operations and measurements.
So we can conclude that distinguishing between $\mathcal{E}_1^U$ and $\mathcal{E}_2^U$ also has a sample complexity lower bound of $\Omega(\sqrt{d})$.
To prove the sample complexity lower bound for estimating $\Tr(P\rho^2)$ with $P$ being a Pauli observable, we can choose $U$ to be the unitary that satisfies $U^\dagger P U=Z_1$.
Then, it can be similarly verified that the accurate estimation of $\Tr(P\rho^2)$ helps to distinguish $\mathcal{E}_1^U$ and $\mathcal{E}_2^U$ with a high probability.
Therefore, we can conclude that the accurate estimation of $\Tr(P\rho^2)$ with arbitrary Pauli observable $P$ and target $d$-dimensional state $\rho$ has a sample complexity lower bound of $\Omega(\sqrt{d})$.

\section{Additional analysis on the Pauli sampling protocol}\label{app:Pauli_sampling}

In this section, we analyze the performance of the Pauli sampling scheme presented in Sec.~\ref{sec:pauli_sampling} and prove Theorem \ref{thm:Pauli_sampling}. We begin by establishing the following lemma.

\begin{lemma}\label{lem:Pauli_sampling}
    The estimator $\omega$ defined in Eq.~\eqref{eq:estimator_Pauli_sampling} satisfies
    \begin{equation}
    \begin{split}
        \Pr[\abs{\omega - \Tr(O \rho^2)} \ge \varepsilon] \le \frac{1}{3},
    \end{split}
    \end{equation}
    and the expected sample complexity is given by
    \begin{equation}
        \mathbb E[\sum_{i=1}^l m_i ] = \cO\left(\frac{K \norm{O}_2^2}{ \sqrt{d} \varepsilon^2}\right).
    \end{equation}
\end{lemma}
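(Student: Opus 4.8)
The plan is to first verify that $\omega$ is an unbiased estimator of $\Tr(O\rho^2)$, then control its variance through a law-of-total-variance decomposition that separates the outer importance-sampling fluctuations from the inner single-Pauli estimation error, and finally convert the variance bound into the stated tail bound via Chebyshev's inequality. The expected sample count will then follow from a direct computation.

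For unbiasedness, I would condition on the sampled Pauli $P_i\sim\{p_j\}$ and use that each $\omega_i$ returned by Protocol~\ref{protocol:dichotomic} with Clifford gates is unbiased for $\Tr(P_i\rho^2)=\sqrt{d}\,\chi_{\rho^2}(P_i)$ (Lemma~\ref{lem:expectation_value_w}). Writing $\Tr(OP_i)=\sqrt{d}\,\chi_O(P_i)$, the single-term conditional expectation collapses to $X_j=\norm{O}_2^2\,\chi_{\rho^2}(P_j)/\chi_O(P_j)$, and averaging over the draw gives $\sum_j p_j X_j=\sum_j \chi_O(P_j)\chi_{\rho^2}(P_j)=\Tr(O\rho^2)$, so averaging over the $l$ i.i.d.\ terms preserves unbiasedness.

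The core of the argument is the variance bound. Since the $l$ summands are i.i.d., $\Var(\omega)=\tfrac{1}{l}\Var(Y)$ with $Y=\tfrac{\norm{O}_2^2}{\Tr(OP)}\omega_P$, and I would split $\Var(Y)=\Var_P(\mathbb{E}[Y\mid P])+\mathbb{E}_P[\Var(Y\mid P)]$. For the outer term, $\mathbb{E}[Y\mid P]=X_P$ and $\Var_P(X_P)\le\mathbb{E}_P[X_P^2]=\norm{O}_2^2\sum_{j=1}^K\chi_{\rho^2}^2(P_j)$; the key estimate is $\abs{\Tr(\rho^2 P_j)}\le\norm{\rho^2}_1\norm{P_j}_{\infty}\le1$, which yields $\chi_{\rho^2}^2(P_j)\le 1/d$ and hence $\mathbb{E}_P[X_P^2]\le K\norm{O}_2^2/d$, so with $l=\cO(K\norm{O}_2^2/(d\varepsilon^2))$ this contribution is $\cO(\varepsilon^2)$. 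For the inner term, I would use the Clifford-estimator variance $\Var(\omega_P)=\cO(\sqrt{d}/m_P)$ from Lemma~\ref{lem:variance_w_Clifford} with $N_M=\cO(\sqrt{d})$, giving $\Var(Y\mid P)=\tfrac{\norm{O}_2^4}{d\chi_O^2(P)}\Var(\omega_P)=\cO(l\varepsilon^2)$ once the prescribed $m_P=\cO(\norm{O}_2^4/(\sqrt{d}\,l\varepsilon^2\chi_O^2(P)))$ is substituted; dividing by $l$ again yields $\cO(\varepsilon^2)$. Tuning the hidden constants so that $\Var(\omega)\le\varepsilon^2/3$, Chebyshev's inequality delivers $\Pr[\abs{\omega-\Tr(O\rho^2)}\ge\varepsilon]\le 1/3$.

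Finally, for the expected sample complexity I would take the expectation of $m_i$ over $P_i\sim\{p_j\}$ using $\mathbb{E}_{P\sim p}[\chi_O^{-2}(P)]=\sum_j p_j/\chi_O^2(P_j)=K/\norm{O}_2^2$, which gives $\mathbb{E}[m_i]=\cO(K\norm{O}_2^2/(\sqrt{d}\,l\varepsilon^2))$; multiplying by $l$ cancels the dependence on $l$ and produces $\mathbb{E}[\sum_{i=1}^l m_i]=\cO(K\norm{O}_2^2/(\sqrt{d}\,\varepsilon^2))$. The main obstacle is the outer variance term: the naive Parseval bound $\sum_j\chi_{\rho^2}^2(P_j)\le\Tr(\rho^4)$ is too weak, since it can be $\Theta(1)$ for nearly pure states, and the estimate only closes once it is replaced by the operator-norm bound $\chi_{\rho^2}^2(P_j)\le 1/d$. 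The delicate bookkeeping is to confirm that the two variance contributions are \emph{simultaneously} $\cO(\varepsilon^2)$ under the protocol's specific choices of $l$ and $m_i$, and in particular that $l$ drops out of the final sample count.
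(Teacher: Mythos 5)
Your proposal is correct and follows essentially the same route as the paper: the same split into the outer importance-sampling fluctuation of $X_{P}$ (bounded via $\abs{\Tr(\rho^2 P)}\le 1$, giving $K\norm{O}_2^2/d$) and the inner Clifford-estimator variance (cancelling to $\cO(l\varepsilon^2)$ under the prescribed $m_i$), and the same expectation computation for the sample count. The only cosmetic difference is that you combine the two contributions via the law of total variance and a single Chebyshev bound, whereas the paper applies Chebyshev separately to each piece with error budget $\varepsilon/2$ and takes a union bound; both yield the stated $1/3$ failure probability.
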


\begin{proof}
Set $l = \frac{24 K \norm{O}_2^2}{d\varepsilon^2}$. For a Pauli operator $P_i$, we set $N_U = \cO\left(\frac{\norm{O}_2^4}{d  l \varepsilon^2 \chi_O^2(P_i)}\right)$ and $N_M = \cO(\sqrt{d})$, with sample complexity given by $m_i = N_UN_M = \cO\left(\frac{\norm{O}_2^4}{\sqrt{d} l \varepsilon^2  \chi_O^2(P_i)}\right)$ for this Pauli operator. According to Lemma \ref{lem:variance_w_Clifford}, we have the following:
\begin{equation}
\begin{split}
    \bE[\omega_i | P_i] &= \Tr(P_i\rho^2), \\
    \Var[\omega_i | P_i] &= \frac{d l \varepsilon^2 \chi_O^2(P_i)}{24 \norm{O}_2^4 }
\end{split}
\end{equation}

Now, consider the random variable
\begin{equation}
    \tilde{X}_i = \frac{\norm{O}_2^2}{\Tr(OP_i)} \omega_i = \frac{\norm{O}_2^2 }{\chi_O(P_i) \sqrt{d}} \omega_i,
\end{equation}
for which we have the following properties:
\begin{equation}
\begin{split}
    \bE[\tilde{X}_i | P_i] &= \norm{O}_2^2 \frac{\chi_{\rho^2}(P_i)}{\chi_{O}(P_i)} = X_i, \\ 
    \Var[\tilde{X}_i | P_i] &= \frac{l\varepsilon^2}{24}.
\end{split}
\end{equation}

Therefore, for the random variable $\omega = \frac{1}{l} \sum_{i=1}^l \tilde{X}_i$, the variance is given by:
\begin{equation}
    \Var[\omega | \{P_i\}] = \frac{\varepsilon^2}{24}.
\end{equation}
By Chebyshev's inequality, when fixing $\{P_i\}$, we have
\begin{equation}\label{eq:expectation_condition_P}
    \Pr[\abs{\omega - \bE[\omega | \{P_i\}]} \ge \frac{\varepsilon}{2}]    \le \frac{1}{6}
\end{equation}
where
\begin{equation}
\begin{split}
    \bE[\omega | \{P_i\}] = \frac{1}{l}\sum_{i=1}^l \bE[\tilde{X}_i |P_i] =\frac{1}{l}\sum_{i=1}^l X_i,
\end{split}
\end{equation}
For each $X_i$, we have
\begin{equation}
\begin{split}
    \Var_{P_i}[X_i] &= \bE [X_i^2] - (\bE[X_i])^2 \\
    &\le \bE [X_i^2] \\
    &= \sum_{i=1}^K p_i X_i^2 \\
    &= \sum_{i=1}^K \frac{\chi_O^2(P_i)}{\norm{O}_2^2} \left( \norm{O}_2^2 \frac{\chi_{\rho^2}(P_i)}{\chi_O(P_i)} \right)^2\\
    &= \sum_{i=1}^K \norm{O}_2^2 \frac{\Tr(P_i\rho^2)^2}{d} \\
    &\le \frac{K\norm{O}_2^2}{d}.
\end{split}
\end{equation}
Therefore, 
\begin{equation}
    \Var_{\{P_i\}}(\bE[\omega | \{P_i\}] ) = \frac{1}{l}\Var_{P_i}[X_i] \le \frac{\varepsilon^2}{24}.
\end{equation}
By Chebyshev's inequality, we then have
\begin{equation}\label{eq:expectation_not_condition_P}
     \Pr_{\{P_i\}}[\abs{\bE[\omega | \{P_i\}] - \bE[\omega]} \ge \frac{\varepsilon}{2}]    \le \frac{1}{6}
\end{equation}
Finally, combining Eq.~\eqref{eq:expectation_condition_P} and Eq.~\eqref{eq:expectation_not_condition_P} via a union bound, we obtain
\begin{equation}
    \Pr[\abs{\omega - \bE[\omega]} \ge \varepsilon] \le \frac{1}{3}.
\end{equation}
Substituting $\mathbb{E}[\omega] = \mathbb{E}[X_i] = \sum_i p_i X_i = \Tr(O \rho^2)$ proves the first inequality in Lemma \ref{lem:Pauli_sampling}.

Moreover, the expected sample complexity is given by
\begin{equation}
\begin{split}
    \bE[\sum_{i=1}^l m_i] &= l \sum_{i=1}^K p_i m_i\\
    &= l \sum_{i=1}^K \frac{\chi^2_O(P_i)}{\norm{O}_2^2} \cO\left( \frac{\norm{O}_2^4}{\sqrt{d} l \varepsilon^2  \chi_O^2(P_i)}\right) \\
    &= l \sum_{i=1}^K \cO\left( \frac{\norm{O}_2^2}{\sqrt{d} l \varepsilon^2}\right) \\
    &=\cO\left(  \frac{K \norm{O}_2^2}{\sqrt{d} \varepsilon^2} \right).
\end{split}
\end{equation}
\end{proof}

\begin{proof}[Proof of Theorem~\ref{thm:Pauli_sampling}]
    The claimed performance guarantee is obtained by setting $T = \cO(\log \delta^{-1})$ and applying the median-of-means estimator from Lemma~\ref{lem:median-of-mean}. The expected sample complexity is given by 
    \begin{equation}
        T \cdot \mathbb{E}\left[\sum_{i=1}^l m_i\right] = \cO\left(\frac{K \norm{O}_2^2 \log \delta^{-1}}{\sqrt{d} \varepsilon^2} \right).
    \end{equation}
\end{proof}

\section{Additional analysis on estimating low-rank observables} \label{app:low_rank}
In this section, we analyze the performance of the BRM protocol proposed in Sec.~\ref{sec:low_rank} for estimating non-linear properties $\Tr(O\rho^2)$.
We begin by showing it is unbiased for estimating $\Tr(O\rho^2)$ and then provide a performance analysis of the proposed protocol.
Consider the estimator defined in Eq.~\eqref{eq:RRM_estimator_single_round}
\begin{equation}
    \omega_u 
    = \binom{N_M}{2}^{-1}
      \sum\limits_{1 \le i < j \le N_M}
      \sum\limits_{\sigma \in \{0,1\}^n}
      X_3(s_i, s_j, \sigma)\,
      \bra{\sigma}\,U\,O\,U^\dagger\ket{\sigma},
\end{equation}
where $s_i$ and $s_j$ are the $i$-th and $j$-th measurement outcomes from a set of $N_M$ measurements in the basis defined by $U_u$. The coefficient function $X_3(s_1,s_2,s_3)$ should be chosen such that the corresponding coefficient matrix $X_3 = \sum\limits_{s_1, s_2, s_3} X_3(s_1, s_2, s_3)\, \ketbra{s_1, s_2, s_3}$ satisfies Eq.~\eqref{eq:RRM_third_twirling}, so that
\begin{equation}
    \underset{U}{\mathbb{E}}\bigl[U^{\otimes 3} \,X_3\, U^{\dagger\otimes 3}\bigr]
    = \tfrac{1}{2}\bigl(\mathbb{P}^3_{(132)} + \mathbb{P}^3_{(123)}\bigr),
\end{equation}
i.e., its third-order twirling yields a superposition of the two third-order cyclic operators. 
Let $d = 2^n$ be the dimension of the Hilbert space, one possible choice is shown as Eq.~(18) in the Supplementary Material of Ref.~\cite{zhou2020Single}, which reads as
\begin{equation}
\begin{aligned}
X_3(s_1,s_2,s_3)=X_3({\abs{\{s_1,s_2,s_3\}}}),\text{ with}\\
    X_3(1) = \frac{1 + d^2}{2}, \quad X_3(2) = \frac{1-d}{2}, \quad X_3(3) = 1.
\end{aligned}
\end{equation}
Then we have
\begin{equation}
\begin{aligned}
\mathbb{E}\left[\omega_u\right]&=\underset{U}{\mathbb{E}}\left[\sum\limits_{s_1,s_2,\sigma}\langle s_1|U\rho U^\dagger|s_1\rangle\langle s_2|U\rho U^\dagger|s_2\rangle\langle\sigma|UOU^\dagger|\sigma\rangle X_3(s_1,s_2,\sigma)\right]
\\&=\underset{U}{\mathbb{E}}\left[\Tr\left(U^{\dagger\otimes3}\left(\sum\limits_{s_1,s_2,\sigma}X_3(s_1,s_2,\sigma)|s_1s_2\sigma\rangle\langle s_1s_2\sigma|\right)U^{\otimes3}\left(\rho\otimes\rho\otimes O\right)\right)\right]
\\&=\frac12\left(\Tr\left(\rho O\rho\right)+\Tr\left(\rho^2O\right)\right)
\\&=\Tr\left(O\rho^2\right).
\end{aligned}
\end{equation}

\begin{lemma}[Variance of $\omega_u$ in Eq.~\eqref{eq:RRM_estimator_single_round}]
\label{lem:RRM_variance}
    Given an observable $O$, the variance of the estimator $\omega_u$ defined in Eq.~\eqref{eq:RRM_estimator_single_round} satisfies
    \begin{equation}
        \Var[\omega_u] = \cO\left( \left(1 + \frac{d}{N_M^2}\right) \Tr(O^2) \right).
    \end{equation}
\end{lemma}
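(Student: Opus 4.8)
The plan is to follow the same template as the variance analysis of the ORM protocol in Lemma~\ref{lem:variance_w}, adapted to the braiding estimator. I would start from
\begin{equation}
\Var[\omega_u] = \mathbb{E}[\omega_u^2] - \mathbb{E}[\omega_u]^2 = \mathbb{E}[\omega_u^2] - \Tr(O\rho^2)^2,
\end{equation}
using the unbiasedness established just above the lemma. Expanding the square of Eq.~\eqref{eq:RRM_estimator_single_round} produces the normalization $\binom{N_M}{2}^{-2}$ times a double sum over pairs $(\{i,j\},\{i',j'\})$ and a double sum over the virtual indices $(\sigma,\sigma')$. As in Lemma~\ref{lem:variance_w}, I would split the sum over pairs according to the overlap $\Delta = |\{i,j\}\cup\{i',j'\}| \in \{2,3,4\}$, whose combinatorial multiplicities contribute prefactors scaling as $1/N_M^2$, $1/N_M$, and $1$, respectively.

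The central structural point is that, since the $N_M$ measurements are independent conditioned on $U$, each block contributes one factor $\bra{s}U\rho U^\dagger\ket{s}$ per distinct physical outcome and one factor $\bra{\sigma}UOU^\dagger\ket{\sigma}$ per virtual index. Averaging first over outcomes and then over $U$ therefore turns each term into a twirling integral of the form $\Tr\bigl(\Phi_t(\mathcal{X})\,[\rho^{\otimes a}\otimes O^{\otimes 2}]\bigr)$, where the diagonal operator $\mathcal{X}$ is built from products of the coefficient function $X_3$. The orders are $t=6$, $a=4$ for $\Delta=4$; $t=5$, $a=3$ for $\Delta=3$; and $t=4$, $a=2$ for $\Delta=2$, which is precisely why sampling $U$ from a unitary $6$-design suffices. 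I would evaluate these twirls with the Weingarten formula~\eqref{eq:t-th_order_twirling}: once $\Phi_t(\mathcal{X})$ is written as a linear combination of permutation operators $\mathbb{P}_\pi^t$, each trace $\Tr(\mathbb{P}_\pi^t\,\rho^{\otimes a}\otimes O^{\otimes 2})$ factorizes over the cycles of $\pi$ into products of $\Tr(\rho^k)$, $\Tr(O^m)$, and mixed traces such as $\Tr(\rho^a O\rho^b O)$.

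Collecting terms, the $\Delta=4$ contribution should reproduce $\Tr(O\rho^2)^2$ and cancel $\mathbb{E}[\omega_u]^2$, leaving a fluctuation of order $\Tr(O^2)$, which gives the $\cO(1)\cdot\Tr(O^2)$ part of the bound. The $\Delta=2$ term, carrying a $\binom{N_M}{2}^{-1}\sim N_M^{-2}$ prefactor together with the large coefficient $X_3(1)=(1+d^2)/2=\Theta(d^2)$, yields the $\cO\bigl(d\,\Tr(O^2)/N_M^2\bigr)$ correction once the Weingarten denominators are taken into account. A point requiring care is that the $\Delta=3$ (order-$5$) term must be shown not to generate a $d/N_M$ enhancement, which would violate the claimed bound; as in Lemma~\ref{lem:variance_w}, the large $X_3$ coefficients should be suppressed strongly enough by the Weingarten factors to leave this term subleading. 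Throughout, I would bound the surviving traces using $\Tr(\rho^k)\le 1$ together with inequalities such as $|\Tr(\rho^a O\rho^b O)|\le\norm{O^2}_\infty\le\Tr(O^2)$ (cf.\ Lemma~\ref{lem:O2}), arriving at $\Var[\omega_u]=\cO\bigl((1+d/N_M^2)\Tr(O^2)\bigr)$.

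The main obstacle is the order-$6$ twirl $\Phi_6(\mathcal{X})$ for the $\Delta=4$ term: a brute-force Weingarten evaluation over all $6!=720$ permutations and the associated Gram matrix is impractical. The key simplification is that the relevant $\mathcal{X}$ factorizes into two blocks under the interleaved ordering $(s_i,s_j,\sigma,s_{i'},s_{j'},\sigma')$, each block having the known third-order twirl $\Phi_3(X_3)=\tfrac12\bigl(\mathbb{P}^3_{(132)}+\mathbb{P}^3_{(123)}\bigr)$, so the computation can be organized around these blocks and one need only retain the cycle types whose contracted traces survive at the order in $d$ required for the stated upper bound. Care is also needed to confirm that the $\Delta=4$ leading piece exactly cancels $\mathbb{E}[\omega_u]^2$, so that the residual variance is genuinely controlled by $\Tr(O^2)$ rather than by a spurious $\Tr(O)^2$ or higher-order term.
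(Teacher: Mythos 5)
Your skeleton is the same as the paper's: the same split of the double sum by $\Delta=\abs{\{i,j\}\cup\{i',j'\}}\in\{2,3,4\}$, the same identification of the three contributions as twirls of order $t=6,5,4$ applied to $\rho^{\otimes a}\otimes O^{\otimes 2}$, the same Weingarten machinery, and the same accounting that the $\Delta=2$ block supplies the $d\,\Tr(O^2)/N_M^2$ term. However, there is a genuine gap in how you plan to close the argument, and it is exactly at the point you flag as the ``main obstacle.'' First, the cancellation between the $\Delta=4$ contribution and $\mathbb{E}[\omega_u]^2$ that you treat as essential is not needed: since $\abs{\Tr(O\rho^2)}\le\norm{O}_2\norm{\rho^2}_2\le\sqrt{\Tr(O^2)}$, the subtracted square is itself $\cO(\Tr(O^2))$, so the paper simply discards it, bounds $\Var[\omega_u]\le\mathbb{E}_U\mathbb{E}_{\mathbf{s}}[\omega_u^2\mid U]$, and shows that the second moment is already of the claimed order. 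Second, your proposed workaround for the order-$6$ twirl---organizing the computation around the two $X_3$ blocks with known third-order twirls---does not go through as stated, because $\Phi_6(X_3\otimes X_3)\neq\Phi_3(X_3)\otimes\Phi_3(X_3)$: the sixth-order twirl couples the two blocks through all of $S_6$, so the block structure does not reduce the problem to known third-order data.

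The paper avoids any exact evaluation by using only order-of-magnitude inputs: the uniform Weingarten asymptotic $\mathrm{Wg}(\alpha,d)=\cO(d^{-t})$ over $S_t$, the bound $\Tr[\mathbb{P}^t_\pi(\rho^{\otimes a}\otimes O^{\otimes 2})]\le\Tr(O^2)$ (each cycle contraction is controlled by $\norm{O^2}_\infty\le\Tr(O^2)$ as in Lemma~\ref{lem:O2}), and the cited scalings $\Tr[\mathbb{P}^t_\pi X_3^{\otimes 2}]=\cO(d^6)$ and $\Tr[\mathbb{P}^t_\pi X_3^{(1,2,3),(1,4,5)}]=\Tr[\mathbb{P}^t_\pi X_3^{(1,2,3),(1,2,4)}]=\cO(d^5)$ from Ref.~\cite{zhou2020Single}. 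These combine to $\Gamma_6=\cO(\Tr(O^2))$, $\Gamma_5=\cO(\Tr(O^2))$, and $\Gamma_4=\cO(d\,\Tr(O^2))$, which also settles your (legitimate) concern about a possible $d/N_M$ enhancement from the $\Delta=3$ block: it is $\cO(\Tr(O^2)/N_M)$ and hence subleading. If you adopt the second-moment bound and these crude estimates, your outline becomes a complete proof; as written, the cancellation-based route leaves the hardest step unresolved.
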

\noindent We postpone the proof of Lemma~\ref{lem:RRM_variance} and first analyze the performance of the full protocol. We further draw $N_U$ independent random unitaries and construct the estimator 
\begin{equation}\label{eq:RRM_averaged_estimator}
    \omega =\frac{1}{N_U} \sum\limits_{u=1}^{N_U} \omega_u.
\end{equation}
By standard variance reduction for averaging independent, identically distributed estimators and applying Lemma~\ref{lem:RRM_variance}, 
we obtain the following corollary.

\begin{corollary}[Variance of $\omega$]
\label{col:RRM_variance}
    Given an observable $O$, the variance of the estimator $\omega$ defined in Eq.~\eqref{eq:RRM_averaged_estimator} satisfies
    \begin{equation}
        \Var[\omega] = \cO\left( \left( \frac{1}{N_U} + \frac{d}{N_U N_M^2} \right) \Tr(O^2) \right).
    \end{equation}
\end{corollary}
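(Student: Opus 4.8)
The plan is to derive this corollary directly from the single-round variance bound in Lemma~\ref{lem:RRM_variance}, combined with the statistical independence of the individual estimators. First I would observe that $\omega = \frac{1}{N_U}\sum_{u=1}^{N_U}\omega_u$ is an average of $N_U$ estimators $\{\omega_u\}_{u=1}^{N_U}$, each built from an \emph{independently} sampled random unitary $U_u$ and its associated measurement outcomes. Because the random unitaries—and hence the measurement data—are drawn independently across different values of $u$, the random variables $\omega_1,\dots,\omega_{N_U}$ are mutually independent and identically distributed.

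Next I would invoke the standard fact that the variance of a sum of independent random variables equals the sum of their variances. Concretely,
\begin{equation}
\Var[\omega] = \Var\!\left[\frac{1}{N_U}\sum\limits_{u=1}^{N_U}\omega_u\right] = \frac{1}{N_U^2}\sum\limits_{u=1}^{N_U}\Var[\omega_u] = \frac{1}{N_U}\Var[\omega_u],
\end{equation}
where the last equality uses that all $\omega_u$ share the same variance. Substituting the single-round bound $\Var[\omega_u] = \cO\!\left(\left(1 + \frac{d}{N_M^2}\right)\Tr(O^2)\right)$ from Lemma~\ref{lem:RRM_variance} then yields
\begin{equation}
\Var[\omega] = \frac{1}{N_U}\,\cO\!\left(\left(1 + \frac{d}{N_M^2}\right)\Tr(O^2)\right) = \cO\!\left(\left(\frac{1}{N_U} + \frac{d}{N_U N_M^2}\right)\Tr(O^2)\right),
\end{equation}
which is exactly the claimed bound.

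Since the only nontrivial ingredient is Lemma~\ref{lem:RRM_variance}, there is no substantial obstacle in proving the corollary itself: its entire content is the elementary observation that averaging over $N_U$ independent random unitaries suppresses the variance by a factor of $N_U$. The genuinely hard work lies \emph{upstream}, in establishing Lemma~\ref{lem:RRM_variance}, which requires computing the fourth- and lower-order moments of the post-processing function $X_3$ under third-order twirling (with a 6-design sufficing to control the variance) and bounding the resulting permutation-operator contributions in terms of $\Tr(O^2)$. Once that lemma is in hand, the present corollary follows in a few lines by the reasoning above.
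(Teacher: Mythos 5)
Your proposal is correct and matches the paper's own argument: the paper likewise derives the corollary by noting that $\omega$ is an average of $N_U$ independent, identically distributed estimators, so the single-round variance from Lemma~\ref{lem:RRM_variance} is simply divided by $N_U$. No gaps.
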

\noindent This allows us to prove Theorem~\ref{thm:low_rank} in the main text.

\begin{proof}[Proof of Theorem~\ref{thm:low_rank}]
    By Corollary~\ref{col:RRM_variance}, for any constant $0 < c < \tfrac{1}{2}$, we can choose $N_U = \cO(\varepsilon^{-2})$ and $N_M = \cO(\sqrt{d})$ such that 
    \begin{equation}
        \Var[\omega] \le c \varepsilon^2.
    \end{equation}
    Set the failure probability per observable to $\delta' = \delta / M$, and use the median-of-means estimator (Lemma~\ref{lem:median-of-mean}) to estimate $\Tr(O_m\rho^2)$ within additive error $\varepsilon$ with success probability at least $1 - \delta'$. This requires $T = \cO\bigl(\log(1/\delta')\bigr)$ repetitions. Applying the union bound over all $M$ observables, we find that with probability at least $1 - \delta' M = 1 - \delta$, the estimation error satisfies
    \begin{equation}
         \left|\omega - \Tr(O_m\rho^2)\right| \le \varepsilon
    \end{equation}
    for all $1 \le m \le M$. This completes the proof.
\end{proof}

We now prove the validity of Lemma~\ref{lem:RRM_variance}.
\begin{proof}[Proof of Lemma~\ref{lem:RRM_variance}]
We begin by applying the total variance law to arrive at
\begin{equation}
\begin{split}
    \Var [\omega_u] &= \underset{U}{\bE} [\underset{\bs}{\Var}(\omega_u|U)] + \underset{U}{\Var} [\underset{\bs}{\bE}(\omega_u|U)] \\
    &= \underset{U}{\bE}[\underset{\bs}{\bE}(\omega_u^2|U) - \underset{\bs}{\bE}(\omega_u|U)^2] + \underset{U}{\bE} [\underset{\bs}{\bE}(\omega_u|U)]^2 - [\underset{U}{\bE} \underset{\bs}{\bE}(\omega_u|U)]^2 \\
    &=  \underset{U}{\bE}\underset{\bs}{\bE}(\omega_u^2|U) - [\underset{U}{\bE} \underset{\bs}{\bE}(\omega_u|U)]^2 \\
    &\le \underset{U}{\bE}\underset{\bs}{\bE}(\omega_u^2|U).
\end{split}
\end{equation}
where $\bs=(s_1,\dots,s_{N_M}) \in \mathbb{Z}_d^{N_M}$ denotes the sequence of measurement outcomes.
In the analysis below, we use the notation $A \lesssim B$ to denote $A = \cO(B)$. Define $\Delta(i,j;i',j')=\left|\{i,j\}\cup\{i',j'\}\right|\in\{2,3,4\}$ as the total number of distinct indices in index pairs $(i,j)$ and $(i',j')$. We now expand $\omega_u$ and analyze each term to get
\begin{equation}\label{eq:variance}
\begin{split}
    &\underset{U}{\bE} \underset{\bs}{\bE}(\omega_u^2 \mid U) \\
    \lesssim& \frac{1}{N_M^4} \underset{U}{\bE} \underset{\bs}{\bE} \sum\limits_{i < j;\, i' < j'} \sum\limits_{a_3, a_4 \in \bZ_d} X_3(s_i, s_j, a_3) X_3(s_i', s_j', a_4)  \times \bra{a_3} U O U^{\dagger} \ket{a_3} \bra{a_4} U O U^{\dagger} \ket{a_4} \\
    =& \frac{1}{N_M^4} \left( \sum\limits_{\Delta(i,j;i',j')=4} + \sum\limits_{\Delta(i,j;i',j')=3} + \sum\limits_{\Delta(i,j;i',j')=2} \right)  \underset{U}{\bE} \underset{\bs}{\bE} \left[ \Tr\left[(\ketbra{s_i} \otimes \ketbra{s_j} \otimes U O U^{\dagger}) X_3 \right] \cdot \Tr\left[(\ketbra{s_i'} \otimes \ketbra{s_j'} \otimes U O U^{\dagger}) X_3 \right] \right] \\
    \lesssim& \frac{1}{N_M^4} \left\{ 
        \binom{N_M}{4} \underset{U}{\bE} \left[ \Tr\left[(\rho_U \otimes \rho_U \otimes O_U) X_3 \right]^2 \right] \right.+ \binom{N_M}{3} \underset{U}{\bE} \underset{s_1}{\bE} \left[ \Tr\left[(\ketbra{s_1} \otimes \rho_U \otimes O_U) X_3 \right]^2 \right] \\
    \quad& \left. + \binom{N_M}{2} \underset{U}{\bE}\underset{s_1,s_2}{\bE}  \left[ \Tr\left[(\ketbra{s_1} \otimes \ketbra{s_2} \otimes O_U) X_3 \right]^2 \right] \right\} \\
    \lesssim& \Tr\left[\Phi_6(X_3 \otimes X_3)(\rho_1 \otimes \rho_2 \otimes O_3 \otimes \rho_4 \otimes \rho_5 \otimes O_6) \right]  + \frac{1}{N_M} \Tr\left[\Phi_5(X^{(1,2,3),(1,4,5)}_3)(\rho_1 \otimes \rho_2 \otimes O_3 \otimes \rho_4 \otimes O_5) \right] \\
    \quad& + \frac{1}{N_M^2} \Tr\left[\Phi_4(X^{(1,2,3),(1,2,4)}_3)(\rho_1 \otimes \rho_2 \otimes O_3 \otimes O_4) \right] \\
    \eqqcolon& \Gamma_6 + \frac{1}{N_M} \Gamma_5 + \frac{1}{N_M^2} \Gamma_4.
\end{split}
\end{equation}
Here $\rho_U = U \rho U^{\dagger}$, $O_U = U O U^{\dagger}$, and the coefficient matrices are defined as
\begin{equation}
\begin{split}
    X^{(1,2,3),(1,4,5)}_3 &= \sum\limits_{\ba \in \bZ_d^5} X_3(a_1, a_2, a_3)\, X_3(a_1, a_4, a_5)\, \ketbra{\ba}, \\
    X^{(1,2,3),(1,2,4)}_3 &= \sum\limits_{\ba \in \bZ_d^4} X_3(a_1, a_2, a_3)\, X_3(a_1, a_2, a_4)\, \ketbra{\ba}.
\end{split}
\end{equation}
To analyze the terms $\Gamma_i$, we use the Weingarten calculus and retain only the leading-order contributions in the large-$d$ limit ($d \gg 1$). 
The Weingarten coefficient $\text{Wg}(\alpha, d)$ in Eq.~\eqref{eq:weingarten} admits the asymptotic expansion
\begin{equation}
    \text{Wg}(\alpha, d) = d^{k(\alpha) - 2t} \prod_{i=1}^{k(\alpha)} (-1)^{\xi_i - 1} C_{\xi_i - 1} + \cO(d^{k(\alpha) - 2t - 2}) = \cO(d^{k(\alpha) - 2t}),
\end{equation}
where
\begin{itemize}
    \item $k(\alpha)$ is the number of cycles in permutation $\alpha$,
    \item $(\xi_1, \dots, \xi_{k(\alpha)})$ is the cycle type of $\alpha$,
    \item $C_k = \frac{(2k)!}{k!(k+1)!}$ is the $k$-th Catalan number,
\end{itemize}
as given in~\cite[Eq.~(141)]{zhou2020Single}. 
Since $k(\alpha) \le t$ for all $\alpha \in S_t$, it follows that
\begin{equation}
    \text{Wg}(\alpha, d) = \cO(d^{-t}).
\end{equation}
For $\pi \in S_t$ and input $Q = \rho^{\otimes t} O^{\otimes 2}$, where $2 \le t \le 4$, we claim that
\begin{equation} \label{eq:W_sigma_X}
    \Tr[\mathbb{P}^t_\pi Q] \le \Tr(O^2).
\end{equation}
This holds because any contraction involving $\rho$ and $O$ (e.g., terms like $\Tr(\rho^a O \rho^b O)$ for $a,b\in\mathbb{N}_+$) can be bounded by $\norm{O^2}_\infty \le \Tr(O^2)$ using similar analysis as in the proof of Lemma~\ref{lem:O2}. Combining these observations and 
using Eq.~\eqref{eq:t-th_order_twirling}, we have that for constant $t$ and an arbitrary matrix $X$,
\begin{equation} \label{eq:Q_twirling_X_final}
\begin{split}
    \Tr[\Phi_t(X) Q] &=  \sum_{\pi,\sigma \in S_t} \mathrm{Wg}(\pi^{-1}\sigma,d) \Tr[\mathbb{P}^t_{\sigma^{-1}} X] \Tr[\mathbb{P}^t_{\pi} Q].
    \\ &= \cO\bigl(d^{-t} \Tr(O^2)\bigr) \sum\limits_{\pi \in S_t} \Tr[\mathbb{P}^t_\pi X].
\end{split}
\end{equation}

From Eq.~(150) in~\cite{zhou2020Single}, for any permutation $\pi \in S_t$, the following scaling bounds hold:
\begin{equation}
\begin{split}
    \Tr[\mathbb{P}^t_\pi X_3^{\otimes 2}] &= \cO(d^6), \\
    \Tr[\mathbb{P}^t_\pi X^{(1,2,3),(1,4,5)}_3] &= \cO(d^5), \\
    \Tr[\mathbb{P}^t_\pi X^{(1,2,3),(1,2,4)}_3] &= \cO(d^5).
\end{split}
\end{equation}
Substituting these into Eq.~\eqref{eq:Q_twirling_X_final}, we obtain the following bounds for the $\Gamma$ terms in Eq.~\eqref{eq:variance}:
\begin{equation}
\begin{split}
    \Gamma_6 &= \cO\bigl(d^{-6} \Tr(O^2)\bigr) \cdot \cO(d^6)  = \cO(\Tr(O^2)), \\
    \Gamma_5 &= \cO\bigl(d^{-5} \Tr(O^2) \bigr) \cdot \cO(d^5)  = \cO(\Tr(O^2)), \\
    \Gamma_4 &= \cO\bigl(d^{-4}\Tr(O^2)\bigr) \cdot \cO(d^5) = \cO(d \Tr(O^2)).
\end{split}
\end{equation}
Thus, we conclude that
\begin{equation}
\begin{split}
    \Var[\omega_u] 
    &\lesssim \Gamma_6 + \frac{1}{N_M} \Gamma_5 + \frac{1}{N_M^2} \Gamma_4 \\
    &\lesssim \left(1 + \frac{1}{N_M} + \frac{d}{N_M^2} \right) \Tr(O^2) \\
    &\lesssim \left(1 + \frac{d}{N_M^2} \right) \Tr(O^2),
\end{split}
\end{equation}
where the last line uses the fact that $1/N_M \le 1$. This completes the proof of Lemma~\ref{lem:RRM_variance}.
\end{proof}

\end{document}